\newcommand{\proofsubparagraph}[1]{\vspace{-1mm}\subparagraph*{#1}}
\NewDocumentCommand{\raisedminus}{m}{%
	\raisebox{0.1em}{$\m@th#1{-}$}%
}
\NewDocumentCommand{\unaryminus}{}{%
	\mathbin{%
		\mathchoice{%
			\raisedminus\scriptstyle
		}{%
			\raisedminus\scriptstyle
		}{%
			\raisedminus\scriptscriptstyle
		}{%
			\raisedminus\scriptscriptstyle
		}%
	}%
}
\newcommand\footnoteref[1]{\protected@xdef\@thefnmark{\ref{#1}}\@footnotemark}
\newcommand{\OLDcompproblem}[3][]{%
	\par\vspace{0.125cm plus 0.1cm minus 0.05cm}\begin{tabularx}{\linewidth-2\parindent}{lX}%
		\if\relax\detokenize{#1}\relax%
		\else%
		\textnormal{\textsf{Constant:}}&#1\\%
		\fi%
		\textnormal{\textsf{Input:}}&#2\\%
		\textnormal{$\mathrlap{\textsf{Output:}}\hphantom{\textsf{Question:}}$}&#3\\%
	\end{tabularx}\vspace{0.125cm plus 0.1cm minus 0.05cm}\par%
}
\newcommand{\OLDproblem}[3][]{%
	\par\vspace{0.125cm plus 0.1cm minus 0.05cm}\begin{tabularx}{\linewidth-2\parindent}{lX}%
		\if\relax\detokenize{#1}\relax%
		\else%
		\textnormal{\textsf{Constant:}}&#1\\%
		\fi%
		\textnormal{\textsf{Input:}}&#2\\%
		\textnormal{\textsf{Question:}}&#3\\%
	\end{tabularx}\vspace{0.125cm plus 0.1cm minus 0.05cm}\par%
}
\newcommand{\compproblem}[3][]{%
	\par\vspace{0.125cm plus 0.1cm minus 0.05cm}\adjustbox{valign=t}{\begin{tabularx}{\linewidth-2\parindent}{@{}lX}%
			\if\relax\detokenize{#1}\relax%
			\else%
			\textnormal{\textsf{Constant:}}&#1\\%
			\fi%
			\textnormal{\textsf{Input:}}&#2\\%
			\textnormal{$\mathrlap{\textsf{Output:}}\hphantom{\textsf{Question:}}$\!\!}&#3\\%
	\end{tabularx}}\vspace{0.125cm plus 0.1cm minus 0.05cm}\par%
}
\newcommand{\problem}[3][]{%
	\par\vspace{0.125cm plus 0.1cm minus 0.05cm}\adjustbox{valign=t}{\begin{tabularx}{\linewidth-2\parindent}{@{}lX}%
			\if\relax\detokenize{#1}\relax%
			\else%
			\textnormal{\textsf{Constant:}}&#1\\%
			\fi%
			\textnormal{\textsf{Input:}}&#2\\%
			\textnormal{\textsf{Question:}\!\!}&#3\\%
	\end{tabularx}}\vspace{0.125cm plus 0.1cm minus 0.05cm}\par%
}
\newcommand\Wlog{W.\,l.\,o.\,g.\ }
\newenvironment{vd}{\noindent\color{blue} VD :  }{}
\newenvironment{AM}{\noindent\color{red} AM : }{}
\newenvironment{aw}{\noindent\color{magenta} AW :  }{}
\newtheorem{fact}[theorem]{Fact}
\definecolor{lred}{rgb}{1.0,0.651,0.651}
\newcommand{\Min}{\operatorname{Min}}
\newcommand{\binM}[2]{\operatorname{digit}_{#2}(#1)}
\newcommand{\csdr}{compact signed-digit representation\xspace}
\newcommand{\sdr}{signed-digit representation\xspace}
 \newcommand{\stepone}{\proc{UpdateNodes}}
\newcommand{\steptwo}{\proc{ExtendChains}}
\newcommand{\stepthree}{\proc{UpdateMarkings}}
\newcommand{\startc}[1]{\operatorname{start}(#1)}
\newcommand{\equival}{\sim_{\epsilon}}
\newcommand{\set}[2]{\left\{#1\mathrel{\left|\vphantom{#1}\vphantom{#2}\right.}#2\right\}}
\newcommand{\oneset}[1]{\left\{\mathinner{#1}\right\}}
\newcommand{\abs}[1]{\left\lvert\mathinner{#1}\right\rvert}
\newcommand{\abssmall}[1]{\lvert\mathinner{#1}\rvert}
\newcommand{\absbig}[1]{\bigl\lvert\mathinner{#1}\bigr\rvert}
\newcommand{\floor}[1]{\left\lfloor\mathinner{#1} \right\rfloor}
\newcommand{\ceil}[1]{\left\lceil\mathinner{#1} \right\rceil}
\newcommand{\Gen}[2]{\left< \mathinner{#1} \mid \mathinner{#2}\right>}
\newcommand{\N}{\ensuremath{\mathbb{N}}}
\newcommand{\Z}{\ensuremath{\mathbb{Z}}}
\newcommand{\Q}{\ensuremath{\mathbb{Q}}}
\newcommand{\R}{\ensuremath{\mathbb{R}}}
\newcommand{\TC}{\ensuremath{\mathsf{TC}^0}\xspace}
\newcommand{\NC}{\ensuremath{\mathsf{NC}}\xspace}
\newcommand{\sdZ}{\ensuremath{\Z[1/2] \rtimes \Z}}
\renewcommand{\phi}{\varphi}
\newcommand{\eps}{\varepsilon}
\renewcommand{\epsilon}{\varepsilon}
\newcommand{\depth}{\operatorname{depth}}
\newcommand{\clone}{\proc{Clone}}
\newcommand{\val}{\operatorname{val}}
\newcommand{\cor}{\operatorname{CR}}
\newcommand{\e}{\eps} 
\newcommand{\mOne}{\xspace{\unaryminus\!1}}
\newcommand{\supp}{\sigma}
\newcommand{\del}{\delta}
\newcommand{\lam}{\lambda}
\newcommand{\Sig}{\Sigma}
\newcommand\GG{\Gamma}
\newcommand\LL{\Lambda}
\newcommand{\Oh}{\mathcal{O}}
\newcommand{\cB}{\mathcal{B}}
\newcommand{\cC}{\mathcal{C}}
\newcommand{\cX}{\mathcal{X}}
\newcommand{\Breduced}{Britton-re\-du\-ced\xspace}
\newcommand{\BS}[2]{\ensuremath{\mathrm{\mathbf{BS}}_{#1,#2}}}
\newcommand{\BG}{\ensuremath{\mathrm{\mathbf{G}}_{1,2}}\xspace} 
\newcommand{\oi}[1]{{#1}^{-1}}
\newcommand{\DSPACE}{\ensuremath{\mathsf{DSPACE}}\xspace} 
\newcommand{\PSPACE}{\ensuremath{\mathsf{PSPACE}}\xspace} 
\newcommand{\LOGCFL}{\ensuremath{\mathsf{LOGCFL}}\xspace} %
\newcommand{\LOGSPACE}{\ensuremath{\mathsf{LOGSPACE}}\xspace} %
\newcommand{\NL}{\ensuremath{\mathsf{NL}}\xspace} %
\newcommand{\Tc}[1]{\ensuremath{\mathsf{TC}^{#1}}\xspace}
\newcommand{\Ac}[1]{\ensuremath{\mathsf{AC}^{#1}}\xspace}
\newcommand{\Nc}[1]{\ensuremath{\mathsf{NC}^{#1}}\xspace}
\renewcommand{\P}{\ensuremath{\mathsf{P}}\xspace}
\newcommand{\GapL}{\ensuremath{\mathsf{GapL}}\xspace} %
\newcommand{\DLOGTIME}{\ensuremath{\mathsf{DLOGTIME}}\xspace} %
\newcommand{\PTc}[2]{\ensuremath{\mathsf{DepParaTC}^{#1}}\xspace}
\newcommand{\smallwidetilde}[1]
{{\mspace{1mu}\widetilde{\mspace{-1mu}#1\mspace{-1mu}}\mspace{1mu}}}
\newcommand{\wt}[1]{\smallwidetilde{#1}}
\newcommand\tow{\mathop \tau} 
\newcommand{\PC}{power circuit\xspace}
\newcommand\lds{,\ldots ,} 
\newcommand{\sse}{\subseteq}
\newcommand{\es}{\emptyset}
\newcommand{\OS}{\oneset{\mOne,0,+1}}
\newcommand{\interval}[2]{[ \mathinner{#1}..\mathinner{#2}] }
\newcommand{\oneinterval}[1]{[ \mathinner{#1}] }
\newcommand\ei[1]{{\emph{#1}\xspace}\index{#1}} 
\newcommand{\proc}[1]{\ensuremath{\text{\textsc{#1}}}\xspace}
\newcommand{\gnot}{\proc{Not}}
\newcommand{\gand}{\proc{And}}
\newcommand{\gor}{\proc{Or}}
\newcommand{\gmaj}{\proc{Majority}}
\newcommand\ie{i.e.,\xspace}
\newcommand\eg{e.g.\xspace}
\newcommand{\comparator}{\mathrel{\triangle}}
\newcommand{\compOpSet}{\oneset{ = , \neq, <,\leq ,>, \geq }}
\newcommand{\eqBG}{=_{\BG}}
\newcommand{\neqBG}{\neq_{\BG}}
\title{Parallel algorithms for power circuits and the word problem of the Baumslag group}
\titlerunning{Parallel algorithms for the Baumslag group} 
\author{Caroline Mattes}{Universität Stuttgart, Institut für Formale Methoden der Informatik (FMI), Universitätsstraße 38,
  70569 Stuttgart, Germany}{caroline.mattes@fmi.uni-stuttgart.de}{}{}
\author{Armin Weiß}{Universität Stuttgart, Institut für Formale Methoden der Informatik (FMI), Universitätsstraße 38,
  70569 Stuttgart, Germany}{armin.weiss@fmi.uni-stuttgart.de}{https://orcid.org/0000-0002-7645-5867}{Funded by DFG project DI 435/7-1.}
\authorrunning{C. Mattes and A. Weiß} 
\keywords{Word problem,
	Baumslag group,
	power circuit,
	parallel complexity
} 
\begin{document}

\maketitle

\begin{abstract}
Power circuits have been introduced in 2012 by Myasnikov, Ushakov and Won as a data structure for non-elementarily compressed integers supporting the arithmetic operations addition and $(x,y) \mapsto x\cdot 2^y$. The same authors applied power circuits to give a polynomial-time solution to the word problem of the Baumslag group, which has a non-elementary Dehn function.

In this work, we examine power circuits and the word problem of the Baumslag group under parallel complexity aspects. 
In particular, we establish that the word problem of the Baumslag group can be solved in \NC~-- even though one of the essential steps is to compare two integers given by power circuits and this, in general, is shown to be \P-complete. The key observation is that the depth of the occurring power circuits is logarithmic and such power circuits can be compared in \NC. 
\end{abstract}

{\small\tableofcontents}

\section{Introduction}\label{sec:intro} 

The {\em word problem} of a finitely generated group $G$ is as follows: does a given word over the generators of $G$ represent the identity of $G$? It was first studied by Dehn as one of the basic algorithmic problems in group theory \cite{dehn11}.
Already in the 1950s, Novikov and Boone succeeded to construct finitely presented groups with an undecidable word problem \cite{boone59,nov55}. Nevertheless, many natural classes of groups have an (efficiently) decidable word problem~-- most prominently the class of linear groups (groups embeddable into a matrix group over some field): their word problem is in \LOGSPACE \cite{lz77,Sim79}~-- hence, in particular, in \NC, \ie decidable by Boolean circuits of polynomial size and polylogarithmic depth (or, equivalently decidable in polylogarithmic time using polynomially many processors).

There are various other results on word problems of groups in small parallel complexity classes defined by circuits. For example
 the word problems of solvable linear groups are even in \TC (constant depth with threshold gates) \cite{KonigL18poly} and the word problems of Baumslag-Solitar groups and of right-angled Artin groups are \Ac{0}-Turing-reducible to the word problem of a non-abelian free group \cite{Weiss16,Kausch17diss}.
Moreover, Thompson's groups are co-context-free \cite{LehnertS07} and hyperbolic groups have word problem in \LOGCFL \cite{Lo05ijfcs}. All these complexity classes are contained within \NC.
 On the other hand, there are also finitely presented groups with a decidable word problem but with arbitrarily high complexity \cite{brs02}. 

A mysterious class of groups under this point of view are one-relator groups, i.e.\ groups that can be written as a free group modulo a normal subgroup generated by a single element (\emph{relator}). 
Magnus \cite{mag32} showed that one-relator groups have a decidable word problem; his algorithm is called the Magnus breakdown procedure (see also \cite{LS01,mks}). Nevertheless, the complexity remains an open problem~-- although it is not even clear whether the word problems of one-relator groups are solvable in elementary time, in \cite{BMS} the question is raised whether they are actually decidable in polynomial time.

 In 1969 Gilbert Baumslag defined the group
	$\BG = \Gen{a,b}{bab^{-1} a = a^2bab^{-1}}$
as an example of a one-relator group which enjoys certain remarkable properties. It is infinite and non-abelian, but all its finite quotients are cyclic and, thus, it is not residually finite \cite{baumslag69}. 
Moreover, Gersten showed that the Dehn function of $\BG$ is non-elementary \cite{gersten91} and Platonov \cite{plat04} made this more precise by proving that it is (roughly) $\tau(\log n)$ where $\tow(0) = 1$ and $\tow(i+1) = 2^{\tow(i)}$ for $i \geq 0$ is the tower function (note that he calls the group Baumslag-Gersten group). 
Since the Dehn function gives an upper bound on the complexity of the word problem, the Baumslag group was a candidate for a group with a very difficult word problem. 
Indeed, when applying the Magnus breakdown procedure to an input word of length $n$, one obtains as intermediate results words of the form $v_1^{x_1} \cdots v_m^{x_m}$ where $v_i \in \{a,b, bab^{-1} \}$, $x_i \in \Z$, and $m \leq n$. The issue is that the $x_i$ might grow up to $\tau(\log n)$; hence, this algorithm has non-elementary running time. 
However, as foreseen by the above-mentioned conjecture, Myasnikov, Ushakov and Won succeeded to show that the word problem of \BG is, indeed, decidable in polynomial time \cite{muw11bg}. Their crucial contribution was to introduce so-called \emph{power circuits} in \cite{MyasnikovUW12} for compressing the $x_i$ in the description above.

Roughly speaking, a \emph{power circuit} is a directed acyclic graph (a dag) where the edges are labelled by $\pm 1$. 
One can define an evaluation of a vertex $P$ as two raised to the power of the (signed) sum of the successors of $P$. 
 Note that this way the value $\tow(n)$ of the tower function can be represented by an $n + 1$-vertex power circuit~-- thus, power circuits allow for a non-elementary compression. The crucial feature for the application to the Baumslag group is that power circuits not only efficiently support the operations $+$, $-$, and $(x,y) \mapsto x \cdot 2^y$, but also the test whether $x = y$ or $x < y$ for two integers represented by power circuits can be done in polynomial time. The main technical part of the comparison algorithm is the so-called reduction process, which computes a certain normal form for power circuits.

Based on these striking results, Diekert, Laun and Ushakov \cite{DiekertLU13ijac} improved the algorithm for power circuit reduction and managed to decrease the running time for the word problem of the Baumslag group from $\Oh(n^7)$ down to $\Oh(n^3)$. They also describe a polynomial-time algorithm for the word problem of the famous Higman group $H_4$ \cite{higman51}. In \cite{muCRAG} these algorithms have been implemented in C++. Subsequently, more applications of power circuits to these groups emerged: in \cite{Laun14} a polynomial time solution to the word problem in generalized Baumslag and Higman groups is given, in \cite{DiekertMW2016alg} the conjugacy problem of the Baumslag group is shown to be strongly generically in \P and in \cite{Baker20} the same is done for the conjugacy problem of the Higman group. Here ``generically'' roughly means that the algorithm works for most inputs (for details on the concept of generic complexity, see \cite{KMSS1}).

Other examples where compression techniques lead to efficient algorithms in group theory can be found \eg in \cite{DisonER18} or \cite[Theorems~4.6, 4.8 and~4.9]{Lohrey14compressed}.
Finally, notice that in \cite{MiasnikovN20} the word search problem for the Baumslag group has been examined using parametrized complexity.

\vspace{-1mm}
\subparagraph*{Contribution.}
The aim of this work is to analyze power circuits and the word problem of the Baumslag group under the view of parallel (circuit) complexity. For doing so, we first examine so-called \emph{compact} representations of integers and show that ordinary binary representations can be converted into compact representations by constant depth circuits (\ie in \Ac0~-- see \cref{sec:compact}). We apply this result in the power circuit reduction process, which is the main technical contribution of this paper. While \cite{MyasnikovUW12,DiekertLU13ijac} give only polynomial time algorithms, we present a more refined method and analyze it in terms of parametrized circuit complexity. The parameter here is the depth $D$ of the power circuit. More precisely, we present threshold circuits of depth $\Oh(D)$ for power circuit reduction~-- implying our first main result: 

\theoremstyle{plain}
\newtheorem{propositionA}{Proposition}
\renewcommand*{\thepropositionA}{\Alph{propositionA}}
\newtheorem{theoremA}[propositionA]{Theorem}
\begin{propositionA}\label{prop:compIntro}
	The problem of comparing two integers given by power circuits of logarithmic depth is in \Tc1 (decidable by logarithmic-depth, polynomial-size threshold circuits).
\end{propositionA}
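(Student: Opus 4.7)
The plan is to reduce comparing two integers $x,y$ given by power circuits to computing a normal (reduced) form of a combined power circuit that contains designated nodes $P,Q$ with $\val(P)=x$ and $\val(Q)=y$, and then to show that this normalization can be performed layer-by-layer in the DAG with only constant threshold-circuit depth per layer. Once both $P$ and $Q$ are reduced and carry a \csdr of their values, deciding whether $x=y$ or $x<y$ becomes a constant-depth test on the two compact representations, so the whole effort is concentrated in a parallel version of the reduction procedure of \cite{MyasnikovUW12,DiekertLU13ijac}.

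The reduction would proceed bottom-up. I would partition the nodes of the circuit by depth and process all nodes at depth $d$ in parallel, maintaining the invariant that after processing depth $<d$ every such node $R$ carries a \csdr of $\val(R)$ together with a canonical identifier that groups nodes of equal value. A node $P$ at depth $d$ evaluates to $2^{s_P}$ with $s_P=\sum_R\eps(P,R)\cdot\val(R)$, so $s_P$ is a signed sum of polynomially many \csdr-encoded integers. By the conversion result established in \cref{sec:compact}, forming this sum and producing a \csdr of $s_P$ is an \Ac0 task; the same is true for comparing the freshly computed $s_P$'s across all depth-$d$ nodes in order to detect collisions and to assign new canonical identifiers to the resulting equivalence classes. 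Thus each depth level contributes only a constant number of threshold-gate layers, and iterating over the $D+1$ layers yields an overall threshold circuit of depth $\Oh(D)$ and polynomial size; for $D=\Oh(\log n)$ this places the whole comparison in \Tc1.

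The main obstacle I expect lies in the intra-level bookkeeping: one must simultaneously aggregate polynomially many signed \csdr summands into one \csdr, test equality of the resulting exponents across all depth-$d$ nodes, and propagate the ensuing identifications so that deeper layers see a globally consistent labelling, all within constant threshold depth and polynomial size. Showing that sorting and grouping by \csdr value together with the induced relabelling remain in \Ac0 on top of the compact-representation machinery is precisely what the development in \cref{sec:compact} is designed to support, so this is where the technical care of the proof will be concentrated.
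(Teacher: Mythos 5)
Your high-level plan~-- compute a reduced normal form layer by layer, then compare compact representations~-- matches the paper's overall structure, but the central step as you describe it cannot work. You propose that each already-processed node $R$ carries an explicit \csdr of $\val(R)$, and that for a depth-$d$ node $P$ the exponent $s_P=\sum_R\eps(P,R)\cdot\val(R)$ is then obtained by iterated \csdr addition in \Ac0. However, $\val(R)$ for $R$ at depth $d-1$ can be as large as $\tow(d-1)$, so its \csdr would have roughly $\tow(d-2)$ digits. No explicit signed-digit string of such a value fits in polynomial size, and the \Ac0 results of \cref{sec:compact} apply only to polynomial-length inputs. This non-elementary blow-up is exactly the obstruction that power circuits are designed to circumvent, and your sketch puts it right back.

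What the paper actually does is never materialise these numbers. The normal form is a \emph{reduced power circuit} $\Gamma$~-- a compressed object whose nodes have pairwise distinct evaluations, are sorted by value, and are organised into maximal chains~-- and each exponent $\eps(\Lambda_P)$ is represented as a \emph{compact marking} on $\Gamma$, i.e.\ a $\{\mOne,0,1\}$-vector indexed by the polynomially many nodes of $\Gamma$. Restricted to a single maximal chain a compact marking \emph{is} a \csdr, which is what makes lexicographic comparison sound (\cref{lem:compareM}, \cref{lem:compareCompactMarkings}), but as a whole it is not a \csdr of the exponent: the basis jumps between chains. Consequently, aggregating a new layer's contributions into a compact marking is a chain-by-chain operation (\cref{stz:updateM}), and it only succeeds after the chains have first been extended by $\Oh(\log\abs{\Pi})$ nodes (\cref{stz:extendChains}); the nontrivial bound keeping the whole thing polynomial across all $\depth(\Pi)+1$ iterations is \cref{lem:pcredsize}, giving $\abs{\Gamma_i}=\Oh(\abs{\Pi}^2\log\abs{\Pi})$. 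Your proposal has no analogue of the chain-extension step, no size bound, and does not say what basis its ``\csdr{}s'' live over; those are precisely the ingredients that turn the per-layer work into an honest \TC computation on a representation that stays polynomially bounded.
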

 We then analyze the word problem of the Baumslag group carefully. 
A crucial step is to show that all appearing power circuits have logarithmic depth. Using Proposition~\ref{prop:compIntro} we succeed to describe a \Tc1 algorithm for computing the Britton reduction of $uv$ given that $u$ and $v$ are already Britton-reduced (Britton reductions are the basic step in the Magnus breakdown procedure~-- see \cref{sec:BS} for  a definition). This leads to the following result:

\begin{theoremA}\label{thm:main}
	The word problem of the Baumslag group \BG is in \Tc2.
\end{theoremA}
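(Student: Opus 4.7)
My plan is to turn the polynomial-time Magnus-breakdown approach to the word problem of $\BG$ into a parallel divide-and-conquer procedure whose bottleneck is the $\Tc1$ merge subroutine announced in the paragraph preceding the theorem. I first view $\BG$ as an HNN extension of the Baumslag-Solitar group $\BS{1}{2} = \Gen{a,t}{tat^{-1}=a^2}$ with stable letter $b$ and associated infinite cyclic subgroups $\gen{a}$ and $\gen{t}$, the defining isomorphism being $bab^{-1}=t$. Elements of the base group $\BS{1}{2}$ are represented as pairs $(x,k)\in \sdZ$, whose integer coordinates reside inside a single global power circuit that is extended during the computation.

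Given an input word $w$ of length $n$, I would arrange its letters as the leaves of a balanced binary tree of depth $\lceil \log_2 n\rceil$. Each leaf carries the trivially Britton-reduced one-letter word it represents. Proceeding bottom-up, at every internal node I apply the $\Tc1$ merge that, given Britton-reduced children $u$ and $v$, produces the Britton reduction of $uv$. All siblings on the same level are processed in parallel; composing the $\Oh(\log n)$ levels serially therefore yields overall depth $\Oh(\log^2 n)$ and polynomial size~-- that is, a $\Tc2$ circuit. The root outputs the Britton-reduced form of $w$, and $w \eqBG 1$ iff this form contains no occurrence of $b^{\pm 1}$ and the surviving $\BS{1}{2}$-element equals $1$, a condition that reduces to one further comparison of power-circuit values and is hence handled by \cref{prop:compIntro}.

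The hard part is a uniform depth bound on the power circuit that stores all exponents arising throughout the $\Oh(\log n)$ merge levels, because \cref{prop:compIntro} only applies to circuits of logarithmic depth. A single merge modifies exponents only via additions and operations of the form $(x,y)\mapsto x\cdot 2^y$, each of which raises the depth of the underlying power circuit by at most a constant, so after $\Oh(\log n)$ levels the overall depth is still $\Oh(\log n)$ and the invariant is preserved. A secondary point is that the divide-and-conquer scheme requires Britton reduction to compose correctly: the pinches generated when concatenating two Britton-reduced words $u$ and $v$ may cascade inward from the interface, but the structural analysis of $\BG$ carried out in the preceding development shows that this cascade is local enough to be realized by the $\Tc1$ merge subroutine. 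Combining these ingredients yields the claimed $\Tc2$ upper bound.
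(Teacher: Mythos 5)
Your proposal mirrors the paper's proof: it arranges the $n$ letters at the leaves of a balanced binary tree, merges pairs bottom-up via the $\Tc1$ Britton-reduction subroutine (the merge of two Britton-reduced power-circuit representations), and argues that the underlying global power circuit retains logarithmic depth~-- and, via the multiplicative constant $c_s$ per level, polynomial size~-- so that each of the $\Oh(\log n)$ levels is a genuine $\Tc1$ stage and the final membership test reduces to a single comparison handled by Proposition~\ref{prop:compIntro}. This is exactly the route the paper takes through \cref{thm:Bred} followed by the short derivation of Theorem~\ref{thm:main}.
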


In the final part of the paper we prove lower bounds on comparison in power circuits, and thus, on power circuit reduction. In particular, this emphasizes the relevance of Proposition A and shows that our parametrized analysis of power circuit reduction is essentially the best one can hope for. Moreover, Theorem~\ref{thm:compIntro} highlights the importance of the logarithmic depth bound for the power circuits appearing during the proof of Theorem~\ref{thm:main}.

\begin{theoremA}\label{thm:compIntro}
	The problem of comparing two integers given by power circuits is \P-complete.
\end{theoremA}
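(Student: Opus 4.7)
The plan is to prove the theorem in two directions: containment in \P\ and \P-hardness under logspace many-one reductions. Containment in \P\ is immediate from the polynomial-time comparison algorithm of Myasnikov, Ushakov and Won~\cite{MyasnikovUW12}, which is the very basis of this paper.

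For \P-hardness, I would give a logspace reduction from the NAND Circuit Value Problem, a classical \P-complete problem. Given a NAND circuit $C$ with fixed inputs and output wire $w^\ast$, the construction produces a power circuit $\Pi$ and two markings $\mu^\ast, \mu_1$ such that $\val(\mu^\ast) = \val(\mu_1)$ iff $C$ evaluates to $1$; since equality reduces to two $\leq$-comparisons, \P-hardness of comparison will follow. Fix once and for all a node $n_1 \in \Pi$ with empty outgoing marking (so $\val(n_1) = 1$). The design invariant is that each wire $w$ of $C$ is assigned a node $v_w \in \Pi$ with $\val(v_w) = 2^{b_w}$ for $b_w \in \{0,1\}$ the Boolean value of $w$. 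Then the \emph{bit marking} $\mu_w := \{v_w, -n_1\}$ evaluates to $2^{b_w} - 1 = b_w$. Input wires with value $0$ are handled by $v_w := n_1$, and those with value $1$ by a fresh node with outgoing marking $\{n_1\}$.

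For a gate $w = \mathrm{NAND}(w_1, w_2)$, the arithmetic identity $1 - b_1 b_2 = 1 + b_1 - b_1 \cdot 2^{b_2}$, valid on $\{0,1\}$, motivates introducing three new nodes $q_0, q_1, v_w$ with outgoing markings
\[
\mathrm{out}(q_0) := \mathrm{out}(v_{w_2}), \qquad
\mathrm{out}(q_1) := \mathrm{out}(v_{w_1}) + \mathrm{out}(v_{w_2}), \qquad
\mathrm{out}(v_w) := \{v_{w_1}, -q_1, q_0\},
\]
where $+$ on markings denotes componentwise sum of signs. A case check on $(b_{w_1}, b_{w_2}) \in \{0,1\}^2$ verifies the arithmetic identity $2^{b_{w_1}} - 2^{b_{w_1} + b_{w_2}} + 2^{b_{w_2}} = 1 - b_{w_1} b_{w_2}$, hence $\val(v_w) = 2^{1 - b_{w_1} b_{w_2}}$ and $\val(\mu_w) = 1 - b_{w_1} b_{w_2}$ is exactly the NAND of the inputs. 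Each gate contributes only three new nodes with constant-size outgoing markings, so $|\Pi| = O(|C|)$, and the construction is clearly logspace computable; taking $\mu^\ast := \mu_{w^\ast}$ and $\mu_1$ the standard marking of value $1$ finishes the reduction.

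The one delicate point is that the sum defining $\mathrm{out}(q_1)$ may yield signs outside $\{-1, 0, +1\}$ when the outgoing markings of $v_{w_1}$ and $v_{w_2}$ overlap. This is resolved either by adopting the customary mild generalisation of power circuits allowing integer-valued marking functions (under which the algorithm of~\cite{MyasnikovUW12} adapts routinely), or by absorbing each sign-overflow via a bounded number of additional nodes per gate. Both options preserve the polynomial size bound and keep the reduction logspace, completing the \P-hardness argument.
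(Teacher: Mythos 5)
Your reduction is genuinely different from the paper's and, once patched, does prove Theorem~\ref{thm:compIntro}. The paper instead proves a stronger statement, \cref{thm:compRedPC}: comparison is \P-hard even on power circuits in which every successor marking is compact and no two nodes share an evaluation (in other words, the only thing keeping the circuit from being reduced is that its nodes are not sorted). This is done via \cref{thm:PCsimulation}, which simulates a layered \gor{}/\gnot{} circuit so that the Boolean value of a gate on level $k$ is encoded by whether the corresponding node lies above or below a tower threshold $\tow(2k+\ell)$; the gaps between these thresholds are what make all node values distinct and all markings compact, and the same construction gives the refined parametrized hardness in \cref{cor:compRedPCAC}. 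Your NAND gadget keeps all node evaluations in $\{1,2,4\}$, so many nodes share values, and it would not yield \cref{thm:compRedPC} or \cref{cor:compRedPCAC}. For Theorem~\ref{thm:compIntro} as stated, though, your approach is arguably more elementary and transparent.

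The gap you flag about the sum $\mathrm{out}(q_1) := \mathrm{out}(v_{w_1}) + \mathrm{out}(v_{w_2})$ pushing a coefficient to $\pm 2$ is real and should not be dismissed as routine: it occurs whenever the supports of $\mathrm{out}(v_{w_1})$ and $\mathrm{out}(v_{w_2})$ overlap, which happens, for instance, for any two value-$1$ input wires that both point to $n_1$, and whenever both inputs of a NAND are the same wire. Allowing integer-valued markings changes the problem, so you would still owe a reduction back to the $\{-1,0,1\}$ version. The cleanest fix is to maintain the invariant that for every wire $w$ the support of $\mathrm{out}(v_w)$ consists entirely of nodes created freshly for $w$: give each value-$1$ input wire its own private $1$-node rather than reusing $n_1$, and for each NAND gate introduce a fourth fresh node $q_2$ with $\mathrm{out}(q_2) := \mathrm{out}(v_{w_1})$ and set $\mathrm{out}(v_w) := \{q_2, -q_1, q_0\}$; then the supports feeding $q_1$ are disjoint whenever $w_1 \neq w_2$, and the degenerate case $w_1 = w_2$ (a NOT gate) can be dispatched by a separate two-node gadget computing $2 - 2^{b_{w_1}}$. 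This stays at $O(1)$ extra nodes per gate and keeps the reduction in \LOGSPACE, but it is an argument you should actually carry out rather than gesture at, and your current $\mathrm{out}(v_w) := \{v_{w_1}, -q_1, q_0\}$ already breaks the invariant by pointing back to $v_{w_1}$.

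Two smaller remarks. First, you announce a many-one reduction but then phrase the final step as ``two $\leq$-comparisons''; since $\val(\mu^\ast) \in \{0,1\}$ and $\val(\mu_1) = 1$, a single instance ``is $\val(\mu_1) \leq \val(\mu^\ast)$?'' already decides the circuit value, so a many-one reduction does go through. Second, for membership in \P{} you cite \cite{MyasnikovUW12}, which is exactly what the paper does; no issue there.
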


Power circuits can be seen in the broader context of arithmetic circuits and arithmetic complexity. Thus, results on power circuits also give further insight into these arithmetic circuits. Notice that the corresponding logic over natural numbers with addition and $2^x$ has been shown to be decidable by Semënov \cite{Semenov83}. In \cref{prop:convertToArithmeticCircuit} we show that, indeed, for every power circuit with a marking $M$ there is an arithmetic circuit of polynomial size with $+$-, $-$-, and $2^x$-gates evaluating to the same number and vice-versa. Moreover, the transformation between these two models can be done efficiently.

This work is the full and extended version of the conference publication \cite{MattesW21}. Besides giving full proofs of all results, here we explore the connections between power circuits and arithmetic circuits with $+$-, $-$-, and $2^x$-gates and in \cref{thm:PCsimulation} we give a refined variant of Theorem~\ref{thm:compIntro} which also yields hardness results for power circuits of logarithmic depth.

\section{Notation and preliminaries}\label{sec:prelims}
\subparagraph*{General notions.}
We use standard $\Oh$-notation for functions from $\N$ to 
non-negative reals $\R^{\geq 0}$, see \eg \cite{CLRS09}. 
Throughout, the logarithm $\log$ is with respect to base two.
The \ei{tower function} $\tow\colon\N \to \N$
is defined by $\tow(0) = 1$ and $\tow(i+1) = 2^{\tow(i)}$ for $i \geq 0$. 
It is primitive recursive, but $\tow(6)$ written in binary cannot be stored in the memory of any conceivable real-world computer.
Moreover, we set $\log^*(n) = \min \set{i}{\tau(i) \geq n}$. 

We denote the support of a function $f\colon X  \to \R$ by $\supp(f) = \set{x \in X}{f(x) \neq 0}$.
Furthermore, the interval of integers $\oneset{i, \dots, j} \sse \Z$ is denoted by $\interval{i}{j}$ and we define $\oneinterval{n} = \interval{0}{n-1}$.
We write $\Z[1/2] = \set{p/2^q\in \Q}{p,q \in \Z}$ for the set of dyadic fractions. 

Let $\Sigma$ be a set. The set of all words over $\Sigma$ is denoted by $\Sigma^* = \bigcup_{n\in \N}\Sig^n$. 
The length of a word $w \in \Sigma^*$ is denoted by $\abs{w}$.
A dag is a directed acyclic graph. For a dag $\Gamma$ we write $\depth(\Gamma)$ for its depth, which is the length (number of edges) of a longest path in $\Gamma$.

\subsection{Complexity}

We assume the reader to be familiar with the complexity classes \LOGSPACE and \P (polynomial time); see e.g.\ \cite{AroBar09} for details. 
Most of the time, however, we use circuit complexity within \NC.

Throughout, we assume that languages $L$ (resp.\ inputs to functions $f$) are encoded over the binary alphabet $\oneset{0,1}$.
Let $k \in \N$. A language $L$ (resp.\ function $f$)  is in \Ac{k} if there is a family of polynomial-size Boolean circuits of depth $\Oh(\log^kn)$ (where $n$ is the input length) deciding $L$ (resp.\ computing $f$). 
More precisely, a Boolean circuit is a dag (directed acyclic graph) where the vertices are either input gates $x_1, \ldots, x_n$, or \gnot, \gand, or \gor gates. There are one or more designated output gates (for computing functions there is more than one output gate~-- in this case they are numbered from $1$ to $m$). All gates may have unbounded fan-in (\ie there is no bound on the number of incoming wires).
A language $L \subseteq \{0,1\}^*$ belongs to $\Ac{k}$ if there exist a family $(C_n)_{n  \in \N}$ of Boolean circuits such that $x \in L \cap \{0,1\}^n$ if and only if the output gate of $C_n$ evaluates to $1$ when assigning $x = x_1 \cdots x_n$ to the input gates.
Moreover, $C_n$ may contain at most $n^{\Oh(1)}$ gates and have depth $\Oh(\log^kn)$. Here, the depth of a circuit is the length of the longest path from an input gate to an output gate.
Likewise $\Ac{k}$-computable functions are defined. 

The class \Tc{k} is defined analogously with the difference that also \gmaj gates are allowed (a \gmaj gate outputs $1$ if its input contains more $1$s than $0$s). Moreover, $\NC = \bigcup_{k\geq 0} \Tc{k} = \bigcup_{k\geq 0} \Ac{k}$. For more details on circuits we refer to \cite{Vollmer99}. 
Our algorithms (or circuits) rely on two basic building blocks which can be done in \TC:

\begin{example}\label{ex:itAdd}
	Iterated addition is the following problem: %
\compproblem{$n$ numbers $A_1, \dots, A_n$ each having $n$ bits}{$\sum_{i = 1}^{n}A_i$}%
\noindent	This is well-known to be in \TC.
\end{example}

\begin{example}\label{ex:sortTC}
	Let $(k_1, v_1), \dots, (k_n,v_n)$ be a list of $n$ key-value pairs $(k_i, v_i)$ equipped with a total order on the keys $k_i$ such that it can be decided in \TC whether $k_i < k_j$. 
	Then the problem of sorting the list according to the keys is in \TC: the desired output is a list $(k_{\pi(1)}, v_{\pi(1)}), \dots, (k_{\pi(n)},v_{\pi(n)})$ for some permutation $\pi$ such that $k_{\pi(i)} \leq k_{\pi(j)}$  for all $i < j$.

	We briefly describe a circuit family to do so: The first layer compares all pairs of keys $k_i,k_j$ in parallel. For all $i$ and $j$ the next layer computes a Boolean value $P(i,j)$ which is true if and only if $\abs{\set{\ell}{k_\ell < k_i}} = j$. The latter is computed by iterated addition. 
	As a final step the $j$-th output pair is set to $(k_i, v_i)$ if and only if $P(i,j)$ is true.
\end{example}

\begin{remark}\label{rem:uniformAc0}
	The class $\NC$ is contained in \P if we consider uniform circuits. A family of circuits is called \LOGSPACE-uniform (or simply uniform) if the function $1^n \mapsto C_n$ is computable in \LOGSPACE (where $1^n$ is the string consisting of $n$ ones and $C_n$ is given as some reasonable encoding). Be aware that for classes below \LOGSPACE usually even stronger uniformity conditions are imposed. In order not to overload the presentation, throughout, we state all our results in the non-uniform case~-- all uniformity considerations are left to the reader.
\end{remark}

\newcommand{\para}{\operatorname{par}}
\vspace{-1mm}
\subparagraph*{Parametrized circuit complexity.} In our work we also need some parametrized version of the classes \Tc{k}, which we call \emph{depth-parametrized}  \Tc{k}. Let $\para\colon  \{0,1\}^* \to \N$ (called the \emph{parameter}).
 Consider a family of circuits $(C_{n,D})_{n , D \in \N}$ such that $C_{n,D}$ contains at most $n^{\Oh(1)}$ gates (independently of $D$)\footnote{Here and in most other natural applications the parameter $D$ is bounded by the input size $n$. In this case, we could let the size of $C_{n,D}$ be a polynomial in both $n$ and $D$~-- without changing the actual class.} and has depth $\Oh(D \cdot \log^kn)$. 
 A language $L$ is said to be accepted by this circuit family if for all $n$ and $D$ and all $x \in\{0,1\}^n$ with $\para(x) \leq D$ we have $x \in L$ if and only if $C_{n,D}$ evaluates to $1$ on input $x$.
 Similarly, $f \colon  \{0,1\}^* \to \{0,1\}^* $ is computed by  $(C_{n,D})_{n , D \in \N}$ if for all $n$ and $D$ and all $x \in\{0,1\}^n$ with $\para(x) \leq D$ the circuit $C_{n,D}$ evaluates to $f(x)$ on input $x$. 
We define \PTc{k}{D} as the class of languages (resp.\ functions) for which there are such parametrizations $\para\colon  \{0,1\}^* \to \N$ and families of circuits $(C_{n,D})_{n , D\geq 0}$. 
 Note that this is not a standard definition~-- but it perfectly fits our purposes.
 
 \begin{lemma}\label{lem:paraToTC}
 	Let $C > 0$, $k, \ell \in \N $ and $\para\colon  \{0,1\}^* \to \N$  such that $\bigl\{w \in \{0,1\}^* \mathrel{\big|} \para(w) \leq C\cdot \floor{\log \abs{w}}^{\ell} \bigr\} \in \Tc{k + \ell}$ and $L \in \PTc{k}{D}$ (parametrized by $\para$). Then $\wt L = \set{w \in L }{\para(w) \leq C \cdot \floor{\log \abs{w}}^{\ell}}$ is in $\Tc{k+\ell}$.  	
\end{lemma}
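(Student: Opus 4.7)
The plan is to instantiate the parametrized circuit family for $L$ at the specific depth bound $D(n) = C \cdot \floor{\log n}^{\ell}$, which depends only on the input length $n$, and combine it with an auxiliary circuit that verifies the parameter bound. The whole point is that for this choice of $D$, the depth $\Oh(D \cdot \log^k n)$ of the parametrized circuit collapses to $\Oh(\log^{k+\ell} n)$, which fits exactly into $\Tc{k+\ell}$.

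First I would fix an input length $n$ and set $D(n) = C \cdot \floor{\log n}^{\ell}$. By the hypothesis $L \in \PTc{k}{D}$, there is a polynomial-size circuit $C_{n,D(n)}$ of depth $\Oh(D(n) \cdot \log^k n) = \Oh(\log^{k+\ell} n)$ which, on every input $w$ of length $n$ with $\para(w) \leq D(n)$, correctly outputs whether $w \in L$. Note that on inputs with $\para(w) > D(n)$ the output of $C_{n,D(n)}$ is unspecified, so we cannot use it alone.

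To handle that, I would use the second hypothesis: the language $\set{w \in \{0,1\}^*}{\para(w) \leq C \cdot \floor{\log\abs{w}}^{\ell}}$ is in $\Tc{k+\ell}$, so there is a polynomial-size, depth-$\Oh(\log^{k+\ell} n)$ threshold circuit family $(A_n)_{n \in \N}$ computing its characteristic function. Define the combined circuit $B_n$ at input length $n$ by
\[
B_n(w) \;=\; A_n(w) \,\wedge\, C_{n,D(n)}(w).
\]
This $B_n$ has polynomial size and depth $\Oh(\log^{k+\ell} n)$, so $(B_n)_{n \in \N}$ is a $\Tc{k+\ell}$ family. For correctness: if $\para(w) > D(n)$, then $A_n(w) = 0$ and hence $B_n(w) = 0$, which is correct because $w \notin \wt L$; if $\para(w) \leq D(n)$, then $A_n(w) = 1$ and $C_{n,D(n)}(w)$ correctly indicates membership in $L$, so $B_n(w) = 1$ iff $w \in L$ iff $w \in \wt L$. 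Thus $\wt L \in \Tc{k+\ell}$, which is what we wanted.

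There is no serious obstacle here; the only thing to be careful about is that the value $D(n)$ is a function of the input length only, so instantiating the parametrized family at $D(n)$ gives an ordinary (non-parametrized) circuit family, and that the possibly arbitrary behavior of $C_{n,D(n)}$ on inputs violating the parameter bound is harmless once conjoined with the verifier $A_n$.
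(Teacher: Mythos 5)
Your proof is correct and takes essentially the same approach as the paper: both instantiate the parametrized circuit family at $D(n) = C\cdot\floor{\log n}^{\ell}$, observe that its depth then becomes $\Oh(\log^{k+\ell} n)$, and combine it with the $\Tc{k+\ell}$ verifier for the parameter bound to handle inputs violating the bound. Your presentation of the conjunction $A_n \wedge C_{n,D(n)}$ is just a slightly more explicit spelling-out of the same construction.
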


\begin{proof}
	Let $w \in \oneset{0,1}^n$ be some input. First decide whether $ \para(w) \leq \log^{\ell} n$ (by the hypothesis this is in \Tc{k + \ell}). If yes, the circuit $C_{n,  C \cdot\floor{\log n}^{\ell} }$ can be used to decide whether $w \in \wt L$. Clearly, the combined circuit has polynomial size. Its depth is $\Oh( \log^{k + \ell } n)$ plus  $\Oh(  C \cdot\floor{\log n}^{\ell} \cdot  \log^{k } n) = \Oh( \log^{k + \ell } n)$ for $C_{n,  C \cdot\floor{\log n}^{\ell} }$. Hence, we have obtained a \Tc{k + \ell} circuit.
\end{proof}  

We introduce this parametrized \Tc{k} classes because later for computing reduced power circuits we apply a non-constant number of \TC computations $f$ one after each other. The number of these computations is the depth of the power circuit. The crucial step is to show that after any number of applications of $f$, the output is still polynomially bounded. Putting things together, we obtain a \PTc{0}{} computation parametrized by the depth of the power circuit. 
Let us formalize this idea: 

Denote the $i$-fold composition of $f$ by $f^{(i)}$ (\ie $f^{(0)}$ is the identity function and $f^{(i)} = f \circ f^{(i-1)}$ for $i \geq 1$).

In order to allow circuits to compute functions having outputs of different lengths for inputs of the same length, we can assume that each output gate also carries an enable bit (or equivalently we can think that there is an additional padding symbol in the output alphabet). 

\begin{lemma}\label{ex:composeNC1}
	Let $f \colon  \{0,1\}^* \to \{0,1\}^* $ be \Tc{k}-computable such that for all $x \in \{0,1\}^*$ there is some $\omega_x \leq \abs{x}$ with $f^{(\omega_x)}(x) = f^{(\omega_x + 1)}(x)$. Further, assume that there is some polynomial $p$ such that for all $x \in \{0,1\}^*$ and for all $i\in \N$ we have $\abs{f^{(i)}(x)} \leq p(\abs{x})$.

	Then $x \mapsto f^{(\omega_x)}(x)$ is in \PTc{k}{} where the parameter $\para\colon  \{0,1\}^* \to \N$ is defined by $x \mapsto \omega_x$. 
\end{lemma}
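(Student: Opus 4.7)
The plan is to build $C_{n,D}$ by simply stacking copies of a \Tc{k} circuit for $f$ on top of each other, feeding the output of one layer as the input of the next. The key quantitative observation is that since $\omega_x \leq |x| = n$ by hypothesis, it suffices to use $D' = \min(D,n)$ many layers: whenever $\para(x) = \omega_x \leq D$, we also have $\omega_x \leq D'$, so after $D'$ iterations the value has already stabilized at $f^{(\omega_x)}(x)$.

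Concretely, I would set $N = p(n)$ and take the \Tc{k} circuit computing $f$ on inputs of length (at most) $N$; by assumption this has $\mathrm{poly}(N) = \mathrm{poly}(n)$ gates and depth $\Oh(\log^k N) = \Oh(\log^k n)$. To deal with the fact that the intermediate strings $f^{(i)}(x)$ may have different lengths, I use the enable-bit convention for variable-length outputs mentioned just before the lemma: every layer reads and writes a padded string of length $N$, with each position carrying an enable bit so that actually meaningful suffixes are marked. In this way one layer of the stack is a well-defined \Tc{k} circuit on $N$-bit padded strings.

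The stacked circuit then has size at most $D' \cdot \mathrm{poly}(n) \leq n \cdot \mathrm{poly}(n) = n^{\Oh(1)}$, independent of $D$, and depth $D' \cdot \Oh(\log^k n) = \Oh(D \cdot \log^k n)$, which matches the definition of $\PTc{k}{D}$. Correctness holds because for every input $x$ with $\para(x) \leq D$ we have $\omega_x \leq D'$, so applying $f$ exactly $D'$ times reaches (and then fixes) the value $f^{(\omega_x)}(x)$.

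The only real care needed is the bookkeeping for variable output lengths, since the statement allows $|f^{(i)}(x)|$ to fluctuate up to $p(|x|)$; the enable-bit padding handles this uniformly, and no step requires knowing $\omega_x$ in advance (we just compose enough times). I do not expect a genuine obstacle: the polynomial size bound on $|f^{(i)}(x)|$ and the bound $\omega_x \leq n$ are precisely what is needed to keep the stacked circuit of polynomial size.
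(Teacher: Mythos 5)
Your proposal is correct and takes essentially the same approach as the paper: iterate a \Tc{k} circuit for $f$ at most $\min(D,n)$ times, using the bound $\omega_x \leq \abs{x}$ together with the polynomial bound on $\abs{f^{(i)}(x)}$ to keep the overall size polynomial, and handling the varying lengths of intermediate strings by padding with enable bits. The one place you wave your hands --- ``take the \Tc{k} circuit computing $f$ on inputs of length (at most) $N$'' --- is exactly what the paper makes explicit by forming each layer as the parallel combination $\tilde C_{p(n)}$ of all circuits $C_0, \dots, C_{p(n)}$, so that whatever the actual length of the intermediate string, the appropriate member of the family is applied.
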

	
\begin{proof}
	Let $(C_n)_{n\in \N}$ be the family of \Tc{k} circuits computing $f$. We construct a new family of circuits $(C_{n, \omega})_{n , \omega \in \N}$. 
	Let $\tilde C_m$ be a circuit consisting of $C_i$ for all $i \in \interval{0}{m}$ in parallel. We can compose $\tilde C_{p(n)} \circ C_n$ by feeding the outputs of $C_n$ into the $C_i$ (as part of $\tilde C_{p(n)}$) with the appropriate number of input bits. By iterating this, we obtain a circuit $\tilde C_{p(n)} \circ \cdots \circ \tilde C_{p(n)} \circ C_n$ consisting of $C_n$ followed by $\omega - 1$ layers of $\tilde C_{p(n)}$. By the hypothesis of the lemma, we can assume $\omega \leq n$, so this circuit contains at most $n \cdot (p(n))^2$ gates. Moreover, the depth of $\tilde C_{p(n)}$ is $\Oh(\log^kp(n)) = \Oh(\log^kn)$, so the depth of $C_{n, \omega}$ is  $\Oh(\omega \cdot \log^k(n))$.	
\end{proof}	

\subsection{Power circuits} \label{sec: PowerCircuits}

Consider a pair $(\GG,\del)$ where $\GG$ is a set of $n$ vertices and $\del$ is a mapping $\del\colon  \GG \times \GG\to \OS$.
The support of $\del$ is the subset $\supp(\del) \sse \GG \times \GG$ 
consisting of those $(P,Q)$ with $ \del(P,Q) \neq 0$. 
Thus, $(\GG,\supp(\del))$ is a directed graph without multi-edges. Throughout we require that $(\GG, \supp(\del))$ is acyclic~-- \ie it is a dag. 
In particular, $\del(P,P) = 0$ for all vertices $P$. 
 A \ei{marking} is a mapping $M\colon  \GG \to \OS$. 
Each node $P\in \GG$ is associated in a natural way with a marking $\LL_P\colon  \GG \to \OS, \; Q \mapsto \del(P,Q)$ called its successor marking. The support of $\LL_P$ consists of the target nodes of outgoing edges from $P$.
We define the \ei{evaluation} $\e(P)$ of a node ($\e(M)$ of a marking resp.)
bottom-up in the dag by induction:
\begin{align*}
\e(\es) & = 0, \\
\e(P) & = 2^{\e(\LL_P)} &\text{for a node $P$}, \\
\e(M) & = \sum_{P}M(P)\e(P) &\text{for a marking  $M$}.
\end{align*}
We have $\eps(\LL_P) = \log_2(\eps(P))$, \ie the marking $\LL_P$ plays the role of a logarithm. Note that leaves (nodes of out-degree 0) evaluate to $1$ and every node evaluates to a positive real number. However, we are only interested in the case that all nodes evaluate to integers:
\begin{definition}\label{def:PC}
	A \ei{\PC} is a pair $(\GG,\del)$ with $\del\colon  \GG \times \GG\to \OS$ such that $(\GG, \supp(\del))$
	is a dag and all nodes evaluate to some positive natural number in $2^{\N}$. 
\end{definition}
The size of a power circuit is the number of nodes $\abs{\Gamma}$. By abuse of language, we also simply call $\Gamma$ a power circuit and suppress $\delta$ whenever it is clear. If $M$ is a marking on $\Gamma$ and $S \sse \Gamma$, we write $M|_S$ for the restriction of $M$ to $S$.
Let  $(\Gamma', \delta')$ be a power circuit, $\Gamma \sse \Gamma'$, $\delta = \delta'|_{\Gamma \times \Gamma}$, and  $\delta'|_{\Gamma \times (\Gamma'\setminus \Gamma)} = 0$. Then $(\Gamma, \delta)$ itself is a power circuit. We call it a \emph{sub-power circuit} and denote this by $(\Gamma, \delta) \leq (\Gamma', \delta')$ or, if $\delta$ is clear, by $\Gamma \leq \Gamma'$.

If $M$ is a marking on $ S \sse \Gamma$, we extend $M$ to $\Gamma$ by setting $M(P) = 0$ for $P \in \Gamma \setminus S$. With this convention, every marking on $\Gamma$ also can be seen as a marking on $\Gamma'$ if $\Gamma \leq \Gamma'$.

\begin{example}\label{ex:powtow}
	A \PC of size $n+1$ can realize $\tow(n)$ since a directed path of $n+1$ nodes represents
	$\tow(n)$ as the evaluation of the last node. The following power circuit realizes $\tow(5)$ using $6$ nodes: 	
\begin{center}\vspace{-3mm}
			\tikzstyle{pcnode} = [minimum size = 12pt,circle,draw ]
				\begin{tikzpicture}[outer sep = 0pt, inner sep = 0.7pt, node distance = 40]
					{\footnotesize
						\node[pcnode] (1)    {};
						\node[pcnode, right of = 1] (2)   {};
						\node[pcnode, right of = 2] (3)    {};
						\node[pcnode, right of = 3] (4)   {};
						\node[pcnode, right of = 4] (5)   {};
						\node[pcnode, right of = 5] (6)   {};
						\node [below of=1, yshift=28](7){\footnotesize $1$};
						\node [below of=2, yshift=28](8) {\footnotesize $2$};
						\node [below of=3, yshift=28](9){\footnotesize $4$};
						\node [below of=4, yshift=28](10){\footnotesize $16$};
						\node [below of=5, yshift=28](11){\footnotesize $65536$};
						\node [below of=6, yshift=28](12){\footnotesize $2^{65536}$};
						\node [left of=1](13){};
						\node [left of=7](17){\small $\epsilon(P) $};
						
\path[->]         		(2) edge node[above, yshift=2]{\tiny +} (1)
						(3) edge node[above, yshift=2]{\tiny $+$}  (2)
						(4) edge node[above, yshift=2]{\tiny $+$} (3)
						(5) edge node[above, yshift=2]{\tiny $+$} (4)
						(6) edge node[above, yshift=2]{\tiny $+$} (5)	
						;
}
				\end{tikzpicture}
		
			\end{center}
\end{example}

\begin{example}\label{ex:binarybasis}
	We can represent every integer in the range 
	$[-2^n-1,2^n-1]$ as the evaluation of some marking in a \PC with node set $\oneset{P_{0} \lds P_{n-1}}$ with $\e(P_i) = 2^{i}$ for $i \in \oneinterval{n}$.
	Thus, we can convert the binary notation of an $n$-bit integer into a \PC
	with $n$ vertices, $\Oh( n\log n)$ edges (each successor marking requires at most $\floor{\log n} + 1$ edges) and depth at most $\log^*n$.
	 For an example of a marking representing the integer $23$, see \cref{fig: exmarking}. 
\end{example}

\begin{figure}[tbh]
	\begin{center}
		\tikzstyle{pcnode}=[minimum size= 12pt,circle,draw ]
		\vspace*{-0.2cm}
		\begin{tikzpicture}[scale=1, outer sep=0pt, inner sep = 0.7pt, node distance=1.2cm]
			{

				\node[pcnode] (0) at (0,0) {\textcolor{blue}{$-$}};
				\node[pcnode] (1) [right of=0] {};
				\node[pcnode] (2) [right of=1] {};
				\node[pcnode] (3) [right of=2] {\textcolor{blue}{+}};
				\node[pcnode] (4) [right of=3] {\textcolor{blue}{+}};
				\node[pcnode] (5) [right of=4] {};
				
				\node (6) [below of =0, yshift=10] {\footnotesize  1};
				\node (7) [below of =1,yshift=10] {\footnotesize  2};
				\node (8) [below of =2,yshift=10] {\footnotesize  4};
				\node (9) [below of =3,yshift=10] {\footnotesize 8};        
				\node (10) [below of =4,yshift=10] {\footnotesize  16};
				\node (11) [below of =5,yshift=10] {\footnotesize 32};

				\draw[->] (1) edge node[above, yshift=1] {\tiny+} (0)
				(2) edge node[above, yshift=1 ] {\tiny+} (1)
				(3) edge [bend left=28]node[below, xshift=-23, yshift=4] {\tiny+} (1)
				(3) edge[bend left=30] node[below left, xshift=-40, yshift=8] {\tiny+} (0)
				(4) edge[bend right=30] node[above,  xshift=23, yshift=-4] {\tiny+} (2)
				(5) edge[bend right=33] node[above, xshift=-20, yshift=-2] {\tiny+} (2)
				(5) edge[bend right=34] node[above, yshift=2] {\tiny+} (0)
				;
			}
		\end{tikzpicture}
	\end{center}
		\caption{Each integer $z \in \interval{-63}{63}$ can be represented by a marking in the following power circuit. The marking given in \textcolor{blue}{blue} is representing the number $23$.}\label{fig: exmarking}
\end{figure}

\begin{definition}\label{def:PCreduced}
We call a marking $M$ \emph{compact} if for all $P, Q \in \supp(M)$ with  $P \neq Q$ we have $\abs{\epsilon(\Lambda_{P})-\epsilon(\Lambda_{Q})} \geq 2$. 
A \emph{reduced power circuit} of size $n$ is a power circuit $(\Gamma,  \delta)$ with $\Gamma$ given as a sorted list $\Gamma = (P_{0}, \ldots, P_{n-1})$ such that all successor markings are compact and $\epsilon(P_{i}) < \epsilon(P_{j})$ whenever $ i < j$. In particular, all nodes have pairwise distinct evaluations. 
\end{definition}
It turns out to be crucial that the nodes in $\Gamma$ are sorted by their values. Still, sometimes it is convenient to treat $\Gamma$ as a set~-- we write $P \in \Gamma$ or $S \sse \Gamma$ with the obvious meaning. 
 Whenever convenient we assume that $\epsilon(\Lambda_{P_{i}}) = \infty$ for $i \geq n$. 

Notice that in \cite{DiekertLU13ijac} the data structure for a reduced power circuit also contains a bit-vector indicating which nodes have successor markings differing by one~-- we will compute this information on-the-fly whenever needed. 
For more details on power circuits see \cite{DiekertLU13ijac,MyasnikovUW12}.

\begin{remark}\label{rem:toposort}
	If $(\Gamma, \delta)$ is a reduced power circuit with $\Gamma = (P_{0}, \ldots, P_{n-1})$, we have $\delta(P_i, P_j) = 0$ for $j \geq i$. Thus, the order on $\Gamma$ by evaluations is also a topological order on the dag $(\Gamma, \supp(\delta))$.
\end{remark}

\section{Compact \sdr{}s}\label{sec:compact}

In this section we will show that for every binary number we can efficiently calculate a so-called unique compact representation. This will be crucial tool for the power circuit reduction process.

\begin{definition}
\begin{enumerate}[(i)]
\item A sequence $B = (b_{0}, \ldots, b_{m-1})$ with  $b_{i} \in \OS$ for $i \in \oneinterval{m} $ is called a \emph{\sdr} of $\val(B) = \sum_{i = 0}^{m-1}b_{i}\cdot2^{i} \in \Z$. 
\item  The digit-length of $B = (b_{0}, \ldots, b_{m-1})$ is the maximal $i$ with $b_{i-1} \neq 0$. 
\item 
The sequence $B = (b_{0}, \ldots, b_{m-1})$ is called \emph{compact} if $b_i b_{i-1} = 0$ for all $i \in \interval{1}{m-1}$ (\ie no two successive digits are non-zero).  
\end{enumerate}
\end{definition}
A non-negative binary number is the special case of a \sdr where all $b_i$ are $0$ or $1$ (note that, in general, they are not compact). 
Also negative binary numbers can be seen as special cases of \sdr{}s~-- though the precise form depends on the representation:
A negative number given as two's complement is a \sdr where the most-significant digit is a $\mOne$ and the other non-zero digits are $1$s; a negative signed magnitude representation can be viewed as \sdr where all non-zero digits are $\mOne$s.
In particular, every integer $k$ can be represented as a \sdr. However, in general, a \sdr for an integer $k$ is not unique.
 Still, we will prove in this section that each integer $k$, indeed, has a unique \csdr (see also \cite{MyasnikovUW12}). 

Note that by setting $b_i = 0$ for $i \geq m$, one can extend every \sdr $B = (b_{0}, \ldots, b_{m-1})$ to an arbitrarily long or infinite sequence. By doing so, $\val(B)$ and the digit-length of $B$ do not change. 

\subparagraph*{Computing \csdr{}s.}
In the following we will, amongst other things, show that for every binary number $A$ there exists such a compact signed digit representation $B$ of $A$ and that $B$ is unique with this property. We start with the existence and the complexity of calculating $B$.
While in \cite[Section 2.1]{MyasnikovUW12} a linear-time algorithm for calculating $B$ has been given, we aim for optimizing the parallel complexity.

\begin{theorem}\label{thm:compact}
	The following is in \Ac0: 

\compproblem{A binary number $A = (a_{0}, \ldots, a_{m-1})$.}{A compact signed-digit representation of $A$.}
	
\end{theorem}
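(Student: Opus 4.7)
The plan is to describe an explicit $\Ac{0}$ circuit that computes the unique non-adjacent form (NAF) of $A$, by expressing each output digit $b_i$ as a Boolean combination of local bit tests and bounded first-order quantifications over positions of $A$.

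I would first pad $A$ with a couple of leading and trailing zeros, and partition it into maximal runs of $1$s, called \emph{blocks}. Blocks and their lengths are recognised by a constant-size window: position $i$ begins a block iff $a_i = 1$ and $a_{i-1} = 0$, and that block has length $\geq 2$ iff additionally $a_{i+1} = 1$. Two consecutive blocks are \emph{connected} if they are separated by exactly one zero, and a \emph{chain} is a maximal sequence of pairwise connected blocks. Equivalently, two block starts $s' < s$ lie in the same chain iff the window $[s',s-1]$ contains no pair of consecutive zeros, that is, $\bigwedge_{t=s'}^{s-1}(a_t \vee a_{t+1})$ holds.

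Next I would analyse what the standard sequential NAF algorithm does inside a chain: it scans blocks from low to high, carrying a bit that is first ignited by the earliest block of length $\geq 2$ in the chain and then propagates through every connected block thereafter. Call this first length-$\geq 2$ block the \emph{activation point} of the chain, and call every block of the chain at or after the activation point \emph{activated}. Following the carries through a direct block-by-block induction yields the following case description of the NAF: $b_i = +1$ if $i$ is the start of a non-activated length-$1$ block; $b_i = -1$ if $i$ is the activation point of its chain; $b_i = -1$ if $a_i = 0$ and $i$ is the single-zero slot between two activated blocks of the same chain; $b_i = +1$ if $a_i = 0$ and $i$ is the position immediately past the end of the last activated block of its chain; and $b_i = 0$ otherwise. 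Correctness and compactness of the resulting $b$ then follow from the induction above.

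The final step is to argue that each of these predicates belongs to $\Ac{0}$. All the purely local tests are constant-depth. The only non-local ingredient is the statement ``the block starting at $s$ is activated'', which unfolds to the bounded existential $\exists s' \leq s \colon a_{s'-1} = 0 \wedge a_{s'} = 1 \wedge a_{s'+1} = 1 \wedge \bigwedge_{t=s'}^{s-1}(a_t \vee a_{t+1})$, an OR over polynomially many $\Ac{0}$-checkable inner formulas. The predicate ``$s$ is the activation point'' adds a bounded universal quantifier ruling out any earlier length-$\geq 2$ block in the same chain, which is again $\Ac{0}$. Combining the digit cases in parallel at every position therefore gives a polynomial-size, constant-depth circuit that outputs a compact signed-digit representation of $A$. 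The one genuine obstacle in the proof is precisely this translation of the inherently sequential carry propagation of the NAF into a constant-alternation first-order predicate over the input bits; once that reformulation is in place the $\Ac{0}$ bound, correctness and compactness all follow by routine bookkeeping.
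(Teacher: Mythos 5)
Your approach is the same at its core as the paper's: both compute the non-adjacent form (the compact \sdr) digit by digit, and both rest on the one non-obvious observation that the carry propagation of the sequential NAF algorithm can be captured by a bounded first-order formula over the input bits, which immediately gives a polynomial-size, constant-depth circuit. Indeed, your ``activated'' predicate $\exists s'\leq s\colon a_{s'-1}=0 \wedge a_{s'}=1 \wedge a_{s'+1}=1 \wedge \bigwedge_{t=s'}^{s-1}(a_t\vee a_{t+1})$ is, up to cosmetic differences, the paper's carry bit $c_i = \bigvee_{1\leq j\leq i}\bigl((a_j\wedge a_{j-1})\wedge\bigwedge_{j<k\leq i}(a_k\vee a_{k-1})\bigr)$. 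Where you diverge is in bookkeeping and verification: you organize the input into blocks, chains and activation points and describe the output digits by a five-way case split, whereas the paper sets $b_i = (a_i\oplus c_i)\cdot(\unaryminus 1)^{a_{i+1}}$ directly and proves correctness via the telescoping identity $\sum_{i=0}^{k}2^i a_i = 2^{k+1}c_{k+1}+\sum_{i=0}^{k}2^i b_i$, itself verified by an eight-row truth table for the local step $a_k + c_k = b_k + 2c_{k+1}$; compactness is handled by a second truth table. Your combinatorial picture is more illuminating about what the NAF looks like, but the paper's algebraic route is shorter and self-contained. The one thing you should be aware of is that your proof as written still has a genuine gap: you assert that ``correctness and compactness of the resulting $b$ then follow from the induction above'' without carrying out that block-by-block induction, and with five interacting cases (non-activated length-one block, activation point, inter-block zero, past-the-chain zero, default) the inductive bookkeeping is not entirely routine. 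Your case analysis is in fact correct, but a careful write-up would need to actually trace the carry through each configuration, which is why the paper's telescoping argument is the more economical proof of correctness.
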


Notice that \cref{thm:compact} implies that every integer has a \csdr.
Moreover, be aware that, clearly, the theorem is only true if we choose suitable encodings~-- in particular, we assume that the three values $\mOne,0,1$ are all encoded using two bits.

\begin{proof}
Let $A = (a_{0}, \ldots, a_{m-1})$ be a binary number. For $i\geq m$ we set $a_i = 0$.
We view the $a_i$ as Boolean variables and aim for constructing (almost) Boolean formulas for the compact representation. Since the digits of a compact representation are from the set $\oneset{\mOne,0,1}$, we treat the Boolean values $0,1$ as a subset of the integers and we will mix Boolean operations ($\land$, $\lor$, $\oplus$) with arithmetic operations ($+$, $\cdot$). Here $\oplus$ denotes the \emph{exclusive or}, which is addition modulo two.

For $i \geq 0$ we define 
\begin{align*}
c_{i}& = \bigvee_{1\leq j \leq i}\Bigl(\left(a_{j} \wedge a_{j-1}\right) \wedge \bigwedge_{j < k \leq i}\left(a_{k} \vee a_{k-1}\right)\Bigr),\\[3mm]
b_{i}& = \left(a_{i} \oplus c_{i}\right)\cdot(\mOne)^{a_{i+1}}.
\end{align*}
Moreover, we set $B = (b_{0},\ldots, b_{m-1}, b_{m})$. Observe that $c_{0} = 0$ and, hence, $b_0 = a_0 \cdot (\mOne)^{a_{1}}$. Furthermore, $b_{m} = c_{m}$ and $b_i = c_{i} = 0$ for $i \geq m+1$.

\begin{remark}\label{stz:ac0}
It is clear that the $c_{i}$ can be computed in \Ac0 and so the same holds for the $b_{i}$. This implies that on input of $A = (a_{0}, \ldots, a_{m-1})$, one can compute $B$ in \Ac0. By the very definition as a Boolean formula, it is clear, that it is actually in uniform \Ac0 (see \cref{rem:uniformAc0}). 
\end{remark}
Thus, in order to prove \cref{thm:compact}, it remains to show that $B = (b_{0}, \ldots, b_{m})$ is compact and that $\val(B) = \val(A)$. 
\begin{claim} \label{stz:formulac}
The $c_{i}$ satisfy the following recurrence: 
\begin{itemize}
\item $c_{0} = 0$
\item $c_{i} = (a_{i} \wedge a_{i-1}) \vee \bigl(c_{i-1} \wedge (a_{i} \vee a_{i-1})\bigr)$ for $i \geq 1$. 
\end{itemize}

\end{claim}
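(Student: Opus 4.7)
The plan is to verify the recurrence by splitting the disjunction in the defining formula for $c_i$ into the term corresponding to $j=i$ and those corresponding to $1\le j\le i-1$, and then factoring.

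\textbf{Step 1: Base case.} For $i=0$ the index set $\{j : 1\le j\le 0\}$ is empty, so $c_0$ is the empty disjunction, which is $0$ (false). This handles the first bullet.

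\textbf{Step 2: Separating the $j=i$ term.} For $i\ge 1$, the term with $j=i$ in the disjunction defining $c_i$ has an empty inner conjunction $\bigwedge_{i < k \le i}(a_k\vee a_{k-1})$ (true by convention), so it contributes exactly $a_i\wedge a_{i-1}$. Pulling this term out yields
\[
c_i \;=\; (a_i\wedge a_{i-1})\;\vee\;\bigvee_{1\le j\le i-1}\Bigl((a_j\wedge a_{j-1})\wedge \bigwedge_{j<k\le i}(a_k\vee a_{k-1})\Bigr).
\]

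\textbf{Step 3: Factoring out $(a_i\vee a_{i-1})$.} In the remaining disjunction, the factor $(a_i\vee a_{i-1})$ (the $k=i$ term of the inner conjunction) appears in every disjunct and can be pulled out by distributivity. What remains inside is
\[
\bigvee_{1\le j\le i-1}\Bigl((a_j\wedge a_{j-1})\wedge \bigwedge_{j<k\le i-1}(a_k\vee a_{k-1})\Bigr),
\]
which is precisely the defining formula for $c_{i-1}$. Substituting this back gives
\[
c_i \;=\; (a_i\wedge a_{i-1})\;\vee\;\bigl(c_{i-1}\wedge (a_i\vee a_{i-1})\bigr),
\]
as claimed.

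The proof is essentially a direct manipulation of the formula; there is no real obstacle. The only minor care needed is the convention that the empty conjunction is $1$ (so that the $j=i$ term contributes correctly) and the empty disjunction is $0$ (so that the base case works). Both are standard.
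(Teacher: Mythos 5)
Your proof is correct and follows essentially the same route as the paper: pull out the $j=i$ disjunct (which gives $a_i\wedge a_{i-1}$), factor $(a_i\vee a_{i-1})$ out of the remaining disjuncts by distributivity, and recognize what remains as the defining formula for $c_{i-1}$. The only cosmetic difference is that the paper phrases this as an induction, whereas you correctly observe it is just a direct algebraic manipulation of the formula.
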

\begin{claimproof}
For $i = 0$, the claim holds because the empty disjunction is equal to $0$. 
Now we assume that $i \geq 1$ and that the recurrence holds for $i-1$. 
We set $X_{j} = a_{j} \wedge a_{j-1}$ and $Y_{j} = a_{j} \vee a_{j-1}$. Then we obtain
\begin{align*}
c_{i} = \bigvee_{1\leq j \leq i}\left(X_{j} \wedge \left(\bigwedge_{j < k \leq i} Y_{k} \right) \right) & = X_{i} \vee \left( \bigvee_{1\leq j \leq i-1} X_{j} \wedge  \left(\bigwedge_{j < k \leq i} Y_{k} \right)\right) \\[3mm]
& = X_{i}  \vee \left( \left( \bigvee_{1\leq j \leq i-1} X_{j} \wedge  \left(\bigwedge_{j < k \leq i-1} Y_{k} \right)\right) \wedge Y_{i} \right)\\[2mm]
& = X_{i}  \vee \left( c_{i-1}\wedge Y_{i} \right)
\end{align*}
 This proves the claim. 
\end{claimproof}

\begin{claim} \label{stz:Indstep}
Let $a_{i}, b_{i}$ and $c_{i}$ be as above. Then for all $k \geq 0$ we have
\[a_{k}+c_{k} = b_{k}+2c_{k+1}.\]
\end{claim}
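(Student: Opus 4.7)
The identity to prove is a one-step arithmetic invariant linking the current ``carry'' bit $c_k$, the current digit $a_k$, the new signed digit $b_k$, and the next carry $c_{k+1}$. Intuitively, $a_k + c_k \in \{0,1,2\}$ is the unsigned sum of two bits, and the claim says that this sum equals $b_k + 2c_{k+1}$, i.e.\ $c_{k+1}$ plays the role of a carry propagating into the next position, while $b_k$ is the local output, which may be negated depending on $a_{k+1}$. The negation is precisely what encodes the look-ahead ``if the next digit is a $1$, absorb the current run into a $\mOne$ and forward a carry''.

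My plan is to prove the identity by a direct case analysis on $(a_k, a_{k+1}, c_k) \in \{0,1\}^3$. For each of the eight assignments, one uses Claim~\ref{stz:formulac} in the form
\[
c_{k+1} = (a_{k+1} \wedge a_k) \vee \bigl(c_k \wedge (a_{k+1} \vee a_k)\bigr)
\]
to compute $c_{k+1}$, evaluates $b_k = (a_k \oplus c_k)\cdot (\mOne)^{a_{k+1}}$ directly from its definition, and then checks numerically that $a_k + c_k$ equals $b_k + 2c_{k+1}$.

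A more conceptual presentation, which I would actually write up, is to split on the value of $a_{k+1}$ only. If $a_{k+1}=0$, then $b_k = a_k \oplus c_k$ and the recurrence collapses to $c_{k+1} = a_k \wedge c_k$, so the identity reduces to the standard half-adder equation $a_k + c_k = (a_k \oplus c_k) + 2(a_k \wedge c_k)$. If $a_{k+1}=1$, then $b_k = -(a_k \oplus c_k)$ and the recurrence gives $c_{k+1} = a_k \vee c_k$; the identity becomes $a_k + c_k = -(a_k \oplus c_k) + 2(a_k \vee c_k)$, which follows by subtracting the half-adder identity from the obvious $a_k + c_k = (a_k \vee c_k) + (a_k \wedge c_k)$ (verified by inspecting the three relevant values of the multiset $\{a_k,c_k\}$).

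I do not expect a genuine obstacle here: everything is finite Boolean bookkeeping on three bits. The only pitfall to watch for is sign handling: $b_k$ is negative exactly when $a_{k+1}=1$, and one has to confirm that in that case the carry $c_{k+1}$ is indeed forced to $1$ whenever $a_k \vee c_k = 1$, so that the $+2c_{k+1}$ contribution compensates for the negative $b_k$. This is exactly what the second clause of the recurrence for $c_{k+1}$ guarantees.
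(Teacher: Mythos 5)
Your proof is correct. The first version you describe — checking all eight Boolean assignments of $(a_k, a_{k+1}, c_k)$ against the recurrence from \cref{stz:formulac} and the definition $b_k = (a_k \oplus c_k)\cdot(\mOne)^{a_{k+1}}$ — is exactly what the paper does, presented as a truth table. Your ``more conceptual'' alternative is a genuinely nicer organization of the same check: splitting only on $a_{k+1}$ collapses the recurrence to $c_{k+1} = a_k \wedge c_k$ when $a_{k+1}=0$ (so the claim becomes the half-adder identity $a_k + c_k = (a_k \oplus c_k) + 2(a_k \wedge c_k)$) and to $c_{k+1} = a_k \vee c_k$ when $a_{k+1}=1$ (where $a_k + c_k = -(a_k \oplus c_k) + 2(a_k \vee c_k)$ follows from the half-adder identity together with the inclusion--exclusion $a_k + c_k = (a_k \vee c_k) + (a_k \wedge c_k)$). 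Both routes amount to the same finite Boolean verification, but yours makes the carry mechanism and the role of the look-ahead digit $a_{k+1}$ transparent, whereas the paper's table is shorter to typeset and leaves nothing to derive. Either writeup is complete and correct.
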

\begin{claimproof}
	By \cref{stz:formulac} we have $c_{k+1} = \left(a_{k+1} \wedge a_{k} \right) \vee \left(c_{k} \wedge \left(a_{k+1} \vee a_{k} \right) \right)$. Thus, we can express both $b_{k}$ and $c_{k+1}$ in terms of $a_k$, $a_{k+1}$ and $c_k$. This leads us to the following table:
\begin{center}
\begin{tabular}[h]{|c|c|c||c|c|}
\hline
$a_{k}$ & $a_{k+1}$ & $c_{k} $ & $b_{k}$ & $c_{k+1}$ \\
\hline 
$0$ & $0$ & $0$ & $0$ & $0$ \\
\hline 
$0$ & $0$ & $1$ & $1$ & $0$ \\
\hline 
$0$ & $1$ & $0$ & $0$ & $0$ \\
\hline 
$0$ & $1$ & $1$ & $\mOne$ & $1$ \\
\hline 
$1$ & $0$ & $0$ & $1$ & $0$ \\
\hline 
$1$ & $0$ & $1$ & $0$ & $1$ \\
\hline 
$1$ & $1$ & $0$ & $\mOne$ & $1$ \\
\hline 
$1$ & $1$ & $1$ & $0$ & $1$ \\
\hline 
\end{tabular}
\end{center}
If we now take the values in the table as integer values and put them into the above equation, we see that the equation holds in all cases. 
\end{claimproof}

\begin{claim}\label{stz:induction}
Let $a_{i}, b_{i}$ and $c_{i}$ be as above. Then for all $k \geq 0$ we have 
\[\sum_{i = 0}^{k}2^{i}a_{i} = 2^{k+1}c_{k+1}+\sum_{i = 0}^{k} 2^{i}b_{i}.\]
\end{claim}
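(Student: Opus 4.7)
The plan is to prove Claim \ref{stz:induction} by induction on $k$, using Claim \ref{stz:Indstep} as the engine that drives the induction. The main work has already been done; what remains is essentially a telescoping argument.

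For the base case $k = 0$, I would just invoke Claim \ref{stz:Indstep} with $k = 0$: since $c_0 = 0$, the identity $a_0 + c_0 = b_0 + 2c_1$ reduces to $a_0 = 2 c_1 + b_0$, which is exactly the desired equation for $k = 0$.

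For the inductive step, assume the statement holds for $k - 1$, that is,
\[
\sum_{i=0}^{k-1} 2^i a_i \;=\; 2^k c_k + \sum_{i=0}^{k-1} 2^i b_i.
\]
Add $2^k a_k$ to both sides and group $2^k c_k + 2^k a_k = 2^k(a_k + c_k)$. Then apply Claim \ref{stz:Indstep} at index $k$, which gives $a_k + c_k = b_k + 2 c_{k+1}$. Substituting, the right-hand side becomes
\[
2^k(b_k + 2 c_{k+1}) + \sum_{i=0}^{k-1} 2^i b_i \;=\; 2^{k+1} c_{k+1} + \sum_{i=0}^{k} 2^i b_i,
\]
which is exactly the desired identity for $k$. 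This completes the induction.

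There is no real obstacle here: the previous two claims have packaged the combinatorics (Claim \ref{stz:formulac} for the recurrence of the carry bits $c_i$) and the single-digit arithmetic (Claim \ref{stz:Indstep} verified by the $8$-row truth table) into a form that makes this induction essentially a one-line telescoping. The only care needed is to keep the indexing consistent, in particular to note that Claim \ref{stz:Indstep} is what converts the $a_k + c_k$ term appearing after adding $2^k a_k$ into the $b_k + 2 c_{k+1}$ term that feeds the next step. After establishing this claim, Claim \ref{stz:induction} specialized to $k = m - 1$ (together with $a_i = 0$ for $i \geq m$) will yield $\val(A) = \val(B)$, and the compactness of $B$ can be read off directly from the recurrence in Claim \ref{stz:formulac}, completing the proof of Theorem~\ref{thm:compact}.
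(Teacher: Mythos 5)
Your proof is correct and follows essentially the same route as the paper: induction on $k$, with the base case handled by Claim~\ref{stz:Indstep} at $k=0$ (using $c_0=0$), and the inductive step obtained by adding the top term, grouping $a_k+c_k$, and invoking Claim~\ref{stz:Indstep} again. The only cosmetic difference is that the paper phrases the step as $k \to k+1$ while you phrase it as $k-1 \to k$.
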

\begin{claimproof}
We use induction on $k$. 
Since $c_{0} = 0$ we have $a_{0} = 2 \cdot c_{1}+b_{0}$ by \cref{stz:Indstep}. Therefore, the equation holds for $k = 0$. 

Now let $k\geq 0 $. Then we obtain 
{\allowdisplaybreaks\begin{align*}
\sum_{i = 0}^{k+1}2^{i}a_{i}& = 2^{k+1}a_{k+1}+\sum_{i = 0}^{k}2^{i}a_{i}\\
& = 2^{k+1}a_{k+1} + 2^{k+1}c_{k+1}+\sum_{i = 0}^{k}2^{i}b_{i}\tag{by induction}\\
& = 2^{k+1} \left(a_{k+1} +c_{k+1} \right)+\sum_{i = 0}^{k}2^{i}b_{i}\\
& = 2^{k+1} \left(b_{k+1}+2 c_{k+2}\right) +\sum_{i = 0}^{k}2^{i}b_{i} \tag{by \cref{stz:Indstep}}\\
& = 2^{k+2} c_{k+2} +\sum_{i = 0}^{k+1}2^{i}b_{i}
\end{align*}}
This proves the claim. 
\end{claimproof}

\begin{claim}\label{stz: compact}
Let $B = (b_{0},\ldots, b_{m-1}, b_{m})$ be as defined above. Then $B$ is compact. 
\end{claim}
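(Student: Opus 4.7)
The plan is to unpack the definition $b_i = (a_i \oplus c_i)\cdot(\mOne)^{a_{i+1}}$: since $(\mOne)^{a_{i+1}} \in \oneset{\mOne,1}$, we have $b_i \neq 0$ if and only if $a_i \oplus c_i = 1$, i.e.\ $a_i \neq c_i$. So compactness of $B$ reduces to the claim that $a_i \neq c_i$ forces $a_{i+1} = c_{i+1}$ (which would then give $b_{i+1} = 0$).

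To prove this, I would use the recurrence $c_{i+1} = (a_{i+1} \wedge a_i) \vee (c_i \wedge (a_{i+1} \vee a_i))$ from \cref{stz:formulac} and do a two-case analysis on the value $(a_i, c_i) \in \oneset{(0,1),(1,0)}$. In the case $(a_i,c_i) = (0,1)$, the recurrence simplifies to $c_{i+1} = 0 \vee (a_{i+1} \vee 0) = a_{i+1}$. In the case $(a_i,c_i) = (1,0)$, it simplifies to $c_{i+1} = a_{i+1} \vee 0 = a_{i+1}$. In either case, $a_{i+1} = c_{i+1}$, which forces $b_{i+1} = 0$ and therefore precludes the forbidden pattern of two consecutive nonzero digits.

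For the boundary case $i = m$, note that $b_{m+1} = c_{m+1} = 0$ (since $a_k = 0$ for $k \geq m$ implies $c_k = 0$ for $k \geq m+1$ by the recurrence), so only the interior pairs need to be checked, and these are covered by the argument above. There is no real obstacle here: the proof is a mechanical case check on the truth table already exhibited in the proof of \cref{stz:Indstep}, and indeed one could alternatively read compactness straight off that table by observing that in every row where $b_k \neq 0$, the value of $c_{k+1}$ coincides with $a_{k+1}$, hence the next digit $b_{k+1}$ produced from $(a_{k+1}, c_{k+1})$ will vanish.
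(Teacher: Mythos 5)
Your proof is correct, and it streamlines the paper's argument in a small but genuine way. The paper proves compactness by exhibiting the full $8$-row truth table over $(a_i,a_{i+1},c_i)$, recording $b_i$ and (whether) $b_{i+1}$ (vanishes), and inspecting that no row has both nonzero. You instead isolate the useful invariant $b_i \neq 0 \iff a_i \neq c_i$ (immediate from $b_i = (a_i \oplus c_i)\cdot(\mOne)^{a_{i+1}}$, since the sign factor is never zero), and then verify directly from the recurrence $c_{i+1} = (a_{i+1}\wedge a_i)\vee\bigl(c_i\wedge(a_{i+1}\vee a_i)\bigr)$ of \cref{stz:formulac} that $a_i \neq c_i$ forces $c_{i+1} = a_{i+1}$, hence $b_{i+1}=0$; this reduces the case check to the two cases $(a_i,c_i)\in\{(0,1),(1,0)\}$, both of which you compute correctly. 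The two proofs rest on identical formulas, so they are logically the same, but your formulation makes the reason for compactness transparent rather than leaving it buried in a table, and you also correctly note that nothing special is needed at the upper boundary since $a_k = 0$ for $k\geq m$ makes $c_{m+1}=b_{m+1}=0$.
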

\begin{claimproof}
We have to make sure that there is no $i \in \oneinterval{m}$ such that $b_{i}\neq 0$ and $b_{i+1}\neq 0$. In order to do so, we express $b_{i}$ and $b_{i+1}$ in terms of $a_i$, $a_{i+1}$ and $c_i$.
Notice that $b_{i+1}$ is not fully determined by $a_i$, $a_{i+1}$ and $c_i$. Still these three values tell us whether $b_{i+1}$ is zero or not.
This leads us to the following table, which shows that $B$ is, indeed, compact:
\begin{center}
\begin{tabular}[h]{|c|c|c||c|c|c|}
\hline
$a_{i}$ & $a_{i+1}$ & $c_{i} $ & $c_{i+1} $ & $b_{i}$ & $b_{i+1}$ \\
\hline
$0$ & $0$ & $0$ & $0$ & $0 $ & $0$ \\
\hline
$0$ & $0$ & $1$ & $0$ & $1 $ & $0$ \\
\hline
$0$ & $1$ & $0$ & $0$ & $0 $ & $\pm 1$ \\
\hline
$0$ & $1$ & $1$ & $1$ & $\mOne$ & $0 $ \\
\hline
$1$ & $0$ & $0$ & $0$ & $ 1 $ & $0$ \\
\hline
$1$ & $0$ & $1$ & $1$ & $0 $ & $\pm 1$ \\
\hline
$1$ & $1$ & $0$ & $1$ & $\mOne $ & $0$ \\
\hline

$1$ & $1$ & $1$ & $1$ & $0$  & $0$ \\
\hline
\end{tabular}
\end{center}
\end{claimproof}

Now we are ready to finish the proof of \cref{thm:compact}.
Let $A = (a_{0}, \ldots, a_{m-1})$ and $B = (b_{0}, \ldots, b_{m-1}, b_{m})$ as above. By \cref{stz: compact}, $B$ is compact.  
Moreover, we have
\begin{align*}
\val(B)& = \sum_{i = 0}^{m}2^{i}b_{i}
 	 = 2^{m}c_{m} + \sum_{i = 0}^{m-1}2^{i}b_{i}\tag{since $c_{m} = b_{m}$}\\
 	& = \sum_{i = 0}^{m-1}2^{i}a_{i} = \val(A).\tag{by \cref{stz:induction}}
\end{align*}
Therefore, $B$ is a \csdr for $A$ as claimed in \cref{thm:compact}. By \cref{stz:ac0}, it can be computed in \Ac0. 
\end{proof}

\subparagraph*{Uniqueness of \csdr{}s.}

The following lemmas are crucial tools both for proving uniqueness of compact representations and for the power circuit reduction process, which we describe later.
In \cite[Section 2.1]{MyasnikovUW12} similar statements can be found. 

\begin{lemma}\label{lem:valcom}
	Let $A$ be a \csdr{}  and let $B = (b_{0}, \ldots, b_{n-1})$ be a \csdr of digit-length $n$ such that $b_i = n-i \bmod 2$ (\ie $b_{n-1} = 1$ and then $B$ alternates between $0$ and $1$).
	Then we have
	\begin{enumerate}[(i)]
		\item\label{maxCompactVal} $\val(B) = \floor{\frac{2^{n+1}}{3}}$,
		\item\label{compactCompareMax} $\val(A) \leq \val(B)$ if and only if $\val(A) \leq 0$ or the digit-length of $A$ is at most $n$. 
	\end{enumerate}
\end{lemma}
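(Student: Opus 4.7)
For part~(i), the plan is a direct computation: by construction $\val(B) = \sum_{i \geq 0} 2^{n-1-2i}$ where the sum runs over $i$ with $n-1-2i \geq 0$. Splitting into the cases $n$ even and $n$ odd, each sum is a finite geometric series with ratio $4$ that evaluates to $(2^{n+1}-2)/3$ or $(2^{n+1}-1)/3$ respectively; checking the residue of $2^{n+1} \bmod 3$ in each case shows both equal $\floor{2^{n+1}/3}$.

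For part~(ii), I would introduce $M_n := \max\{\val(A) \mid A \text{ csdr of digit-length} \leq n\}$ and prove by induction that $M_n = \val(B)$; then both directions of the biconditional fall out. The key structural observation is that if a csdr $A = (a_0,\ldots,a_{m-1})$ of digit-length exactly $m \geq 1$ has $\val(A) > 0$ then necessarily $a_{m-1} = 1$, because if $a_{m-1} = \unaryminus 1$ then by compactness $a_{m-2} = 0$ and the remaining digits contribute at most $M_{m-2} \leq 2^{m-2}$, giving $\val(A) \leq -2^{m-1} + 2^{m-2} < 0$. Combined with compactness ($a_{m-2} = 0$), this yields the recurrence $M_n \leq 2^{n-1} + M_{n-2}$, with equality witnessed by $B$ itself, establishing $M_n = \val(B)$.

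With $M_n = \val(B)$ in hand, the backward direction of (ii) is immediate: if $\val(A) \leq 0$ then $\val(A) \leq 0 < \val(B)$, and if the digit-length of $A$ is at most $n$ then $\val(A) \leq M_n = \val(B)$ by definition of $M_n$. For the forward direction I would argue the contrapositive: assume $\val(A) > 0$ and the digit-length $m$ of $A$ satisfies $m \geq n+1$. By the structural observation $a_{m-1} = 1$ and $a_{m-2} = 0$, so $\val(A) \geq 2^{m-1} - M_{m-2}$. It therefore suffices to show $M_{m-2} + M_n < 2^{m-1}$ for all $m \geq n+1$. When $m = n+1$ the two sums $M_{n-1}$ and $M_n$ occupy complementary parities of positions below $n$ and together telescope to at most $2^n - 1 < 2^{m-1}$; when $m \geq n+2$ both $M_n$ and $M_{m-2}$ are bounded by $M_{m-2} < 2^{m-2}$, giving $M_{m-2} + M_n < 2^{m-1}$.

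The main obstacle I anticipate is the boundary case $m = n+1$ of the final inequality, where one cannot simply bound both summands by $2^{m-2}$ and must use the complementary-parity observation (or equivalently the closed form $M_k = \floor{2^{k+1}/3}$ from part~(i)) to conclude $M_{n-1} + M_n \leq 2^n - 1$. Modulo this careful parity bookkeeping, the rest of the argument is a clean induction/case analysis that directly exploits compactness.
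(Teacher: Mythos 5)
Your proof is correct. Part~(i) is the same geometric-series computation the paper uses. For part~(ii) you take a genuinely different route to the key maximality claim: the paper shows that $\val(B)$ is the largest value among csdr's of digit-length at most $n$ by exhibiting a short list of value-increasing local rewrite rules, one of which is always applicable to a compact $A \neq B$ of digit-length at most $n$; you instead write $M_n$ for this maximum and derive the recurrence $M_n = 2^{n-1} + M_{n-2}$ from the structural observation that a positive-valued csdr of digit-length exactly $m$ has $a_{m-1}=1$ and, by compactness, $a_{m-2}=0$. Your recurrence argument is more structural and dovetails with the same decomposition you reuse in the forward direction; the paper's rewriting argument is a quicker path to extremality. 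One small step worth filling in on your side: when deducing $M_n \leq 2^{n-1}+M_{n-2}$, you should also dispose of maximizers of digit-length strictly less than $n$, which is easy since $M_{n-1} < 2^{n-1} \leq 2^{n-1}+M_{n-2}$. For the forward contrapositive, the paper avoids your case split on $m=n+1$ versus $m\geq n+2$ by the single chain $\val(A) \geq 2^{m-1} - \floor{2^{m-1}/3} > \floor{2^m/3} \geq \val(B)$, using that $2^m$ is never a multiple of $3$; your two-case parity bookkeeping is also valid, just slightly more work.
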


\begin{proof}	
	First, we want to calculate $\val(B)$. If $n$ is even, then 
	\begin{align*}
		\val(B) = \sum_{i = 0}^{\frac{n}{2}-1}2^{2i+1} = 2\sum_{i = 0}^{\frac{n}{2}-1}4^{i} = 2 \cdot \frac{1-4^{\frac{n}{2}}}{1-4} = \frac{2}{3}\cdot (2^{n}-1).
	\end{align*}
	If $n$ is odd, then 
	\begin{align*}
		\val(B) = \sum_{i = 0}^{\frac{n-1}{2}}2^{2i} = \sum_{i = 0}^{\frac{n-1}{2}}4^{i} = \frac{1-4^{\frac{n-1}{2}+1}}{1-4} = \frac{2^{n+1}-1}{3}
	\end{align*}
	showing that in any case $\val(B) = \floor{\frac{2^{n+1}}{3}}$. 
	
	In order to see (\ref{compactCompareMax}), we denote $A = (a_{0}, \ldots, a_{n-1})$. If $\val(A)\leq 0$, then clearly $\val(A)\leq \val(B)$. Hence, assume that the digit-length of $A$ is at most $n$ and consider the following operations:
	\begin{enumerate}
		\item If $a_{i} = \mOne$, then set $a_{i} = 0$. 
		\item If $a_{n-1} = 0$, then set $a_{n-1} = 1$ and set $a_{n-2} = 0$. 
		\item If $a_{i} = a_{i+1} = 0$ with $i \in \interval{1}{n-2} $, then set $a_{i} = 1$ and $a_{i-1} = 0$ (technically, this rule subsumes the previous rule). 
		\item If $a_0 = a_1 = 0$, set $a_0 = 1$. 
	\end{enumerate}
	Let $A'$ be the number we obtained after applying at least one of the above operations to $A$ (if this is possible). Then $A'$ is also a \csdr, the digit-length of $A'$ is at most $ n$, and $\val(A) < \val(A')$. Moreover, if $A\neq B$, then we always can apply one of these rules. This shows that $\val(A) \leq \val(B)$.

	On the other hand, assume that the digit-length of $A$ is $m$  with $m \geq n+1$. First, assume that $a_{m-1} = 1$ and set $A' = (a_{0}, \ldots, a_{m-3})$. Then, since $A$ is compact, we have $a_{m-2} = 0$ and, hence, $\val(A) = 2^{m-1} + \val(A')$.	
	By the previous implication and part (\ref{maxCompactVal}), we know that $\abs{\val(A')} \leq \floor{\frac{2^{m-1}}{3}}$. Therefore, $\val(A) \geq 2^{m-1} - \abs{\val(A')} \geq 2^{m-1} - \floor{\frac{2^{m-1}}{3}} > \floor{\frac{2^{m}}{3}} \geq \floor{\frac{2^{n+1}}{3}} = \val(B)$.	
	If $a_{m-1} = -1$, we obtain $ \val(A) < 0$ with the same argument.
\end{proof}

\begin{lemma}[\,\!{\cite[Lemma 4]{MyasnikovUW12}}]\label{lem:compareM}
	Let $A = (a_{0}, \ldots, a_{m-1})$, $B = (b_{0}, \ldots,b_{m-1})$ be \csdr{}s.
	Then: 
	\begin{enumerate}[(i)] 
		\item\label{compactCompareValEqual} $\val(A) = \val(B) $ if and only if  $a_{i} = b_{i}$ for all $i \in \oneinterval{m}$.
		\item \label{compactCompareVal} Assume there is some $i$ with $a_{i} \neq b_{i}$ and let $i_{0} = \max\set{i \in \oneinterval{m}}{a_{i} \neq b_{i}}$. Then $\val(A) < \val(B)$ if and only if $a_{i_{0}} < b_{i_{0}}$. 
	\end{enumerate}
\end{lemma}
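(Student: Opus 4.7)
I will prove (ii) first and derive (i) as an immediate corollary. For (ii), by symmetry assume $a_{i_0} < b_{i_0}$; the goal is to show $\val(B) - \val(A) > 0$. Since $a_i = b_i$ for $i > i_0$, we can split
\[
\val(B) - \val(A) \;=\; (b_{i_0} - a_{i_0})\, 2^{i_0} \;+\; \bigl(\val(B|_{<i_0}) - \val(A|_{<i_0})\bigr),
\]
where $X|_{<i_0}$ denotes the truncation to positions $0,\dots,i_0-1$. The plan is to show the leading term dominates the ``tail'' in absolute value, so the sign is determined by $b_{i_0} - a_{i_0}$.

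I distinguish two cases according to the digits at position $i_0$. If $a_{i_0}$ and $b_{i_0}$ are both non-zero, then $(a_{i_0},b_{i_0}) = (\unaryminus 1, 1)$ and the leading term is $2 \cdot 2^{i_0}$; compactness of $A$ and $B$ forces $a_{i_0-1} = b_{i_0-1} = 0$, so each truncation has digit-length at most $i_0 - 1$, and Lemma~\ref{lem:valcom}(\ref{compactCompareMax}) (applied in both signs) yields $|\val(A|_{<i_0})|, |\val(B|_{<i_0})| \leq \lfloor 2^{i_0}/3\rfloor$. Hence the tail has absolute value at most $2\lfloor 2^{i_0}/3\rfloor < 2\cdot 2^{i_0}$, as required. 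In the remaining case exactly one of $a_{i_0}, b_{i_0}$ is zero and $b_{i_0} - a_{i_0} = 1$; compactness forces the digit at position $i_0 - 1$ of the non-zero side to be zero, so that truncation has digit-length at most $i_0 - 1$ and hence absolute value at most $\lfloor 2^{i_0}/3\rfloor$, while the other truncation has digit-length at most $i_0$ and hence absolute value at most $\lfloor 2^{i_0+1}/3\rfloor$. The tail is then bounded by $\lfloor 2^{i_0}/3\rfloor + \lfloor 2^{i_0+1}/3\rfloor$, which is strictly less than $2^{i_0}/3 + 2\cdot 2^{i_0}/3 = 2^{i_0}$ because $2^{i_0}/3$ is never an integer.

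Part (i) then follows at once: the ``if'' direction is immediate, and if $A \neq B$ there exists a largest index $i_0$ where they differ, so (ii) applied either as stated or with $A$ and $B$ swapped yields $\val(A) \neq \val(B)$.

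The main obstacle is the second case, where the leading term is only $2^{i_0}$ and a naive bound from Lemma~\ref{lem:valcom} would give exactly $2^{i_0}$ for the tail, not a strict inequality. The key observation is that compactness forces the digit-length of one of the two truncations down by one, and that the floors in Lemma~\ref{lem:valcom}(\ref{maxCompactVal}) are genuinely strict because $3$ never divides a power of two; these two small savings together give the strict bound needed.
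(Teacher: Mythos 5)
Your proof is correct, but it takes a slightly different route from the paper's, and it is worth noting how. The paper truncates at position $i_0$ \emph{inclusive}: with $A' = (a_0, \dots, a_{i_0})$, $B' = (b_0, \dots, b_{i_0})$ and WLOG $(a_{i_0}, b_{i_0}) = (0,1)$, it invokes Lemma~\ref{lem:valcom}(\ref{compactCompareMax}) once, as an iff: $A'$ has digit-length at most $i_0$ so $\val(A') \leq \floor{2^{i_0+1}/3}$, while $B'$ is positive with digit-length $i_0 + 1 > i_0$ so $\val(B') > \floor{2^{i_0+1}/3}$. The quantity $\floor{2^{i_0+1}/3}$ serves as an exact separating threshold, so there is no case split and no floor arithmetic; the remaining digit-pair cases are handled by the same argument after sign-flipping. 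You instead truncate \emph{exclusive}, splitting off the leading term $(b_{i_0}-a_{i_0})2^{i_0}$ and bounding the tail by hand, which forces a case analysis on $(a_{i_0}, b_{i_0})$, needs the compactness observation that the digit at $i_0-1$ on the non-zero side vanishes, and then requires the estimate $\floor{2^{i_0}/3} + \floor{2^{i_0+1}/3} < 2^{i_0}$. That estimate is correct (in fact $\floor{2^{i_0}/3} + \floor{2^{i_0+1}/3} = 2^{i_0} - 1$ exactly, since $2^{i_0}$ and $2^{i_0+1}$ are congruent to $1$ and $2$ modulo $3$ in some order), and your argument goes through in full. What the paper's inclusive truncation buys is precisely the avoidance of this case split and of the delicate strictness-from-floors observation; both routes ultimately rest on Lemma~\ref{lem:valcom}, and yours is a valid, if more hands-on, alternative.
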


\begin{proof}
	Notice that (\ref{compactCompareValEqual}) is an immediate consequence of (\ref{compactCompareVal}).
	In order to see (\ref{compactCompareVal}), observe that it suffices to show only one implication. Let $A' = (a_{0}, \ldots, a_{i_0})$ and $B' = (b_{0}, \ldots, b_{i_0})$ and assume that $0 = a_{i_{0}} < b_{i_{0}} = 1$ (the cases involving the value $\mOne$ follow with the same argument).
	Now, $A'$ and $B'$ are \csdr{}s, so by \cref{lem:valcom}, $\val(A') \leq \floor{\frac{2^{i_0+1}}{3}}$ and $\val(B') > \floor{\frac{2^{i_0+1}}{3}}$. Hence, $\val(A)<\val(B) $.
\end{proof}

From this lemma together with \cref{thm:compact} it follows that each $k \in \Z$ can be uniquely represented by a compact signed digit representation $\cor(k)$. Likewise for a signed digit representation $A$, we write $\cor(A)$ for its compact signed digit representation.

\begin{corollary}\label{cor:signedDigitOps} 
	The following problems are in \Ac0:
	\begin{enumerate}[(i)]
		\item\label{sdrToCompact} \compproblem{A \sdr{} $A$. }{$\cor(A)$.} 
		\item\label{sumCompact} \compproblem{Signed-digit representations $A$ and $B$. }{The compact signed-digit representation of $\val(A) + \val(B)$.}
		\item\label{sdrCompare} \problem{Signed-digit representations $A$ and $B$.}{Is $\val(A) < \val(B)$?}
	\end{enumerate}
\end{corollary}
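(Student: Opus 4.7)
All three parts follow once the main technical claim (i) is established; given (i), the other two parts reduce to routine applications of Theorem \ref{thm:compact} and Lemma \ref{lem:compareM}.

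For part (i), given an SDR $A = (a_0, \ldots, a_{m-1})$ with $a_i \in \smallset{\mOne, 0, 1}$, the plan is to generalize the proof of Theorem \ref{thm:compact} by tracking a signed carry. Define $c_i \in \smallset{\mOne, 0, 1}$ so that, together with the compact output digits $b_i \in \smallset{\mOne, 0, 1}$, one has $a_i + c_i = b_i + 2\,c_{i+1}$. A short case analysis on the value of $a_i + c_i \in \smallset{-2, \ldots, 2}$ shows that enforcing compactness ($b_i b_{i+1} = 0$) and uniqueness of the compact form makes $c_{i+1}$ a function of the local triple $(c_i, a_i, a_{i+1})$ alone and keeps $c_i$ confined to $\smallset{\mOne, 0, 1}$ throughout. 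The key step is then to express $c_i$ by an unbounded fan-in Boolean formula in the spirit of Theorem \ref{thm:compact}, treating the two signs separately: ``$c_i = +1$'' reads as ``there exists some $j \leq i$ at which a positive carry is generated, and for every $k \in (j, i]$ the local configuration propagates rather than kills it'', and symmetrically for ``$c_i = \mOne$''. Each of ``generated at $j$'' and ``not killed at $k$'' is a local predicate in $a_{j-1}, a_j, a_{j+1}$ (resp.\ $a_{k-1}, a_k, a_{k+1}$), so the overall formula has the shape $\bigvee_j \bigwedge_k (\cdots)$ and lies in \Ac0. Once $c_i$ is available, each $b_i$ is then a local function of $(a_i, c_i, a_{i+1})$.

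For part (iii), apply (i) to obtain $\cor(A)$ and $\cor(B)$. If the two compact forms have different digit-lengths, Lemma \ref{lem:valcom} immediately settles the comparison via the sign of the leading digit of the longer one. Otherwise Lemma \ref{lem:compareM}(ii) reduces the comparison to locating the largest index where the compact forms disagree and comparing those two digits, which is clearly in \Ac0.

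For part (ii), first use (i) to convert $A$ and $B$ into compact form. Then compute the digit-wise sum $s_i = a_i + b_i \in \smallset{-2, \ldots, 2}$ and decompose $s_i = t_i + 2\,u_i$ with $t_i, u_i \in \smallset{\mOne, 0, 1}$ and $t_i u_i = 0$ (the decomposition is unique). Set $r_i = t_i + u_{i-1}$. Compactness of the inputs forbids $t_i = u_{i-1} = 1$: the latter would force $a_{i-1} = b_{i-1} = 1$, so by compactness $a_i = b_i = 0$, contradicting $t_i = 1$; a symmetric argument rules out $t_i = u_{i-1} = \mOne$. Hence $r_i \in \smallset{\mOne, 0, 1}$, and $R = (r_0, \ldots, r_m)$ is an SDR with $\val(R) = \val(A) + \val(B)$, so a final application of (i) yields the desired compact form. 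The main obstacle throughout is the \Ac0 construction of the signed carry in part (i): in the binary setting of Theorem \ref{thm:compact} all carries are non-negative and a single monotone ``generated-and-propagated'' predicate suffices, whereas for signed digits positive and negative carries can both arise and cancel, making the two-sided predicate sketched above essential.
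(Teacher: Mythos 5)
Your route is genuinely different from the paper's. The paper does not generalize the carry machinery at all: it splits a \sdr{} $A$ into two non-negative binary numbers $B$ (the positive digits) and $C$ (the negative digits) with $\val(A) = \val(B) - \val(C)$, computes the binary difference in \Ac{0}, and only then applies \cref{thm:compact} to the result; parts (ii) and (iii) are handled by the same reduction. Your part~(ii) decomposition $s_i = t_i + 2u_i$, $r_i = t_i + u_{i-1}$ is a pleasant alternative, and part~(iii) matches the paper; those are fine modulo part~(i).

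However, there is a gap in part~(i). You correctly observe that the unique compact form induces a well-defined signed carry $c_i \in \oneset{\mOne,0,1}$ with $c_{i+1}$ a function of $(c_i, a_i, a_{i+1})$. But the central claim --- that $[c_i = 1]$ can be written as a single $\exists j\,\forall k$ formula with \emph{local} ``generated-at-$j$'' and ``propagated-at-$k$'' predicates --- does not go through. In the binary setting of \cref{thm:compact} the carry is always non-negative, so the local pattern $a_{j-1} = a_j = 1$ \emph{forces} $c_j = 1$ regardless of the earlier digits: the existential witness $j$ is self-certifying. In the signed setting no local pattern forces $c_j = 1$, since an incoming carry $c_{j-1} = \mOne$ can always absorb the generation. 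Concretely, take $A = (\mOne, \mOne, 1, 1, 0)$. Then $\val(A)=9$, $\cor(A) = (1,0,0,1,0)$, and the carry sequence is $(c_0,\dots,c_4) = (0,\mOne,\mOne,0,0)$. The local window around $j=3$ is precisely the positive-generation-then-propagation pattern of the binary case ($a_2 = a_3 = 1$), yet $c_4 = 0$, because $c_2 = \mOne$. So the natural generalization of the binary predicates is wrong, and the ``short case analysis'' you invoke would also have to certify $c_{j-1} \neq \mOne$, which is again a global condition of the same kind; one is pushed towards a deeper quantifier alternation (or an altogether different decomposition) that you have not supplied. The paper's reduction to the binary case sidesteps this entirely and is the cleaner route.
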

\begin{proof}
	Given a \sdr $A = (a_{0}, \ldots, a_{m-1})$, we can split it into two non-negative binary numbers $B, C$ such that $\val(A) = \val(B) - \val(C)$ (\ie $b_i = \max\oneset{0,a_i}$ and $c_i = -\min\oneset{0,a_i}$). From these binary numbers we can compute the difference in \Ac0 and then make the result compact using \cref{thm:compact}. To see (\ref{sumCompact}), we proceed exactly the same way. For comparing two \sdr{}s, we compute their compact representations using part (\ref{sdrToCompact}) and then compare them in \Ac0 by evaluating the condition in \cref{lem:compareM}. 
\end{proof}

\section{Operations on power circuits}\label{sec:PCs}

\subsection{Basic operations}\label{sec:PCoperations}

Before we consider the computation of reduced power circuits, which is our main result in this section, let us introduce some more notation on power circuits and recall the basic operations from \cite{MyasnikovUW12,DiekertLU13ijac} under circuit complexity aspects.

\subparagraph*{Markings and chains.}

\begin{definition}\label{def:chain}
	
	Let $(\Gamma, \delta)$ be a reduced power circuit with $\Gamma$ given as the sorted list $\Gamma = (P_{0}, \ldots , P_{n-1})$. 
\begin{enumerate}[(i)]
\item A \emph{chain} $C$ of length $\abs{C}=\ell$ in $\Gamma$ starting at $P_{i} = \startc{C}$ is a sequence $(P_{i}, \ldots, P_{i+\ell-1})$ such that $\epsilon(P_{i+j+1}) = 2 \cdot \epsilon(P_{i+j})$ for all $j \in \oneinterval{\ell-1} $. 

	In particular, $\epsilon(P_{i+j}) = 2^{j}\cdot\epsilon(P_{i})$ for all $j \in \oneinterval{\ell} $.  As we do for $\Gamma$, we treat a chain both as a sorted list and as a set.

\item We call a chain $C$ \emph{maximal} if it cannot be extended in either direction. 
We denote the set of all maximal chains by $\mathcal{C}_{\Gamma}$.

	As a set, a reduced power circuit is the disjoint union of its maximal chains.

\item Let $M$ be a marking in the reduced power circuit $(\Gamma, \delta)$ and let $C = (P_{i}, \ldots, P_{i+\ell-1}) \in \mathcal{C}_{\Gamma}$ and define $a_{j} = M(P_{i+j})$ for $i \in \oneinterval{\ell} $. Then we write $\binM{M}{C} = (a_{0}, \ldots, a_{\ell-1})$.

\item There is a unique maximal chain $C_0$ containing the node $P_0$ of value $1$. We call $C_0$ the \emph{initial maximal chain} of $\Gamma$ and denote it by $C_0 = C_0(\Gamma)$.
\end{enumerate}

\end{definition}

For an example of a power circuit with three maximal chains, see \cref{fig: exchains}.

\begin{figure}[h]  
	\begin{center}
		\begin{tikzpicture}[scale=1.5, outer sep=0pt, inner sep = 0.7pt, node distance=1.2cm]{		
				\tikzstyle{pcnode}=[minimum size= 12pt,circle,draw ]
				\node[pcnode] (0) at (0,0) {};
				\node[pcnode] (1) [right of=0] {};
				\node[pcnode] (2) [right of=1] {};
				\node[pcnode] (3) [right of=2] {};
				\node[pcnode] (4) [right of=3] {};
				\node[pcnode] (5) [right of=4] {};
				\node[pcnode] (6) [right of=5] {};
				
				\node (7) [below of =0, yshift=17] {\footnotesize 1};
				\node (8) [below of =1,yshift=17] {\footnotesize 2};
				\node (9) [below of =2,yshift=17] {\footnotesize 4};
				\node (10) [below of =3,yshift=17] {\footnotesize 8};        
				\node (11) [below of =4,yshift=17] {\footnotesize $2^{8}$};
				\node (12) [below of =5,yshift=17] {\footnotesize $2^{9}$};
				\node (13) [below of =6,yshift=17] {\footnotesize $2^{2^{9}}$};

				\draw[->] (1) edge node[above, yshift=1] {\tiny+} (0)
				(2) edge node[above, yshift=1 ] {\tiny+} (1)
				(3) edge node[above, yshift=1] {\tiny+} (2)
				(3) edge[bend left=20] node[below left, xshift=-31, yshift=5.5] {\footnotesize$-$} (0)
				(4) edge node[above, yshift=1] {\tiny+} (3)
				(5) edge[bend right=30] node[above, xshift=-20, yshift=-2] {\tiny+} (3)
				(5) edge[bend right=28] node[above, yshift=2] {\tiny+} (0)
				(6) edge node[above, yshift=1] {\tiny+} (5)
				;
				
			}

		\end{tikzpicture}
		\caption{This power circuit is an example for a reduced power circuit with three maximal chains: The first one consists of the nodes of values $1, 2,4,8$, the next one is formed by the nodes of values $2^{8}$ and $2^{9}$ and the node of value $2^{2^{9}}$ is a maximal chain of length $1$. }\label{fig: exchains}
	\end{center}
\vspace{-5mm}
\end{figure}

We will show how to computationally find the maximal chains in \cref{cor:findChains}.
The following facts are clear from the definition of maximal chains:

\begin{fact}\label{lem:representM}
	Let  $(\Gamma, \delta)$ be a reduced power circuit and let $M$ be a marking on $\Gamma$. Then the following holds:
	\begin{enumerate}[(i)]	
	\item\label{MultByStart} $\epsilon(M|_{C}) = \epsilon(\startc{C})\cdot \val(\binM{M}{C})$ for every chain $C$ in $\Gamma$ (even if $C$ is not maximal). 
	\item $\epsilon(M) = \sum_{C \in \cC_\Gamma} \epsilon(\startc{C})\cdot \val(\binM{M}{C})$.
	\item The marking $M$ is compact if and only if $\binM{M}{C}$ is compact for all $C \in \mathcal{C}_{\Gamma}$. 
	\end{enumerate}
\end{fact}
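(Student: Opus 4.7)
All three parts are direct consequences of the definitions and the structure of maximal chains; there is no real obstacle, but one must be careful with the compactness equivalence since it uses that $\Gamma$ is reduced.

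For part~(i), I would just unfold the definitions. If $C = (P_i, \dots, P_{i+\ell-1})$ is a chain with $a_j := M(P_{i+j})$, then by \cref{def:chain} we have $\epsilon(P_{i+j}) = 2^{j}\cdot \epsilon(\startc{C})$, so
\[
\epsilon(M|_C) \;=\; \sum_{j=0}^{\ell-1} a_j\,\epsilon(P_{i+j}) \;=\; \epsilon(\startc{C}) \sum_{j=0}^{\ell-1} a_j\,2^{j} \;=\; \epsilon(\startc{C})\cdot \val(\binM{M}{C}).
\]
Note that this does not need $C$ to be maximal; it only uses the chain equation on consecutive evaluations.

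For part~(ii), I would invoke the observation from \cref{def:chain} that, as a set, $\Gamma$ is the disjoint union of its maximal chains. Hence
\[
\epsilon(M) \;=\; \sum_{P\in\Gamma} M(P)\,\epsilon(P) \;=\; \sum_{C \in \mathcal{C}_\Gamma}\,\sum_{P\in C} M(P)\,\epsilon(P) \;=\; \sum_{C \in \mathcal{C}_\Gamma} \epsilon(M|_C),
\]
and then substitute part~(i) for each summand.

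For part~(iii), the main point is translating the compactness condition $|\epsilon(\Lambda_P)-\epsilon(\Lambda_Q)|\geq 2$ into a statement about consecutive chain positions. Since $\epsilon(P) = 2^{\epsilon(\Lambda_P)}$, two nodes $P,Q\in\Gamma$ satisfy $|\epsilon(\Lambda_P)-\epsilon(\Lambda_Q)|=1$ exactly when one has twice the evaluation of the other, i.e.\ they are consecutive in some chain; and $|\epsilon(\Lambda_P)-\epsilon(\Lambda_Q)|=0$ can be excluded because in a reduced power circuit nodes have pairwise distinct evaluations (\cref{def:PCreduced}). For the forward direction, if some $\binM{M}{C}$ is not compact then two consecutive digits are non-zero, giving two nodes $P,Q\in\supp(M)$ in $C$ with $|\epsilon(\Lambda_P)-\epsilon(\Lambda_Q)|=1$, so $M$ is not compact. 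Conversely, if $M$ is not compact, then using the above there exist $P,Q\in\supp(M)$ consecutive in some chain; since maximal chains partition $\Gamma$, both lie in the same $C\in\mathcal{C}_\Gamma$ and thus $\binM{M}{C}$ contains two adjacent non-zero digits, violating compactness of $\binM{M}{C}$. The only subtlety, which I would flag explicitly, is this use of the reducedness hypothesis to guarantee that $\epsilon(\Lambda_P)=\epsilon(\Lambda_Q)$ forces $P=Q$.
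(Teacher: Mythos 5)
Your proof is correct; the paper does not prove this statement at all (it labels it a Fact and says it is ``clear from the definition of maximal chains''), and your argument spells out exactly the definitional unfolding that is implicitly meant. Parts (i) and (ii) are immediate, and in part (iii) you correctly identify the only subtlety: reducedness is used both to rule out $\epsilon(\Lambda_P)=\epsilon(\Lambda_Q)$ for $P\neq Q$ and to ensure that $\epsilon(Q)=2\epsilon(P)$ forces $P$ and $Q$ to be adjacent in the sorted order (since all node evaluations are distinct powers of two), hence consecutive within the same maximal chain.
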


\begin{lemma}\label{stz: smalldiffs}
Let $(\Gamma, \delta)$ be a reduced power circuit. Let $L$ and $M$ be compact markings in $\Gamma$ such that $\epsilon(L) > \epsilon(M)$ and let $0\leq k \leq \floor{\frac{2^{\abs{C_{0}}+1}}{3} }$.  Then $\epsilon(L) \leq \epsilon(M) +k$ if and only if the following holds:
\begin{itemize}
\item $\epsilon(M|_{\Gamma \setminus C_{0}}) = \epsilon(L|_{\Gamma \setminus C_{0}})$ and 
\item $\epsilon (L|_{C_{0}}) \leq \epsilon(M|_{C_{0}}) + k$. 
\end{itemize}
\end{lemma}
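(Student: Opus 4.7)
The plan is to treat the two implications separately. The backward direction is immediate from the decomposition
\[\epsilon(L) - \epsilon(M) = \bigl(\epsilon(L|_{C_0}) - \epsilon(M|_{C_0})\bigr) + \bigl(\epsilon(L|_{N}) - \epsilon(M|_{N})\bigr),\]
where $N = \Gamma \setminus C_0$: the first bullet makes the second summand vanish, and then the second bullet is exactly the claim. The real content lies in the forward direction, where I would assume $0 < \epsilon(L) - \epsilon(M) \leq k \leq \floor{2^{\abs{C_0}+1}/3}$ and establish the first bullet $\epsilon(L|_N) = \epsilon(M|_N)$; the inequality on $C_0$ then falls out of the decomposition.

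The key structural input I would use is that every node $P \in N$ satisfies $\epsilon(P) \geq 2^{\abs{C_0}+1}$. Since $\Gamma$ is reduced and sorted by value, the nodes of $C_0$ are precisely those of values $1, 2, 4, \ldots, 2^{\abs{C_0}-1}$; the next node in the list is then a power of two strictly larger than $2^{\abs{C_0}-1}$, and a node of value $2^{\abs{C_0}}$ would extend $C_0$ and contradict its maximality, so the smallest value occurring in $N$ is in fact $\geq 2^{\abs{C_0}+1}$. Consequently $\epsilon(L|_N)$ and $\epsilon(M|_N)$ are both integer multiples of $2^{\abs{C_0}+1}$, so if they are unequal they must differ by at least $2^{\abs{C_0}+1}$.

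At the same time, compactness of $L$ and $M$ together with \cref{lem:representM} says that $\binM{L}{C_0}$ and $\binM{M}{C_0}$ are compact signed-digit representations of length at most $\abs{C_0}$, and since $\epsilon(\startc{C_0}) = 1$, \cref{lem:valcom} gives $\abs{\epsilon(L|_{C_0})}, \abs{\epsilon(M|_{C_0})} \leq \floor{2^{\abs{C_0}+1}/3}$, hence $\abs{\epsilon(L|_{C_0}) - \epsilon(M|_{C_0})} \leq 2\floor{2^{\abs{C_0}+1}/3}$. Plugging both ingredients into the decomposition yields the desired contradiction: if $\epsilon(L|_N) > \epsilon(M|_N)$ then
\[\epsilon(L) - \epsilon(M) \geq 2^{\abs{C_0}+1} - 2\floor{2^{\abs{C_0}+1}/3},\]
which a short arithmetic check (using that $2^{\abs{C_0}+1}$ is never divisible by $3$, so $3\floor{2^{\abs{C_0}+1}/3} \leq 2^{\abs{C_0}+1} - 1$) shows to be strictly larger than $\floor{2^{\abs{C_0}+1}/3} \geq k$; symmetrically, $\epsilon(L|_N) < \epsilon(M|_N)$ forces $\epsilon(L) - \epsilon(M) < 0$, contradicting $\epsilon(L) > \epsilon(M)$. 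Hence $\epsilon(L|_N) = \epsilon(M|_N)$, and the second bullet drops out of the decomposition. The step I expect to be most delicate is extracting the sharp gap $2^{\abs{C_0}+1}$ (rather than only $2^{\abs{C_0}}$), because the whole argument is calibrated exactly to this gap: a weaker bound on the smallest node of $N$ would not suffice to absorb the contribution of $C_0$.
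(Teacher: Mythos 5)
Your proof is correct and takes essentially the same route as the paper: decompose $\epsilon(L)-\epsilon(M)$ over $C_0$ and $N=\Gamma\setminus C_0$, bound the $C_0$-contributions via \cref{lem:valcom}, and exploit that $\epsilon(L|_N)-\epsilon(M|_N)$ is a multiple of $2^{\abs{C_0}+1}$. The only cosmetic difference is that where the paper invokes \cref{lem:compareM} together with $\epsilon(L)>\epsilon(M)$ to conclude directly that $\epsilon(L|_N)-\epsilon(M|_N)\geq 2^{\abs{C_0}+1}$, you split into the two cases $\epsilon(L|_N)>\epsilon(M|_N)$ and $\epsilon(L|_N)<\epsilon(M|_N)$ and refute each by hand, and you use the slightly sharper arithmetic $2\floor{2^{\abs{C_0}+1}/3}$ plus the non-divisibility of $2^{\abs{C_0}+1}$ by $3$ where the paper settles for $3\floor{2^{\abs{C_0}+1}/3}\leq 2^{\abs{C_0}+1}-1$.
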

\begin{proof}
We first assume that $\epsilon(M|_{\Gamma \setminus C_{0}}) \neq \epsilon(L|_{\Gamma \setminus C_{0}})$ and aim for showing that $\epsilon(L) > \epsilon(M)+k$:
 By \cref{lem:valcom}, we have $\abs{\epsilon(L|_{C_{0} })}, \abs{\epsilon(M|_{C_{0} })} \leq \floor{\frac{2^{ \abs{C_{0}} +1 }}{3}}$. Hence,
 \[\absbig{\epsilon(M|_{C_{0} }) + k -\epsilon(L|_{C_{0} })} \leq \abs{\epsilon(M|_{C_{0} })} + k + \abs{\epsilon(L|_{C_{0} })}  \leq 3 \floor{\frac{2^{\abs{C_{0}}+1}}{3} } \leq 2^{\abs{C_{0} }+1}-1.\] 
Furthermore, $\epsilon(L|_{\Gamma \setminus C_{0}}) - \epsilon(M|_{\Gamma \setminus C_{0}})$ is a multiple of $2^{\abs{C_{0}}+1}$. Therefore, by the assumption $\epsilon(L) > \epsilon(M)$ and \cref{lem:compareM}, we obtain $\epsilon(L|_{\Gamma \setminus C_{0}}) - \epsilon(M|_{\Gamma \setminus C_{0}}) \geq 2^{\abs{C_{0}}+1}$.
It follows that
\[\epsilon(L|_{\Gamma \setminus C_{0}})- \epsilon(M|_{\Gamma \setminus C_{0}})+\epsilon(L|_{C_{0}})-\epsilon(M|_{C_{0}})-k \geq 1\]
and so $\epsilon(L) > \epsilon(M)+k$.

Now assume that $\epsilon(M|_{\Gamma \setminus C_{0}}) = \epsilon(L|_{\Gamma \setminus C_{0}})$. It remains to show that under this assumption we have $\epsilon(L) \leq \epsilon(M) +k$ if and only if $\epsilon (L|_{C_{0}}) \leq \epsilon (M|_{C_{0}})+k $.
However, this follows immediately from the fact that
\[\epsilon(L) = \epsilon(L|_{\Gamma \setminus C_{0}})+\epsilon (L|_{C_{0}}) \quad \text{ and }\quad \epsilon(M)+k = \epsilon(M|_{\Gamma \setminus C_{0}})+\epsilon (M|_{C_{0}})+k. \]
This finishes the proof of the lemma. 
\end{proof}

\subparagraph*{Comparison of markings.}

\begin{lemma}\label{lem:findC0}
	Given a reduced power circuit $(\Gamma, \delta)$ and a node $P \in \Gamma$ one can decide in \Ac0 whether $P \in C_0$.
\end{lemma}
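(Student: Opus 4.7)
The plan is to characterize membership in $C_0$ by a condition that a constant-depth circuit can verify in parallel, using that comparison of signed-digit representations is already in $\Ac0$ by \cref{cor:signedDigitOps}.

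First, I would exploit that in a reduced power circuit the values $\epsilon(P_0) < \epsilon(P_1) < \cdots$ are strictly increasing powers of two, so $\epsilon(P_j) \geq 2^j$ for every $j$. Consequently $\epsilon(P_i) = 2^i$ forces $\epsilon(P_j) = 2^j$ for every $j \leq i$, which is exactly the condition $\{P_0,\dots,P_i\} \subseteq C_0$. This reduces the problem to checking, for every $j \leq i$, that $\epsilon(\Lambda_{P_j}) = j$.

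Next, I would invoke \cref{rem:toposort}: the successor marking $\Lambda_{P_j}$ is supported on $\{P_0,\dots,P_{j-1}\}$, so it naturally encodes a signed-digit representation $L_j = (\Lambda_{P_j}(P_0),\dots,\Lambda_{P_j}(P_{j-1}))$. A short induction on $j$ then yields: $P_i \in C_0$ if and only if $\val(L_j) = j$ for all $j \leq i$. Indeed, once we inductively know $P_0, \dots, P_{j-1} \in C_0$, these nodes have values $2^0, \dots, 2^{j-1}$, so $\epsilon(\Lambda_{P_j}) = \val(L_j)$, and the equality $\val(L_j) = j$ gives $\epsilon(P_j) = 2^j$, placing $P_j$ in $C_0$.

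The $\Ac0$ implementation is then straightforward: for each $j \leq i$ in parallel, read off the signed-digit string $L_j$ from the encoding of $\delta$ and compare $\val(L_j)$ with the (hard-wired) binary representation of $j$ using \cref{cor:signedDigitOps}(\ref{sdrCompare}); the final output is the AND over $j \leq i$ of these equality tests. Each individual comparison is in $\Ac0$ and the unbounded-fan-in conjunction of linearly many bits adds only constant depth, so the whole procedure sits in $\Ac0$. I do not anticipate a real obstacle here: all heavy lifting on signed-digit arithmetic has been done in \cref{sec:compact}, so the main task is simply to observe that membership in $C_0$ can be diagnosed locally at each index through such an equality test on signed-digit representations.
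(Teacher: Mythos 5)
Your proposal is correct and follows essentially the same route as the paper: in both arguments, one reads off from $\Lambda_{P_j}$ the signed-digit string over the lower-indexed nodes, observes via \cref{rem:toposort} that membership of $P_j$ in $C_0$ is characterized by $\eps(\Lambda_{P_j}) = j$ once $P_0,\dots,P_{j-1}$ are known to lie in $C_0$, and checks these conditions in parallel via \cref{cor:signedDigitOps}. The only (harmless) difference is that the paper additionally verifies compactness of each $A_j$ as a safeguard, whereas your induction shows the value condition $\val(L_j)=j$ for all $j\le i$ already suffices, so that extra check is redundant.
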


\begin{remark}
	Since membership in \Ac0 often highly depends on the encoding of the input, in the following we always assume that power circuits are given in a suitable way. In particular, we may assume that an $n$-node power circuit is given by the $n\times n$ matrix representing $\delta$ where each entry from $\oneset{0,\pm1}$ is encoded using two bits. Moreover, in order to represent power circuits with fewer nodes within the same data structure, we can allow one \emph{deleted} bit for every row and column of the matrix.
	Markings can be encoded the same way by a sequence of $n$ symbols from $\oneset{0,\pm1}$. Moreover, if the power circuit is reduced, we also assume that the matrix representing $\delta$ is already in the sorted order (in particular, the ordering is not given by some separate data structure).
	
	In the following, we do not further consider these encoding issues. Moreover, as soon as we are dealing with \TC circuits, there is a lot of freedom how to encode inputs. 
\end{remark}

\begin{proof}[Proof of \cref{lem:findC0}]
	Let $\Gamma = (P_{0}, \ldots, P_{n-1})$. For each $i$ we define a \sdr $A_i = (a_{i,0}, \dots, a_{i,n-1})$ by $a_{i,j} = \Lambda_{P_i}(P_j)$. These \sdr{}s might not be compact, but, if $P_i \in C_0$, then $A_i$ is compact (this is because, by \cref{rem:toposort}, $P_i$ has only successors in $C_0$). Using \cref{cor:signedDigitOps}, we can compute the maximal $i_{max}$ such that $A_{i_{max}}$ is compact and for all $i < {i_{max}}$ also $A_i$ is compact and $ \val(A_{i+1}) = \val(A_i) + 1 = i+1$ (checking whether $A_i$ is compact, clearly, can be done in \Ac0). By a straightforward induction, we obtain that for all $i \leq i_{max}$ we have $\val(A_i) = \eps(\Lambda_{P_i})$ and $P_i \in C_0$. On the other hand, clearly, $P_{i_{max}+1} \not\in C_0$. Hence, we have computed $C_0$.
\end{proof}

 \begin{proposition} \label{lem:compareCompactMarkings}
	Let ${\comparator} \in \compOpSet$. The following problems are in \Ac0:
	\begin{enumerate}[(a)]
		\item\label{compMarkings} \problem{A reduced power circuit $(\Gamma, \delta)$ and compact markings $L$ and $M$ on $\Gamma$.}{Is $\epsilon(L) \comparator \epsilon(M)$? }
		\item\label{compMarkingsPlusK} \problem{A reduced power circuit $(\Gamma, \delta)$ with compact markings $L, M$ and $k \in \interval{0}{\bigl\lfloor\frac{2^{\abs{C_{0}}+1}}{3} \bigr\rfloor } $ given in binary.}{ Is $\epsilon(L) \comparator \epsilon(M)+ k $?} 
	\end{enumerate}
\end{proposition}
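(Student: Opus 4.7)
The plan hinges on a single observation that entirely bypasses computing the chain decomposition: for compact markings $L, M$, let $i^\ast = \max\{i : L(P_i) \neq M(P_i)\}$ whenever such an index exists. I claim that $\operatorname{sign}(\eps(L) - \eps(M)) = \operatorname{sign}(L(P_{i^\ast}) - M(P_{i^\ast}))$. To justify this, let $C_{r^\ast}$ be the unique maximal chain containing $P_{i^\ast}$. Because $L$ and $M$ agree at every index above $i^\ast$, the restricted \sdr{}s $\binM{L}{C_{r^\ast}}$ and $\binM{M}{C_{r^\ast}}$ (compact by \cref{lem:representM}) coincide on all positions above that of $P_{i^\ast}$, so \cref{lem:compareM}\,(\ref{compactCompareVal}) yields that $\val(\binM{L}{C_{r^\ast}}) - \val(\binM{M}{C_{r^\ast}})$ is a nonzero integer whose sign matches $L(P_{i^\ast}) - M(P_{i^\ast})$; hence the $C_{r^\ast}$-summand of $\eps(L)-\eps(M)$ (expanded via \cref{lem:representM}) has absolute value at least $\eps(\startc{C_{r^\ast}})$. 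For chains strictly below $C_{r^\ast}$, the bound $|\val(\binM{N}{C})| \leq \lfloor 2^{|C|+1}/3 \rfloor$ from \cref{lem:valcom} together with the inter-chain gap $\eps(\startc{C_{k+1}}) \geq 4\,\eps(\lastc{C_k})$ (since consecutive nodes in distinct chains satisfy $\eps(\Lambda_{\startc{C_{k+1}}}) \geq \eps(\Lambda_{\lastc{C_k}}) + 2$) telescopes to a total contribution of magnitude at most $(4/9)\,\eps(\startc{C_{r^\ast}})$. Thus the $C_{r^\ast}$-summand dominates and pins the sign.

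Granting this claim, part~(\ref{compMarkings}) is immediate in $\Ac0$: compute the bits $[L(P_i) \neq M(P_i)]$ in parallel, take the maximum such index $i^\ast$ (or declare $L = M$ when none exists), and return the outcome of comparing the digits $L(P_{i^\ast})$ and $M(P_{i^\ast})$; every operator in $\compOpSet$ reduces to this. For part~(\ref{compMarkingsPlusK}), first compute $C_0$ via \cref{lem:findC0} and locate $i^\ast$ as above. If no $i^\ast$ exists, directly compare $0$ with $k$. If $i^\ast$ exists and $P_{i^\ast} \in C_0$, then $L$ and $M$ agree on $\Gamma \setminus C_0$, so $\eps(L) - \eps(M) = \val(\binM{L}{C_0}) - \val(\binM{M}{C_0})$, and the comparison with $k$ reduces to comparing compact \sdr{}s via \cref{cor:signedDigitOps}. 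Otherwise $P_{i^\ast} \notin C_0$, and \cref{stz: smalldiffs} (applied after reading off the sign from part~(\ref{compMarkings}), possibly with $L$ and $M$ swapped) forces $|\eps(L) - \eps(M)| > k$, so the sign alone decides the comparison with $\eps(M) + k$ and equality is impossible.

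The main subtlety is verifying that the high-order-chain summand really dominates the sum of all lower-chain contributions despite each individual lower chain potentially carrying a large compact value; the $(4/9)$ geometric-series slack above is precisely what is needed and crucially relies on the gap $\eps(\startc{C_{k+1}}) \geq 4\,\eps(\lastc{C_k})$ enforced by reducedness. Everything else is straightforward parallel bookkeeping (max over an array of bits, look-ups of digits, and applications of \cref{cor:signedDigitOps}), all of which lies in $\Ac0$.
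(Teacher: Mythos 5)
Your proof is correct, and for part~(a) you take a genuinely different route to the key claim that $\operatorname{sign}(\eps(L)-\eps(M))$ equals $\operatorname{sign}(L(P_{i^\ast})-M(P_{i^\ast}))$ for the highest differing index $i^\ast$. The paper cites \cref{lem:compareM} directly on the digit sequences $(L(P_0),\dots,L(P_{n-1}))$ and $(M(P_0),\dots,M(P_{n-1}))$; this is implicitly justified by interpolating zeros at the missing powers of two, which turns a compact marking on a reduced power circuit into a compact \sdr with the same value and the same highest differing position (the marking compactness condition is precisely what keeps the interpolated digit sequence compact). You instead rederive the claim from the chain decomposition of \cref{lem:representM}, bounding the aggregate contribution of chains below $C_{r^\ast}$ by a geometric series exploiting the $4\times$ evaluation gap between adjacent maximal chains. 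Both arguments are valid; yours is more self-contained but longer, while the paper's is terser and relies on the interpolation step being noticed. One minor quantitative slip: your $(4/9)\,\eps(\startc{C_{r^\ast}})$ bounds the lower-chain contribution of a \emph{single} marking, whereas the contribution to the difference $\eps(L)-\eps(M)$ can be up to twice that, namely $(8/9)\,\eps(\startc{C_{r^\ast}})$ --- still strictly below $\eps(\startc{C_{r^\ast}})$, so the dominance argument and the conclusion stand. For part~(b) your argument is essentially the paper's: both reduce the comparison via \cref{lem:findC0}, \cref{stz: smalldiffs} and \cref{cor:signedDigitOps} to a computation confined to $C_0$, and your case split on whether $P_{i^\ast}\in C_0$ simply repackages the paper's preliminary check of whether $\eps(L)\leq\eps(M)$.
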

\begin{proof} 
	Let us choose $\leq$ as $\comparator$ (the other cases follow from this case in a straightforward way).

Let $\Gamma = (P_{0}, \ldots, P_{n-1})$. By \cref{lem:compareM} (\ref{compMarkings}) we can check in \Ac0 if $\epsilon(L)=\epsilon(M)$. If this is not the case, then by \cref{lem:compareM} (\ref{compactCompareVal})  we have $\epsilon(M) < \epsilon(L)$ if and only if $M(P_{i_{0}}) < L(P_{i_{0}})$ for $i_{0} = \max\set{i \in \oneinterval{n}}{M(P_{i}) \neq L(P_{i})}$. Now, $i_0$ can be found in \Ac0 and, hence, the whole check is in \Ac0. This proves part (\ref{compMarkings}).

For part (\ref{compMarkingsPlusK}) we first check whether $\epsilon(L) \leq \epsilon(M)$. If yes, then $\epsilon(L) \leq \epsilon(M)+k$. According to part (\ref{compMarkings}), this check is possible in \Ac0. 
Now assume that  $\epsilon(L) > \epsilon(M)$. By \cref{lem:findC0}, we can compute $C_0$ in \Ac0.
 By \cref{stz: smalldiffs} we know that $\epsilon(L) \leq \epsilon(M)+k$ if and only if $\epsilon(M|_{\Gamma \setminus C_{0}}) = \epsilon(L|_{\Gamma \setminus C_{0}})$ and $\epsilon (L|_{C_{0}}) \leq \epsilon(M|_{C_{0}}) + k$. The markings $M|_{\Gamma \setminus C_{0}}$ and $L|_{\Gamma \setminus C_{0}}$ are still compact markings in a reduced power circuit, and so we are able to decide in \Ac0 if that equality holds by part (a). So it remains to check if $\epsilon (L|_{C_{0}}) \leq \epsilon(M|_{C_{0}}) + k$. This amounts to an addition and a comparison of signed-digit representations of digit-length at most $\abs{C_{0}}+1$ (according to \cref{lem:valcom}), which both can be done in \Ac0 (see \cref{cor:signedDigitOps}). Thus, $\epsilon (L) \comparator \epsilon(M) + k$ can be checked in \Ac0. 
\end{proof}

\begin{corollary}\label{cor:findChains}
	We can decide in \Ac0, given a reduced power circuit $(\Gamma, \delta)$ and nodes $P,Q \in \Gamma$, whether $P$ and $Q$ belong to the same maximal chain of $\Gamma$.
\end{corollary}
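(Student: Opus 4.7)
The plan is to reduce this to a parallel application of the comparison statement from \cref{lem:compareCompactMarkings}(\ref{compMarkingsPlusK}). Recall that $\Gamma$ comes as a sorted list $(P_0, \ldots, P_{n-1})$ and all successor markings are compact because $\Gamma$ is reduced. So, writing $P = P_i$ and $Q = P_j$ with (WLOG) $i \leq j$, the nodes $P,Q$ lie in the same maximal chain if and only if $\epsilon(P_{\ell+1}) = 2 \cdot \epsilon(P_\ell)$ for every $\ell \in \interval{i}{j-1}$, which, taking $\log_2$, is equivalent to
\[\epsilon(\Lambda_{P_{\ell+1}}) = \epsilon(\Lambda_{P_{\ell}}) + 1 \qquad \text{for all } \ell \in \interval{i}{j-1}.\]

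First I would recover $i$ and $j$ from $P$ and $Q$ (which is trivial given the assumed matrix encoding) and handle the edge case $P = Q$. Then, in parallel for every $\ell \in \interval{i}{j-1}$, I would invoke \cref{lem:compareCompactMarkings}(\ref{compMarkingsPlusK}) with $L = \Lambda_{P_{\ell+1}}$, $M = \Lambda_{P_{\ell}}$, and $k = 1$, using the comparator ``$=$''. This is applicable because both successor markings are compact and because $k = 1 \leq \bigl\lfloor 2^{\abs{C_0}+1}/3 \bigr\rfloor$ (the initial chain $C_0$ contains at least the node $P_0$, so $\abs{C_0} \geq 1$ and the bound is at least $1$). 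Finally I would take the AND of all the parallel tests.

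Each individual test is in \Ac0 by \cref{lem:compareCompactMarkings}(\ref{compMarkingsPlusK}), and they are combined by a single constant-depth AND over polynomially many inputs, so the overall computation stays in \Ac0. There is no real obstacle beyond spotting that the ``same maximal chain'' predicate collapses to a conjunction of elementary $\epsilon(\Lambda_{P_{\ell+1}}) = \epsilon(\Lambda_{P_\ell}) + 1$ checks, each of which is exactly the compact-marking comparison already shown to be in \Ac0; the only mild point to verify is that the bound on $k$ in part~(\ref{compMarkingsPlusK}) is satisfied for $k = 1$, as noted above.
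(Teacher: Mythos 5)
Your proof is correct and follows essentially the same route as the paper: reduce ``same maximal chain'' to the conjunction of the checks $\epsilon(\Lambda_{P_{\ell+1}}) = \epsilon(\Lambda_{P_{\ell}}) + 1$ for $\ell \in \interval{i}{j-1}$ and apply \cref{lem:compareCompactMarkings} in parallel. Your explicit verification that $k=1$ meets the bound required by part~(\ref{compMarkingsPlusK}) is a small point the paper leaves implicit, but otherwise the arguments coincide.
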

\begin{proof}
	Let $P = P_i$ and $Q = P_j$ with $i < j$. Then $P$ and $Q$ belong to the same maximal chain if and only if $\eps(\Lambda_{P_{\ell} + 1}) = \eps(\Lambda_{P_{\ell} })+ 1$ for all $\ell \in \interval{i}{j - 1}$. The latter can be checked in \Ac0 using \cref{lem:compareCompactMarkings}.
\end{proof}

\subparagraph*{Calculations with markings.}

\begin{lemma}\label{lem: OpOnMarkings}
The following problems are all in $\TC$:
\begin{enumerate}[(a)]
\item  \label{addmark} \compproblem{A power circuit $(\Pi, \delta_{\Pi})$ together with markings $K$ and $L$.}
{A power circuit $(\Pi', \delta_{\Pi'})$ with a marking $M$ such that $\epsilon(M) = \epsilon(K) + \epsilon(L)$ and $(\Pi, \delta_{\Pi}) \leq (\Pi', \delta_{\Pi'})$,  $\abs{\Pi'} \leq 2\cdot \abs{\Pi}$ and $\depth(\Pi') = \depth(\Pi)$.}  

\item \label{submark} \compproblem{A power circuit $(\Pi, \delta_{\Pi})$ together with a marking $L$.}
{A marking $M$ in the power circuit $(\Pi, \delta_{\Pi})$ such that $\epsilon(M) = - \epsilon(L)$.} 

\item \label{powermark} \compproblem{A power circuit $(\Pi, \delta_{\Pi})$ together with markings $K$ and $L$ such that $\epsilon(L) \geq 0$.}
{A power circuit $(\Pi', \delta_{\Pi'})$ with a marking $M$ such that  $\epsilon(M) = \epsilon(K)\cdot 2^{\epsilon(L)}$ and  $(\Pi, \delta_{\Pi}) \leq (\Pi', \delta_{\Pi'})$, $\abs{\Pi'} \leq 3\cdot \abs{\Pi}$ and $\depth(\Pi') \leq \depth(\Pi)+1$.} 	
\end{enumerate}
\end{lemma}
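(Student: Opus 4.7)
All three operations admit explicit constructions by cloning and relabelling nodes, so the main task is to verify the claimed size and depth bounds and that the output is a valid power circuit. Since each output entry depends only on a constant amount of input data and the number of nodes is polynomial, the actual computation is easily seen to be in \TC (in fact \Ac{0}); the interesting content is the construction itself.

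For part (\ref{submark}), I would simply set $M(P) = -L(P)$ for every node $P \in \Pi$. Then $\epsilon(M) = \sum_P -L(P)\epsilon(P) = -\epsilon(L)$, the power circuit $\Pi$ itself is unchanged, and computing $M$ is trivial.

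For part (\ref{addmark}), the natural try $M(P) = K(P)+L(P)$ fails because this may take the value $\pm 2$. Instead, I would duplicate nodes: form $\Pi'$ as the disjoint union of $\Pi$ with a clone $\Pi^{c} = \{P^c \mid P \in \Pi\}$, and define $\delta_{\Pi'}$ by $\delta_{\Pi'}(P^c,Q) = \delta_\Pi(P,Q)$ and $\delta_{\Pi'}(P^c, Q^c) = 0$ for all $P,Q \in \Pi$ (the clones point into the original subgraph with exactly the same successor marking as their originals). Since $\Lambda_{P^c} = \Lambda_P$ as markings on $\Pi'$, we get $\epsilon(P^c) = \epsilon(P)$, so $\Pi'$ is a valid power circuit; moreover $(\Pi,\delta_\Pi) \leq (\Pi',\delta_{\Pi'})$, $\abs{\Pi'} = 2\abs{\Pi}$, and because every edge still ends in the original copy of $\Pi$, $\depth(\Pi') = \depth(\Pi)$. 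Setting $M(P) = K(P)$ and $M(P^c) = L(P)$ yields $\epsilon(M) = \epsilon(K) + \epsilon(L)$.

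For part (\ref{powermark}), I would piggyback on the addition construction. Start from $\Pi'$ as above (size $2\abs{\Pi}$, same depth). For each $P \in \supp(K)$ introduce a fresh node $\tilde P$ with successor marking obtained by combining $\Lambda_P$ (on the original copy) with $L$ (on the clone copy): formally $\delta(\tilde P, Q) = \delta_\Pi(P,Q)$ for $Q \in \Pi$ and $\delta(\tilde P, Q^c) = L(Q)$ for $Q \in \Pi$. This is precisely the addition construction applied to $\Lambda_P + L$, so $\epsilon(\tilde P) = 2^{\epsilon(\Lambda_P) + \epsilon(L)} = \epsilon(P)\cdot 2^{\epsilon(L)}$. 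The assumption $\epsilon(L)\geq 0$ together with $\epsilon(\Lambda_P) \geq 0$ (since $\epsilon(P)$ is a positive power of two) guarantees the exponent is a non-negative integer, so the extended object is still a valid power circuit. Defining $M(\tilde P) = K(P)$ and $M=0$ elsewhere gives $\epsilon(M) = \epsilon(K)\cdot 2^{\epsilon(L)}$. The total node count is $\abs{\Pi} + \abs{\Pi} + \abs{\supp(K)} \leq 3\abs{\Pi}$, and each new node $\tilde P$ points only to nodes in $\Pi \cup \Pi^c$ whose depth is at most $\depth(\Pi)$, so $\depth(\Pi') \leq \depth(\Pi)+1$. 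I do not expect any real obstacle: all three constructions are uniform in the input and consist only of copying entries of $\delta_\Pi$, $K$ and $L$, which is well within \Ac{0} $\subseteq \TC$.
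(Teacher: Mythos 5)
Your constructions for all three parts are correct and take essentially the same approach as the paper's proof: clone nodes to obtain copies with the same evaluation but no incoming edges, then wire new nodes appropriately so that the edges supplied by $\Lambda_P$ and $L$ land on disjoint node sets, avoiding conflicts. The only cosmetic differences are that you clone all of $\Pi$ rather than only the supports of the relevant markings, and in part (\ref{powermark}) you introduce fresh nodes $\tilde P$ with the combined successor marking directly instead of first cloning $\supp(K)$ and then modifying those clones in place; the size and depth bounds come out identically.
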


The proof of this lemma uses the following construction  (see also \cite{DiekertLU13ijac}): 
\begin{definition}
	Let $(\Pi, \delta)$ be a power circuit and let $M$ be a marking on $\Pi$. 
	\begin{enumerate}[(a)]
		\item Let $P \in \Pi$. We define a new power circuit $\Pi \cup \oneset{\clone(P)}$ where $\clone(P)$ is a new node with $\Lambda_{\clone(P)} = \Lambda_{P}$. 
		\item We define a marking $\clone(M)$ as follows: First we clone all the nodes in $\supp(M)$. Then we set $\clone(M)(\clone(P)) = M(P)$ for $P \in \supp(M)$ and $\clone(M)(P) = 0$ otherwise. 
	\end{enumerate}
\end{definition}

It is clear that the problem, given a power circuit $(\Pi, \delta)$ and a marking $M$, compute a new power circuit $(\Pi', \delta')$ containing $\clone(M)$ is in \TC~-- and even in \Ac0 when defining the underlying data structure properly. Notice that $\abs{\Pi'} \leq 2 \cdot \abs{\Pi}$ and $\depth(\Pi') = \depth(\Pi)$.  

\begin{proof}
	We apply the constructions described in \cite[Section 7]{MyasnikovUW12} and \cite[Section 2]{DiekertLU13ijac}.
	
	Part (\ref{addmark}): First, we clone the marking $K$ leading to a power circuit $(\Pi', \delta_{\Pi'})$ of size at most $2 \cdot \abs{\Pi}$. Now $\clone(K)$ and $L$ certainly have disjoint supports. Then we define
	\begin{align*}
		M(P) = 
		\begin{cases}
			 \clone(K)(P), & P \in \supp(\clone(K)), \\ 
			 L(P),& P \in \supp(L), \\
			 0,   &~ \text{otherwise.} 
		\end{cases}
	\end{align*}
	Clearly, $\epsilon(M) = \epsilon(K) + \epsilon(L)$, and $M$ can be output in \TC.
	To show (\ref{submark}), we set
	\begin{align*}
		M(P) = 
		\begin{cases}
			- L(P),& P \in \supp(L), \\
			 0,    &~ \text{otherwise.} 
		\end{cases}
	\end{align*}
	As for (a), to define $M(P)$ we only have to look up $L(P)$ and change the sign and we do not have to create any new nodes or edges~-- so this can be done even in \Ac0. 
	
	To (c): To obtain $M$, we follow a similar approach as described in \cite[Section 2]{DiekertLU13ijac}. We first clone the markings $K$ and $L$ and so obtain markings $\clone(K)$ and $\clone(L)$.  At this point, the size of $\Pi$ increased by a factor of at most three. 
	
	Next we create new edges from every node $P \in\supp( \clone( K ))$ to every node $Q \in \supp(\clone(L))$ such that $\delta(P,Q) = \clone(L)(Q)$. This operation does not change the size of the power circuit, but it increases the depth by at most $1$ since there are no incoming edges to nodes in $\supp( \clone( K )$. Then the marking $\clone(K)$ is the marking we search for. 	
\end{proof}

Notice that the construction in (\ref{powermark}) also yields $\epsilon(M) = \epsilon(K) \cdot 2^{\epsilon(L)}$ in the case that $\epsilon(L) < 0$. However, then the resulting graph might not be a power circuit anymore since it might have nodes of non-integral evaluation.
Note that \cite{DiekertLU13ijac} is not very precise here: it is actually not sufficient that $\epsilon(K) \cdot 2^{\epsilon(L)} \in \Z$ in order to assure that there are no nodes of non-integral evaluation.

\newcommand{\eval}{\operatorname{eval}}
\newcommand{\ptxCircuit}{$(0,+,-,2^x)$-circuit\xspace}

\subsection{Relation to arithmetic circuits with + and $2^x$ gates.}\label{app:arithmeticCircuits}

Before we proceed to the power circuit reduction, our main result on power circuits, let us elaborate on the relation of power circuits to more general arithmetic circuits.
A (constant) \ptxCircuit is a dag where each node is either a $0$- (\ie a constant-), +-, $-$- or a $2^x$-gate. $0$-gates have zero inputs, +-gates two and $-$- and $2^x$-gates have one input. There is one designated output gate. The evaluation $\eval(\cC)$ of such a circuit $\cC$ is defined in a straightforward way (as a real number~-- in general, it might not be an integer). The $2^x$-depth of a circuit $\cC$ denoted by $\mathrm{depth}_{2^x}(\cC)$ is the maximal number of $2^x$-gates on any path in the circuit.

\begin{proposition}\label{prop:convertToArithmeticCircuit}
	For every power circuit $(\Pi,\delta)$ with a marking $M$, there is a  \ptxCircuit $\cC$ with $\eval(\cC) = \eps(M)$ such that
	\begin{itemize}
		\item  $\abs{\cC} \leq 2\abs{\sigma(\delta)} + 3\abs{\Pi} + 1$,
		\item $\mathrm{depth}(\cC) \leq (\depth(\Pi)+2) \cdot (\ceil{\log(\abs{\Pi})}+2)$ and
		\item $\mathrm{depth}_{2^x}(\cC) = \mathrm{depth}(\Pi) + 1$. 
	\end{itemize}
	Moreover, $\cC$ can be computed in \TC.
	
	\medskip
	Conversely, for every \ptxCircuit $\cC$ there is a power circuit\footnote{Note that technically speaking it is not a power circuit as defined in \cref{def:PC} since it might have nodes not evaluating to integers. Since we have not introduced a terminology for power circuits without this integrality condition, we use the term ``power circuit'' here.} $(\Pi,\delta)$ with a marking $M$ such that $\eval(\cC) = \eps(M)$  and
	\begin{itemize}
		\item $\mathrm{depth}(\Pi) \leq \mathrm{depth}_{2^x}(\cC)$ and 
		\item $\abs{\Pi} \leq \abs{\cC}^2 + \abs{\cC}$. 
	\end{itemize}
	Moreover, $(\Pi,\delta)$ and $M$ can be computed in \Nc2. 
	
	If the input of every $2^x$-gate of $\cC$ is non-negative, then $(\Pi,\delta)$ is, indeed, a power circuit~-- \ie all nodes evaluate to positive integers.
	
\end{proposition}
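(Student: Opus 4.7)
For the first direction, I would build $\cC$ from $(\Pi,\delta)$ and $M$ by a local, bottom-up transformation. For every node $P$ of $\Pi$ I construct a subcircuit $\cC_P$ that takes the outputs of the subcircuits $\cC_Q$ for the successors $Q$ of $P$, prepends a $-$-gate to each one with $\delta(P,Q)=-1$, combines the results in a balanced binary tree of $+$-gates to obtain $\eps(\Lambda_P)$, and tops the tree with a single $2^x$-gate (leaves of $\Pi$ are handled by a $0$-gate followed by a $2^x$-gate). The overall output of $\cC$ combines the $\cC_P$ for $P\in\supp(M)$ in one further balanced $+/-$ tree. Counting gates by type yields at most $\abs{\supp(\delta)}$ addition and $\abs{\supp(\delta)}$ negation gates from the internal sums, one $2^x$-gate per node of $\Pi$, and $O(\abs{\Pi})$ extra gates for leaves and the outer summation, matching the stated size bound. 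Each of the trees contributes depth $\ceil{\log\abs{\Pi}}+O(1)$, and along any root-to-leaf path there are $\depth(\Pi)+2$ such trees, giving the depth bound; the $2^x$-depth is $\depth(\Pi)+1$ because each node carries exactly one $2^x$-gate. The whole transformation is local and thus easily carried out in \TC.

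For the converse direction, the plan is to introduce, for every $2^x$-gate $g$ of $\cC$, one main node $P_g$ in $\Pi$ with $\eps(P_g)=\eval(g)$. Unfolding the input $g_1$ of $g$ along the $+/-$-subcircuit sitting between $g_1$ and the $2^x$-gates below yields
$$\eval(g_1)\;=\;\sum_{h}c_h^g\,\eval(h),$$
where the sum runs over the $2^x$-gates $h$ appearing in that subcircuit and $c_h^g\in\Z$ is the signed count of $+/-$-paths from $h$ to $g_1$. The coefficients $c_h^g$ are obtained by iterated integer matrix multiplication and sit in \Nc2. Since they can be exponential in $\abs{\cC}$, I cannot set $\Lambda_{P_g}(P_h)=c_h^g$ directly; instead I apply \cref{thm:compact} to obtain the compact signed-digit expansion $c_h^g=\sum_i b_{h,i}^g\,2^i$ and rewrite $c_h^g\,\eval(h)=\sum_i b_{h,i}^g\cdot 2^{\eval(h_1)+i}$.

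For each pair $(h,i)$ that actually occurs I then introduce a shared auxiliary node $Q_{h,i}$ of value $2^{\eval(h_1)+i}$; its successor marking is assembled by combining the already-constructed $\Lambda_{P_h}$ with a small basis encoding the integer shift $+i$ (built from depth-$1$ power-of-two nodes of value $2^k$, each realized by marking $k$ fresh leaves of value $1$), and then $\Lambda_{P_g}(Q_{h,i})=b_{h,i}^g$ is a genuine $\{-1,0,+1\}$-marking. The final output marking $M$ is assembled identically from the signed-digit expansion of the output-level coefficients. Because for each $h$ the shift $i$ only ranges up to $\ceil{\log_2 \max_g\abs{c_h^g}}\le\abs{\cC}$, we get at most $\abs{\cC}$ auxiliary nodes per $h$, at most $\abs{\cC}$ main nodes, and $O(\abs{\cC})$ basis and leaf nodes, giving $\abs{\Pi}\le\abs{\cC}^2+\abs{\cC}$. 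By construction every edge of $\Pi$ goes to a node attached to a $2^x$-gate of strictly smaller $2^x$-depth (with the basis placed at depth $1$), so $\depth(\Pi)\le\mathrm{depth}_{2^x}(\cC)$. If every $2^x$-gate of $\cC$ has non-negative input, each $\eps(\Lambda_{P_g})\ge 0$, every node of $\Pi$ evaluates to a positive integer, and $(\Pi,\delta)$ is a power circuit in the sense of \cref{def:PC}.

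The main obstacle is the exponential size of the coefficients $c_h^g$: a naive encoding would blow up both the size of the marking and of $\Pi$, so the detour through the compact signed-digit representation and the shared auxiliary nodes $Q_{h,i}$ is what makes both the quadratic size bound and the depth bound achievable. A second, more subtle point is to arrange the basis encoding the shifts $+i$ so that it lives shallowly enough never to push $\depth(\Pi)$ past $\mathrm{depth}_{2^x}(\cC)$; placing it at depth $1$, on top of fresh value-$1$ leaves, does the trick. All remaining ingredients---computing the $c_h^g$ by integer matrix powers, signed-digit conversion via \cref{thm:compact} (which is in \Ac0), and assembling the successor markings---lie in \Nc2, which yields the stated complexity bound.
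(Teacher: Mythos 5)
Your first direction matches the paper's proof essentially step for step: one $2^x$-gate per node, balanced $+/-$-trees for each successor marking (and for the output marking), a shared $0$-gate feeding the leaves, and the same counting and depth analysis. That half is correct.

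For the converse, your overall plan is also the paper's: for each $2^x$-gate $h$ introduce nodes $Q_{h,i}$ of value $\eval(h)\cdot 2^i$, recover the linear coefficients $c_h^g$ of the $+/-$-subcircuits in \Nc2, and write those coefficients into the successor markings along the ``chains'' $Q_{h,0},\dots,Q_{h,n-1}$. (Whether the digits come from the compact representation as you propose or from an ordinary signed-binary expansion as in the paper is immaterial.) But your treatment of the shift $+i$ has a genuine gap. You encode $i$ via ``depth-$1$ power-of-two nodes of value $2^k$'' that in turn point to fresh value-$1$ leaves. These basis nodes have depth $1$, so any $Q_{h,i}$ with $i\geq 1$ that marks one of them has depth at least $2$. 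Now consider a circuit $\cC$ with $\mathrm{depth}_{2^x}(\cC)=1$, i.e.\ every $2^x$-gate has an input built only from $0$-gates (so it evaluates to $1$), while the output $+/-$-gate has exponentially large coefficients $c_h^{\text{out}}$, so that the output marking must mark nodes $Q_{h,i}$ with $i\geq 1$. Your construction produces $\depth(\Pi)=2>1=\mathrm{depth}_{2^x}(\cC)$, contradicting the claimed bound. The paper avoids this by encoding the shift in \emph{unary}: $Q_{j,\ell}$ is given $\ell$ direct $+$-edges to $\ell$ of the $n$ singleton leaves (depth $0$), so that when $h_j$ has $2^x$-depth $1$, the node $Q_{j,\ell}$ has depth exactly $1$. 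Your detour through depth-$1$ intermediate nodes also slightly overshoots the exact size bound $\abs{\Pi}\le\abs{\cC}^2+\abs{\cC}$: the paper uses exactly $n$ singletons plus $n$ nodes per $2^x$-gate, whereas you additionally create $\Theta(\log\abs{\cC})$ basis nodes and their private leaves. Replacing your binary-encoded shift basis by the paper's unary marking of singletons closes both gaps; the rest of your argument (coefficient extraction in \Nc2, the non-negativity remark, assembling $M$ from the output-level coefficients) is sound.
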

\begin{proof}
	In order to transform $(\Pi,\delta)$ with a marking $M$ into a \ptxCircuit $\cC$, we proceed as follows: we create one $0$-gate; for every leaf (node of out-degree zero) of $\Pi$ we create a $2^x$-gate with input coming from the $0$-gate, for every other node of $\Pi$, we create a $2^x$-gate whose input we describe next. For every marking (both $M$ and the successor markings $\Lambda_{P}$) we create a tree of $+$-gates (possibly with some $-$-gates) of logarithmic depth where the leaves correspond to some of the (already created) $2^x$- or $0$-gates and the last $+$-gate (\ie the root) evaluates to $\epsilon(M)$ (resp.\ $\epsilon(\Lambda_p)$). Now, the $2^x$-gate corresponding to a node $P \in \Pi$ receives its input from the $+$-gate corresponding to $\Lambda_{P}$. It is straightforward that this construction can be done in \TC.
	
	Clearly, this process introduces at most one $+$-gate and one $-$-gate for every pair in the support of $\delta$ and every node in the support of $M$. So we have at most $2\abs{\sigma(\delta)} + 2\abs{\Pi}$ many $+$ and $-$-gates.	Since there is one $0$-gate and $\abs{\Pi}$ many $2^x$-gates, the total number of gates is at most $2\abs{\sigma(\delta)} + 3\abs{\Pi} + 1$. It is clear that $\mathrm{depth}_{2^x}(\cC) = \mathrm{depth}(\Pi) + 1$ (note that the depth increases by one because leaves of $\Pi$ are replaced by $2^x$-gates with input from the $0$-gate). 
	Moreover, the depth of each $+$-tree is bounded by $\ceil{\log(\abs{\Pi})}$; introducing the $-$-gates and connecting to the $2^x$-gates increases the depth further by 2 (note that for the $2^x$-gates with single input from a $0$-gate this is a huge over-estimate). Since $\mathrm{depth}_{2^x}(\cC) = \mathrm{depth}(\Pi) + 1$  and we have one additional $+$-tree for the marking $M$, the total depth is at most $(\depth(\Pi)+2) \cdot (\ceil{\log(\abs{\Pi})}+2)$. 
	
	\medskip
	Now consider a \ptxCircuit $\cC$ with $n$ gates.
	Let $h_1, \dots, h_k$ be the $2^x$-gates of $\cC$. As a first step, we replace each $2^x$-gate $h_i$ by an ``input'' gate $X_i$ and cut its incoming wire. Thus, we obtain an arithmetic circuit over $\Z$ with $+$- and $-$-gates. Each $+$- or $-$-gate $g$ computes a linear combination $\sum_{i=1}^{k} a_{g,i}X_i$ with $a_{g,i} \in \Z$. By \cite[Theorem 21]{Travers06}, the $a_{g,i}$ can be computed in \GapL, and hence, in \Nc{2} (\cite[Theorem 4.1]{AlvarezJ93}). Notice that $\abs{a_{g,i}} < 2^{n}$ for all $g$ and $i$.
	
	Now, to construct the power circuit $(\Pi,\delta)$, we proceed as follows: we start with $n$ singleton nodes. 
	For each $2^{x}$-gate $h_j$ in $\cC$ we construct nodes $Q_{j,0}, \dots, Q_{j,n - 1}$. The aim is to define $\Lambda_{ Q_{j,\ell}}$ such that $\eps( Q_{j,\ell}) = \eval(h_j) \cdot 2^{\ell}$; in particular, $\eps( Q_{j,0}) = \eval(h_j)$ and $C_{j} = (Q_{j,0}, \dots, Q_{j,n - 1})$ is a chain (by a slight abuse of the notation of \cref{def:chain} since now the power circuit is not reduced).

	Let $h_j$ be some $2^x$-gate and $g$ the gate from where $h_j$ receives its input. If $g$ is a $0$-gate, we define $a_{j,i} = 0$  for all $i\in \interval{1}{k}$; if $g$ is a $2^x$-gate $h_m$, we set $a_{j,m} =1$  and $a_{j,i} = 0$ for all $i \neq m$. Otherwise, $g$ is a $+$ or $-$-gate. In this case we define $a_{j,i} = a_{g,i}$ where  $a_{g,i} \in \Z $ is as above. 
	Then for all $\ell \in \oneinterval{n}$ we define $\Lambda_{Q_{j,\ell}}$ on each chain $C_i$ such that $\binM{\Lambda_{Q_{j,\ell}}}{C_{i}}$ is the binary representation of $a_{j,i}$ (notice that $a_{j,i}$ requires only $n$ bits and all the chains $C_{i}$ for different $i$ are disjoint, so this is well-defined). Moreover, we add a $+$ edge from $Q_{j,\ell}$ to $\ell$ many of the singleton nodes. We do this for all $2^x$-gates in parallel. 
	By induction we see that, indeed, $\eps( Q_{j,\ell}) = \eval(h_j) \cdot 2^{\ell}$. 
	
	If the output gate of $\cC$ is a $2^x$-gate $h_j$, we obtain a marking evaluating to the same value by simply marking $Q_{j,0}$ with one; if the output gate is a $+$- or $-$-gate, we obtain a corresponding marking in the same fashion as for the $\Lambda_{Q_{j,0}}$ described above.
	
	Clearly, the whole computation also can be done in \Nc{2}.
	The bound $\abs{\Pi} \leq n^2 + n$ is straightforward: we introduced at most $n$ singleton nodes and then for every of the at most $n$ $2^x$-gates we introduced $n$ additional nodes. The bound on the depth is because we can have an edge from  $Q_{j,\ell}$ to $Q_{j',\ell'}$ only if there is a path from $h_j$ to $h_{j'}$ in $\cC$. Adding the edges from $Q_{j,\ell}$ to the singleton nodes only increases its depth if the depth without these edges was zero, \ie if $h_j$ is a $2^x$-gate whose input is a sum of $0$-gates. However, we counted the depth of such $2^x$-gates already as one~-- so also in this case the depth does not increase.
\end{proof}

\subsection{Power circuit reduction}\label{sec:PCreduction}

While compact markings on a reduced power circuit yield unique representations of integers, in an arbitrary power circuit $(\Pi, \delta_{\Pi})$ we can have two markings $L$ and $M$ such that $L \neq M$ but $\epsilon(L) = \epsilon(M)$.
Therefore, given an arbitrary power circuit, we wish to produce a reduced power circuit for comparing markings. This is done by the following theorem, which is our main technical result on power circuits.

\begin{theorem}\label{thm:pcred}
The following is in \PTc{0}{D} parametrized by $\depth(\Pi)$: 
	\compproblem{A power circuit $(\Pi, \delta_{\Pi})$ together with a marking $M$ on $\Pi$.}
	{A reduced power circuit $(\Gamma, \delta)$ together with a compact marking $\wt M$ on $\Gamma$ such that
		$\epsilon(\wt M) = \epsilon(M)$.
	}
\end{theorem}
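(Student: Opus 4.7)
The plan is to define a single \TC{}-computable step $f$ that absorbs one additional depth-layer of $\Pi$ into a growing reduced sub-power circuit, and then invoke \cref{ex:composeNC1} with $\omega_x \leq \depth(\Pi)+1 \leq \abs{\Pi}$ to obtain the \PTc{0}{D} bound. After a preprocessing step that clones $M$ (via \cref{lem: OpOnMarkings}(a)) and merges all leaves (value $1$) of $\Pi$ into a single node which becomes $P_0 = \startc{C_0}$, the step $f$ maintains the invariant that the sub-power circuit $\Gamma' \leq \Pi$ consisting of all nodes up to some depth level is reduced, while the remaining nodes' successor markings are \sdr{}s (not necessarily $\OS$-valued) supported on $\Gamma'$. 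After $\depth(\Pi)+1$ applications of $f$ we have $\Gamma' = \Pi$, and one final application of \stepthree{} to $M$ itself produces the compact $\wt M$ with $\epsilon(\wt M) = \epsilon(M)$.

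Each application of $f$ performs the three sub-routines \stepone, \steptwo, \stepthree{} in sequence, each in constant-depth \TC{}. In \stepone, for every node $P$ in the bottommost unprocessed layer in parallel, the compact form of its successor marking $\Lambda_P$ is computed by \cref{cor:signedDigitOps} chain-by-chain over $\mathcal{C}_{\Gamma'}$; then \cref{lem:compareCompactMarkings} (in \Ac0) and \cref{ex:sortTC} (in \TC) detect and merge all nodes $P_1, \dots, P_k$ sharing the same $\epsilon(\Lambda_{P_i})$ into a single representative $\hat P$. When such a merge occurs, the combined incoming weight $\sum_i \delta(R, P_i)$ from any higher-depth node $R$ is accumulated into a \sdr{}-valued entry stored on the chain containing $\hat P$ (extending that chain by at most $\Oh(\log \abs{\Pi})$ cloned nodes if necessary). \steptwo{} then detects, via \cref{lem:compareCompactMarkings}, pairs of new nodes, or (new, old) node pairs, whose evaluations differ by a factor of two, and stitches them into extended chains. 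Finally, \stepthree{} re-compactifies every higher-depth successor marking that was altered by the preceding steps, chain-by-chain, via \cref{cor:signedDigitOps}. Since only $\Oh(\log \abs{\Pi})$ new cloned nodes are introduced per merge and at most $\abs{\Pi}$ merges occur per iteration, the size stays polynomial in $\abs{\Pi}$ across all $\Oh(\abs{\Pi})$ iterations~-- as required by \cref{ex:composeNC1}.

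The main obstacle is the bookkeeping inside \stepone: after merging, the combined incoming-edge weight $\sum_i \delta(R, P_i)$ can exceed the set $\OS$, so it must be re-expressed on the chain containing $\hat P$ as a possibly non-compact \sdr{} of digit-length $\Oh(\log \abs{\Pi})$. The resolution is that this \sdr{} is valid on the reduced sub-power circuit $\Gamma'$ (which already contains the relevant chain after \steptwo), and \stepthree{} compactifies it within the same application of $f$ using \cref{cor:signedDigitOps}. A careful induction then establishes that after each $f$ the invariants hold: $\Gamma'$ remains a reduced sub-power circuit, every unprocessed successor marking is a \sdr{} supported on $\Gamma'$, and $\epsilon(M)$ is preserved throughout. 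Applying \cref{ex:composeNC1} then delivers the claimed \PTc{0}{D} upper bound, with parameter $D = \depth(\Pi)$.
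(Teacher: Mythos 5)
Your high-level structure---iterating a \TC-computable step that absorbs one depth layer of $\Pi$ into a growing reduced part (via the same three sub-routines \stepone, \steptwo, \stepthree), then invoking \cref{ex:composeNC1}---matches the paper. However, the invariant you maintain is wrong, and the crucial size bound is not established, so as written the argument does not go through.

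The claim that the reduced part $\Gamma'$ is a sub-power circuit of $\Pi$ ``consisting of all nodes up to some depth level'' cannot hold: the reduced part necessarily contains chain-extension nodes and merge representatives that are not nodes of $\Pi$, and, more fundamentally, a reduced power circuit must have pairwise-distinct node evaluations (\cref{def:PCreduced}) whereas $\Pi$ typically does not, so no sub-power circuit of $\Pi$ can serve as the reduced part. In the paper's proof the reduced part $\Gamma_i$ is a \emph{fresh} power circuit disjoint from $\Pi$, initialized as a chain of length $\ceil{\log\abs{\Pi}}+1$; that initial length is also exactly what ensures the precondition $\prolongate \le \floor{2^{\abs{C_0}+1}/3}$ of \cref{stz:extendChains} in every iteration, which your single-leaf initialization does not provide. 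Your second invariant, that ``the remaining nodes' successor markings are \sdr{}s supported on $\Gamma'$,'' also fails: any unprocessed node whose depth exceeds the processed level by at least two has successors that are themselves unprocessed and hence lie outside $\Gamma'$; the paper's invariant is instead that only the nodes of $\Min(\Xi_i)$ (those whose successors all lie in $\Gamma_i$) have this property, and all markings restricted to $\Gamma_i$ are compact and $\OS$-valued. Finally, your size analysis (``at most $\abs{\Pi}$ merges per iteration, each adding $\Oh(\log\abs{\Pi})$ nodes'') omits the key point: the chain extensions must be applied to \emph{every} maximal chain, not only those near a merge, so the number of new nodes per iteration is proportional to the number of maximal chains in the current reduced part. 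Without the paper's invariant $\abs{\mathcal{C}_{\Gamma_i}} \le \abs{\Pi}+1$ (proved via \cref{stz:insertminU}, \cref{stz:extendChains} and the observation that $\sum_i \abs{\Min(\Xi_i)} = \abs{\Pi}$, see \cref{lem:pcredsize}), the hypothesis $\abs{f^{(i)}(x)} \le p(\abs{x})$ of \cref{ex:composeNC1} is not justified, and the composition argument collapses.
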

For a power circuit $(\Pi, \delta_{\Pi})$ with a marking $M$ we call the power circuit $(\Gamma, \delta)$ together with the marking $\wt M$ obtained by \cref{thm:pcred} the \emph{reduced form} of $\Pi$. 

The proof of \cref{thm:pcred} consists of several steps, which we introduce on the next pages. The high-level idea is as follows: Like in \cite{MyasnikovUW12,DiekertLU13ijac}, we keep the invariant that there is an already reduced part and a non-reduced part (initially the non-reduced part is $\Pi$). The main difference is that in one iteration we insert \emph{all} the nodes of the non-reduced part that have only successors in the reduced part into the reduced part. Each iteration can be done in \TC; after $\depth(\Pi)+1$ iterations we obtain a reduced power circuit.
\vspace{-1mm}
\subparagraph*{Insertion of new nodes.} 

 The following procedure, called \proc{InsertNodes}, is a basic tool for the reduction process.
Let $(\Gamma, \delta)$ be a reduced power circuit and $I$ be a set of nodes with $\Gamma \cap I = \emptyset$. Assume that for every $P \in I$ there exists a marking $\Lambda_{P}\colon \Gamma \rightarrow \oneset{\mOne, 0, 1}$ satisfying: 
\begin{itemize}
\item $\epsilon(\Lambda_{P})\geq 0$ for all $P \in I$, 
\item $\Lambda_{P}$ is compact for all $P \in I$, and
\item $\epsilon(\Lambda_{P}) \neq \epsilon(\Lambda_{Q})$ for all $P, Q \in I \cup \Gamma$, $P \neq Q$. 
\end{itemize}

 We wish to add $I$ to the reduced power circuit $(\Gamma, \delta)$. For this, we set $\Gamma' = \Gamma \cup I$ and define $\delta'\colon \Gamma' \times \Gamma' \rightarrow \oneset{\mOne, 0, 1}$ in the obvious way: 
  $\delta'|_{\Gamma \times \Gamma} = \delta$, $\delta'|_{\Gamma' \times I} = 0$ and $\delta'(P,Q) = \Lambda_{P}(Q)$ for $(P, Q) \in I \times \Gamma$. 
Now, $(\Gamma', \delta')$ is a power circuit with $(\Gamma, \delta) \leq (\Gamma', \delta')$ and for every $P \in I$ the map $\Lambda_{P}$ is the successor marking of $P$. 
Moreover, each node of $\Gamma'$ has a unique value. 
In order to obtain a reduced power circuit, we need to sort the nodes in $\Gamma'$ according to their values:
Since for every node $P \in \Gamma'$ the marking $\Lambda_{P}$ is a compact marking on the reduced power circuit $\Gamma$, by \cref{lem:compareCompactMarkings}, for $P, Q \in \Gamma'$ we are able to decide in \Ac0 whether $\epsilon(\Lambda_{Q}) \leq \epsilon(\Lambda_{P})$. Therefore, by \cref{ex:sortTC} we can sort $\Gamma'$ according to the values of the nodes in \TC and, hence, assume that $\Gamma' = (P_{0}, \ldots, P_{\abs{\Gamma'} - 1})$ is in increasing order.

Observe that $\abs{\Gamma'} = \abs{\Gamma \cup I} = \abs{\Gamma} + \abs{I}$. In addition, inserting a new node either extends an already existing maximal chain, joins two existing maximal chains, or increases the number of maximal chains by one. Therefore, $\abs{\mathcal{C}_{\Gamma'}} \leq \abs{\mathcal{C}_{\Gamma}} + \abs{I}$. So we have proven the following:

\begin{lemma}[\proc{InsertNodes}]\label{lem: newnodes}
The following problem is in \TC:
\compproblem{A power circuit $(\Gamma,\delta)$ and a set $I$ with the properties described above.}{A reduced power circuit $(\Gamma', \delta')$ such that $(\Gamma, \delta) \leq (\Gamma', \delta')$ and such that for every $P \in I$ there is a node $Q$ in $\Gamma'$ with $\Lambda_{Q} = \Lambda_{P}$. In addition,
	\begin{itemize}
		\item $\abs{\Gamma'} = \abs{\Gamma}+\abs{I}$, and 
		\item $\abs{\mathcal{C}_{\Gamma'}} \leq \abs{\mathcal{C}_{\Gamma}}+\abs{I}$.
	\end{itemize} 
	}
\vspace{-.8cm}
\end{lemma}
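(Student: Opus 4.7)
The plan is to carry out exactly the construction sketched just before the lemma statement and then to verify, piece by piece, that each of its ingredients runs in \TC.

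First I would build the candidate power circuit $(\Gamma',\delta')$ combinatorially by setting $\Gamma'=\Gamma\cup I$ and defining $\delta'$ in the only reasonable way: $\delta'|_{\Gamma\times\Gamma}=\delta$, $\delta'|_{\Gamma'\times I}=0$, and $\delta'(P,Q)=\Lambda_P(Q)$ for $(P,Q)\in I\times\Gamma$. This step is purely a renaming of the inputs and is in $\Ac0$. Since the nodes of $I$ have no incoming edges in $\Gamma'$, acyclicity of $\Gamma$ implies acyclicity of $\Gamma'$; and since the hypothesis forces all evaluations in $\Gamma'$ to be pairwise distinct (and, by the hypothesis $\epsilon(\Lambda_P)\ge 0$, to lie in $2^{\N}$), $(\Gamma',\delta')$ is a valid power circuit. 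By the very construction the successor marking of each $P\in I$ inside $\Gamma'$ coincides with $\Lambda_P$, and the compactness hypothesis on the $\Lambda_P$, combined with the fact that $\Gamma$ is already reduced, gives that every successor marking of $\Gamma'$ is compact.

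The only non-trivial task is to list $\Gamma'$ in increasing order of node evaluation, which is what is required by \cref{def:PCreduced}. Here I would invoke the sorting routine of \cref{ex:sortTC} with keys given by the evaluations $\epsilon(P)$ for $P\in\Gamma'$. The required key comparator is: given $P,Q\in\Gamma'$, decide whether $\epsilon(P)\le\epsilon(Q)$, equivalently whether $\epsilon(\Lambda_P)\le\epsilon(\Lambda_Q)$. Both $\Lambda_P$ and $\Lambda_Q$ are compact markings on the already reduced power circuit $(\Gamma,\delta)$, so this comparison is in $\Ac0$ by \cref{lem:compareCompactMarkings}(\ref{compMarkings}). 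Plugging this $\Ac0$ comparator into \cref{ex:sortTC} puts the whole sorting step in \TC, and the reindexed $(\Gamma',\delta')$ is now a reduced power circuit extending $(\Gamma,\delta)$.

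Finally I would verify the size bounds. The equality $\abs{\Gamma'}=\abs{\Gamma}+\abs{I}$ is immediate from $\Gamma'=\Gamma\cup I$ and $\Gamma\cap I=\emptyset$. For the chain bound I would argue one inserted node at a time: adding a single node $P$ either (a) attaches to an existing maximal chain at one of its two ends, leaving $\abs{\cC}$ unchanged, (b) bridges two previously separate maximal chains, strictly decreasing $\abs{\cC}$, or (c) is isolated from every existing chain and forms a new maximal chain of length $1$, increasing $\abs{\cC}$ by exactly one. In every case the chain count grows by at most $1$, so after inserting all of $I$ we obtain $\abs{\cC_{\Gamma'}}\le \abs{\cC_{\Gamma}}+\abs{I}$. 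The only subtle point worth being careful about is that the three cases are exhaustive and that (b) can indeed only decrease $\abs{\cC}$; this follows because a chain in a reduced power circuit is determined by the values $\epsilon(P)=2^{j}\epsilon(\startc{C})$ and the new node's evaluation is unique, so it interacts with at most two existing chains (its immediate predecessor and successor in the value order). This completes the plan; I expect the sorting step to be the only step that requires a non-trivial complexity argument, and it is handled by the $\Ac0$ comparator from \cref{lem:compareCompactMarkings}.
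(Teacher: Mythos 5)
Your proof is correct and follows essentially the same route as the paper: form $\Gamma' = \Gamma \cup I$ with the obvious $\delta'$, sort by node value using the $\Ac0$ comparator of \cref{lem:compareCompactMarkings} plugged into the \TC sorting routine of \cref{ex:sortTC}, and bound the chain count by observing that each inserted node creates at most one new maximal chain. The only difference is that you spell out a few routine verifications (acyclicity, distinctness of evaluations, the three-case analysis for chains) that the paper leaves implicit, but the substance is the same.
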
 

\vspace{-1mm}
\subparagraph*{The three steps of the reduction process.}

The reduction process for a  power circuit $(\Pi, \delta_{\Pi})$ with a marking $M$ consists of several iterations. Each iteration starts with a power circuit $(\Gamma_i \cup \Xi_i, \delta_i)$ such that $\Gamma_i$ is a reduced sub-power circuit and a marking $M_i$ with $\epsilon(M_i) = \epsilon(M)$. The aim of one iteration is to integrate the vertices $\Min(\Xi_i) \sse \Xi_i$ into $\Gamma_i$ where $\Min(\Xi_i)$ is defined by
	\[\Min(\Xi_i) = \set{P \in \Xi_i }{\supp(\Lambda_{P}) \subseteq \Gamma_i}\]

\noindent and to update the marking $M_i$ accordingly.
Each iteration consists of the three steps \stepone, \steptwo, and \stepthree, which can be done in \TC. We have
 $\Xi_{i+1} = \Xi_i \setminus \Min(\Xi_i)$. Thus, the full reduction process consists of $\depth(\Pi)+1$ many \TC computations.

 Let us now describe these three steps in detail and also show that they can be done in \TC. After that we present the full algorithm for power circuit reduction.

We write $(\Gamma \cup \Xi, \delta) = (\Gamma_i \cup \Xi_i, \delta_i)$ for the power circuit at the start of one iteration (for simplicity we do not write the indices).
Let us fix its precise properties: $\Gamma \cap \Xi = \emptyset$, $(\Gamma, \delta|_{\Gamma \times \Gamma}) \leq (\Gamma \cup \Xi, \delta)$ is a reduced power circuit and $\Lambda_{P}|_{\Gamma}$ is a compact marking for every $P \in \Xi$. Moreover, we assume that $\abs{C_0(\Gamma)}\geq\ceil{\log(\abs{\Xi})}+1 $.

\begin{lemma}[\stepone]\label{stz:insertminU}
The following problem is in \TC: 
\compproblem{A power circuit  $(\Gamma \cup \Xi, \delta)$ as above.}
{ A reduced power circuit $(\Gamma', \delta')$  such that
\begin{itemize}
\item $(\Gamma, \delta|_{\Gamma \times \Gamma}) \leq (\Gamma', \delta')$,
\item for every node $ Q \in \Min(\Xi)$ there exists a node $P \in \Gamma'$ with $\epsilon(P) = \epsilon(Q)$,
\item $\abs{\Gamma'}\leq \abs{\Gamma}+\abs{\Min(\Xi)}$, and 
\item $\abs{\mathcal{C}_{\Gamma'}}\leq \abs{\mathcal{C}_{\Gamma}}+\abs{\Min(\Xi)}$.
\end{itemize} 
}
\vspace{-.5cm}
\end{lemma}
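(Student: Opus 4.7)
The plan is to reduce the task to a single application of \proc{InsertNodes} (\cref{lem: newnodes}) after a bit of preprocessing to guarantee its three preconditions. Let me describe the steps in order.

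First, I would identify $\Min(\Xi)$: for each $P \in \Xi$ check whether every node of $\supp(\Lambda_P)$ lies in $\Gamma$. This is a simple conjunction over at most $\abs{\Gamma}$ positions and so is in \Ac0. For each $P \in \Min(\Xi)$ the successor marking $\Lambda_P$ is, by hypothesis, a compact marking on the reduced power circuit $\Gamma$. Moreover, $\epsilon(\Lambda_P)\geq 0$ is automatic: since $(\Gamma \cup \Xi,\delta)$ is a power circuit in the sense of \cref{def:PC}, $\epsilon(P) = 2^{\epsilon(\Lambda_P)} \in 2^{\N}$, which forces $\epsilon(\Lambda_P) \in \N$. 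Thus, two of the three preconditions of \proc{InsertNodes} are already satisfied for the candidate set $\Min(\Xi)$.

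The remaining precondition is pairwise distinctness of evaluations among $\Min(\Xi) \cup \Gamma$, which may fail: two nodes of $\Min(\Xi)$, or one of $\Min(\Xi)$ and one of $\Gamma$, could realize the same value. To handle this, for every pair $(P,Q) \in (\Min(\Xi) \cup \Gamma)^2$ I would test whether $\epsilon(\Lambda_P) = \epsilon(\Lambda_Q)$ using \cref{lem:compareCompactMarkings}(\ref{compMarkings}); since the markings are compact on the reduced power circuit $\Gamma$, this is in \Ac0. Then, fixing any canonical ordering on $\Xi$, select
\[ I = \set{P \in \Min(\Xi)}{\epsilon(\Lambda_P) \neq \epsilon(\Lambda_Q) \text{ for every } Q \in \Gamma \text{ and every } Q \in \Min(\Xi) \text{ preceding } P}. \]
All these tests are independent and the whole selection is in \TC.

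Finally, apply \proc{InsertNodes} with input $(\Gamma,\delta)$ and the set $I$; by \cref{lem: newnodes} this produces a reduced power circuit $(\Gamma',\delta')$ with $(\Gamma,\delta|_{\Gamma\times\Gamma}) \le (\Gamma',\delta')$, and is in \TC. For every $Q \in \Min(\Xi)$ either $Q \in I$ (so its evaluation is realized by the corresponding inserted node), or there is a $Q' \in I \cup \Gamma$ preceding $Q$ with $\epsilon(\Lambda_{Q'}) = \epsilon(\Lambda_Q)$, and that $Q'$ has a representative in $\Gamma'$ with value $\epsilon(Q)$. The cardinality bounds are immediate from \cref{lem: newnodes} and $\abs{I}\le\abs{\Min(\Xi)}$: $\abs{\Gamma'} = \abs{\Gamma}+\abs{I} \le \abs{\Gamma}+\abs{\Min(\Xi)}$ and $\abs{\maxchains{\Gamma'}} \le \abs{\maxchains{\Gamma}}+\abs{I} \le \abs{\maxchains{\Gamma}}+\abs{\Min(\Xi)}$. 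The composition of these \TC computations remains in \TC, proving the lemma. The only subtle point is the parallel deduplication in the middle step, and that is handled cleanly by \cref{lem:compareCompactMarkings} together with the canonical ordering on $\Xi$.
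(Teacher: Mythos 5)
Your proposal is correct and follows essentially the same approach as the paper's proof: compute $\Min(\Xi)$, observe that the preconditions of \proc{InsertNodes} hold for each candidate up to the uniqueness-of-evaluations requirement, deduplicate via pairwise comparisons (in \Ac0 by \cref{lem:compareCompactMarkings}) and a canonical ordering to pick one representative per evaluation class not already present in $\Gamma$, then invoke \cref{lem: newnodes}. The only additional material you supply is the explicit verification that $\epsilon(\Lambda_P) \geq 0$ and that $\Lambda_P$ is compact for $P \in \Min(\Xi)$, which the paper treats as implicit.
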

For the proof, we define the following equivalence relation  $\equival$ on $\Gamma \cup \Min(\Xi)$: 
\[P \equival Q \text{ if and only if } \epsilon(P) = \epsilon(Q).\]
For $P \in \Gamma \cup \Min(\Xi)$ we write $[P]_{\epsilon}$  for the equivalence class containing $P$.

\begin{proof}
	Consider the equivalence relation $\equival$ as defined above on $\Gamma \cup \Min(\Xi)$. Define a set $I\sse \Min(\Xi)$ by taking one representative of each $\equival$-class not containing a node of $\Gamma$. Such a set $I$ can be computed in \TC: Clearly, $\Min(\Xi)$ can be computed in \TC. The $\equival$-classes can be computed in \Ac0 by \cref{lem:compareCompactMarkings}. 
	Finally, for defining $I$ one has to pick representatives. For example, for every $\equival$-class which does not contain a node of $\Gamma$ one can pick the first node in the input which belongs to this class. These representatives also can be found in \TC.
	Now, we can apply \cref{lem: newnodes} to insert $I$ into $\Gamma$ in \TC. This yields our power circuit  $(\Gamma', \delta')$. The size bounds follow now immediately from those in \cref{lem: newnodes} (notice that $\abs{I} \leq \abs{\Min(\Xi)}$).	
\end{proof}

\newcommand{\prolongate}{\mu}

\newcommand{\startGamma}{\Gamma'}
\newcommand{\innerGamma}{\tilde\Gamma}
\newcommand{\finalGamma}{\Gamma''}

\newcommand{\startDelta}{\delta'}
\newcommand{\innerDelta}{\tilde\delta}
\newcommand{\finalDelta}{\delta''}

\newcommand{\innerP}{\tilde P}
\newcommand{\innerB}{\tilde b}

\begin{lemma}[\steptwo] \label{stz:extendChains}
	The following problem is in \TC:
	\compproblem{A reduced power circuit $(\Gamma', \delta')$ and $\prolongate \in \mathbb{N}$ such that $\prolongate \leq\floor{\frac{2^{\abs{C_{0}}+1}}{3}} $ (where, as before, $C_0 = C_0(\Gamma')$ is the initial maximal chain of $\Gamma'$)}{ A reduced power circuit $(\Gamma'',\delta'')$ such that 
		\begin{itemize}
			\item $(\Gamma', \delta') \leq (\Gamma'',\delta'')$, 
			\item for each $P \in \Gamma'$ and each $i \in \interval{0}{\prolongate} $ there is a node $Q \in \Gamma''$ with $\epsilon(\Lambda_{Q}) = \epsilon(\Lambda_{P})+i$,
			\item $\abssmall{\Gamma''} \leq \abs{\Gamma'}+\abs{\mathcal{C}_{\Gamma'}}\cdot\prolongate$, and
			\item $\abssmall{\mathcal{C}_{\Gamma''}} \leq \abs{\mathcal{C}_{\Gamma'}} $.
	\end{itemize}}\vspace{-.5cm}
\end{lemma}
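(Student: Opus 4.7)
The plan is to adjoin $\mu$ new nodes to each maximal chain of $\Gamma'$ so that for a chain $C$ with endpoint $P$, the $i$-th new node $P^{(i)}$ satisfies $\eps(\Lambda_{P^{(i)}}) = \eps(\Lambda_P)+i$. The key observation is that the ``$+i$'' shift lives entirely inside the initial chain $C_0$: I would set $\Lambda_{P^{(i)}} := \Lambda_P$ on every chain except (the extension of) $C_0$, and on $C_0$ I would replace $\binM{\Lambda_P}{C_0}$ by the \csdr of $\val(\binM{\Lambda_P}{C_0})+i$, which is computable in $\Ac0$ via \cref{cor:signedDigitOps}. The maximal chains of $\Gamma'$, the initial chain $C_0$, and the endpoint of each chain are first identified in $\TC$ using \cref{cor:findChains}.

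By \cref{lem:valcom}(ii) we have $\abs{\val(\binM{\Lambda_P}{C_0})}\leq \lfloor 2^{\abs{C_0}+1}/3\rfloor$, so together with $i\leq \mu \leq \lfloor 2^{\abs{C_0}+1}/3\rfloor$ the shifted value has absolute value at most $\lfloor 2^{\abs{C_0}+2}/3\rfloor$, and hence its \csdr has digit-length at most $\abs{C_0}+1$. Thus on the $C_0$-side this \csdr uses only $P_0,\dots,P_{\abs{C_0}-1}$ together with possibly one extra node of value $2^{\abs{C_0}}$. For extending $C_0$ itself, the $j$-th new node should carry a successor marking evaluating to $\abs{C_0}+j$; by the same bound, its \csdr fits within digit-length $\abs{C_0}+1$, since $\abs{C_0}+\mu-1 \leq \lfloor 2^{\abs{C_0}+2}/3\rfloor$.

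To comply with the hypothesis of \cref{lem: newnodes} that successor markings live on the previously reduced circuit, I would perform the insertion in two $\TC$-phases. Phase~1 inserts the single node $P_{\abs{C_0}}$ whose successor marking is the \csdr of $\abs{C_0}$; this has digit-length at most $\abs{C_0}$ (since $\abs{C_0}\leq \lfloor 2^{\abs{C_0}+1}/3\rfloor$ for all $\abs{C_0}\geq 1$) and therefore lies entirely on $C_0\subseteq\Gamma'$. Phase~2 inserts every other new node; by the previous paragraph, each such successor marking refers only to $C_0\cup\{P_{\abs{C_0}}\}$ on the $C_0$-side and to unchanged existing nodes on other chains, all present after Phase~1. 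Before each call to \proc{InsertNodes} I would deduplicate candidates whose value coincides with another candidate's or with an already-present node (detectable in $\Ac0$ via \cref{lem:compareCompactMarkings}); by uniqueness of the \csdr the retained representative carries the right successor marking, so the existence statement of the lemma is preserved.

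The size bounds $\abs{\Gamma''}\leq \abs{\Gamma'}+\abs{\mathcal{C}_{\Gamma'}}\cdot \mu$ and $\abs{\mathcal{C}_{\Gamma''}}\leq \abs{\mathcal{C}_{\Gamma'}}$ follow because each chain contributes at most $\mu$ new nodes and each such node either prolongs an existing maximal chain or bridges two of them, so no new maximal chain is ever created. The main obstacle is precisely the self-reference in the $C_0$-extension---some of the new $C_0$-nodes' successor markings would naturally involve other newly added nodes---which is what the two-phase insertion resolves by installing $P_{\abs{C_0}}$ first.
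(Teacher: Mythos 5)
Your Phase~2 construction can produce successor markings that are \emph{not compact}, so the resulting object need not be a reduced power circuit; the two-phase insertion does not resolve this. Concretely, let $\Gamma' = (P_0,P_1,P_2,Q,P)$ be the reduced power circuit with $\epsilon(P_0)=1$, $\epsilon(P_1)=2$, $\epsilon(P_2)=4$, $\epsilon(Q)=16$, $\epsilon(P)=2^{20}$, where $\Lambda_Q(P_2)=1$ and $\Lambda_P(P_2)=\Lambda_P(Q)=1$ and all other entries are zero (all these markings are compact). Here $C_0=(P_0,P_1,P_2)$, $\abs{C_0}=3$, and we may take $\mu = \floor{2^{4}/3} = 5$. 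Your Phase~1 inserts only the node $P_{3}$ of value $2^{3}$ (so $\epsilon(\Lambda_{P_3})=3$). Now take the node $P^{(5)}$ with target $\epsilon(\Lambda_{P^{(5)}}) = \epsilon(\Lambda_P)+5 = 25$: your recipe sets its $C_0$-side to $\cor\bigl(\val(\binM{\Lambda_P}{C_0})+5\bigr) = \cor(9) = (1,0,0,1)$ and leaves $\Lambda_P$ unchanged off $C_0$, which yields $\supp(\Lambda_{P^{(5)}}) = \{P_0, P_3, Q\}$. But $\epsilon(\Lambda_{P_3})=3$ and $\epsilon(\Lambda_Q)=4$ differ by one, so $\Lambda_{P^{(5)}}$ is \emph{not} compact, violating the hypotheses of \cref{lem: newnodes} and breaking the claim that $(\Gamma'',\delta'')$ is reduced.

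The missing idea is that inserting a single node $P_{\abs{C_0}}$ does not create enough separation between (i) the top position $\abs{C_0}$, which the shifted \csdr may occupy, and (ii) the smallest position $\geq \abs{C_0}+1$ at which $\Lambda_P$ can already have support outside $C_0$. The paper's Step~1 is precisely designed to produce this separation: it extends $C_0$ to a chain $\tilde C_0$ whose \emph{last} node is fresh (not in $\Gamma'$), absorbing along the way every node of $\Gamma'$ that lies within a gap of two and stopping one step past the first gap of size at least three. Since no $\Lambda_{\tilde P}$ marks the fresh top node of $\tilde C_0$, the restriction $\Lambda_{\tilde P}|_{\tilde C_0}$ has digit-length at most $\abs{\tilde C_0}-1$, the $+h$ shift stays inside $\tilde C_0$, and everything outside sits at positions $\geq \abs{\tilde C_0}+1$, giving the needed gap of $\geq 2$. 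Your Phase~2 is essentially the paper's Step~2; what you skip is the full extension in Step~1, and the ``self-reference'' obstacle you single out is a genuine but secondary concern compared to this compactness issue.
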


\begin{proof}
	First assume that $\abs{C_{0}} = 1$. Then $\abs{\startGamma} = 1$ and $\prolongate \leq 1$. If $\prolongate = 1$, then just one node has to be created, namely the one of value $2$ and we are done. Thus, in the following we can assume that $\abs{C_{0}} \geq 2$. 
	Now, the proof of \cref{stz:extendChains} consists of two steps: first, we extend only the chain $C_0$ to some longer (and long enough) chain in order to make sure that the values of the (compact) successor markings of the nodes we wish to introduce can be represented within the power circuit; only afterwards we add the new nodes as described in the lemma.	
	
	\proofsubparagraph{Step 1:}
	We first want to extend the chain $C_{0}$ to the chain $\tilde{C}_{0}$ of minimal length such that $ \tilde{C}_{0}$ is a maximal chain, $C_{0} \subseteq \tilde{C}_{0}$, and the last node of $\tilde{C}_{0}$ is not already present in $\startGamma$. The resulting power circuit will be denoted by $\innerGamma$. We define
	\[i_{0} = \min \set{ i \in\oneinterval{\abssmall{\startGamma}}}{\epsilon(\Lambda_{P_{i+1}})-\epsilon(\Lambda_{P_{i}}) > 2}.\]
	Here, we use the convention that $P_{\abs{\startGamma}}$ has value infinity, so $i_0$ indeed exists. 
	Furthermore, we define  
	\begin{align*}
		I & = \set{ i \in \interval{0}{i_0}}{ \epsilon(\Lambda_{P_{i+1}}) - \epsilon(\Lambda_{P_{i}}) \geq 2}.
	\end{align*}
	Thus, in order to obtain $\innerGamma$, we need to insert a new node between $P_{i}$ and $P_{i+1}$ into $\startGamma$ for each $i \in I$ (resp.\ one node above $P_{i_{0}}$). Since the successor markings of these new nodes might point to some of the other new nodes, we cannot apply \cref{lem: newnodes} as a black-box. Instead, we need to take some more care: the rough idea is that, first, we compute all positions $I$ where new nodes need to be introduced ($I$ is as defined above), then we compute \csdr{}s for the respective successor markings, and, finally, we introduce these new nodes all at once knowing that all nodes where the successor markings point to are also introduced at the same time. In order to map the positions of nodes in $\startGamma$ to positions of nodes in $\innerGamma$, we introduce a function $\lambda\colon \oneinterval{\abssmall{\startGamma}} \to \N$ with 
	\begin{align*}
		\lambda(i) & = i + \abs{I \cap \interval{0}{i-1}}.
	\end{align*}
	Observe that $\lambda(i) = i$ for $i \in \oneinterval{\abs{C_{0}}}$, and $\lam(i+1) = \lam(i) + 2$ for $i \in I$, and $\lambda(j) = j + \abs{I}$ for $j \geq i_{0}+1$.
	
	For each $i \in I$ we introduce a node $Q_i$ whose successor marking we will specify later such that $\epsilon(Q_{i}) = 2\epsilon(P_{i})$.  We define the new power circuit $\innerGamma = (\innerP_{0}, \ldots, \innerP_{\abs{\startGamma}+ \abs{I}-1})$ by 
	\[ \innerP_j = \begin{cases}
		P_i & \text{if } j = \lambda(i)  \\
		Q_i & \text{if } j = \lambda(i) + 1 \text{ and } i \in I. 
	\end{cases} \]
	Notice that, if $j = \lambda(i) + 1$ for some  $i \in I$, then $j \neq \lambda(i)$ for any $i$~-- hence, $\innerP_j$ is well-defined in any case.
	
	The nodes $\innerP_0, \ldots, \innerP_{\lambda(i_{0}) + 1} $ will form the chain $\tilde{C}_{0}$ as claimed above. Moreover, we have $\startGamma \subseteq \innerGamma$ and $\innerGamma$ is sorted increasingly. The successor markings of nodes from $\startGamma$ remain unchanged (\ie $\Lambda_{\innerP_{\lambda(i)}}\! (\innerP_{\lambda(j)}) = \Lambda_{P_i}(P_j)$ for $i, j \in \oneinterval{\abs{\startGamma}}$ and $\Lambda_{\innerP_{\lambda(i)}}(Q_j) = 0$ for $j \in I$).

	For every $i \in I$ we define the successor marking of the node $Q_{i}$ by
	\[\binM{\Lambda_{ Q_{i}}}{\tilde{C}_{0}} = \cor\left(\eps(\Lambda_{ P_i})+1\right)\quad \text{and}\quad \Lambda_{ Q_{i}}|_{\innerGamma \setminus \tilde{C}_{0}} = 0.\]
	Be aware that, since $Q_{i} \in \tilde C_0$, also the successor marking of $Q_{i}$ (of value $\eps(\Lambda_{ P_{i}})+1$) can be represented using only the nodes from $\tilde C_0$ (see \cref{rem:toposort}), so this is, indeed, a meaningful definition (be aware that to represent $\eps(\Lambda_{ P_{i}})+1$, we might need some of the additional nodes $Q_i$, but never a node that is not part of the chain $\tilde C_0$). Clearly, this yields $\epsilon(\Lambda_{Q_{i}}) = \epsilon(\Lambda_{P_{i}})+1$ as desired.
	
	We obtain a reduced power circuit $(\innerGamma, \innerDelta)$ with $(\startGamma, \startDelta) \leq (\innerGamma, \innerDelta)$ where the map $\innerDelta\colon \innerGamma \rightarrow \oneset{\mOne, 0, 1}$ is defined by the successor markings. 
	Moreover, $\tilde{C}_{0} \sse \innerGamma$ has the required properties.

	It remains to show that $\innerGamma$ can be computed in \TC: As $\abs{C_{0}} \geq 2$, according to  \cref{lem:compareCompactMarkings}, we are able to decide in \Ac0 whether the markings $\Lambda_{P_{i}}$ and $\Lambda_{P_{i+1}}$ differ by $1$, $2$, or more than 2~-- for all $i \in\oneinterval{\abs{\startGamma}}$ in parallel. Now, $i_{0}$ can be determined in \TC via its definition as above. Likewise $I$ and the function $\lambda$ can be computed in \TC.
	 By \cref{cor:signedDigitOps}, $\cor\left(\eps(\Lambda_{ P_i})+1\right)$ for $i \in I$ can be computed in \Ac0 (since $\abssmall{\tilde{C}_{0}} \leq 2 \cdot \abs{\Gamma'}$) showing that altogether $\innerGamma$ can be computed in \TC.

	\proofsubparagraph{Step 2:}	The second step is to add nodes above each chain of $\innerGamma$ as required in the Lemma. The outcome will be denoted by $(\finalGamma, \finalDelta)$. We start by defining
	\begin{align*}
	&&	d_{i}	& = \min\{\epsilon(\Lambda_{\innerP_{i+1}}) - \epsilon(\Lambda_{\innerP_{i}})-1, \prolongate\} &&\text{ for } 	i\in \oneinterval{\abssmall{\innerGamma}} \setminus \bigl\{ \abssmall{\tilde C_0 }-1\bigr\} \quad \text{and}\\
	&&	d_{i}	& = \min\{\epsilon(\Lambda_{\innerP_{i+1}}) - \epsilon(\Lambda_{\innerP_{i}})-1, \prolongate -1\} && \text{ for } i = \abssmall{\tilde C_0 } - 1. 
	\end{align*}
	In order to obtain $(\finalGamma, \finalDelta)$ from $(\innerGamma,\innerDelta)$,
	 for every $i \in \oneinterval{\abssmall{\innerGamma}}$ and every  $h \in \interval{1}{d_i}$ we have to insert a node $R^{(i,h)}$ such that \[\epsilon(\Lambda_{R^{(i,h)}}) = \epsilon(\Lambda_{\innerP_{i}})+h.\]
	Observe that the numbers $d_i$ can be computed in \TC: since 
	\[\prolongate + 1 \leq \floor{\frac{2^{\abs{C_{0}}+1}}{3}} + 1 \leq \floor{\frac{2^{\abssmall{\tilde C_0 }}}{3}} + 1 \leq \floor{\frac{2^{\abssmall{\tilde C_0 }+1}}{3}},\] by \cref{lem:compareCompactMarkings}, we can check in \Ac0 whether $ \epsilon(\Lambda_{\innerP_{i+1}}) \leq \epsilon(\Lambda_{\innerP_{i}})+k$ with $k \leq \prolongate +1$. If $i = \abssmall{\tilde C_{0}}-1$ we choose $k = \prolongate$, otherwise $k = \prolongate+1$. If the respective inequality holds, we obtain by \cref{stz: smalldiffs} that $ \epsilon(\Lambda_{\innerP_{i+1}}) - \epsilon(\Lambda_{\innerP_{i}}) - 1 = \epsilon(\Lambda_{\innerP_{i+1}}|_{\tilde{C}_{0}}) - \epsilon(\Lambda_{\innerP_{i}}|_{\tilde{C}_{0}})-1$. For the latter we have \sdr{}s of digit-length at most $\abssmall{\tilde{C}_{0}}$. Hence, this difference can be computed in \TC.

	Since $\innerP_{\abssmall{\tilde{C}_{0}}-1 } \not \in \startGamma$ and in Step 1 we have not introduced any vertex above $\innerP_{\abssmall{\tilde{C}_{0}}-1 } $, we know that $\innerP_{\abssmall{\tilde{C}_{0}}-1 }$ is not marked by $\Lambda_{\innerP}$ for any $ \innerP \in \innerGamma$. Therefore,  for all $i\in \oneinterval{\abssmall{\innerGamma}}$ we have $\eps(\Lambda_{\innerP_{i}}|_{\tilde{C}_{0}}) + \prolongate \leq \floor{\frac{2^{\abssmall{\tilde{C}_{0}}}}{3}} + \floor{\frac{2^{\abs{C_{0}}+1}}{3}} \leq2\floor{\frac{2^{\abssmall{\tilde{C}_{0}}}}{3}}$ and, hence, by \cref{lem:valcom}, $\eps(\Lambda_{\innerP_{i}}|_{\tilde{C}_{0}})+h$ can be represented as a compact marking using only nodes from $\tilde{C}_{0}$ for every $h \in \interval{1}{d_i}$.
	Thus, for every $d_{i} \neq 0$ and every  $h \in \interval{1}{d_i}$ we define a successor marking of $R^{(i,h)}$ by
	\[\binM{\Lambda_{R^{(i,h)}}}{\tilde{C}_{0}} = \cor(\eps(\Lambda_{\innerP_{i}}|_{\tilde{C}_{0}})+h) \quad	
	\text{ and }\quad \Lambda_{R^{(i,h)}}|_{\innerGamma \setminus \tilde{C}_{0}} = \Lambda_{\innerP_{i}}|_{\innerGamma \setminus \tilde{C}_{0}}.\]  
	Again, we know that $\abssmall{\tilde{C}_{0}} \leq 2\abs{\startGamma}$. So according to \cref{cor:signedDigitOps} we are able to calculate $\cor(\eps(\Lambda_{\innerP_{i}}|_{\tilde{C}_{0}})+h) $ in \Ac0.

	Now we set $I = \set{R^{(i,h)}}{d_{i}\neq 0, h \in \interval{1}{d_{i}} }$. According to \cref{lem: newnodes} we are able to construct in \TC a reduced power circuit $(\finalGamma, \finalDelta)$ such that $(\innerGamma, \innerDelta) \leq (\finalGamma, \finalDelta)$ and such that for each $R \in I$ there exists a node $Q \in \finalGamma$ with $\epsilon(Q) = \epsilon(R)$. 
	
	Considering the size of $\finalGamma$, observe that during the whole construction, for every node $P_{i} \in \startGamma$ we create at most $\prolongate$ new nodes between $P_i$ and $P_{i+1}$. 
	Moreover, we only create new nodes between $P_{i}$ and $P_{i+1}$ if $P_{i}$ is the last node of a maximal chain of $\startGamma$. 
	 Furthermore, notice that the only node of $\startGamma$ above which we have introduced new nodes in both Step 1 and Step 2 is the second largest node of $\tilde{C}_{0}$: in Step 1 we have created one new node and in Step 2 we have created at most $\prolongate - 1$ new nodes above it. Thus, for every chain of $\startGamma$ we have introduced at most $\prolongate$ new nodes.
	Thus, $\abssmall{\finalGamma} \leq \abs{\startGamma}+\abs{\mathcal{C}_{\startGamma}}\cdot \prolongate$. Finally, the new nodes we create only prolongate the already existing chains, so we do not create any new chains. This finishes the proof of the lemma. 
\end{proof}

In the following, $(\Gamma', \delta')$ denotes the power circuit obtained by \stepone when starting with $(\Gamma \cup \Xi, \delta)$, and
 $(\Gamma'', \delta'')$ denotes the power circuit obtained by \steptwo with $\prolongate=\ceil{\log(\abs{\Min(\Xi)})}+1$  on input of the power circuit $(\Gamma', \delta')$ (observe that, by the assumption $\abs{C_0(\Gamma)}\geq\ceil{\log(\abs{\Xi})}+1 $, the condition on $\mu$ in \cref{stz:extendChains} is satisfied).
The value of $\mu$ is chosen to make sure that in the following lemma one can make the markings compact. Indeed, if $ \Min(\Xi) = \{ P_1, \dots, P_{k} \}$ and all $P_i$ have the same evaluation and are marked with $1$ by $M$, then we might need a node of value $2^\mu\cdot \eps(P_1)$ in order to make $M$ compact.

\begin{lemma}[\stepthree] \label{stz:updateM}
The following problem is in \TC:
\compproblem{The power circuit $(\Gamma'', \delta'')$ as a result of \steptwo{} with $\prolongate = \ceil{\log(\abs{\Min(\Xi)})}+1$ and a marking $M$ on $\Gamma \cup \Xi$. }{
 A marking $\wt M$ on $\Gamma'' \cup (\Xi \setminus \Min(\Xi))$ such that $\epsilon(\wt M) = \epsilon(M)$ and $\wt M|_{\Gamma''}$ is compact.}
\end{lemma}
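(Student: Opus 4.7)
The plan is to build $\wt M$ in two conceptual stages: first relocate the $\Min(\Xi)$-contributions to their $\Gamma'$-twins, then re-encode each chain of $\Gamma''$ in compact form. For every $P \in \Xi \setminus \Min(\Xi)$ we simply set $\wt M(P) := M(P)$. For every $P \in \Gamma''$ we define an intermediate integer value
\[
	\wt M_0(P) := \begin{cases} M(P) & \text{if } P \in \Gamma,\\ 0 & \text{if } P \in \Gamma'' \setminus \Gamma \end{cases}
	\;+\; \sum_{\substack{Q \in \Min(\Xi)\\ \epsilon(Q)=\epsilon(P)}} M(Q).
\]
The equality $\epsilon(Q)=\epsilon(P)$ is \Ac0-decidable by \cref{lem:compareCompactMarkings}, and gathering all contributions amounts to an iterated addition, which lies in \TC by \cref{ex:itAdd}. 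Because \stepone guarantees that every $Q \in \Min(\Xi)$ has a node in $\Gamma'\subseteq\Gamma''$ of the same evaluation, this relocation preserves the total value, i.e.\ $\epsilon(\wt M_0|_{\Gamma''}) + \epsilon(M|_{\Xi\setminus\Min(\Xi)}) = \epsilon(M)$.

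Next we compactify $\wt M_0$ chain by chain. Fix a maximal chain $C = (P_{i},\dots,P_{i+\ell-1})$ of $\Gamma''$, so $\epsilon(P_{i+j}) = 2^{j}\epsilon(\startc{C})$. Each entry $\wt M_0(P_{i+j})$ is an integer of absolute value at most $\abs{\Min(\Xi)}+1$, because at most $\abs{\Min(\Xi)}$ nodes of $\Min(\Xi)$ together with at most one node of $\Gamma$ can share a fixed evaluation. Writing each digit in signed binary using $\Oh(\log\abs{\Min(\Xi)})$ bits, shifting by $j$ positions, and applying a single iterated addition produces a \sdr of $V_C := \sum_{j}\wt M_0(P_{i+j})\cdot 2^{j}$; all of this is in \TC by \cref{ex:itAdd}. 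Converting to the compact form $\cor(V_C)$ via \cref{thm:compact} is in \Ac0, and we place the digits of $\cor(V_C)$ along $C$ to define $\wt M|_C$. Since $\cor(V_C)$ is compact with $\val(\cor(V_C)) = V_C$, the contribution of $C$ to $\epsilon(\wt M)$ matches its contribution to $\epsilon(\wt M_0)$; summing over all chains and incorporating $\wt M|_{\Xi\setminus\Min(\Xi)}$ gives $\epsilon(\wt M) = \epsilon(M)$.

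The main obstacle is to verify that $\cor(V_C)$ has digit-length at most $\ell$ so that it actually fits on $C$. This is precisely why $\prolongate = \ceil{\log\abs{\Min(\Xi)}}+1$ was chosen in the preceding invocation of \steptwo: together with the invariant that $M|_\Gamma$ is already compact on $\Gamma$ (which holds trivially at the start of the reduction loop and is preserved by each application of \stepthree), one obtains $\abs{V_C} \leq \floor{2^{\ell+1}/3}$, and \cref{lem:valcom}(\ref{maxCompactVal}) then guarantees that $\cor(V_C)$ has digit-length at most $\ell$. Assembling the per-chain compact markings together with the untouched $\wt M|_{\Xi\setminus\Min(\Xi)}$ yields the desired $\wt M$; as every ingredient — matching evaluations, iterated addition, and the \csdr conversion of \cref{thm:compact} — is in \TC, the whole of \stepthree is in \TC.
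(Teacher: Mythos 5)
Your high-level strategy coincides with the paper's: work chain by chain, gather all same-valued contributions (from $\Gamma$ and from $\Min(\Xi)$) into a single integer per chain, write it in compact signed-digit form, and place those digits along the chain. Your $V_C$ is exactly the paper's $Z_{M,C}$, and the \TC complexity analysis via iterated addition and \cref{thm:compact} is also the same. You also correctly identify that the whole argument hinges on showing $\cor(V_C)$ has digit-length at most $\abs{C}$, and you correctly state the required numerical bound $\abs{V_C}\leq\floor{2^{\ell+1}/3}$.

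However, the justification of that bound is the heart of the lemma, and you do not actually prove it — you assert it follows from the choice of $\prolongate$ and compactness of $M|_{\Gamma}$, but the ingredients you have on the table are not sufficient as stated. Your observation that each $\abs{\wt M_0(P_{i+j})}\leq\abs{\Min(\Xi)}+1$ only gives $\abs{V_C}\leq(\abs{\Min(\Xi)}+1)(2^\ell-1)$, which vastly overshoots $\floor{2^{\ell+1}/3}$. The missing structural fact is positional: let $k$ be the largest index with $P_{i+k}\in\Gamma'$. Then $\wt M_0(P_{i+j})=0$ for all $j>k$, because $P_{i+j}\notin\Gamma$ (so $M$ does not mark it) and, by \stepone, every node of $\Min(\Xi)$ has the same evaluation as some $\Gamma'$-node — which, since $\Gamma''$ is reduced, would have to be $P_{i+j}$ itself, contradicting $j>k$. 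This confines all contributions to positions $\leq k$; then compactness of $M|_{\Gamma}$ bounds the $\Gamma$-part by $\floor{2^{k+2}/3}$ (via \cref{lem:valcom}(\ref{maxCompactVal})), the $\Min(\Xi)$-part is at most $\abs{\Min(\Xi)}\cdot 2^k$, and \steptwo guarantees $\ell - 1 - k \geq \ceil{\log\abs{\Min(\Xi)}}+1 = \prolongate$, from which one computes $\abs{V_C}\leq\frac{2}{3}2^{k+\prolongate+1}\leq\frac{2}{3}2^{\ell}$. Without the position-$k$ observation and this arithmetic, the digit-length claim is unsupported. (Also a small citation slip: deducing a digit-length bound from a value bound is \cref{lem:valcom}(\ref{compactCompareMax}), not (\ref{maxCompactVal}).)
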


\begin{proof}
	Consider again the equivalence relation $\equival$ as defined above on $\Gamma'' \cup \Min(\Xi)$. For the equivalence class of a node $P \in \Gamma'' \cup \Min(\Xi)$ we write $[P]_\eps$.
 We will define the marking $\wt M$ on $\Gamma''$ by defining it on each maximal chain. Recall that we can view $M$ as a marking on $\Gamma'' \cup \Xi$ by defining $M(P) = 0$ if $P \not\in\Gamma \cup \Xi$.

Let $C = (P_{i}, \ldots, P_{i+h-1}) \in \mathcal{C}_{\Gamma''}$ be a maximal chain of length $h$ and let \[S = \bigcup_{P \in C}[P]_\eps = \bigcup_{P \in C} \set{Q \in \Gamma'' \cup \Min(\Xi)}{\eps(Q) = \eps(P)}\; \sse\; \Gamma'' \cup \Min(\Xi).\] 
We wish to find a compact marking $\tilde M_C$ with support contained in $C \sse \Gamma''$ and evaluation $\eps(\tilde M_C) = \eps(M|_S)$. First define the integer
\begin{align*}
	Z_{M,C} = \sum_{r = 0}^{h-1}\left( \sum_{Q \in [P_{i+r}]_\eps} M(Q)\right)2^{r}.
\end{align*}
Then we have
\begin{align*}
	Z_{M,C}\cdot \epsilon(\startc{C})& = \sum_{r = 0}^{h-1} \,\sum_{Q \in [P_{i+r}]_\eps} M(Q) 2^{r}\cdot \epsilon(\startc{C})\\[.3em]
	& = \sum_{Q \in S} M(Q) \eps(Q)\\[.3em]
	& = \epsilon(M|_{S}).
\end{align*}
Thus, defining $\tilde M_C$ by $\binM{\tilde M_C}{C} = \cor(Z_{M,C})$ gives our desired marking.

However, be aware that, for this, we have to show that the digit-length of $\cor(Z_{M,C})$ is at most $\abs{C} = h$. Let $k$ be maximal such that $P_{i+k} \in \Gamma'$. 
Then, in particular, no node in $S$ with higher evaluation than $P_{i+k}$ is marked by $M$. Moreover, by the properties of \steptwo{}$(\ceil{\log(\abs{\Min(\Xi)})}+1)$, we have $h - 1- k  \geq \ceil{\log(\abs{\Min(\Xi)})}+1$.	
Therefore, 
\begin{align*}
	Z_{M,C}	&\leq \val(\binM{M}{C}) + \abs{\Min(\Xi)} \cdot 2^{k} \\
			&\leq \frac{1}{3} \cdot 2^{k+2} + 2^{k+\log(\abs{\Min(\Xi)})}          \tag{by \cref{lem:valcom}}\\
			&\leq \frac{4}{3} \cdot \left(2^{k} + 2^{k+\log(\abs{\Min(\Xi)})}\right)\\
			&\leq \frac{2}{3} \cdot 2^{k + \ceil{\log(\abs{\Min(\Xi)})} + 2}.
\end{align*}
Thus, by \cref{lem:valcom}, the digit-length\ of $\cor(Z_{M,C})$ is at most $k + \ceil{\log(\abs{\Min(\Xi)})} + 2 \leq h$.

By \cref{cor:findChains}, the maximal chains can be determined in \TC.
Now, for every maximal chain $C$ the (binary) number $Z_{M,C}$ can be computed in \TC using iterated addition and made be compact in \Ac0 using \cref{thm:compact}. Thus, the marking $\tilde M_C$ can be computed in \TC.
The marking $\wt M$ as desired in the lemma is simply defined by $\wt M|_{\Xi\setminus\Min(\Xi)} = M|_{\Xi\setminus\Min(\Xi)}$ and  $\wt M|_C = \tilde M_C|_C$ for $C \in \mathcal{C}_{\Gamma''}$  ~-- all the markings $\tilde M_C$ can be computed in parallel.
\end{proof}

\begin{proof}[Proof of \cref{thm:pcred}.]
 Now we are ready to describe the full reduction process based on the three steps described above. We aim for a \PTc{0}{D} circuit where the input is parametrized by the depth of the power circuit.
The input is some arbitrary power circuit $(\Pi, \delta_{\Pi})$ together with a marking $M$ on $\Pi$. We start with some initial reduced power circuit $(\Gamma_{0}, \delta_{0})$ and some non-reduced part $\Xi_0 = \Pi$ and successively apply the three steps to obtain power circuits $(\Gamma_{i} \cup \Xi_{i}, \delta_{i})$ and markings $M_i$ for $i = 0, 1\dots$ while keeping the following invariants:
\begin{itemize}
\item  $(\Gamma_i, \delta_i|_{\Gamma_i \times \Gamma_i}) \leq (\Gamma_{i} \cup \Xi_{i}, \delta_{i})$ (\ie there are no edges from $\Gamma_i$ to $\Xi_i$),
\item $\Gamma_i$ is reduced,
\item $\Gamma_{i-1} \leq \Gamma_{i}$ and $\Xi_{i} \sse \Xi_{i-1}$,
\item $\eps(M_i) = \eps(M)$.
\end{itemize}
 Moreover, as long as $\Xi_{i-1} \neq \emptyset$ we assure that $\depth(\Xi_i) < \depth(\Xi_{i-1})$.

We first construct the initial reduced power circuit $(\Gamma_{0}, \tilde \delta_{0})$ which consists exactly of a chain of length $\ell = \ceil{\log(\abs{\Pi})}+1$. This can be done as follows:
Let $\Gamma_{0} = (P_{0}, \ldots, P_{\ell-1}) = C_0$ 
and define successor markings by $\binM{\Lambda_{P_{i}}}{C_{0}} = \cor(i)$ for $i \in \oneinterval{\ell}$. This defines $\tilde \delta_0$.  
Now we set $\Xi_{0} = \Pi$ and we define $\delta_{0} \colon  (\Gamma_{0} \cup \Xi_{0}) \times (\Gamma_{0} \cup \Xi_{0}) \rightarrow \{\mOne, 0, 1\}$ by $\delta_{0}|_{\Gamma_{0} \times \Gamma_{0}} = \tilde{\delta}_{0}$, $\delta_{0}|_{\Xi_{0} \times \Xi_{0}} = \delta_{\Pi}$ and $\delta = 0$ otherwise. We extend the marking $M$ to $\Gamma_{0} $ by setting $M(P) = 0$ for all $P \in \Gamma_{0} $. So we obtain a power circuit of the form $(\Gamma_{0}\cup \Xi_{0}, \delta_{0})$ with the properties described above.  

Now let the power circuit $(\Gamma_{i} \cup \Xi_{i}, \delta_{i})$ together with the marking $M_{i}$ be the input for the $i+1$-th iteration meeting the above described invariants. We write $\tilde{\delta}_{i} = \delta_{i}|_{\Gamma_{i} \times \Gamma_{i}}$. Now we apply the three steps from above:
\begin{enumerate}
	\item Using \stepone (\cref{stz:insertminU}) we compute a reduced power circuit $(\Gamma'_{i}, \delta'_{i})$ with $(\Gamma_{i}, \tilde{\delta}_{i}) \leq (\Gamma'_{i}, \delta'_{i})$ such that for every $P \in \Min(\Xi_{i})$ there is some $Q \in \Gamma'_{i}$ with $\epsilon(Q) = \epsilon(P)$. 
	\item Using \steptwo (\cref{stz:extendChains}) with $\prolongate = \ceil{\log(\abs{\Min(\Xi_{i})})}+1$ we extend each maximal chain in $(\Gamma'_{i}, \delta'_{i})$ by at most $\ceil{\log(\abs{\Min(\Xi_{i})})}+1$ nodes. Notice that $\ceil{\log(\abs{\Min(\Xi_{i})})}+1 \leq \ceil{\log(\abs{\Pi})}+1$ and so,  as $\Gamma_0 \leq \Gamma_i'$, the condition $\prolongate \leq \floor{\frac{2^{\abssmall{C_0(\Gamma_i')}+1}}{3}}$ in \cref{stz:extendChains} is satisfied.	
	The result of this step is denoted by $(\Gamma''_{i}, \delta''_{i})$. 
	\item We apply \stepthree (\cref{stz:updateM}) to obtain markings $\wt M_{i}$ and $\tilde{\Lambda}_{P}$ for $P \in \Xi_{i} \setminus \Min(\Xi_{i})$ on $\Gamma''_{i} \cup (\Xi_{i} \setminus \Min(\Xi_{i}))$ such that $\epsilon(\wt M_{i}) = \epsilon(M_{i})$ and $\epsilon(\tilde{\Lambda}_{P}) = \epsilon(\Lambda_{P})$. 
	Observe that these markings restricted to $\Gamma''_{i}$ are compact.
	
	\item 
	Each iteration ends by setting $\Gamma_{i+1} = \Gamma''_{i}$ and $\Xi_{i+1} = \Xi_{i} \setminus \Min(\Xi_{i})$ and $M_{i+1} = \wt M_{i}$. Finally, $\delta_{i+1}$ is defined as $\delta_i''$ on $\Gamma_{i+1}$ and via the successor markings $\tilde{\Lambda}_{P}$ for $P \in \Xi_{i+1}$.
\end{enumerate}

After exactly $\depth(\Pi) + 1$ iterations we reach $\Xi_{d+1} = \Xi_{d} \setminus \Min(\Xi_{d}) = \emptyset$ where $d = \depth(\Pi)$. In this case we do not change the resulting power circuit any further.
It is clear from \cref{stz:insertminU}, \cref{stz:extendChains} and \cref{stz:updateM} that throughout the above-mentioned invariants are maintained.
  Thus, $(\Gamma, \delta) = (\Gamma_{d +1 } ,\delta_{d + 1})$ is a reduced power circuit and for every node $P \in \Pi$ there exists a node $Q \in \Gamma_{d+1}$ such that $\epsilon(Q) = \epsilon(P)$ and  $M_{d+1}$ is a compact marking on $\Gamma_{d+1}$ with $\epsilon(M_{d+1}) = \epsilon(M)$. 
 
\begin{claim}\label{lem:pcredsize}
Let $d = \depth(\Pi)$ and $\Gamma_{0}, \dots, \Gamma_{d+1}$ be as constructed above. Then for all $i$ we have
		\begin{itemize}
			\item $\abs{\mathcal{C}_{\Gamma_{i}}} \leq \abs{\Pi}+1$,  
			\item $\abs{\Gamma_{i}}  \leq (\abs{\Pi}+1)^{2}\cdot (\log(\abs{\Pi})+2)$.
		\end{itemize}
\end{claim}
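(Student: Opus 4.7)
The plan is a straightforward induction on $i$, tracking both quantities simultaneously using the size estimates provided by \stepone{} (\cref{stz:insertminU}) and \steptwo{} (\cref{stz:extendChains}). Note that \stepthree{} only updates markings and does not change the underlying power circuit, so only the first two steps contribute to growth.

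\textbf{Base case.} The initial reduced power circuit $\Gamma_0$ is a single chain of length $\ell = \ceil{\log(\abs{\Pi})}+1$, so $\abs{\mathcal{C}_{\Gamma_0}} = 1 \leq \abs{\Pi}+1$ and $\abs{\Gamma_0} \leq \log(\abs{\Pi})+2$. Both claimed bounds hold trivially.

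\textbf{Inductive step.} From \cref{stz:insertminU} we get $\abs{\Gamma_i'} \leq \abs{\Gamma_i}+\abs{\Min(\Xi_i)}$ and $\abs{\mathcal{C}_{\Gamma_i'}} \leq \abs{\mathcal{C}_{\Gamma_i}}+\abs{\Min(\Xi_i)}$. Then \cref{stz:extendChains} (applied with $\mu = \ceil{\log(\abs{\Min(\Xi_i)})}+1 \leq \log(\abs{\Pi})+2$) gives $\abs{\Gamma_i''} \leq \abs{\Gamma_i'}+\abs{\mathcal{C}_{\Gamma_i'}}\cdot \mu$ and $\abs{\mathcal{C}_{\Gamma_i''}} \leq \abs{\mathcal{C}_{\Gamma_i'}}$. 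Since $\Gamma_{i+1}=\Gamma_i''$, this chains together into
\[
\abs{\mathcal{C}_{\Gamma_{i+1}}} \leq \abs{\mathcal{C}_{\Gamma_i}} + \abs{\Min(\Xi_i)}, \qquad \abs{\Gamma_{i+1}} \leq \abs{\Gamma_i} + \abs{\Min(\Xi_i)} + \abs{\mathcal{C}_{\Gamma_i'}}\cdot(\log(\abs{\Pi})+2).
\]
The key observation is that the sets $\Min(\Xi_i) \sse \Pi$ are pairwise disjoint (since $\Xi_{i+1}=\Xi_i\setminus\Min(\Xi_i)$), hence $\sum_i \abs{\Min(\Xi_i)} \leq \abs{\Pi}$. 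Telescoping the first recurrence immediately yields $\abs{\mathcal{C}_{\Gamma_i}} \leq 1+\abs{\Pi}$ for every $i$, which is the first bound and also shows $\abs{\mathcal{C}_{\Gamma_i'}}\leq \abs{\Pi}+1$ uniformly.

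\textbf{Second bound.} Plugging $\abs{\mathcal{C}_{\Gamma_i'}}\leq \abs{\Pi}+1$ into the recurrence for $\abs{\Gamma_i}$ and telescoping over the at most $d+1 \leq \abs{\Pi}$ iterations gives
\[
\abs{\Gamma_{d+1}} \leq \abs{\Gamma_0} + \textstyle\sum_i \abs{\Min(\Xi_i)} + (d+1)(\abs{\Pi}+1)(\log(\abs{\Pi})+2).
\]
Using $\abs{\Gamma_0}\leq \log(\abs{\Pi})+2$, $\sum_i \abs{\Min(\Xi_i)}\leq \abs{\Pi}$, and $\abs{\Pi}\leq \abs{\Pi}(\log(\abs{\Pi})+2)$, the right-hand side is bounded by $(\log(\abs{\Pi})+2)\bigl(1+\abs{\Pi}+\abs{\Pi}(\abs{\Pi}+1)\bigr) = (\log(\abs{\Pi})+2)(\abs{\Pi}+1)^2$, as desired.

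The only mild subtlety is confirming $d+1 \leq \abs{\Pi}$ (which follows since a dag on $n$ nodes has depth at most $n-1$) and that $\mu$ is uniformly bounded by $\log(\abs{\Pi})+2$; no real obstacle arises, since the heavy lifting has already been done in establishing the per-iteration size estimates of \cref{stz:insertminU,stz:extendChains}.
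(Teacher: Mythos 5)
Your proof is correct and follows essentially the same route as the paper's: telescope the chain recurrence using disjointness of the $\Min(\Xi_i)$, then plug the resulting chain bound into the size recurrence and telescope again, finishing with $d+1\le\abs{\Pi}$. The only difference is cosmetic bookkeeping in the final arithmetic, where you factor out $(\log\abs{\Pi}+2)$ directly to reach $(\abs{\Pi}+1)^2$, whereas the paper bounds by $(i+1)(\abs{\Pi}+1)(\log\abs{\Pi}+2)$ first; both are valid.
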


\begin{claimproof}
According to \cref{stz:insertminU} and \cref{stz:extendChains} we have 
	$\abs{\mathcal{C}_{\Gamma_{i+1}}} \leq \abs{\mathcal{C}_{\Gamma_{i}}} + \abs{\Min(\Xi_{i})}$. Further observe that $\Pi$ is the disjoint union of the $ \Min(\Xi_{j})$ for $j \in \interval{0}{d}$.  
 Since $\abs{\mathcal{C}_{\Gamma_{0}}} = 1$, we obtain for all $i \in \interval{0}{d}$ that
\begin{align}
	\abs{\mathcal{C}_{\Gamma_{i+1}}} \leq \abs{\mathcal{C}_{\Gamma_{i}}} + \abs{\Min(\Xi_{i})} \leq 1+\sum_{0 \leq j \leq i} \abs{\Min(\Xi_{j})} \leq \abs{\Pi}+1. \label{eq:chainbound}
\end{align}
Again by \cref{stz:insertminU} and \cref{stz:extendChains} we have
\begin{align*}
	\abs{\Gamma_{i+1}} &\leq \abs{\Gamma'_{i}} + \absbig{C_{\Gamma'_{i}}} \cdot \left( \ceil{\log(\abs{\Min(\Xi_{i})})} + 1 \right) \tag{by \cref{stz:extendChains}}\\
	 &\leq \abs{\Gamma_{i}} + \abs{\Min(\Xi_{i})} + \left(\abssmall{\mathcal{C}_{\Gamma_{i}}} + \abs{\Min(\Xi_{i})} \right) \cdot \left( \ceil{\log(\abs{\Min(\Xi_{i})})} + 1 \right) \tag{by \cref{stz:insertminU}}\\
	 &\leq \abs{\Gamma_{i}} + \abs{\Min(\Xi_{i})} + (\abs{\Pi}+1) \cdot \left( \ceil{\log(\abs{\Pi})} + 1 \right). \tag{by (\ref{eq:chainbound})}
\end{align*}

Since $\abs{\Gamma_0} = \ceil{\log(\abs{\Pi})} +1$, we obtain by induction that 
\begin{align*}
	\abs{\Gamma_{i}} &\leq  \abs{\Gamma_0}+\sum_{0 \leq j \leq i-1} \abs{\Min(\Xi_{j})} + i  \cdot(\abs{\Pi}+1)\cdot (\log(\abs{\Pi})+2) \\
	& \leq (\ceil{\log(\abs{\Pi})}+1)+\abs{\Pi}+ i  \cdot(\abs{\Pi}+1)\cdot (\log(\abs{\Pi})+2) \\
	& \leq (i+1)\cdot(\abs{\Pi}+1)\cdot (\log(\abs{\Pi})+2) 
\end{align*}
for all $i \in \interval{1}{d+1}$.
The last inequality is due to the fact that $\abs{\Pi}+1 \geq 2$ and $\log(\abs{\Pi})+2 \geq 2$. Since $d+1 \leq \abs{\Pi}$, we obtain $\abs{\Gamma_{i}}  \leq (\abs{\Pi}+1)^{2}\cdot (\log(\abs{\Pi})+2)$.
\end{claimproof}

Let $D\in \N$ and assume that $\depth(\Pi) \leq D$. 
By \cref{stz:insertminU}, \cref{stz:extendChains} and \cref{stz:updateM} each iteration of the three steps above can be done in \TC.
Notice here that the construction of the markings $\wt M_{i}$ and $\tilde{\Lambda}_{P}$ during \stepthree can be done in parallel~-- so it is in \TC, although \cref{stz:updateM} is stated only for a single marking.
Now, the crucial observation is that, due to \cref{lem:pcredsize}, the input size for each iteration is polynomial in the original input size of $(\Pi, \delta_{\Pi})$. Therefore, we can compose the individual iterations and obtain a circuit of polynomial size and depth bounded by $\Oh(D)$ as described in \cref{ex:composeNC1}.
Thus, we have described a \PTc{0}{D} circuit (parametrized by $\depth(\Pi)$) for the problem of computing a reduced form for $(\Pi, \delta_{\Pi})$.
This completes the proof of \cref{thm:pcred}.
\end{proof}

\begin{remark}\label{rem:startchain}
\begin{enumerate}[(1)]
	\item While \cref{thm:pcred} is only stated for one input marking, the construction works within the same complexity bounds for any number of markings on $(\Pi, \delta_{\Pi})$ since during \stepthree these all can be updated in parallel.
\item \label{startmaxchain} Moreover, note that for every  maximal chain $C \in \mathcal{C}_{\Gamma}$ there exists a node $Q \in \Pi$ (\ie in the original power circuit) such that $\epsilon(Q) = \epsilon(\startc{C})$. This is because new chains are only created during \stepone, the other steps only extend already existing chains. 
\item \label{sizeCompactM} Further observe that $\abssmall{\supp(\wt M)} \leq \abs{\supp(M)}$. Looking at the construction of $\wt M$ we see that we first make sure that $M$ does not mark two nodes of the same value, then we make the marking compact. Both operations do not increase the number of nodes in the support of the marking. 
\end{enumerate}
\end{remark}

\begin{example}\label{ExStepsDescr}
	In \cref{ExSteps} we illustrate what happens in the steps \stepone, \steptwo and \stepthree during the reduction process. Picture a) shows our starting situation. In b) we already inserted the nodes of value $2^{3}$ and $2^{32}$ into the reduced part. Now the reduced part consists of three chains: one starting at the node of value $1$ and the nodes $2^{5}$ and $2^{32}$ as chains of length $1$.  Because $\abs{\Min(\Xi)}=3$, we have to extend each chain by three nodes or until two chains merge. So in c) we obtain two chains, one from $1$ to $2^{8}$ and the one from $2^{32}$ to $2^{35}$. In d) we then updated the markings and discarded the nodes from $\Min(\Xi)$. 
\end{example}
\begin{figure}
	\vspace*{-0.2cm}
	\begin{minipage}[b]{.45\linewidth}			
		\begin{center}
			\tikzstyle{pcnode}=[minimum size= 12pt,circle,draw ]
			\begin{tikzpicture}[scale=0.9, outer sep=0pt, inner sep = 0.7pt, node distance=1cm]
				{
					\node[pcnode, blue] (0) at (0,0) {\tiny 1};
					\node[pcnode, blue] (1) [right of=0] {\tiny 2};
					\node[pcnode, blue] (2) [right of=1] {\tiny 4};
					\node[pcnode, blue] (3) [right of=2] {\tiny $2^{5}$};
					\node[pcnode, cyan] (4) [above right of=0, yshift=10] {\tiny $2^{3}$};
					\node[pcnode, cyan] (5) [right of=4] {\tiny $2^{3}$};
					\node[pcnode] (6) [above of=5, yshift=-10] {};
					\node[pcnode, cyan] (7) [above of=3, yshift=0] {\tiny $2^{\text{\scalebox{.7}{32}}}$};

				\node[purple] (17) [above left of =5, yshift=-20, xshift=12]{\tiny +} ;
					\node[purple] (18) [above left of =6, yshift=-15, xshift=10]{\tiny +} ;
					\node[purple] (19) [above left of =4, yshift=-20, xshift=12]{\tiny +} ;
					\node[purple] (20) [below of =1, yshift=20, xshift=0]{\tiny +} ;

					\draw[->] (1) edge node[above, yshift=2] {\tiny+} (0)
					(2) edge node[above, yshift=2 ] {\tiny+} (1)
					(3) edge [bend left=30]node[below, xshift=-25, yshift=3] {\tiny+} (0)
					(3) edge node[above, yshift=2 ] {\tiny+} (2)
					(4) edge node[right, xshift=3 ] {\tiny+} (2)
					(4) edge node[left, xshift=-5 ] {\tiny $-$} (0)
					(5) edge node[left, xshift=0, yshift=1] {\tiny $+$} (0)
					(5) edge node[right] {\tiny $-$} (2)
					(6) edge[bend left=20] node[right] {\tiny $+$} (2)
					(6) edge node[left, xshift=0, yshift=3] { \tiny $+$} (4)
					(7) edge node[right] {\tiny $+$} (3)
					;
				}
			\end{tikzpicture}
			
			a) \small Starting situation
		\end{center}
		
	\end{minipage}
	\hspace*{1cm}
	\begin{minipage}[b]{.45\linewidth}	
		\begin{center}
			\tikzstyle{pcnode}=[minimum size= 12pt,circle,draw ]
			\begin{tikzpicture}[scale=0.9, outer sep=0pt, inner sep = 0.7pt, node distance=1cm]
				{
					
					\node[pcnode, blue] (0) at (0,0) {\tiny 1};
					\node[pcnode, blue] (1) [right of=0] {\tiny 2};
					\node[pcnode, blue] (2) [right of=1] {\tiny 4};
					\node[pcnode, blue] (4) [right of=2] {\tiny $2^{3}$};
					\node[pcnode, blue] (3) [right of=4] {\tiny $2^{5}$};

					\node[pcnode, cyan] (5) [above right of=0, yshift=10] {\tiny $2^{3}$};
					\node[pcnode, cyan] (6) [right of=5, yshift=0] {\tiny $2^{3}$};
					\node[pcnode] (7) [above of=6, yshift=-10] {};
					\node[pcnode, cyan] (8) [above of=3, yshift=0] {\tiny $2^{\text{\scalebox{.7}{32}}}$};
					\node[pcnode, blue] (9) [right of=3, yshift=0] {\tiny $2^{\text{\scalebox{.7}{32}}}$};

					\node[purple] (17) [above left of =5, yshift=-20, xshift=12]{\tiny +} ;
					\node[purple] (18) [above left of =6, yshift=-20, xshift=12]{\tiny +} ;
					\node[purple] (19) [above left of =7, yshift=-15, xshift=10]{\tiny +} ;
					\node[purple] (20) [below of =1, yshift=20, xshift=0]{\tiny +} ;

					\draw[->] (1) edge node[above, yshift=2] {\tiny+} (0)
					(2) edge node[above, yshift=2 ] {\tiny+} (1)
					(3) edge [bend left=35]node[below, xshift=-25, yshift=3] {\tiny+} (0)
					(3) edge[bend right=30] node[above, yshift=2 ] {\tiny+} (2)
					(4) edge node[above, yshift=2 ] {\tiny+} (2)
					(4) edge [bend left=30] node[below, yshift=0, xshift=10 ] {\tiny $-$} (0)
					
					(5) edge node[right, xshift=5 ] {\tiny+} (2)
					(5) edge node[above ] {\tiny $-$} (0)
					(6) edge node[above] {\tiny $-$} (0)
					(6) edge node[right] {\tiny $+$} (2)
					(7) edge[bend left=35] node[right] {\tiny $+$} (2)
					(7) edge node[left, xshift=0, yshift=3] { \tiny $+$} (5)
					(8) edge node[right] {\tiny $+$} (3)
					(9) edge node[above] {\tiny $+$} (3)
					
					;
				}
			\end{tikzpicture}
			
			b) \small After \stepone	
		\end{center}
		
	\end{minipage}
	\vspace*{0.2cm}
	
	\begin{minipage}[t]{.45\linewidth}	
		\begin{center}
			\tikzstyle{pcnode}=[minimum size= 12pt,circle,draw ]
			\begin{tikzpicture}[scale=0.7, outer sep=0pt, inner sep = 0.7pt, node distance=1.2cm]
				{
					\node[pcnode, blue] (0) at (0,0) {\tiny 1};
					\node[pcnode, blue] (1) [right of=0, xshift=-10] {\tiny 2};
					\node[pcnode, blue] (2) [right of=1, xshift=-10] {\tiny 4};
					\node[pcnode, blue] (4) [right of=2, xshift=-10] {\tiny $2^{3}$};
					\node[pcnode, blue] (3) [right of=4] {\tiny $2^{5}$};

					\node[pcnode, blue] (7) [right of=3] {\tiny $2^{8}$};
					
					\node[pcnode, cyan] (8) [above of=1] {\tiny $2^{3}$};
					\node[pcnode, cyan] (9) [above of=2] {\tiny $2^{3}$};
					\node[pcnode] (10) [above right of=8, yshift=-6, xshift=2] {};
					\node[pcnode, blue] (11) [below of=4, yshift=10] {\tiny $2^{\text{\scalebox{.7}{32}}}$};
					\node[pcnode, blue] (12) [right of=11, xshift=-10] {\tiny $2^{\text{\scalebox{.7}{33}}}$};
					\node[pcnode, blue] (13) [right of=12, xshift=-10] {\tiny $2^{\text{\scalebox{.7}{34}}}$};
					\node[pcnode, blue] (14) [right of=13, xshift=-10] {\tiny $2^{\text{\scalebox{.7}{35}}}$};
					\node[pcnode, cyan] (15) [above of=3] {\tiny $2^{\text{\scalebox{.7}{32}}}$};
					
					\node[purple] (17) [above left of =8, yshift=-23, xshift=15]{\tiny +} ;
					\node[purple] (18) [above left of =9, yshift=-23, xshift=15]{\tiny +} ;
					\node[purple] (19) [above left of =10, yshift=-23, xshift=15]{\tiny +} ;
					\node[purple] (20) [below of =1, yshift=25, xshift=0]{\tiny +} ;

					\node (21) [left of=7, xshift=17]{\dots};
					\node (22) [left of=3, xshift=17]{\dots};
										
					\draw[->]		(1) edge node[above, yshift=2] {\tiny+} (0)
					(2) edge node[above, yshift=2 ] {\tiny+} (1)
					(4) edge node[above, yshift=2 ] {\tiny+} (2)
					(4) edge [bend left=35] node[below, yshift=0, xshift=10 ] {\tiny $-$} (0)
					(9) edge node[right] {\tiny $-$} (0)
					(9) edge node[right] {\tiny $+$} (2)
					(8) edge node[left, xshift=-1, yshift=5] { \tiny $-$} (0)
					(8) edge node[right] {\tiny $+$} (2)
					(10) edge node[above] {\tiny $+$} (8)
					(10) edge[bend left=35]  node[right, xshift=2] {\tiny $+$} (2)
					(11) edge node[right, xshift=4] {\tiny $+$} (3)
					(15) edge node[right] {\tiny $+$} (3)
					
					;				
					
					\draw[-,dashed]		(12) edge (11)		
					(13) edge (12)	
					(14) edge (13)	
					
					;
					
				}
				
			\end{tikzpicture}
			
			c) \small	After \steptwo.
		\end{center}
		
	\end{minipage}
	\hspace*{1cm}
	\begin{minipage}[t]{.45\linewidth}	
		\begin{center}
			\tikzstyle{pcnode}=[minimum size= 12pt,circle,draw ]
			\begin{tikzpicture}[scale=0.7, outer sep=0pt, inner sep = 0.7pt, node distance=1.2cm,]
				{
					
					\node[pcnode, blue] (0) at (0,0) {\tiny 1};
					\node[pcnode, blue] (1) [right of=0, xshift=-10] {\tiny 2};
					\node[pcnode, blue] (2) [right of=1, xshift=-10] {\tiny 4};
					
					\node[pcnode, blue] (4) [right of=2, xshift=-10] {\tiny $2^{3}$};
					\node[pcnode, blue] (5) [right of=4, xshift=-10] {\tiny $2^{4}$};
					\node[pcnode, blue] (3) [right of=5, xshift=-10] {\tiny $2^{5}$};
					\node[pcnode, blue] (7) [right of=3] {\tiny $2^{7}$};
					\node[pcnode,cyan] (8) [above right of=4, yshift=10] {\tiny $2^{\text{\scalebox{.7}{12}}}$};
					
					\node[pcnode, blue] (11) [below of=4, yshift=10] {\tiny $2^{\text{\scalebox{.7}{32}}}$};
					\node[pcnode, blue] (12) [right of=11, xshift=-10] {\tiny $2^{\text{\scalebox{.7}{33}}}$};
					\node[pcnode, blue] (13) [right of=12, xshift=-10] {\tiny $2^{\text{\scalebox{.7}{34}}}$};
					\node[pcnode, blue] (14) [right of=13, xshift=-10] {\tiny $2^{\text{\scalebox{.7}{35}}}$};

					\node[purple] (17) [above of =5, yshift=-25, xshift=-5]{\tiny +} ;
					\node[purple] (18) [above left of =8, yshift=-23, xshift=15]{\tiny +} ;
					\node[purple] (19) [below of =1, yshift=25, xshift=0]{\tiny +} ;
					
					\node (21) [left of=7, xshift=17]{\dots};		        
					
					\draw[->]		(1) edge node[above, yshift=2] {\tiny+} (0)
					(2) edge node[above, yshift=2 ] {\tiny+} (1)
					(4) edge node[above, yshift=2, xshift=4 ] {\tiny+} (2)
					(4) edge [bend left=35] node[below, yshift=0, xshift=10 ] {\tiny $-$} (0)		        
					(8) edge node[right, yshift=2] {\tiny+} (5)
					(8) edge node[above, yshift=2] {\tiny $-$} (2)
					(11) edge node[above, yshift=-6, xshift=-10] {\tiny+} (3)
					
					;
					
					\draw[-,dashed]	(3) edge (5)
					(5) edge  (4)	
					(12) edge  (11)
					(13) edge  (12)
					(14) edge  (13)
					
					;
				}
				
			\end{tikzpicture}
			
			d) \small After \stepthree 
		\end{center}
		
	\end{minipage}
	\caption{The three steps of power circuit reduction. 
		The already reduced part consist of \textcolor{blue}{blue nodes} and $\Min(\Xi_i)$ is colored in  \textcolor{cyan}{cyan}. The \textcolor{red}{red signs} indicate a marking. Three dots $\cdots$ in between two nodes mean that we omitted some nodes. A dashed edge - -  means that we actually omitted the outgoing edges of the right node. }\label{ExSteps}
\end{figure}
\begin{example}
	In \cref{ExLevel} we give an example of the complete power circuit reduction process by showing the result after each iteration. We start with a non-reduced power circuit of depth $2$ in a). This power circuit has size $5$, so we first construct the starting chain of length $4$ in b). Part c) and d) show the result after inserting layer 0 and layer 1, respectively. In e) we finally inserted all layers and thus have constructed the reduced power circuit. 
\end{example}
\begin{figure}[t]
		\vspace*{-0.1cm}
	\begin{minipage}[b]{.45\linewidth}	
		\begin{center}
			\tikzstyle{pcnode}=[minimum size= 12pt,circle,draw ]
			\begin{tikzpicture}[scale=0.7, outer sep=0pt, inner sep = 0.7pt, node distance=.8cm]
				{
					
					\node[pcnode] (0) at (0,0) {\tiny 1};
					\node[pcnode] (1) [right of=0] {\tiny 1};
					\node[pcnode] (2) [above of=1] {\tiny $2^{2}$};
					\node[pcnode] (3) [above of=0] {\tiny $2^{1}$ };
					\node[pcnode] (4) [above of=2] {\tiny  $2^{7}$};
					
					\node[purple] (15) [below of =0, yshift=15, xshift=0]{\tiny +} ;
					\node[purple] (16) [below of =1, yshift=15, xshift=0]{\tiny $-$} ;
					\node[purple] (17) [right of =2, yshift=0, xshift=-15]{\tiny $+$} ;
					\node[purple] (18) [right of =3, yshift=0, xshift=-15]{\tiny $-$} ;
					\node[purple] (19) [above of =4, yshift=-15, xshift=0]{\tiny $+$} ;

					\draw[->]		(2) edge node[left, yshift=5, xshift=3] {\tiny+} (0)
					(2) edge node[left, yshift=0, xshift=5] {\tiny+} (1)
					(3) edge node[right] {\tiny+} (0)
					(4) edge[bend left=40] node[right] {\tiny+} (1)
					(4) edge node[right] {\tiny+} (2)
					(4) edge node[right] {\tiny+} (3)

					;			
				}
				
			\end{tikzpicture}
			
			a) \small Non-reduced circuit
		\end{center}
		
	\end{minipage}
	\hspace*{1cm}
	\begin{minipage}[b]{.45\linewidth}	
		\begin{center}
			\tikzstyle{pcnode}=[minimum size= 12pt,circle,draw ]
			\begin{tikzpicture}[scale=0.8, outer sep=0pt, inner sep = 0.7pt, node distance=0.8cm]
				{

					\node[pcnode, cyan] (0) at (0,0) {\tiny 1};
					\node[pcnode, cyan] (1) [right of=0] {\tiny 1};
					\node[pcnode] (2) [above of=1] {\tiny $2^{2}$};
					\node[pcnode] (3) [above of=0] {\tiny $2^{1}$ };
					\node[pcnode] (4) [above of=2] {\tiny  $2^{7}$};
					
					\node[pcnode, blue] (5) [right of=1, ] {\tiny 1};
					\node[pcnode, blue] (6) [right of=5, ] {\tiny 2};
					\node[pcnode, blue] (7) [right of=6, ] {\tiny 4};
					\node[pcnode, blue] (8) [right of=7, ] {\tiny $2^{3}$};
					
					\node[purple] (15) [below of =0, yshift=15, xshift=0]{\tiny +} ;
					\node[purple] (16) [below of =1, yshift=15, xshift=0]{\tiny $-$} ;
					\node[purple] (17) [right of =2, yshift=0, xshift=-15]{\tiny $+$} ;
					\node[purple] (18) [right of =3, yshift=0, xshift=-15]{\tiny $-$} ;
					\node[purple] (19) [above of =4, yshift=-15, xshift=0]{\tiny $+$} ;
					
					\draw[->]		(2) edge node[left, yshift=5, xshift=3] {\tiny+} (0)
					(2) edge node[left, yshift=0, xshift=5] {\tiny+} (1)
					(3) edge node[right] {\tiny+} (0)
					(4) edge[bend left=40] node[right] {\tiny+} (1)
					(4) edge node[right] {\tiny+} (2)
					(4) edge node[right] {\tiny+} (3)
					(6) edge node[above] {\tiny+} (5)
					(7) edge node[above] {\tiny+} (6)
					(8) edge node[above] {\tiny+} (7)
					(8) edge[bend left=25] node[below] {\tiny $-$} (5)
					
					;
					
				}
				
			\end{tikzpicture}
			
			b) \small With initial chain $\Gamma_0$.		
			
		\end{center}
		
	\end{minipage}
	\vspace*{0.1cm}
	
	\begin{minipage}[t]{.45\linewidth}	
		\begin{center}
			\tikzstyle{pcnode}=[minimum size= 12pt,circle,draw ]
			\begin{tikzpicture}[scale=0.8, outer sep=0pt, inner sep = 0.7pt, node distance=1cm]
				{
					\node[pcnode, blue] (0) at (0,0) {\tiny 1};
					\node[pcnode, blue] (1) [right of=0] {\tiny 2};
					\node[pcnode, blue] (2) [right of=1] {\tiny 4};
					\node[pcnode, blue] (3) [right of=2] {\tiny $2^{3}$};
					\node[pcnode, blue] (4) [right of=3] {\tiny $2^{5}$};
					\node[pcnode, cyan] (5) [above of=0, yshift=-8] {\tiny $2^{1}$};
					\node[pcnode, cyan] (6) [above of=1,yshift=-8] {\tiny $2^{2}$};
					\node[pcnode] (7) [above left of=6] {\tiny $2^{7}$};
					
					\node (10) [left of =4, xshift=15] {\dots};	
					
					\node[purple] (17) [right of =6, yshift=0, xshift=-20]{\tiny $+$} ;
					\node[purple] (18) [right of =5, yshift=0, xshift=-20]{\tiny $-$} ;
					\node[purple] (19) [above of =7, yshift=-20, xshift=0]{\tiny $+$} ;

					\draw[->]		(1) edge node[above] {\tiny+} (0)
					(2) edge node[above] {\tiny+} (1)
					(3) edge node[above] {\tiny+} (2)
					(3) edge[bend left=25] node[below] {\tiny $-$} (0)
					(5) edge node[left] {\tiny+} (0)
					(6) edge node[left] {\tiny+} (1)
					(7) edge[bend right=60] node[left] {\tiny+} (0)
					(7) edge node[left] {\tiny+} (5)
					(7) edge node[left] {\tiny+} (6)
					
					;
				}
				
			\end{tikzpicture}
			
			c) \small After inserting layer $0$		
			
		\end{center}
	\end{minipage}
	\hspace*{1cm}
	\begin{minipage}[c]{.45\linewidth}	
		\vspace*{-0.2cm}
		\begin{center}
			\tikzstyle{pcnode}=[minimum size= 12pt,circle,draw ]
			\begin{tikzpicture}[scale=0.8, outer sep=0pt, inner sep = 0.7pt, node distance=1cm]
				{
					\node[pcnode, blue] (0) at (0,0) {\tiny 1};
					\node[pcnode, blue] (1) [right of=0] {\tiny 2};
					\node[pcnode, blue] (2) [right of=1] {\tiny 4};
					\node[pcnode, blue] (3) [right of=2] {\tiny $2^{3}$};
					\node[pcnode, blue] (4) [right of=3] {\tiny $2^{7}$};
					\node[pcnode, cyan] (5) [above right of=0] {\tiny $2^{7}$};

					\node (10) [left of =4, xshift=15] {\dots};

					\node[purple] (19) [above of =5, yshift=-20, xshift=0]{\tiny $+$} ;
					\node[purple] (20) [below of =1, yshift=20, xshift=0]{\tiny $+$} ;
					
					\draw[->]		(1) edge node[above] {\tiny+} (0)
					(2) edge node[above] {\tiny+} (1)
					(3) edge node[above] {\tiny+} (2)
					(3) edge[bend left=28] node[below] {\tiny $-$} (0)	
					(5) edge node[left] {\tiny $-$} (0)
					(5) edge node[right, xshift=8] {\tiny+} (3)
					;
				}
				
			\end{tikzpicture}
			
			d) \small  After inserting layer $1$				
			
		\end{center}
		\begin{center}
			\tikzstyle{pcnode}=[minimum size= 12pt,circle,draw ]
			\begin{tikzpicture}[scale=0.8, outer sep=0pt, inner sep = 0.7pt, node distance=1cm]
				{
					\node[pcnode, blue] (0) at (0,0) {\tiny 1};
					\node[pcnode, blue] (1) [right of=0] {\tiny 2};
					\node[pcnode, blue] (2) [right of=1] {\tiny 4};
					\node[pcnode, blue] (3) [right of=2] {\tiny $2^{3}$};
					\node[pcnode, blue] (4) [right of=3] {\tiny $2^{7}$};
					\node[pcnode, blue] (5) [right of=4] {\tiny $2^{8}$};
					
					\node (10) [left of =4, xshift=15] {\dots};

					\node[purple] (19) [below of =4, yshift=38, xshift=0]{\tiny $+$} ;
					\node[purple] (20) [below of =1, yshift=38, xshift=0]{\tiny $+$} ;		
					
					\draw[->]		(1) edge node[above] {\tiny+} (0)
					(2) edge node[above] {\tiny+} (1)
					(3) edge node[above] {\tiny+} (2)
					(3) edge[bend left=25] node[below] {\tiny $-$} (0)
					;
					
					\draw[-,dashed]	(5) edge (4)
					;
				}
				
			\end{tikzpicture}
						
						e) \small After inserting layer $2$
		\end{center}
	\end{minipage}
		\caption{The complete process of power circuit reduction~-- inserting layer after layer. For an explanation of the colors, see \cref{ExSteps}.}\label{ExLevel}
\end{figure}

For comparing two markings $L$ and $M$ on an arbitrary power circuit, we can proceed as follows: first compute the difference (\cref{lem: OpOnMarkings}), then reduce the power circuit
(\cref{thm:pcred}) and, finally, compare the resulting compact marking with zero (\cref{lem:compareCompactMarkings}). 
This shows:
\begin{corollary}\label{cor:compareMarkings}
	Let ${\comparator} \in \compOpSet$. The following is in \PTc{0}{D} parametrized by $\depth(\Pi)$: 
	\problem{A power circuit $(\Pi, \delta_{\Pi})$ together with markings $L,M$ on $\Pi$.}
	{Is $\eps(L)  \comparator \eps(M)$?
	}
\end{corollary}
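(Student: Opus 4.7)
The plan is to follow the three-stage strategy outlined just before the corollary, using tools already developed in the paper.

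First I would apply parts (\ref{submark}) and (\ref{addmark}) of \cref{lem: OpOnMarkings} to build a power circuit $(\Pi', \delta_{\Pi'})$ containing $\Pi$, together with a marking $N$ satisfying $\eps(N) = \eps(L) - \eps(M)$. Negation leaves the graph unchanged, and the subsequent addition at most doubles the number of nodes while keeping the depth unchanged, so $\depth(\Pi') = \depth(\Pi)$ and $\abs{\Pi'} \leq 2\cdot\abs{\Pi}$. Both operations lie in \TC, hence in particular have constant depth.

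Second, I would feed $(\Pi', \delta_{\Pi'})$ together with $N$ into the reduction procedure of \cref{thm:pcred}. The output is a reduced power circuit $(\Gamma, \delta)$ together with a compact marking $\wt N$ on $\Gamma$ with $\eps(\wt N) = \eps(N) = \eps(L) - \eps(M)$. By \cref{thm:pcred}, this stage is in \PTc{0}{D} parametrized by $\depth(\Pi') = \depth(\Pi)$, and the size of $\Gamma$ remains polynomial in $\abs{\Pi}$.

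Third, $\eps(L) \comparator \eps(M)$ holds if and only if $\eps(\wt N) \comparator 0$. Since $\wt N$ and the zero marking are both compact markings on the reduced power circuit $\Gamma$, part (\ref{compMarkings}) of \cref{lem:compareCompactMarkings} lets us decide this in \Ac0. Composing the three stages, the first and third contribute constant depth while the middle contributes $\Oh(D)$, so the total sits in \PTc{0}{D} as required. I do not anticipate any real obstacle here: all the heavy lifting has already been carried out in \cref{thm:pcred}, and the only small point to verify is that the addition stage preserves the depth of the underlying power circuit, which is precisely the guarantee recorded in part (\ref{addmark}) of \cref{lem: OpOnMarkings}.
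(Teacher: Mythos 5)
Your proposal is correct and takes essentially the same three-stage approach the paper indicates in the sentence immediately preceding the corollary: compute the difference using \cref{lem: OpOnMarkings}, reduce via \cref{thm:pcred}, and compare the resulting compact marking with zero via \cref{lem:compareCompactMarkings}. The size and depth bookkeeping you spell out is accurate and matches the guarantees recorded in those lemmas.
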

\begin{proof}[Proof of Proposition~\ref{prop:compIntro}]
	When assuming $\depth(\Pi) \leq C \cdot \log\abs{\Pi}$, by \cref{lem:paraToTC}, we obtain Proposition~\ref{prop:compIntro} as an immediate consequence of \cref{cor:compareMarkings}. 
\end{proof}

\begin{remark}
	By \cref{cor:compareMarkings} comparing two numbers $m_1$ and $m_2$ represented by a \ptxCircuit $\cC$ can be done in 
 \PTc{0}{D} parametrized by $\depth_{2^x}(\cC) + \log^2\abs{\cC}$ if the input of every $2^x$-gate of $\cC$ is non-negative: by \cref{prop:convertToArithmeticCircuit} we can find a power circuit $(\Pi, \delta)$ with $\depth(\Pi) \leq \depth_{2^x}(\cC)$ and markings $M_1$ and $M_2$ evaluating to $m_1$ and $m_2$ in $\Nc2\sse \Tc2$. It remains to compare $\eps(M_1)$ and $\eps(M_2)$.

\end{remark}

\subsection{Operations with floating point numbers} \label{sec: morePCoperations}

In the following, we want to represent a number $r \in \Z[1/2]$ using markings in a power circuit. For this, we use a floating point representation.
Observe that for each such $r \in \Z[1/2] \setminus \{0\}$ there exist unique $u,e \in \mathbb{Z}$ with $u$ odd such that $r = u \cdot 2^{e}$.

\begin{lemma} \label{lem: mTOr}
The following problem is in \PTc{0}{D} parametrized by $\depth(\Pi)$:
\compproblem{A power circuit $(\Pi, \delta_{\Pi})$ with a marking $M$ on $\Pi$.}{A power circuit $(\tilde{\Pi}, \delta_{\tilde{\Pi}})$ with markings $E, U$ on $\tilde{\Pi}$ such that $\epsilon(M) = \epsilon(U)\cdot 2^{\epsilon(E)}$ with $\epsilon(U)$ odd.}
In addition, $(\Pi, \delta_{\Pi}) \leq (\tilde{\Pi}, \delta_{\tilde{\Pi}})$, $\depth(\tilde{\Pi}) \leq \max(\depth(\Pi),2)$ and $\abssmall{\tilde{\Pi}} \in \Oh(\abs{\Pi})$.%

\end{lemma}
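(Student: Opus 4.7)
The plan is to first bring the circuit into reduced form. I would apply \cref{thm:pcred} to $(\Pi, \delta_{\Pi})$ together with $M$, obtaining in \PTc{0}{D} (parametrized by $\depth(\Pi)$) a reduced power circuit $(\Gamma, \delta_\Gamma)$ and a compact marking $\wt M$ with $\eps(\wt M) = \eps(M)$. Throughout I may assume $\eps(M) \neq 0$; the zero case has to be handled as a trivial special case. Since $\Gamma$ is reduced, the evaluations $\eps(P) = 2^{\eps(\Lambda_P)}$ for $P \in \supp(\wt M)$ are pairwise distinct powers of two, so $\eps(\wt M) = \sum_{P \in \supp(\wt M)} \wt M(P)\cdot 2^{\eps(\Lambda_P)}$ is a signed sum of distinct powers of two, and its $2$-adic valuation equals the smallest exponent appearing. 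Thus the node $P^*$ of minimum value in $\supp(\wt M)$~-- locatable in \Ac0 from the sorted list~-- satisfies $\eps(\Lambda_{P^*}) = v_2(\eps(M))$, and I would take $E$ to be a copy of $\Lambda_{P^*}$.

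For $U$, I would introduce in parallel, for each $P \in \supp(\wt M) \setminus \{P^*\}$, a fresh node $Q_P$ with successor marking $\Lambda_P - \Lambda_{P^*}$ (obtained via \cref{lem: OpOnMarkings}(b) and (a) by cloning $\Lambda_{P^*}$, negating, and adding $\Lambda_P$); then $\eps(Q_P) = 2^{\eps(\Lambda_P) - \eps(\Lambda_{P^*})}$, with the exponent at least $2$ by compactness of $\wt M$. Defining $U$ to mark the node of value $1$ in $\Gamma$ with $\wt M(P^*) \in \{\mOne, 1\}$ and each $Q_P$ with $\wt M(P)$ gives
\[
\eps(U) \;=\; \wt M(P^*) \;+\; \sum_{P \in \supp(\wt M) \setminus \{P^*\}} \wt M(P) \cdot 2^{\eps(\Lambda_P) - \eps(\Lambda_{P^*})},
\]
an odd integer (a $\pm 1$ plus a sum of even terms), satisfying $\eps(U) \cdot 2^{\eps(E)} = \eps(\wt M) = \eps(M)$, as required. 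All of this stays in \PTc{0}{D} since the reduction dominates and the remaining work is \Ac0 or \TC parallel arithmetic on markings via \cref{lem: OpOnMarkings}.

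The hard part will be reconciling the construction with the tight bounds $\abssmall{\tilde\Pi} \in \Oh(\abs{\Pi})$, $\depth(\tilde\Pi) \leq \max(\depth(\Pi),2)$, and $(\Pi,\delta_\Pi) \leq (\tilde\Pi,\delta_{\tilde\Pi})$, because the reduced $\Gamma$ produced by \cref{thm:pcred} is only polynomially bounded and need not contain $\Pi$ as a subcircuit at all. My plan is to let $\tilde\Pi$ consist of $\Pi$ itself together with the $\abssmall{\supp(\wt M)} \leq \abs{\supp(M)} \leq \abs{\Pi}$ new nodes $Q_P$ (bounded via \cref{rem:startchain}(\ref{sizeCompactM})), using the correspondence of \cref{rem:startchain}(\ref{startmaxchain}) between starts of maximal chains in $\Gamma$ and nodes of $\Pi$ to realize the successor markings of $E$ and of each $Q_P$ directly over the $\Pi$-side. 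The reduced $\Gamma$ then plays only the role of an internal device for locating $P^*$ and for the arithmetic on compact markings, while the output $\tilde\Pi$ merely adds a single layer of fresh nodes above $\Pi$, yielding the depth bound.
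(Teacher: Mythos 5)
Your high-level plan coincides with the paper's: reduce $(\Pi,\delta_\Pi)$ to obtain $(\Gamma,\delta)$ and a compact $\wt M$, locate the minimal node $P^*$ of $\supp(\wt M)$ (which gives the $2$-adic valuation), and build $U$ from nodes whose successor markings are the $\Lambda_P$ shifted by $\Lambda_{P^*}$. You also correctly identify the real difficulty: the output $(\tilde\Pi,\delta_{\tilde\Pi})$ must extend $\Pi$, not $\Gamma$, and must satisfy tight size and depth bounds, so the whole construction has to be pulled back to the $\Pi$-side via \cref{rem:startchain}~(\ref{startmaxchain}).

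Your sketch of that pull-back, however, leaves a genuine gap. \cref{rem:startchain}~(\ref{startmaxchain}) only supplies, for each maximal chain $C$ of $\Gamma$, a node $R \in \Pi$ with $\eps(R) = \eps(\startc{C})$. But a node $Q_i \in \supp(\wt M)$ is generally \emph{not} the start of its chain; rather $\eps(\Lambda_{Q_i}) = \eps(\Lambda_{R_i}) + m_i$ for an offset $m_i \in \interval{0}{\abs{\Gamma}}$, and this offset has no counterpart in $\Pi$. To express $\eps(\Lambda_{Q_i})$~-- and hence $E$ and the successor markings of your new nodes~-- as markings over $\tilde\Pi$, you must be able to add $m_i$, which forces extra auxiliary nodes: the paper introduces $\Oh(\log\abs{\Gamma})$ nodes of depth at most $1$ (some value-$1$ nodes plus nodes of value $2^j$ for $j \leq \floor{\log\abs{\Gamma}}$) carrying markings $M_i$ with $\eps(M_i) = m_i$, and then sets $E = \Lambda_{R_1} + M_1$ and $\Lambda_{S_i} = \Lambda_{R_i} + M_i - E$. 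Your proposed $\tilde\Pi = \Pi \cup \set{Q_P}{P \in \supp(\wt M)}$ leaves no room for these helpers, and without them the claim that the successor markings can be realized directly over the $\Pi$-side fails in general. (You also need to clone $\supp(E)$ so that the additions of markings have disjoint supports, but that is a smaller point.) Once the helper nodes are added, the announced bounds still hold, since $\log\abs{\Gamma} \in \Oh(\log\abs{\Pi})$ by \cref{lem:pcredsize} and the helpers have depth at most~$1$; but the step itself is an essential ingredient that your proposal is missing.
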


\begin{proof} 
First note that we are searching for a marking representing the maximal $e \in \Z$  with $2^{e}~|~ \epsilon(M)$. For finding $e$, we need the compact representation of $M$. 
Therefore, we construct the reduced form $(\Gamma, \delta)$ of $\Pi$ and a compact marking $\wt M$ on $\Gamma$ such that $\epsilon(\wt M) = \epsilon(M)$. According to \cref{thm:pcred} this is possible in \PTc{0}{D}.
Now we have $\epsilon(M) = \sum_{i = 1}^{k} \wt M(Q_{i}) \cdot 2^{\epsilon(\Lambda_{Q_{i}})}$ where $\supp(\wt M) = \oneset{Q_{1}, \ldots, Q_{k}} \sse \Gamma$. We assume that the $Q_i$ are ordered according to their value, \ie $\eps(Q_{i}) < \eps(Q_{j}) $ for $i < j$. 
 Hence, $e = \epsilon(\Lambda_{Q_{1}})$. 

Before we can define the markings $U$ and $E$, we have to introduce some new nodes. First we add $\floor{\log(\abs{\Gamma})}$ new nodes  to $\Pi$ each of value $1$ (\ie with empty successor marking). Then for each $j \in \interval{0}{ \floor{\log(\abs{\Gamma})}}$ we create a node of value $2^{j}$ and depth $1$ in the following way: the successor marking of such a node marks exactly $j$ nodes of value $1$ with $+1$ and all the other nodes with $0$. 

In order to define $U$, we aim for adding a node $S_{i}$ to $\Pi$ with $\epsilon(\Lambda_{S_{i}}) = \epsilon(\Lambda_{Q_{i}}) - e$ for each $i \in \interval{1}{k}$. We proceed as follows: For each $i \in \interval{1}{k}$, let $C_i \in \mathcal{C}_{\Gamma}$ denote the maximal chain to which $Q_i$ belongs. 
 Note that for different $i$ these chains could be equal. By \cref{rem:startchain}, we know that there exist nodes $R_1, \dots, R_k \in \Pi$ such that $\epsilon(R_{i}) = \epsilon(\startc{C_{i}})$ for $i \in \interval{1}{k}$. To find the nodes $R_{i}$, we can for example remember the equivalence classes we obtain during the reduction process. 
 Now there exist $m_{i} \in \mathbb{N} $ with $m_i \in \interval{0}{\abs{\Gamma}} $ such that $\epsilon(\Lambda_{Q_{i}}) = \epsilon(\Lambda_{R_{i}}) + m_{i}$. We can find $m_{i}$ as the difference of the indices of $Q_{i}$ and $\startc{C_{i}}$ in the sorted order of $\Gamma$, and so we can find all the $m_{i}$ in \Ac0. Note that the binary representation of $m_i$ uses at most $\floor{\log(\abs{\Gamma})} + 1$ bits. We define markings $M_{i}$ on the newly defined nodes of depth $1$ using the binary representation of $m_{i}$ such that $\epsilon(M_{i}) = m_{i}$ for $i \in \interval{1}{k}$.
Now we are ready to define the marking $E$ by $E = \Lambda_{R_{1}} + M_{1}$. 

We now want to define a marking $U$, with $\epsilon(U) = \epsilon(M)\cdot 2^{-\epsilon(E)}$. For every $i \in \interval{1}{k}$ we create a node $S_{i}$ with $\Lambda_{S_{i}} = \Lambda_{R_{i}}+M_{i}-E$, so in particular, $\epsilon(S_{i}) = \epsilon(Q_{i}) \cdot 2^{-\epsilon(E)}$ (notice that $\eps(S_1) = 1$). Because $E$ and $M_{i}$ could have supports with non-trivial intersection (as well as $E$ and $\Lambda_{R_{i}}$), we have to clone the nodes in $\supp(E)$ for the addition. Then the marking $U$ with $U(S_{i}) = \wt M(Q_{i})$ for $i \in \interval{1}{k}$  is the marking $U$ we searched for. 

Regarding the size of $\tilde{\Pi}$, observe that to define the markings $M_{i}$, we insert $2 \cdot \floor{\log(\abs{\Gamma})} + 1$ nodes. By cloning the nodes in $\supp(E)$, we add at most $  \floor{\log(\abs{\Gamma})} + 1 + \abs{\Pi}$ additional nodes. By \cref{rem:startchain} (\ref{sizeCompactM}), we know that $\abssmall{\supp(\wt M)} \leq \abs{\supp(M)} \leq \abs{\Pi}$, so we insert at most $\abs{\Pi}$ nodes when inserting the nodes $S_{i}$. According to \cref{lem:pcredsize}, we have $\log(\abs{\Gamma}) \in \Oh(\log(\abs{\Pi}))$. Hence, $\abssmall{\tilde{\Pi}} \in \Oh(\abs{\Pi})$.

Considering the depth, when inserting the new nodes of depth $1$, the depth only increases if $\depth(\Pi) = 0$. When inserting a node $S_{i}$ the depth increases only if $\depth(\Pi) \leq 1$.
\end{proof}

\begin{definition}\label{def: pcrep}
A power circuit representation of $r \in \Z[1/2]$ consists of a power circuit $(\Pi, \delta_{\Pi})$ together with a pair of markings $(U, E)$ on $\Pi$ such that $\epsilon(U)$ is either zero or odd and $r = \epsilon(U) \cdot 2^{\epsilon(E)}$. 
\end{definition}

\begin{lemma} \label{lem: OpOnFP}
\begin{enumerate}[(a)]
\item\label{TCOpOnFP} The following problems are in \TC:
\compproblem{A power circuit representation for $r \in \Z[1/2] $ over a power circuit $(\Pi, \delta_{\Pi})$ and a marking $M$ on $\Pi$. }{A power circuit representation of $r \cdot 2^{\epsilon(M)}$ over a power circuit $(\tilde{\Pi}, \delta_{\tilde{\Pi}})$.}
\compproblem{A power circuit representation for $r \in \Z[1/2] $ over a power circuit $(\Pi, \delta_{\Pi})$. }{A power circuit representation of $-r$ over $(\Pi, \delta_{\Pi})$.}
\compproblem{Power circuit representations for $r,s \in \Z[1/2] $ over a power circuit $(\Pi, \delta_{\Pi})$ such that $\frac{r}{s}$ is a power of two.}{A marking $L$ in a power circuit $(\tilde{\Pi}, \delta_{\tilde{\Pi}})$ such that $\epsilon(L) = \log(\frac{r}{s})$.}

\item\label{PTCOpOnFP} The following problems are in \PTc{0}{D}  parametrized by $\depth(\Pi)$: 
\compproblem{A power circuit $(\Pi, \delta_{\Pi})$ and a marking $M$ on $\Pi$.}{A power circuit representation of $\epsilon(M) \in  \Z[1/2] $ over a power circuit $(\tilde{\Pi}, \delta_{\tilde{\Pi}})$.}
\compproblem{$r,s \in  \Z[1/2] $ given as power circuit representations over a power circuit $(\Pi, \delta_{\Pi})$.}{A power circuit representation of $r+s$ over a power circuit $(\tilde{\Pi}, \delta_{\tilde{\Pi}})$. }
\problem{A power circuit representation for $r \in \Z[1/2] $ over a power circuit $(\Pi, \delta_{\Pi})$}{Is $r \comparator 0$ for ${\comparator} \in \compOpSet$?}
\compproblem{A power circuit representation for $r \in \Z[1/2] $ over a power circuit $(\Pi, \delta_{\Pi})$.}{Is $r \in \Z$? If yes, a marking $M$ in a power circuit $(\tilde{\Pi}, \delta_{\tilde{\Pi}})$ such that $\epsilon(M) = r$.}
\end{enumerate}
\noindent In all cases we have  $(\Pi, \delta_{\Pi}) \leq (\tilde{\Pi}, \delta_{\tilde{\Pi}})$, $\abssmall{\tilde{\Pi}} \in \Oh(\abs{\Pi})$, and $\depth(\tilde{\Pi}) = \depth(\Pi) + \Oh(1)$.

\end{lemma}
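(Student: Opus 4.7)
The plan is to dispose of part~(a) by direct pointwise manipulation of the underlying markings using the operations already built in \cref{lem: OpOnMarkings}, and to reduce the problems in part~(b) to combinations of those operations together with the normalization lemma \cref{lem: mTOr} and the comparison result \cref{cor:compareMarkings}.

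For part~(a): writing $r = \eps(U)\cdot 2^{\eps(E)}$, multiplication by $2^{\eps(M)}$ is achieved by replacing $E$ with the marking $E+M$ produced by \cref{lem: OpOnMarkings}(\ref{addmark}); negation of $r$ is obtained simply by sign-flipping $U$ pointwise, which is \cref{lem: OpOnMarkings}(\ref{submark}). For the logarithm of a power-of-two quotient, note that if $r = u_r\cdot 2^{\eps(E_r)}$ and $s = u_s\cdot 2^{\eps(E_s)}$ with $u_r,u_s$ odd, then $r/s = (u_r/u_s)\cdot 2^{\eps(E_r)-\eps(E_s)}$; this quotient is a power of two iff $u_r = u_s$ (since odd numbers dividing to a power of two must coincide), so $\log(r/s) = \eps(E_r)-\eps(E_s)$, and the required marking $L$ is produced by combining \cref{lem: OpOnMarkings}(\ref{addmark}) and (\ref{submark}).

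For part~(b): the conversion of an arbitrary marking into normalized form is exactly \cref{lem: mTOr}, which is in \PTc{0}{D}. For addition of $r = u_r\cdot 2^{e_r}$ and $s = u_s\cdot 2^{e_s}$, first invoke \cref{cor:compareMarkings} to decide which exponent is smaller; assume without loss of generality $e_r\leq e_s$. Then build the non-negative marking $L = E_s - E_r$ via \cref{lem: OpOnMarkings}(\ref{addmark}) and~(\ref{submark}), form $V$ with $\eps(V) = u_s\cdot 2^{\eps(L)}$ via \cref{lem: OpOnMarkings}(\ref{powermark}), and set $W = U_r + V$, which evaluates to $u_r + u_s\cdot 2^{e_s-e_r}$. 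Normalizing $W$ via \cref{lem: mTOr} yields a pair $(U',E')$ with $\eps(U')$ odd; shifting the exponent by $E_r$ with \cref{lem: OpOnMarkings}(\ref{addmark}) then gives a power-circuit representation of $r+s$. For the sign test, $r\comparator 0$ is equivalent to $\eps(U)\comparator 0$ because $2^{\eps(E)}>0$, and this is decided by \cref{cor:compareMarkings}. For the integrality test, $r = u\cdot 2^e$ with $u$ odd (or $u=0$) is an integer iff $u = 0$ or $e\geq 0$, both decidable by \cref{cor:compareMarkings}; when the test is affirmative and $u\neq 0$, we apply \cref{lem: OpOnMarkings}(\ref{powermark}) to $U$ and $E$ to produce the marking representing $r$.

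The only subtlety is maintaining the promised bounds $\abs{\tilde{\Pi}} \in \Oh(\abs{\Pi})$ and $\depth(\tilde\Pi) = \depth(\Pi) + \Oh(1)$. Each invocation of \cref{lem: OpOnMarkings} multiplies the size by at most a constant and raises depth by at most one, and \cref{lem: mTOr} preserves both bounds up to a constant factor in size and a constant additive term in depth. Since every case above uses only a constant number of such invocations in sequence, the bounds follow. The most delicate case is addition, which chains \cref{lem: OpOnMarkings}(\ref{powermark}) (depth $+1$) with \cref{lem: mTOr} (depth $+\Oh(1)$) and one further shift by $E_r$; this is still a constant-depth blowup and hence fits within the depth budget, so the whole procedure remains in \PTc{0}{D}.
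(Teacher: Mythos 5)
Your proposal is correct and follows essentially the same route as the paper's proof: part~(a) by direct application of \cref{lem: OpOnMarkings}, part~(b) by combining \cref{lem: mTOr} with \cref{cor:compareMarkings} and \cref{lem: OpOnMarkings}, with the addition case handled via the same comparison-then-rescale-then-normalize decomposition (your $W = U_r + V$ is the paper's marking $K$) and the same final exponent shift. You additionally make explicit the $u=0$ edge case in the integrality test, which the paper leaves implicit; the bookkeeping on size and depth matches as well.
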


\begin{proof} 
During the whole proof, let  $U,V,E,F$ be markings in $\Pi$ such that $\epsilon(U)$,  $\epsilon(V)$ are odd, $r = \epsilon(U)\cdot 2^{\epsilon(E)} $ and $s = \epsilon(V)\cdot 2^{\epsilon(F)}$. 

\proofsubparagraph{Part (a):}
 We have $r \cdot 2^{\epsilon(M)} = \epsilon(U)\cdot 2^{\epsilon(E)+\epsilon(M)}$. According to  \cref{lem: OpOnMarkings} point (\ref{addmark}), the marking $E+M$ can be obtained in \TC as marking in a power circuit $(\tilde{\Pi}, \delta_{\tilde{\Pi}})$ that satisfies all the required properties. The computation of $-r$ is clear by \cref{lem: OpOnMarkings}.

If $\frac{r}{s}$ is a power of two, we know that $\epsilon(U) = \epsilon(V)$, and so $\frac{r}{s} = 2^{\epsilon(E)-\epsilon(F)} $. Now again \cref{lem: OpOnMarkings} finishes the proof of part (a). 

\proofsubparagraph{Part (b):} 
The first point is due to \cref{lem: mTOr}. 
For the addition, first observe that
\[r+s = \epsilon(U) \cdot 2^{\epsilon(E)}+\epsilon(V) \cdot 2^{\epsilon(F)} = 2^{\epsilon(E)}\cdot (\epsilon(U)+\epsilon(V) \cdot 2^{\epsilon(F)-\epsilon(E)})\] 
We can decide in \PTc{0}{D} whether $\epsilon(E) \leq \epsilon(F)$ using \cref{cor:compareMarkings}. \Wlog let $\epsilon(E) \leq \epsilon(F)$ (otherwise we switch the roles of $r$ and $s$).
Next, we construct a marking $K$ in a power circuit $(\Pi', \delta_{\Pi'})$ such that $\epsilon(K) = \epsilon(U)+\epsilon(V) \cdot 2^{\epsilon(F)-\epsilon(E)}$. According to \cref{lem: OpOnMarkings} and because $\epsilon(F)-\epsilon(E) \geq 0$, this is possible in \TC and such that $\abs{\Pi'} \in \Oh(\abs{\Pi})$ and $\depth(\Pi')\leq \depth(\Pi)+1$. Now, according to \cref{lem: mTOr} we can construct markings $W$ and $G$ in a power circuit $(\Pi'', \delta_{\Pi''})$ such that $\epsilon(W)$ is odd and $\epsilon(K) = \epsilon(W) \cdot 2^{\epsilon(G)}$ in \PTc{0}{D}. In addition, $\abs{\Pi''} \in \Oh(\abs{\Pi'} )$ and $\depth(\Pi'') \leq \max(\depth(\Pi'),2)$.  Now according to \cref{lem: OpOnMarkings} the marking $E + G$ can be obtained as marking in a power circuit $(\tilde{\Pi}, \delta_{\tilde{\Pi}})$ with $\abssmall{\tilde{\Pi}} \in \Oh(\abs{\Pi})$ and $\depth(\tilde{\Pi}) = \depth(\Pi) + \Oh(1)$. Then the power circuit $(\tilde{\Pi}, \delta_{\tilde{\Pi}})$ together with the markings $W$ and $E+G$ is the power circuit representation for $r+s$, satisfying the required properties. 

\medskip
To decide if $r \comparator 0$, we just have to check if $\epsilon(U) \comparator 0$. According to  \cref{cor:compareMarkings} this is possible in \PTc{0}{D}.
To decide if $r \in \Z$, since $\eps(U)$ is odd, we just need to decide if $\epsilon(E) \geq 0$. Again, this can be done using \cref{cor:compareMarkings}. In the affirmative case, we just have to apply \cref{lem: OpOnMarkings} point (\ref{powermark}) to produce the desired output.
\end{proof}

\section{The word problem of the Baumslag group}\label{sec:BS}

Before we start solving the word problem of the Baumslag group, let us fix our notation from group theory.
\subparagraph*{Group presentations.}
A group $G$ is \emph{finitely generated} if there is some finite set $\Sigma$ and a surjective monoid homomorphism $\eta\colon  \Sigma^* \to G$ (called a presentation). 
Usually, we do not write the homomorphism $\eta$ and treat words over $\Sigma$ both as words and as their images under $\eta$. We write $v =_G w$ with the meaning that $\eta(v) = \eta(w)$. 
If $\Sigma = S \cup S^{-1}$ where $S^{-1}$ is some disjoint set of formal inverses and $R\sse \Sigma^* \times \Sigma^*$ is some set of relations, we write $\Gen{\Sigma}{R}$ for the group  $\Sigma^*/C(R)$ where $C(R)$ is the congruence generated by $R$ together with the relations $aa^{-1} = a^{-1} a = 1$ for $a \in \Sigma$. If $R$ is finite, $G$ is called \emph{finitely presented}.

\medskip
\noindent The word problem for a fixed group $G$ with presentation $\eta\colon  \Sigma^* \to G$ is as follows:
\problem{A word $w \in \Sigma^*$}{Is $w =_G 1$?}
\noindent For further background on group theory, we refer to \cite{LS01}.

\subparagraph*{The Baumslag-Solitar group.}

The Baumslag-Solitar group is defined by 
\begin{align*}
	\BS12& = \Gen{a,t}{tat^{-1} = a^2}.
\end{align*}
We have $\BS12 \cong \Z[1/2] \rtimes \Z$ via the isomorphism $a\mapsto (1,0)$ and $t\mapsto (0,1)$. 
	Recall that $\Z[1/2] = \set{p/2^q\in \Q}{p,q \in \Z}$ is the set of dyadic fractions with addition as group operation. 
The multiplication in $\Z[1/2] \rtimes \Z$ is
defined by  $(r,m) \cdot (s,n) = (r + 2^m s, m + n)$. 
	Inverses can be computed by the formula $(r,m)^{-1} = (-r \cdot 2^{-m}, -m)$. 
In the following we use $\BS12$ and  $\Z[1/2] \rtimes \Z$ as synonyms.

\subparagraph*{The Baumslag group.}
A convenient way to understand the Baumslag group $\BG$ is as an HNN extension\footnote{Named after
	Graham Higman, Bernhard H.~Neumann and Hanna Neumann. For a precise definition, we refer to \cite{LS01}. This is also the way how the Magnus breakdown procedure works.} of the Baumslag-Solitar group: 
\begin{align*}
	\begin{split}
		\BG  & = \Gen{\BS12,b}{bab^{-1} = t}\\ 
		& = \Gen{a,t,b}{tat^{-1} = a^2, bab^{-1} = t}.
	\end{split}
\end{align*}
Indeed, due to $bab^{-1} = t$, we can remove $t$ and we obtain exactly the presentation $\Gen{a,b}{bab^{-1} a = a^2bab^{-1}}$. 
Moreover, \BS12 is a subgroup of \BG via the canonical embedding and we have $b(q,0)b^{-1} = (0,q)$, so a conjugation  by $b$ ``flips'' the two components of the semi-direct product if possible. Henceforth, we will use the alphabet $\Sigma = \{1,a, a^{-1}, t, t^{-1}, b, b^{-1}\}$ to represent elements of \BG (the letter $1$ represents the group identity; it is there for padding reasons).

\vspace{-1mm}
\subparagraph*{Britton reductions.}
Britton reductions are a standard way to solve the word problem in HNN extensions. 
Here we define them for the special case of \BG. Let
\[\Delta = \BS12 \cup \oneset{b, b^{-1}}\]
be an infinite alphabet (note that $\Sigma \sse \Delta$). A word $w \in \Delta^*$ is called \emph{Britton-reduced} if it is of the form 
\[w = (s_{0},n_{0})  \beta_{1} (s_{1},n_{1}) \cdots  \beta_{\ell} (s_{\ell}, n_{\ell} )\]
with $\beta_i \in \oneset{b, b^{-1}}$ and $(s_{i},n_{i}) \in \BS12$ for all $i$ (\ie $w$ does not have two successive letters from $\BS12$) and there is no factor of the form $b(q,0)b^{-1}$ or $b^{-1}(0,k)b$ with $q, k \in \Z$. If $w$ is not Britton-reduced, one can apply one of the rules
\begin{align*}
	(r,m)(s,n) 	 &\:\to\: (r + 2^m s, m + n)\\
	b(q,0)b^{-1} &\:\to\: (0,q)\\
	b^{-1}(0,k)b &\:\to\: (k,0)
\end{align*}
in order to obtain a shorter word representing the same group element. The following lemma is well-known (see also \cite[Section IV.2]{LS01}).

\begin{lemma}[Britton's Lemma for \BG{}\!  \cite{britton63}]\label{lem:Britton}
	Let $w \in \Delta^*$ be Britton-reduced. Then $w \in \BS12$ as a group element if and only if $w$ does not contain any letter $b$ or $b^{-1}$. In particular, $w \eqBG 1$ if and only if $w = (0,0)$ or $w = 1$ as a word.
\end{lemma}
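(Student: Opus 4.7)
The plan is to derive this from the classical Britton's Lemma for HNN extensions (see \cite[Theorem~IV.2.1]{LS01}). We view $\BG$ as the HNN extension of the base group $H = \BS12$ with stable letter $b$ and associated subgroups
\[ A = \set{(q,0)}{q \in \Z} \;\cong\; \Z \quad\text{and}\quad B = \set{(0,k)}{k \in \Z} \;\cong\; \Z, \]
identified by the isomorphism $(q,0)\mapsto (0,q)$, \ie $a\mapsto t$. Conjugation by $b$ realises this identification in $\BG$ because $bab^{-1} = t$. Under this viewpoint, the notion of Britton-reduced defined above is exactly the standard one: the forbidden factors $b(q,0)b^{-1}$ and $b^{-1}(0,k)b$ are precisely the $b A b^{-1}$- and $b^{-1} B b$-patterns of the HNN presentation.

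For the first assertion, one direction is immediate: if $w$ contains no occurrence of $b^{\pm 1}$, then every letter of $w$ already lies in $\BS12$, hence so does the product. For the converse I would invoke Britton's Lemma in its general form, which says that a Britton-reduced word over an HNN extension representing an element of the base group cannot contain the stable letter. Specialised to our situation, this gives that whenever $w \eqBG g$ for some $g \in \BS12$, the word $w$ must be free of $b^{\pm 1}$.

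For the second assertion, assume $w \eqBG 1$. Since $1 \in \BS12$, the first part implies that $w$ consists only of letters from $\BS12$. The Britton-reduced condition forbids two consecutive letters from $\BS12$, so $w$ is either the empty word (identified with the trivial word $1$) or a single letter $(s,n) \in \BS12$. Britton's Lemma also implies that the canonical map $\BS12 \hookrightarrow \BG$ is injective, so $(s,n) \eqBG 1$ forces $(s,n) = (0,0)$ as elements of $\BS12$, and therefore $w = (0,0)$ as a word.

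The only nontrivial ingredient is Britton's Lemma for HNN extensions itself, which is a classical structural result and we do not reprove here; everything else amounts to matching the data $\BG = \Gen{\BS12, b}{b a b^{-1} = t}$ with the general HNN framework and then reading off both statements.
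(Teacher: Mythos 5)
The paper does not prove this lemma; it cites Britton's original paper and points to \cite[Section IV.2]{LS01} for the general HNN-extension form, which is exactly the derivation you carry out. Your identification of $\BG$ as the HNN extension of $\BS12$ with associated subgroups $\langle a\rangle$ and $\langle t\rangle$, your application of the general Britton's Lemma, and the final observation that a Britton-reduced word without $b^{\pm 1}$ has at most one $\BS12$-letter are all correct and match the intended argument.
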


\begin{example}\label{ex:blowup}
	Define  words 
	$w_0 = t$ and $w_{n+1} = b\,  w_{n}\,  a\, w_{n}^{-1}\, b^{-1}$
	for $n \geq 0$. Then we have $\abs{w_{n}} = 2^{n+2} -3$ but 
	$w_{n} \eqBG t^{\tow(n)}$. While the length of the word $w_n$ is only exponential in $n$, the length of its \Breduced 
	form is $\tow(n)$. 
\end{example}

\subsection{Conditions for Britton reductions}\label{sec:WP}

The idea to obtain a parallel algorithm for the word problem is to compute a Britton reduction of $uv$ given that both $u$ and $v$ are Britton-reduced. For this, we have to find a maximal suffix of $u$ which cancels with a prefix of $v$. The following lemma is our main tool for finding the longest canceling suffix. It is important to note that for all suffixes the conditions can be checked in parallel.

\newcommand{\modLog}{\log}

\begin{lemma}\label{lem:Bred_conditions}
	
	Let $ w = \beta_{1}(r,m)\beta_{2}\, x\, \beta_{2}^{-1}(s,n)\beta_{1}^{-1} \in \Delta^*$ with  $\beta_{1}, \beta_{2} \in \oneset{b,b^{-1}}$  such that $\beta_{1}(r,m)\beta_{2}$ and $\beta_{2}^{-1}(s,n)\beta_{1}^{-1}$ both are Britton-reduced and $\beta_{2}x\beta_{2}^{-1} \eqBG (q,k) \in \BS12$ (in particular, $k = 0$ and $q \in \Z$, or $q = 0$).

	Then $w \in \BS12$ if and only if the respective condition in the following table is satisfied. Moreover, if $w \in \BS12$, then $w \eqBG \hat w$ according to the last column of the table.	
	\begin{center}
			\renewcommand{\arraystretch}{1.4}
				\begin{tabular}[h]{cc|rll|l}
		
			$\beta_{1}$ & $\beta_{2}$
			& \multicolumn{3}{c|}{Condition}  & \multicolumn{1}{c}{$\hat w$} \\

			\hline
			$b$ & $b$  & $r+2^{m+k}s$\!\!&$\in \mathbb{Z}$, &$m+n +k=0$& $\left(0,\; r+2^{-n}s\right)$ \\
			
			$b$ & $b^{-1}$ & $r+2^{m}(q+s)$\!\!&$\in \mathbb{Z}$, &$\hphantom{m+n +k}\mathllap{ m+n} = 0$ & $\left(0,\; r+2^{m}(q+s)\right)$\\
			
			$b^{-1}$ & $b$ &  $r + 2^{m+k}s$\!\!&$ = 0$&& $\left(n+ \modLog(\frac{-r}{s}),\; 0\right)$\\
			
			$b^{-1}$ & $b^{-1}$ &  $ r + 2^{m}(q+s)$\!\!&$ = 0$& &$\left(m+n,\; 0\right)$\\
			
		\end{tabular}
		\end{center}
	Notice that in the case $\beta_{1} = \oi{b}$ and $\beta_{2} = b$, we have $r \neq 0 $ and $s \neq 0$.	
\end{lemma}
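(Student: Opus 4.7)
The plan is to substitute $\beta_{2}x\beta_{2}^{-1}=(q,k)$ inside $w$, carry out the remaining multiplication in $\BS12$ using the semidirect-product formula $(r,m)(q,k)(s,n)=(r+2^{m}q+2^{m+k}s,\,m+k+n)$, and then apply Britton's Lemma (\cref{lem:Britton}) to the outer conjugation
\[ w\,\eqBG\,\beta_1 \bigl(r+2^{m}q+2^{m+k}s,\; m+k+n\bigr)\beta_1^{-1}. \]
This word represents an element of $\BS12$ iff the inner pair lies in the appropriate associated subgroup: for $\beta_1=b$ we need $m+k+n=0$ and $r+2^{m}q+2^{m+k}s\in\Z$ (giving $\hat w=(0,\,r+2^{m}q+2^{m+k}s)$), while for $\beta_1=b^{-1}$ we need $r+2^{m}q+2^{m+k}s=0$ (giving $\hat w=(m+k+n,\,0)$).

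Next I would pin down the form of $(q,k)$ from $\beta_2$. A recursive application of Britton's Lemma to $\beta_2 x \beta_2^{-1}\in\BS12$ shows that $(q,k)=(0,k)$ when $\beta_2=b$ (since the image of conjugation by $b$ lies in $\{0\}\times\Z$), and $(q,k)=(q,0)$ with $q\in\Z$ when $\beta_2=b^{-1}$. Substituting kills one of the two summands: the inner pair becomes $(r+2^{m+k}s,\,m+k+n)$ in the first case and $(r+2^m(q+s),\,m+n)$ in the second. Combining with the two choices of $\beta_1$ yields the four rows of the table after a direct rewrite; for instance, in the $(b,b)$-row the identity $m+k=-n$ converts $r+2^{m+k}s$ into $r+2^{-n}s$.

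The only subtle point is the $(b^{-1},b)$-row, where the condition $r+2^{m+k}s=0$ must be rewritten as $\hat w=(n+\log(-r/s),0)$. Here Britton-reducedness of $b^{-1}(r,m)b$ and $b^{-1}(s,n)b$ rules out $r=0$ and $s=0$, so $-r/s\in\Z[1/2]\setminus\{0\}$, and the equation forces $-r/s=2^{m+k}$; hence $m+k=\log(-r/s)\in\Z$ and $m+k+n=n+\log(-r/s)$. Aside from this bookkeeping the proof reduces to four symmetric computations in the semidirect product paired with four routine applications of Britton's Lemma.
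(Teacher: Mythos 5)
Your proof is correct and follows essentially the same route as the paper's: compute $(r,m)(q,k)(s,n)$ in the semidirect product, apply Britton's Lemma to the outer conjugate $\beta_1(\cdot)\beta_1^{-1}$, and use the observation that $q=0$ when $\beta_2=b$ and $k=0$ when $\beta_2=b^{-1}$. The only (cosmetic) difference is ordering: the paper branches on $\beta_2$ first and only ever writes the already-specialized product, whereas you keep $(q,k)$ general through the Britton step and specialize afterwards, but the underlying computation and the handling of the $(b^{-1},b)$ row via $2^{m+k}=-r/s$ are identical.
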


\begin{example} \label{ex: formulas} Let us illustrate with two examples how to read \cref{lem:Bred_conditions}.
	For this, let $ w = \beta_{1}(r_{1},m_{1})\beta_{2}\, x\, \beta_{2}^{-1}(s_{1},n_{1})\beta_{1}^{-1} \in \Delta^*$ with the same properties as in \cref{lem:Bred_conditions}, in particular, $ \beta_{2}\, x\, \beta_{2}^{-1} \eqBG  (q,k)\in \BS12$. 
	
	We first consider the case that $\beta_{1} = \beta_{2} = b$. To check if $w \in \BS12$ we have to check if $m_{1}+n_{1} + k = 0$ and if $r_{1}+2^{ m_{1}+k}\cdot s_{1} \in \mathbb{Z}$ according to \cref{lem:Bred_conditions}. In order to obtain a formula for $k$, we apply \cref{lem:Bred_conditions} to $\beta_{2}\, x\, \beta_{2}^{-1}$ using the rightmost column. We write 
	\begin{align}\label{beta3}
		(q,k) \eqBG \beta_{2}\, x\, \beta_{2}^{-1} = \beta_{2}(r_{2}, m_{2})\beta_{3}\, x'\,  \beta_{3}^{-1}(s_{2},n_{2})\beta_{2}^{-1}.
	\end{align}	
	For our example let us assume that $\beta_{3} = b$. Then according to \cref{lem:Bred_conditions}, $(q,k) = (0, r_{2}+2^{-n_{2}}\cdot s_{2})$. Hence, $w \in \BS12$ if and only if $m_{1} + n_{1} + ( r_{2}+2^{-n_{2}}\cdot s_{2}) = 0$  and $r_{1}+2^{ m_{1}+r_{2}+2^{-n_{2}}\cdot s_{2}}\cdot s_{1} \in \mathbb{Z}$. If both conditions are satisfied, we know that $w \eqBG \left(0,\; r_{1}+2^{-n_{1}}s_{1}\right)$. 
	
	\medskip
	Now let us consider a more difficult case. Assume that $ \beta_{1} = b^{-1}$ and $\beta_{2} = b$. According to \cref{lem:Bred_conditions}, $w \in \BS12$ if and only if $r_{1} + 2^{m_{1}+k}s_{1} = 0$. To obtain a formula for $k$,  we apply again \cref{lem:Bred_conditions} using the rightmost column. We again write $(q,k) \eqBG \beta_{2}\, x\, \beta_{2}^{-1} $ as in (\ref{beta3}), but here we assume that $\beta_{3} = b^{-1}$. As we assumed $\beta_{2} = b$, we have by the last column in \cref{lem:Bred_conditions} that
	\[(q,k) \eqBG \beta_{2}\, x\, \beta_{2}^{-1} = \beta_{2} (r_{2}, m_{2}) \beta_{3}\,  x'\,  \beta_{3}^{-1} (s_{2},n_{2})\beta_{2}^{-1} \eqBG (0,r_{2}+2^{m_{2}}(q'+s_{2}))\]
	if $\beta_{3}x'\beta_{3}^{-1} \eqBG (q',k') \in \BS12$.
	This means that $k = r_{2}+2^{m_{2}}(q'+s_{2})$. Now we have to find a formula for $q'$. We write 
	\[(q',k') \eqBG \beta_{3}\, x'\, \beta_{3}^{-1} = \beta_{3}(r_{3},m_{3})\beta_{4}\, x''\,  \beta_{4}^{-1}(s_{3},n_{3})\beta_{3}^{-1}.\] 
	 We assume $\beta_{4} = b$. Then according to \cref{lem:Bred_conditions} we obtain that $q' = n_{3} + \modLog\left(\frac{-r_{3}}{s_{3}}\right)$.
	This implies that $k = r_{2} + 2^{m_{2}} \cdot (n_{3} + \modLog(\frac{-r_{3}}{s_{3}})+s_{2})$. So, to check whether $w \in \BS12$, we have to check if 
	\[r_{1} + 2^{m_{1} + r_{2} + 2^{m_{2}} \cdot (n_{3} + \modLog(\frac{-r_{3}}{s_{3}})+s_{2})} \cdot s_{1} = 0.\] 
	If this is the case, then $w \eqBG \left(n_{1} + \modLog\left(\frac{-r_{1}}{s_{1}}\right),0\right)$. 	
\end{example}

\begin{proof}[Proof of \cref{lem:Bred_conditions}]
	We distinguish the two cases $\beta_{2} = b$ and $\beta_{2} = \oi{b}$. Each case consists of two sub-cases depending on $\beta_1$.
	
	\proofsubparagraph{Case $\beta_{2} = b$:} Since $\beta_{2}x\beta_{2}^{-1} \in \BS12$, we have $\beta_{2} x \beta_{2}^{-1} \eqBG (0,k)$ for some $k \in \Z$. Therefore, we obtain
	\begin{align*}
		(r,m)\beta_{2}\, x\, \beta_{2}^{-1}(s,n) 	& \eqBG (r,m) (0,k) (s,n) \\
		& \eqBG (r,m + k) (s,n)	\\ 
		& \eqBG (r + 2^{m+k}s,\, m + k + n).
	\end{align*}
	Thus, if $\beta_{1} = b$, we have $w \in \BS12$ if and only if $r + 2^{m+k}s \in \Z$ and $m + n + k = 0$. Moreover, if the latter conditions are satisfied, we have $w \eqBG b (r + 2^{m+k}s,\, 0)\oi{b} = b (r + 2^{-n}s,\, 0)\oi{b} \eqBG (0,\,r + 2^{-n}s )$.
	
	On the other hand, if $\beta_{1} = \oi{b}$, it follows that $w \in \BS12$ if and only if $r + 2^{m+k}s = 0$. Notice that in this case, by the assumption that $\beta_{1}(r,m)\beta_{2}$ and $\beta_{2}^{-1}(s,n)\beta_{1}^{-1}$ are Britton-reduced, we have $r\neq 0 $ and $s\neq 0$. Therefore, if the condition $r + 2^{m+k}s = 0$ is satisfied, we have $k = \modLog(\frac{-r}{2^{m}s})$. Hence, under this condition, we have $w \eqBG \oi{b} (0,\, m + k + n)b = \oi{b} (0,\, m + \modLog(\frac{-r}{2^{m}s}) + n)b \eqBG (n + \modLog(\frac{-r}{s}) ,0)$ (because $\modLog(\frac{-r}{2^{m}s}) = \modLog(\frac{-r}{s})-m$).
	
	\proofsubparagraph{Case $\beta_{2} = b^{-1}$:} In this case, we do a similar computation:	
	\begin{align*}
		(r,m)\beta_{2}\, x\, \beta_{2}^{-1}(s,n) 	& \eqBG (r,m) (q,0) (s,n) \\
		& \eqBG (r,m ) (q+s,n)	\\
		& \eqBG (r + 2^{m}(q+s),\,  m + n)
	\end{align*}
	with $q \in \Z$. Again, let us consider the case $\beta_{1} = b$ first. In this case we have $w \in \BS12$ if and only if $r + 2^{m}(q+s) \in \Z$ and $m + n = 0$.
	If these conditions are satisfied, we have $w \eqBG b (r + 2^{m}(q+s),\, 0)\oi{b} \eqBG (0,\,r + 2^{m}(q+s) )$.
	
	Finally, if $\beta_{1} = \oi{b}$, it follows that $w \in \BS12$ if and only if $r + 2^{m}(q+s) = 0$. If this applies, we have $w \eqBG \oi{b} (r + 2^{m}(q+s),\, m + n)b \eqBG ( m + n,0)$.
\end{proof}

Let us fix the following notation for elements $v, w \in \BG$ written as words over $\Delta$:  
\begin{align}
u = (r_{h},m_{h}) \beta_{h}  \cdots(r_{1},m_{1})  \beta_{1}(r_{0},m_{0}),  \qquad v = (s_{0},n_{0})  \tilde{\beta}_{1} (s_{1},n_{1}) \cdots  \tilde{\beta}_{\ell}(s_{\ell},n_{\ell} )\label{betaFact}
\end{align}
 with $(r_{j},m_{j}),  (s_{j},n_{j})\in \sdZ$ and $\beta_{j},\tilde{\beta}_{j} \in \oneset{b, b^{-1}}$. 
 We define 
  \[uv[i,j] =  \beta_{i} (r_{i-1},m_{i-1}) \cdots \beta_{1}(r_{0},m_{0}) \; (s_{0},n_{0})  \tilde{\beta}_{1}  \cdots  (s_{j-1},n_{j-1})\tilde{\beta}_{j}.\]  
 Notice that as an immediate consequence of \hyperref[lem:Britton]{Britton's Lemma} we obtain that, if $u$ and $v$ as in (\ref{betaFact}) are Britton-reduced and $uv[i,i] \in \BS12$ for some $i$, then also $uv[j,j] \in \BS12$ for all $j \leq i$. Moreover, $uv$ is Britton-reduced if and only if $\beta_{1}(r_{0},m_{0})  (s_{0},n_{0})  \tilde{\beta}_{1} \not \in \BS12$.

\medskip
\newcommand{\PowExp}{\mathrm{PowExp}}

For $\ell\in \N$ let $\cX_\ell$ denote some set of $\ell$ variables. Denote by $\PowExp(\cX_\ell)$ the set of expressions which can be made up from the variables $\cX_\ell$ using the operations $+$, $-$, $(r,s) \mapsto r\cdot2^s$ if $s \in \Z$ (and undefined otherwise), and $(r,s) \mapsto \modLog(r/s)$ if $\modLog(r/s) \in \Z$  (and undefined otherwise).

\begin{lemma}\label{lem:BredCond}
	For every $\vec \beta \in \{b,b^{-1},\bot\}^4$ there are expressions $ \theta_{\vec \beta}, \xi_{\vec \beta},\phi_{\vec \beta},\psi_{\vec \beta} \in \PowExp(\cX_{12})$ such that the following holds:	
	Let  $u,v \in \BG$ as in (\ref{betaFact}) be Britton-reduced and assume that $uv[i-1,i-1] \in \BS12$ and $\beta_{i} = \tilde \beta_{i}^{-1}$ and let $V_{i} = \set{r_j,s_j,m_j,n_j}{j \in \{i-1,i-2, i-3\}}$. If $\vec \beta = (\beta_{i},\beta_{i-1},\beta_{i-2},\beta_{i-3})$ (where $\beta_j = \bot$ for $j\leq 0$), then	
	\smallskip
	\begin{enumerate}
		\item $uv[i,i] \in \BS12\; $ if and only if $\; \theta_{\vec \beta}(V_{i}) \in \Z$ and $ \xi_{\vec \beta}(V_{i}) = 0$,\smallskip
		\item if  $uv[i,i] \in \BS12$, then  $uv[i,i] \eqBG \bigl(\phi_{\vec \beta}(V_{i}), \psi_{\vec \beta}(V_{i})\bigr)$.
	\end{enumerate}
\end{lemma}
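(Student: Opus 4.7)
The overall strategy is to iterate \cref{lem:Bred_conditions} at most three times, substituting the closed-form expressions for the inner $\BS12$-elements at each level. The key observation making this feasible is that in three of the four $(\beta_1,\beta_2)$-subcases of the table the rightmost column gives both coordinates of the resulting element of $\BS12$ as explicit expressions in the outer variables $r,m,s,n$ with no residual dependence on the interior $(q,k)$; only the case $(\beta_1,\beta_2)=(b,b^{-1})$ leaves a residual, and that residual involves only the first coordinate.

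First I would instantiate \cref{lem:Bred_conditions} on $uv[i,i]$ with $\beta_1=\beta_i$, $\beta_2=\beta_{i-1}$, and $(r,m,s,n)=(r_{i-1},m_{i-1},s_{i-1},n_{i-1})$; the hypothesis that $uv[i-1,i-1]\in\BS12$ is Britton-reduced forces $\tilde\beta_{i-1}=\beta_{i-1}^{-1}$ via \hyperref[lem:Britton]{Britton's Lemma}, so the form $\beta_1(r,m)\beta_2 x\beta_2^{-1}(s,n)\beta_1^{-1}$ is met with $\beta_2 x\beta_2^{-1}\eqBG uv[i-1,i-1]\eqBG(q,k)$. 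This yields the four possible shapes of $(\theta_{\vec\beta},\xi_{\vec\beta})$ and $(\phi_{\vec\beta},\psi_{\vec\beta})$, each involving at most one occurrence of $q$ or $k$. Next, I would apply \cref{lem:Bred_conditions} to $uv[i-1,i-1]$, read off $(q,k)$ from its rightmost column as a function of $r_{i-2},m_{i-2},s_{i-2},n_{i-2}$, and substitute. In three of the four sub-cases this substitution is already closed; only when $(\beta_{i-1},\beta_{i-2})=(b,b^{-1})$ does a residual $q'$, the first coordinate of $uv[i-2,i-2]$, persist. In that single remaining sub-case I would apply \cref{lem:Bred_conditions} a third and final time to $uv[i-2,i-2]$ (which lies in $\BS12$ since $uv[i-1,i-1]$ does, again by Britton's Lemma); the outer letter $\beta_{i-2}=b^{-1}$ forces rows 3 and 4 of the table, yielding $q'=n_{i-3}+\modLog(-r_{i-3}/s_{i-3})$ or $q'=m_{i-3}+n_{i-3}$ according to $\beta_{i-3}$. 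After this triple substitution the resulting $\theta_{\vec\beta},\xi_{\vec\beta},\phi_{\vec\beta},\psi_{\vec\beta}$ involve only the twelve variables in $V_i$ and only the operations $+$, $-$, $(r,s)\mapsto r\cdot 2^s$, and $(r,s)\mapsto\modLog(r/s)$, so they lie in $\PowExp(\cX_{12})$. The boundary cases $\beta_{i-j}=\bot$ (when $i\leq j$) correspond to empty sub-words, where the relevant $(q,k)$ or $q'$ is $(0,0)$ and the expressions degenerate accordingly.

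The main obstacle is essentially bookkeeping: enumerating cleanly the (up to sixteen) non-degenerate cases of $\vec\beta\in\{b,b^{-1}\}^4$ and verifying, for each, the side conditions under which $\modLog$ is well-defined. These conditions come from \cref{lem:Bred_conditions} itself, since $\modLog$ is invoked only in the sub-case where the preceding layer has already certified that the argument is a power of two, and the Britton-reducedness of the relevant sub-word excludes $r=0$ or $s=0$ in that sub-case. With this case analysis complete, the biconditional for $uv[i,i]\in\BS12$ and the formula for $\bigl(\phi_{\vec\beta}(V_i),\psi_{\vec\beta}(V_i)\bigr)$ follow layer by layer from \cref{lem:Bred_conditions}.
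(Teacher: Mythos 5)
Your proposal is correct and follows essentially the same approach as the paper: iterate \cref{lem:Bred_conditions}, reading $(q,k)$ off its rightmost column, and terminating after at most one further application because the only sub-case $(\beta_{i-1},\beta_{i-2})=(b,b^{-1})$ that leaves a residual forces $\beta_{i-2}=b^{-1}$, which excludes a recurrence of that sub-case at the next level. You are in fact somewhat more careful than the paper's terse proof, spelling out the Britton's-Lemma argument for $\tilde\beta_j=\beta_j^{-1}$, the boundary cases $\beta_{i-j}=\bot$, and the well-definedness of $\modLog$.
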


Be aware that here we have to read the set $V_{i}$ of cardinality (at most) 12 as assignment to the variables $\cX_{12}$. In particular, given that $uv[i-1,i-1] \in \BS12$, one can decide whether $uv[i,i] \in \BS12$ by looking at only constantly many letters of $uv$~-- this is the crucial observation we shall be using for describing an \NC algorithm for the word problem of \BG (see \cref{prop:Bredstep} below).

\begin{proof} \Wlog $i\geq 4$. We follow the approach of \cref{ex: formulas}. By assumption we know that there exist $q, k \in \mathbb{Z}$ such that $uv[i-1,i-1] \eqBG (q,k) \in \BS12$.  
According to the conditions in \cref{lem:Bred_conditions}, to show \cref{lem:BredCond} it suffices to find expressions $\phi_{\vec \beta}(V_{i})$, $\psi_{\vec \beta}(V_{i})$ for $q$ and $k$ respectively. If $(\beta_{i-1}, \beta_{i-2})\neq (b, b^{-1})$, this follows directly from the rightmost column in \cref{lem:Bred_conditions}. Otherwise, we know that  $(\beta_{i-2}, \beta_{i-3})\neq (b, b^{-1})$ and so we obtain the expressions for $q$ and $k$ by applying \cref{lem:Bred_conditions} to $uv[i-2,i-2]$ (note that $uv[i-2,i-2] \in \BS12$ because $uv[i-1,i-1] \in \BS12$). This proves the lemma. 
\end{proof}

\subsection{The algorithm}\label{sec:algorithm}

A power circuit representation of $u \in \BG$ written as in (\ref{betaFact}) consists of the sequence $\cB = ( \beta_{h}, \ldots,\beta_{1} )$ and a power circuit $(\Pi, \delta_{\Pi})$ with markings $U_{i}, E_{i}, M_{i}$ for $i \in \interval{0}{h}$ such that $(U_{i},E_i)$ is a power circuit representation of $r_i$ (see \cref{def: pcrep}) and $m_{i} = \epsilon(M_{i})$.

\begin{lemma}\label{prop:Bredstep}
The following problem is in \PTc{0}{D} parametrized by $\max_i \depth(\Pi_i)$: 
	\compproblem
	{Britton-reduced power circuit representations of $u, v \in \BG$ over power circuits $\Pi_1$, $\Pi_2$.}	
	{A Britton-reduced power circuit representation of $w \in \BG$ over a power circuit $\Pi'$ such that $w \eqBG uv$ and $\depth(\Pi') =  \max_i \depth(\Pi_i)+\Oh(1)$ and $\abs{\Pi'} \in  \Oh(\abs{\Pi_1}+\abs{\Pi_2})$.} 
\end{lemma}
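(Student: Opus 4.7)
The plan is to identify in parallel the Britton cancellation depth
\[ i^{\ast} = \max\{i \geq 0 : uv[i,i] \in \BS12\} \]
and then reassemble the output from the uncancelled prefix of $u$, the collapsed middle in $\BS12$, and the uncancelled suffix of $v$. By \cref{lem:Britton}, the set $\{i : uv[i,i] \in \BS12\}$ is a downward-closed initial segment of $\mathbb{N}$, so $i^{\ast}$ is well-defined and identifying it suffices. The key enabler is \cref{lem:BredCond}: under the inductive hypothesis $uv[i-1,i-1] \in \BS12$, the membership $uv[i,i] \in \BS12$ is decided by a constant-size expression in the twelve variables $V_{i}$ drawn from positions $i-1, i-2, i-3$, built from $+$, $-$, $(r,s)\mapsto r\cdot 2^{s}$ and $(r,s)\mapsto \log(r/s)$; the resulting element of $\BS12$ is likewise given by constant-size expressions $\phi_{\vec \beta}, \psi_{\vec \beta}$.

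First I would merge $\Pi_{1}$ and $\Pi_{2}$ into a single power circuit $\Pi_{0}$ of size $|\Pi_{1}|+|\Pi_{2}|$ and depth $\max_{j}\depth(\Pi_{j})$, relocating all markings accordingly. Then, in parallel over all $i \in \{1,\ldots,\min(h,\ell)\}$, I would evaluate
\[ T_{i} \;:=\; [\beta_{i} = \tilde{\beta}_{i}^{-1}] \;\wedge\; [\theta_{\vec \beta}(V_{i}) \in \Z] \;\wedge\; [\xi_{\vec \beta}(V_{i}) = 0] \]
with $\vec \beta = (\beta_{i}, \beta_{i-1}, \beta_{i-2}, \beta_{i-3})$ (using $\bot$ for indices $\leq 0$), performing the arithmetic subexpressions by the floating-point routines of \cref{lem: OpOnFP} and the integrality and zero tests by \cref{lem: OpOnFP}(\ref{PTCOpOnFP}) and \cref{cor:compareMarkings}. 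Since each $T_{i}$ is $\Oh(1)$ power-circuit operations, the whole family of tests lives in $\PTc{0}{D}$ parametrised by $\max_{j}\depth(\Pi_{j})$. Whenever a subexpression is undefined on the given input (\eg $\log$ applied to a non-power-of-two, or a negative exponent in $r\cdot 2^{s}$), I set $T_{i} := \text{false}$: by \cref{lem:Britton} such a failure can only occur when $uv[i-1,i-1]\notin\BS12$, in which case $uv[i,i]\notin\BS12$ also, so the prefix-AND below still gives the correct answer.

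Next I would compute $i^{\ast} := \max\{i\geq 0 : T_{1} \wedge \cdots \wedge T_{i}\}$ by a prefix-AND, which is in $\TC$; apply the formulas $\phi_{\vec \beta}, \psi_{\vec \beta}$ of \cref{lem:BredCond} at index $i^{\ast}$ to obtain a power-circuit representation of $(p, k) = uv[i^{\ast}, i^{\ast}] \in \BS12$; fold in the flanking letters via two semidirect-product multiplications $(r_{i^{\ast}}, m_{i^{\ast}})(p, k)(s_{i^{\ast}}, n_{i^{\ast}}) = (r', m')$ using \cref{lem: OpOnFP}; and finally emit
\[ w \,=\, (r_{h},m_{h})\beta_{h} \cdots \beta_{i^{\ast}+1}\,(r',m')\,\tilde{\beta}_{i^{\ast}+1}\cdots\tilde{\beta}_{\ell}(s_{\ell},n_{\ell}), \]
with the obvious boundary treatment when $i^{\ast} \in \{h, \ell\}$ (in which case $w$ begins or ends with the $\BS12$ factor $(r',m')$, possibly further absorbed into an adjacent $\BS12$ letter). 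Maximality of $i^{\ast}$ ensures that the central window $\beta_{i^{\ast}+1}(r',m')\tilde{\beta}_{i^{\ast}+1}$ cannot itself reduce, so $w$ is Britton-reduced, and $w \eqBG uv$ by construction.

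The hard part will be the size and depth bookkeeping. All local tests share the underlying circuit $\Pi_{0}$, so running them in parallel only adds $\Oh(1)$ fresh auxiliary nodes per index (the test bit together with a constant number of intermediate markings for the $\Oh(1)$ operations of \cref{lem: OpOnFP} they invoke), and no test extends depth beyond $\max_{j}\depth(\Pi_{j})+\Oh(1)$. After pruning auxiliary nodes not reachable from the final output markings and using $\min(h,\ell) \leq |\Pi_{1}|+|\Pi_{2}|$, this yields $|\Pi'|\in\Oh(|\Pi_{1}|+|\Pi_{2}|)$ and $\depth(\Pi')\leq \max_{j}\depth(\Pi_{j})+\Oh(1)$ as required.
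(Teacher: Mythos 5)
Your proposal follows the same approach as the paper: merge the two power circuits, test the conditions of \cref{lem:BredCond} independently in parallel using \cref{lem: OpOnFP}, locate the maximal cancellation index $i^{\ast}$ by a prefix-AND, then evaluate $\phi_{\vec\beta},\psi_{\vec\beta}$ at $i^{\ast}$, fold in the flanking $\BS12$ letters, and splice the uncancelled prefix and suffix. Your extra attention to undefined subexpressions, the $\beta_i=\tilde\beta_i^{-1}$ check, and the boundary cases is welcome; the only wrinkle is that the phrase ``$\Oh(1)$ fresh auxiliary nodes per index'' is not literally true (each \cref{lem: OpOnFP} call may add $\Oh(\abs{\Pi_0})$ nodes), but since the parallel tests only produce Boolean bits and only the single $i^{\ast}$ branch feeds the output markings, the intended bound $\abs{\Pi'}\in\Oh(\abs{\Pi_1}+\abs{\Pi_2})$ holds as you conclude.
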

\begin{proof}
	Let $\Pi$ be the disjoint union of $\Pi_1$ and $\Pi_2$.
We need to find the maximal $i$ such that $uv[i,i] \in \BS12$. This can be done as follows: By \cref{lem: OpOnFP} one can evaluate the expressions $\theta_{\vec \beta}(V_{i})$ and $\xi_{\vec \beta}(V_{i})$ of \cref{lem:BredCond} and test the conditions $\theta_{\vec \beta}(V_{i}) \in \Z$ and $\xi_{\vec \beta}(V_{i}) = 0$ in \PTc{0}{D}.
For every $i$ this can be done independently in parallel giving us Boolean values indicating whether $uv[i-1,i-1] \in \BS12$ implies $uv[i,i] \in \BS12$. Now, we have to find only the maximal $i_0$ such that for all $j \leq i_0$ this implication is true. Since $uv[0,0] = 1 \in \BS12$, it follows inductively that  $uv[i,i] \in \BS12$ for all $i\leq i_0$. Moreover, as the implication $uv[i_0 ,i_0 ]  \in \BS12 \implies uv[i_0 +1, i_0 +1] \in \BS12$ fails, we have $uv[j,j] \notin \BS12$ for $j \geq i_{0}+1$. 

Now, using the expressions $\phi_{\vec \beta},\psi_{\vec \beta}$ from \cref{lem:BredCond} one can compute again using \cref{lem: OpOnFP} $(q, k) = (\phi_{\vec \beta}(V_{i_0}), \psi_{\vec \beta}(V_{i_0})) \eqBG uv[i_0,i_0] $ in \PTc{0}{D}. Again using \cref{lem: OpOnFP}, we can compute in \PTc{0}{D} $(s,m) = (r_{i_{0}},m_{i_{0}})(q, k)(s_{i_{0}},n_{i_{0}})$ as a power circuit representation over a power circuit $(\Pi', \delta_{\Pi'})$ with $(\Pi, \delta_{\Pi}) \leq (\Pi', \delta_{\Pi'})$, $\abs{\Pi'} \in \Oh(\Pi)$ and $\depth(\Pi')\in \depth(\Pi)+\Oh(1)$. Now, the output is
\[ (r_{h},m_{h}) \beta_{h}  \cdots(r_{i_0+1},m_{i_0+1})  \beta_{i_0+1}\; (s,m)\;   \tilde{\beta}_{i_0+1} (s_{i_0+1},n_{i_0+1}) \cdots  \tilde{\beta}_{\ell}(s_{\ell},n_{\ell} ).\qedhere \]
\end{proof}

Before showing Theorem~\ref{thm:main}, we prove the following slightly more general result. Recall that $\Sigma = \{1,a, a^{-1}, t, t^{-1}, b, b^{-1}\}$.
\begin{theorem}\label{thm:Bred}
	The following problem is in \Tc{2}:
\compproblem{A word $w \in \Sigma^{*}$.}{ A power circuit representation for a Britton-reduced word $w_{\text{red}} \in \Delta^{*}$ such that $w \eqBG w_{\text{red}} $ and the underlying power circuit has  depth $\Oh(\log \abs{w})$.}
\end{theorem}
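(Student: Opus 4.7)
The plan is a balanced divide-and-conquer merge along the word. I view each letter $w_i$ of the input word $w = w_1 \cdots w_n \in \Sigma^*$ as a Britton-reduced power circuit representation of trivial complexity: a $0$-depth, $O(1)$-size power circuit suffices to encode any element of $\{1,a,a^{-1},t,t^{-1}\} \cup \{b,b^{-1}\}$. I then process a balanced binary tree of height $\lceil \log_2 n \rceil$, applying \cref{prop:Bredstep} in parallel at every level to each pair of adjacent, already-Britton-reduced segments, thereby merging them into one Britton-reduced representation of their concatenation. After all levels the single remaining segment is a Britton-reduced word $w_{\text{red}}$ with $w \eqBG w_{\text{red}}$.

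For the complexity analysis, I would prove by induction on the tree level $k$ the following two invariants:
\begin{itemize}
\item each intermediate power circuit has depth at most $c_1 \cdot k$, and
\item the \emph{total} number of power-circuit nodes summed over all segments at level $k$ is at most $c_2^{k}\cdot n$,
\end{itemize}
for constants $c_1,c_2$ coming from the $O(\cdot)$ bounds in \cref{prop:Bredstep}. The first invariant follows from $\depth(\Pi') = \max_i\depth(\Pi_i)+\Oh(1)$ and the second from $\abs{\Pi'} \in \Oh(\abs{\Pi_1}+\abs{\Pi_2})$. Thus at level $k = \lceil \log_2 n \rceil$ the depth is $\Oh(\log n)$ (which is the depth bound promised in the statement of the theorem) and the total size is $n^{\Oh(1)}$, so each level remains polynomial-size.

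Now, by \cref{prop:Bredstep}, each merge at level $k$ is a \PTc{0}{D} computation parametrized by the maximum depth of its input power circuits, which by the first invariant is $\Oh(\log n)$. Applying \cref{lem:paraToTC} with $k = 0$ and $\ell = 1$ (the hypothesis that ``$\para(w) \leq C\lfloor \log \abs{w}\rfloor$'' is in $\Tc{1}$ is immediate, as it amounts to counting nodes and edges in the encoded power circuit) realizes each level as a polynomial-size $\Tc{1}$ circuit. Composing the $\Oh(\log n)$ sequential levels yields circuits of polynomial size and depth $\Oh(\log n)\cdot \Oh(\log n) = \Oh(\log^2 n)$, which places the procedure in $\Tc{2}$.

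The main obstacle I expect is the size control: because \cref{prop:Bredstep} gives only a linear-in-the-sum bound $\abs{\Pi'}\in \Oh(\abs{\Pi_1}+\abs{\Pi_2})$ (not a tight $\abs{\Pi_1}+\abs{\Pi_2}+\Oh(1)$ bound), the total size grows by a constant factor at every level, so one must check that $c_2^{\log_2 n}\cdot n = n^{\Oh(1)}$ is still acceptable; this is the reason it is essential that \cref{prop:Bredstep} produces only an \emph{additive} constant increase of the depth, so the parameter stays $\Oh(\log n)$ and \cref{lem:paraToTC} can be invoked.
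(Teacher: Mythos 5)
Your proposal is correct and follows essentially the same route as the paper: both pad/initialize trivial power-circuit representations of single letters, run a balanced binary merge tree of height $\Oh(\log n)$ using \cref{prop:Bredstep}, maintain the same two invariants (additive constant growth of depth per level, multiplicative constant growth of total size per level) to conclude logarithmic depth and polynomial size throughout, and invoke \cref{lem:paraToTC} with $\ell=1$ to realize each level as a $\Tc1$ circuit, giving $\Tc2$ overall. The paper additionally makes the convenience assumption $n=2^m$ by padding with the identity letter $1$, but this is cosmetic and does not change the argument.
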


\begin{proof}
Let $w = w_{1}\cdots w_{n}$ with $w_{j} \in \Sigma$ be some input. Since we can pad with the letter $1$, we can assume $n = 2^{m}$ for $m \in \mathbb{N}$.
The idea for the proof is simple: First, we transform each letter $w_{j}$ into a power circuit representation. After that, the first layer computes the Britton reduction of two-letter words using  \cref{prop:Bredstep}, the next layer takes always two of these Britton-reduced words and joins them to a new Britton-reduced word and so on. After $m = \log n$ layers we have obtained a single Britton-reduced word. By the bound in \cref{prop:Bredstep}, the size of the resulting power circuits stays polynomial in $n$ and their depth in $\Oh(\log n)$. In particular, each application of \cref{prop:Bredstep} is in \Tc1 and, hence, the whole computation is in \Tc{2}. 

Let us detail this high-level description a bit further: For $j \in \interval{1}{n}$
 we set $w_{j} = w^{(1)}_{j}$. Now for each word $w^{(1)}_{j}$ we construct its power circuit representation as follows: Let $(\Pi^{(1)}_{j}, \delta^{(1)}_{j})$ be a power circuit such that $\abssmall{\Pi^{(1)}_{j}} = 1$ for $j \in \interval{1}{n} $. The  we define  markings $U_{i}$, $E_{i}$ and $M_{i}$ as follows: If $w^{(1)}_{j} = a^{\alpha}$ for $\alpha \in \oneset{\mOne, 1}$ then $\epsilon(U_{i}) = \alpha$ and $\epsilon(E_{i}) = \epsilon(M_{i}) = 0$. If $w^{(1)}_{j} = t^{\alpha}$, then $\epsilon(U_{i}) = \epsilon(E_{i}) = 0$ and $\epsilon(M_{i}) = \alpha$. If $w^{(1)}_{j} = \beta$ with $\beta \in \{b, b^{-1}\}$, then all markings evaluate to $0$ and we set $\cB^{(1)}_{j} = (\beta)$ (in all other cases we define $\cB^{(1)}_{j}$ to be the empty sequence). If $w^{(1)}_{j} = 1$, then all markings evaluate to $0$ and $\cB^{(1)}_{j}$ is the empty sequence. 

\medskip
Now let $k \in  \interval{2}{m+1} $  and  $j \in \interval{1}{\frac{n}{2^{k-1}}}$ and assume that the words $w^{(k-1)}_{2j-1}$ and $w^{(k-1)}_{2j}$ are Britton-reduced with power circuit representations over $(\Pi^{(k-1)}_{2j-1}, \delta^{(k-1)}_{2j-1})$ and $(\Pi^{(k-1)}_{2j}, \delta^{(k-1)}_{2j})$, respectively. By \cref{prop:Bredstep} we can construct a power circuit representation for a Britton-reduced word  $w^{(k)}_{j}$ over a power circuit $(\Pi^{(k)}_{j}, \delta^{(k)}_{j})$  such that $w^{(k)}_{j} \eqBG w^{(k-1)}_{2j-1}w^{(k-1)}_{2j}$ and 
\begin{align*}
	\abssmall{\Pi^{(k)}_{j}} &\leq c_{s} \cdot \left(\abssmall{\Pi^{(k-1)}_{2j-1}}+\abssmall{\Pi^{(k-1)}_{2j}} \right)  \text{ and }\\
	 \depth(\Pi^{(k)}_{j}) &\leq  c_{d} +  \max(\depth(\Pi^{(k-1)}_{2j-1}),\depth( \Pi^{(k-1)}_{2j}))
\end{align*}
for constants $c_{s}$ and $c_{d}$. In order to bound the size and depth of these power circuits inductively, define $\Pi^{(k)}$ to be the disjoint union of the $\Pi_{j}^{(k)}$ for $j \in \interval{1}{\frac{n}{2^{k-1}}}$. It follows that 
\[\abssmall{\Pi^{(k)}} \leq c_{s} \cdot \abssmall{\Pi^{(k-1)}} \qquad \text{ and } \qquad \depth(\Pi^{(k)}) \leq \depth(\Pi^{(k-1)})+c_{d}.\]
Let $\nu \in \N$ with  $c_{s} \leq  2^{\nu}$. 
With $\abs{\Pi^{(1)}} = n$  and $\depth(\Pi^{(1)}) = 1$ we obtain that
\begin{align}
\depth(\Pi^{(k)}) &\leq \depth(\Pi^{(1)})+(k-1) \cdot c_{d} \leq 1+ m \cdot c_{d} = 1+\log(n)\cdot c_{d},	 	 \label{depthbound}\\
\nonumber\abssmall{\Pi^{(k)}} &\leq c_{s}^{k-1} \cdot \abssmall{\Pi^{(1)}} \leq  c_{s}^{m} \cdot n \leq  \left(2^{\nu}\right)^{\log(n)} \cdot n = n^{\nu+1}.
\end{align}
Therefore, the size of each $\Pi^{(k)}$ is polynomial in the input size $n$ and its depth is logarithmic in $n$. In particular, the same applies to the power circuits $\Pi^{(k)}_{j}$. Therefore, by \cref{lem:paraToTC}, \cref{prop:Bredstep} yields a \Tc{1} circuit for computing $w^{(k)}_{j}$ from $ w^{(k-1)}_{2j-1}$ and $w^{(k-1)}_{2j}$.
 For each $k$, all the power circuit representations of the $w^{(k)}_{j}$ for $j \in \interval{1}{\frac{n}{2^{k-1}}}$ can be computed in parallel with the bound on the depth given by (\ref{depthbound}).  Since we have $\Oh(\log n)$ of these stages, the over all complexity is \Tc{2}. 
\end{proof}

\begin{proof}[Proof of Theorem~\ref{thm:main}]
	In order to decide whether  $w \eqBG 1$, we first compute its Britton reduction $\hat w$ using \cref{thm:Bred}. If $\hat w$ still contains some $b$ or $b^{-1}$, by \hyperref[lem:Britton]{Britton's Lemma}, we know that $w \neqBG 1$. Otherwise, $\hat w = (r,m)\in \Z[1/2] \rtimes \Z$ where $r,m$ are given as there power circuit representations over a power circuit $\Pi$ of depth $\Oh(\log \abs{w})$.
According to \cref{lem: OpOnFP}, we can check in \Tc{1} whether $r = m = 0$.  
\end{proof}

\subparagraph{The compressed word problem.}
The \emph{compressed word problem} of a group is similar to the ordinary word problem. However, the input element is not given directly but as a straight-line program. A \emph{straight-line program} is a context-free grammar which generates exactly one word.
The compressed word problem for a group $G$ with presentation $\eta\colon  \Sigma^* \to G$ is as follows:
\problem{A straight-line program generating a word $w \in \Sigma^*$}{Is $w =_G 1$?}

\begin{corollary}
	The compressed word problem of the Baumslag group is in \PSPACE.
\end{corollary}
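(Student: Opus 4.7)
The plan is to combine Theorem~\ref{thm:main} with the classical inclusion $\Nc{k} \sse \DSPACE(\log^k n)$. Applying it to the \Tc2 algorithm of Theorem~\ref{thm:main} (more precisely, to its \LOGSPACE-uniform variant; see the discussion below), the word problem of $\BG$ on an input word of length $n$ can be decided in $\DSPACE(\log^c n)$ for some absolute constant~$c$.

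Now let $G$ be a straight-line program of size $m$ generating $w \in \Sigma^*$. Then $\abs{w} \leq 2^m$ and, for any index $i \in \interval{1}{\abs{w}}$, the letter $w_i$ can be extracted from $G$ in polynomial time, hence polynomial space, in $m$ by recursively descending the productions of $G$. The idea is then to simulate the space-bounded word-problem algorithm on the virtual input $w$: its work tape has length $\Oh(\log^c(2^m)) = \Oh(m^c)$, and every time the algorithm wishes to access a position of $w$ the query is resolved by the SLP-evaluation subroutine whose polynomial workspace is freed afterwards. The total space consumption is polynomial in $m$, which places the compressed word problem of \BG in \PSPACE.

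The only real obstacle~-- indeed more a matter of bookkeeping than of mathematics~-- is to check that Theorem~\ref{thm:main} gives a \LOGSPACE-uniform circuit family, as required by the Borodin-type simulation above. All building blocks entering its proof~-- iterated addition (\cref{ex:itAdd}), sorting (\cref{ex:sortTC}), compact signed-digit arithmetic (\cref{thm:compact} and \cref{cor:signedDigitOps}), the three steps of power circuit reduction (\cref{thm:pcred}), the floating point manipulations of \cref{sec: morePCoperations}, and the parametrized composition of \cref{ex:composeNC1}~-- are presented in a manifestly \LOGSPACE-uniform fashion, in accordance with the convention of \cref{rem:uniformAc0}. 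Once this is noted, the argument above goes through verbatim and yields the corollary.
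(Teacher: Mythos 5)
Your proof is correct and follows essentially the same route as the paper: observe that the \Tc2 circuits of Theorem~\ref{thm:main} are \LOGSPACE-uniform, deduce a $\DSPACE(\log^{\Oh(1)} n)$ bound for the word problem, and then transfer that polylogarithmic space bound to a polynomial space bound on the compressed input. The only difference is cosmetic: the paper delegates the compression step to a citation of \cite[Lemma~34]{BartholdiFLW20}, whereas you re-derive it directly via the standard simulation that resolves each letter query to the virtual input $w$ by an on-the-fly polynomial-space SLP evaluation.
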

\begin{proof}
It is an easy exercise that all the circuit families we described are, indeed, \LOGSPACE-uniform. In particular, the word problem of \BG is in \LOGSPACE-uniform \Tc{2}. Since \LOGSPACE-uniform \Tc{2} is contained in $\DSPACE(\log^3n)$, we can apply \cite[Lemma 34]{BartholdiFLW20} in order to obtain the corollary.
\end{proof}

\section{Hardness of comparison in power circuits}

The main result of this section is to show how Boolean circuits can be simulated by power circuits. This leads to \P-completeness of the comparison problem for power circuits (Theorem \ref{thm:compIntro}). In this section we consider functions computable in \DLOGTIME-uniform \Ac0. The reader unfamiliar with the precise definitions might simply think of \LOGSPACE-computable.  We start by introducing some normalization steps for Boolean circuits.

For a Boolean circuit $\cC$ with  input gates $x_1, \ldots, x_n$ and some $\vec{a} \in \{0,1\}^n$ we write $\eval_{\vec{a}}(\mathcal{C})$ for the evaluation of the output gate of $\cC$ when assigning $\vec{a}$ to the inputs.

\subparagraph{Elimination of \gand-gates.} By de Morgan's rule we can simulate each \gand-gate by a circuit of depth 3 using an $\gor$-gate and $\gnot$-gates. So for each $\Ac{}$-circuit $\cC$ of depth $D$ there is an equivalent Boolean circuit $\cC'$ of depth at most $3D$ using only $\gor$ and $\gnot$-gates such that $\eval_{\vec{a}}(\mathcal{C})=\eval_{\vec{a}}(\mathcal{C'})$ for all $\vec{a} \in \{0,1\}$.  Moreover, the circuit $\cC'$ clearly can be computed in \DLOGTIME-uniform \Ac0.

\subparagraph{Layered circuits.} A circuit is called layered if we can assign a level number to each gate such that input gates are on level 0 and gates on level $k$ only receive inputs from level $k-1$.

Given an arbitrary $\Ac{}$-circuit $\mathcal{C}$ of depth $D$, we can construct a layered $\Ac{}$-circuit $\cC'$ of depth $D$ as follows: We first make $D+1$ copies of all gates of $\mathcal{C}$ numbered from $0$ to $D$. The input gates of $\cC'$ are the input gates in copy $0$. Then we introduce wires between the gates as in the original circuit, but only between copy $i$ and copy $i+1$. Moreover, for $k \geq 1$ we replace an input gate in copy $k$ by a fan-in one \gor-gate which receives its input from the corresponding input gate in copy $k-1$. The output gate of $\cC'$ is the output gate in copy $D$.
So we obtain a layered $\Ac{}$-circuit of depth $D$ and size $(D+1)\cdot\abs{\mathcal{C}}$. Because the paths that connect input gates with output gates are the same in both circuits,  the new circuit evaluates to $1$ if and only if this is the case for $\mathcal{C}$.  

Notice that we also can perform this construction if $D$ is not the exact depth but only an upper bound. Moreover, if $D$ is given in the input, the construction can be computed in \DLOGTIME-uniform \Ac0. 
Also note that if $\cC$ uses only  $\gor$ and $\gnot$-gates, then also the layered circuit will only use those gates.

\newcommand{\compNodeA}{A}
\newcommand{\compNodeB}{B}

\begin{theorem}\label{thm:PCsimulation}
	Let $\mathcal{C}$ be a layered $\Ac{}$-circuit made of unbounded fan-in \gor-gates and \gnot-gates of size $L$ and depth $D$ and input gates $x_1, \ldots, x_n$.
	There exists a power circuit $(\Gamma, \delta)$ with special vertices $V_1, \dots, V_n$ and $\top$, $\compNodeA$, and $\compNodeB$ satisfying the following properties:
	
	 For $\vec{a} \in \{0,1\}^n$ we define a graph $(\Gamma, \delta_{\vec{a}})$ where $\delta_{\vec{a}}(V_i, \top) = a_i$ and on all other nodes  $\delta_{\vec{a}}$ agrees with $\delta$.
 Then for all $\vec{a} \in \{0,1\}^n$  we have 
	\begin{itemize}
		\item $(\Gamma, \delta_{\vec{a}})$ is a power circuit,
		\item $\depth(\Gamma,\delta_{\vec{a}}) = 2D+\ell+1$ and $\abs{\Gamma} \leq 3(L+D)+\ell+3$,
		\item $P \neq Q \Rightarrow \epsilon(P)\neq \epsilon(Q)$ for all nodes $P,Q \in \Gamma$,  
		\item  $\epsilon(\compNodeA) \leq \epsilon(\compNodeB)$ if and only if $\eval_{\vec{a}}(\mathcal{C}) = 1$. 
	\end{itemize}
	Moreover, given $\cC$, the power circuit $(\Gamma, \delta)$ can be computed in \LOGSPACE.
\end{theorem}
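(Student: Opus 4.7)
My plan is a layer-by-layer simulation of $\cC$ inside $\Gamma$: for each gate $g$ at Boolean depth $k$ I allocate a node $N_g$ whose evaluation under $\delta_{\vec a}$ encodes the truth value of $g$ on input $\vec a$ in a ``bounded'' form, namely $\epsilon(N_g) \le \alpha_k$ iff $g(\vec a) = 0$ and $\epsilon(N_g) \ge \beta_k$ iff $g(\vec a) = 1$, where $\alpha_k \ll \beta_k$ are thresholds maintained by induction with a gap ensuring $\beta_k > M \alpha_k + \log L$ (here $M$ is a uniform upper bound on fan-in, obtained by padding OR-gates with a fixed ``known-false'' reference). The construction begins with a chain $P_0, \ldots, P_\ell$ of nodes with $\epsilon(P_i) = 2^i$ and takes $\top = P_\ell$ chosen large enough to seed the invariant; since $V_i$ has at most one outgoing edge (to $\top$) of weight $a_i$, one has $\epsilon(V_i) \in \{1, 2^{\epsilon(\top)}\}$, matching $\alpha_0 = 1$ and $\beta_0 = 2^{\epsilon(\top)}$. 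A standard normalization of the layered circuit so that mixed-gate-type layers are split into an all-OR sub-layer and an all-NOT sub-layer keeps the thresholds uniform within each sub-layer and is the source of the factor $2$ in $2D + \ell + 1$.

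For an \textbf{OR gate} $g = g_1 \vee \cdots \vee g_m$, I define the successor marking of $N_g$ to place weight $+1$ on each $N_{g_i}$ (padded to $M$ entries), so $\epsilon(N_g) = 2^{\sum_i \epsilon(N_{g_i})}$; the invariant gives $\epsilon(N_g) \le 2^{M \alpha_{k-1}}$ when all inputs are false and $\epsilon(N_g) \ge 2^{\beta_{k-1}}$ once one input is true, yielding $\alpha_k = 2^{M \alpha_{k-1}}$ and $\beta_k = 2^{\beta_{k-1}}$. For a \textbf{NOT gate} $g = \neg g'$, I introduce a layerwise reference node $U_{k-1}$ of structurally-fixed value $V_{k-1}$ strictly exceeding every possible $\epsilon(N_{g'})$ at layer $k-1$ (realized as an auxiliary ``all-inputs-true'' OR node constructed by the same recurrence), and take the successor marking of $N_g$ to be $[U_{k-1} \mapsto +1,\, N_{g'} \mapsto -1]$, so $\epsilon(N_g) = 2^{V_{k-1} - \epsilon(N_{g'})}$ is small when $g'$ is true and large when $g'$ is false, and the updated thresholds again satisfy the invariant by a direct calculation. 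Each Boolean gate contributes $O(1)$ nodes to $\Gamma$ (the node $N_g$, padding, and the shared layerwise $U_k$), giving $\abs{\Gamma} \le 3(L + D) + \ell + 3$, and every choice is syntactic in $\cC$, so the whole construction is easily performed in \LOGSPACE.

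\textbf{Output and main obstacle.} Finally, I place a node $A$ with structurally-fixed value in $(\alpha_D, \beta_D)$ and set $B = N_{g^*}$ for the output gate $g^*$; then $\epsilon(A) \le \epsilon(B)$ iff $\epsilon(N_{g^*})$ lies above the threshold iff $g^*(\vec a) = 1$, giving the required equivalence with $\eval_{\vec a}(\cC) = 1$. The distinct-evaluation condition is enforced by appending to each $N_g$'s successor marking a gate-specific offset from $\{0, 1, \ldots, L\}$ realized on the chain $P_0, \ldots, P_\ell$, which perturbs the thresholds only by a factor absorbed into the already-loose gap. The \textbf{main obstacle} is the NOT gate: the $-1$ edge on $N_{g'}$ demands $\epsilon(U_{k-1}) \ge \epsilon(N_{g'})$ for \emph{every} $\vec a$, without which $(\Gamma, \delta_{\vec a})$ fails to be a power circuit (some node would fail to evaluate to a positive integer); ensuring this uniformly requires explicit construction of the layerwise structural maxima $V_k$ and a careful verification that the thresholds after each NOT still satisfy the gap condition needed for the subsequent OR. Once this bookkeeping is handled by induction on $k$, the four claimed properties of $(\Gamma, \delta)$ follow.
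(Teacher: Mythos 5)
Your high-level plan mirrors the paper's construction fairly closely: truth values are encoded by whether a node's evaluation lies below a ``small'' threshold or above a ``large'' one, a chain seeds the gap, OR gates sum up their inputs' node-values in the exponent, and NOT gates use a structurally-fixed reference with a $-1$ edge. However, several steps are either wrong or left as unresolved gaps.

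\textbf{The distinctness offsets don't fit on your chain.} You propose to make all node evaluations distinct by appending ``a gate-specific offset from $\{0,1,\ldots,L\}$ realized on the chain $P_0, \ldots, P_\ell$.'' That chain has only $\ell+1 \approx \log^*(L)$ nodes, so markings on it can represent values of magnitude at most roughly $2^{\ell+1}$, which is far smaller than $L$ once $L$ is large. You therefore cannot give $L$ distinct gates $L$ distinct offsets this way. The paper instead introduces a separate family of $L+1$ nodes $X_0, \ldots, X_L$ with $\epsilon(X_i) = 2^i$ (they are not part of a tower-chain), and gives gate $g_i$ the offset $2^i$; the argument that these perturbations stay small uses $2^L \leq \tow(\ell-2)$, not a representation of $i$ in $\ell$ bits. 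Your theorem statement's $\abs{\Gamma} \leq 3(L+D)+\ell+3$ already allocates roughly $L$ nodes for exactly this purpose, so your construction, as described, is missing them. (There is also a conflation in your write-up between a tower-valued chain $T_0, T_1, \ldots$ with $\epsilon(T_i) = \tau(i)$ — which the paper needs to reach the thresholds $\tau(2k+\ell)$ — and a chain with $\epsilon(P_i) = 2^i$; both exist in the paper but play different roles.)

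\textbf{The NOT-gate bookkeeping is not done, and your invariant is too weak.} You correctly flag this as the main obstacle, but your invariant has no explicit \emph{upper bound} on $\epsilon(N_g)$ in the ``true'' case, only $\epsilon(N_g) \geq \beta_k$. Without a two-sided bound such as the paper's $\tau(\ell-1) < \epsilon(\Lambda_{P_g}) \leq \tau(2k+1+\ell) - L$, you cannot choose a structurally-fixed reference $U_{k-1}$ that is guaranteed to dominate $\epsilon(N_{g'})$ for \emph{every} input $\vec a$, which is exactly what's needed for the $-1$ edge to keep the graph a power circuit (integrality/positivity). Moreover, your proposed realization of $U_{k-1}$ as an ``all-inputs-true OR node'' is not obviously input-independent unless it is actually just a fixed tower value — in which case it is the paper's $R_k$ in disguise, and the recursion collapses to theirs, but you still need the upper bound to prove it dominates.

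\textbf{Padding breaks the size bound.} To make $\alpha_k = 2^{M\alpha_{k-1}}$ well-defined you pad every OR gate to fan-in $M$ with ``known-false'' nodes. Since power circuit edges are $\pm 1$-weighted and simple, a gate with fan-in $h$ needs $M-h$ \emph{distinct} padding targets; sharing a per-level pool of $M$ such nodes across $D$ levels costs $\Theta(MD)$, which can be $\Theta(LD)$ and exceeds $3(L+D)+\ell+3$. The paper avoids padding entirely: it uses two nodes per OR gate ($Q_g$ then $P_g$), proves the sum of the $h \leq L$ false inputs' $\epsilon(P_{u_j})$ is at most $L \cdot \tau(2k-1+\ell)/2^L \leq \tau(2k-1+\ell)/2$, and re-normalizes by a second exponentiation. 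This also shows that the factor $2$ in $2D+\ell+1$ does not come from layer-splitting, but from the two-node OR gadget and the fact that the reference thresholds $\tau(2k+\ell)$ sit at depth $2k+\ell$ in the tower chain.

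In short: the skeleton is right, but the distinctness mechanism as stated cannot work, the NOT gate's validity is asserted rather than proved because the invariant lacks an upper bound, and the padding would overshoot the size bound the theorem asserts.
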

Note that if each gate actually ``knows'' its layer (\ie if the layer number is part of the gate number), $(\Gamma, \delta)$ can be computed even in \DLOGTIME-uniform \Ac0.

Intuitively, \cref{thm:PCsimulation} states that the family of power circuits $((\Gamma, \delta_{\vec{a}}))_{\vec{a} \in \{0,1\}^n}$ simulates the circuit $C$.

\begin{corollary} \label{cor:compRedPCAC}
	Let $k\geq 1$. The following problem is in $\Tc{k}$ and it is hard for $\Ac{k}$ under $\Ac{0}$-Turing reductions:
	\problem{ A power circuit $(\Pi, \delta_{\Pi})$ with $\depth(\Pi) \leq \log^k \abs{\Pi}$ and nodes $A, B \in \Pi$ such that for all $P \in \Pi$ the marking $\Lambda_{P}$ is compact and for all $P \neq Q$, $\epsilon(P) \neq \epsilon(Q)$.}{Is $\eps(A) \leq \eps(B)$?}
\end{corollary}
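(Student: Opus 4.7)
For the upper bound, the plan is to reduce directly to \cref{cor:compareMarkings} (Proposition~A). I would encode the question ``$\eps(A) \leq \eps(B)$?'' as the comparison of two singleton markings $L$ and $M$ defined by $L(A) = 1$, $M(B) = 1$, and zero elsewhere, so that $\eps(L) = \eps(A)$ and $\eps(M) = \eps(B)$. By \cref{cor:compareMarkings}, deciding $\eps(L) \leq \eps(M)$ lies in $\PTc{0}{D}$ parametrized by $D = \depth(\Pi)$. Since the input guarantees $\depth(\Pi) \leq \log^{k}\abs{\Pi}$, applying \cref{lem:paraToTC} with parameters ``$k = 0$, $\ell = k$'' converts this into a genuine $\Tc{k}$ circuit family. (The hypothesis of \cref{lem:paraToTC} that the parameter-bound predicate is checkable within $\Tc{k}$ is trivial here, since $\depth(\Pi) \leq \log^{k}\abs{\Pi}$ is an input promise.)

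For the hardness, let $L \in \Ac{k}$ be decided by a family $(\cC_n)_{n \in \N}$ of polynomial size and depth $D_n = \Oh(\log^{k} n)$. The plan has three steps. First, I would normalize $\cC_n$ into an equivalent layered circuit $\cC_n'$ using only unbounded fan-in $\gor$-gates and $\gnot$-gates: de Morgan's rules multiply the depth by $3$, and the layering construction described before \cref{thm:PCsimulation} does not change the depth; both steps can be performed in \DLOGTIME-uniform $\Ac0$ and $\cC_n'$ still has polynomial size and depth $\Oh(\log^{k} n)$. Second, I would feed $\cC_n'$ into \cref{thm:PCsimulation} to obtain the ``skeleton'' power circuit $(\Gamma, \delta)$ together with the distinguished nodes $V_1, \ldots, V_n, \top, A, B$. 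Crucially, this skeleton depends only on $n$, not on the input string $\vec a$, and can therefore be hardwired into the reducing $\Ac0$ circuit. Third, on input $\vec a$ the reduction merely writes the edge weights $\delta_{\vec a}(V_i, \top) := a_i$ (a purely local operation that is in $\Ac0$), forms the instance $(\Gamma, \delta_{\vec a}, A, B)$, and issues a single oracle query. By the specification of \cref{thm:PCsimulation}, the oracle answer $\eps(A) \leq \eps(B)$ coincides with $\eval_{\vec a}(\cC_n') = 1$, i.e., with $\vec a \in L$.

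The main thing to verify, and the one subtlety I anticipate, is that the oracle query really is a valid instance of the problem. Compactness of every successor marking and the fact that distinct nodes have distinct evaluations are handed to us directly by \cref{thm:PCsimulation}. The non-trivial check is the depth constraint $\depth(\Gamma, \delta_{\vec a}) \leq \log^{k}\abs{\Gamma}$. Here \cref{thm:PCsimulation} gives $\depth(\Gamma, \delta_{\vec a}) = 2 D_n + \ell + 1 = \Oh(\log^{k} n)$ while $\abs{\Gamma} = \Oh(L + D_n + \ell)$ is polynomial in $n$, so $\log^{k}\abs{\Gamma} = \Theta(\log^{k} n)$ dominates $\depth(\Gamma, \delta_{\vec a})$ for all sufficiently large $n$; the finitely many small values of $n$ can be handled by a hardwired truth table inside the reducing $\Ac0$ circuit. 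This yields the required $\Ac0$-Turing reduction from $L$ to the comparison problem and completes the hardness argument.
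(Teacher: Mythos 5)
Your membership argument is correct and matches the paper's: compare two singleton markings via \cref{cor:compareMarkings} to get a $\PTc{0}{D}$ circuit, then pass through \cref{lem:paraToTC} using the promise $\depth(\Pi)\le\log^k\abs{\Pi}$.

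The hardness argument, however, has a genuine gap in the check that the oracle query is a valid instance. You claim that ``$\log^{k}\abs{\Gamma}=\Theta(\log^{k}n)$ dominates $\depth(\Gamma,\delta_{\vec a})=\Oh(\log^{k}n)$ for all sufficiently large $n$.'' That is false: both quantities grow like $\log^{k}n$, so the issue is a constant-factor comparison, not a growth-rate comparison, and letting $n$ grow does not resolve it (nor does hardwiring a truth table for small $n$). Concretely, if the $\Ac{k}$-family for $L$ has depth $D_n=c\log^{k}n$ (and $c$ may be any constant, since $\Ac{k}$ only promises $\Oh(\log^{k}n)$) and size $L_n\approx n^{d}$, then after de Morgan and layering the depth is $\approx 3c\log^{k}n$, so \cref{thm:PCsimulation} yields $\depth(\Gamma,\delta_{\vec a})\approx 6c\log^{k}n$ while $\log^{k}\abs{\Gamma}\approx d^{k}\log^{k}n$. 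Nothing forces $6c\le d^{k}$, so the constructed power circuit need not satisfy the promise $\depth\le\log^{k}\abs{\Gamma}$, and the oracle call may be on an illegal instance.

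The paper circumvents this exactly at the point you gloss over: it first invokes the characterization $\Ac{k}=\Ac{0}(\Ac{k})$ (Vollmer, Thm.\ 4.22) in which the oracle gates may be assumed to be a fixed $\Ac{k}$ language decided by layered circuits of depth only $\epsilon\log^{k}n$ for an arbitrarily small constant $\epsilon>0$. Choosing $\epsilon$ small enough relative to the constants in \cref{thm:PCsimulation} guarantees $\depth(\Gamma,\delta_{\vec a})\le\log^{k}\abs{\Gamma}$, and each such shallow oracle circuit is then replaced by a hardwired power-circuit comparison query, giving a genuine $\Ac0$-Turing reduction (with, in general, several oracle gates, not one). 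An alternative fix within your single-query plan would be to pad $\Gamma$ with polynomially many extra nodes of pairwise-distinct values and tiny depth (e.g., extending the $X_i$-basis and re-identifying the few colliding $T_j$) so that $\abs{\Gamma}\ge n^{M}$ for $M$ large enough that $M^{k}\ge 6c$; but this needs to be done explicitly and carefully, and you do neither.
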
\vspace{-1mm}
\cref{cor:compRedPCAC} shows that \cref{thm:pcred} is almost optimal~-- only the space between $\Ac{k}$ and $\Tc{k}$ is possibly for the ``true'' complexity of comparison in $\log^k$-depth power circuits.
\begin{proof}
	Membership in $\Tc{k}$ is by \cref{cor:compareMarkings}.
	As in \cite[Theorem 4.22]{Vollmer99} we see that $\Ac{k} = \Ac{0}(\Ac{k})$ where for a fixed circuit family the oracle gates for $\Ac{k}$ can be assumed to come from a fixed language in $\Ac{k}$ with layered circuits of depth $\epsilon \log^k n$ for some small enough $\epsilon>0$. By \cref{thm:PCsimulation} for each of these circuits there is a power circuit of depth $\log^k n$ simulating it as stated in the theorem. Thus, we can replace any of these depth-$\epsilon \log^k n$ circuits by an oracle gate for the comparison problem in power circuits where the actual power circuit is hardwired and only depends on the input as described in \cref{thm:PCsimulation}.
\end{proof}

\begin{proof}[Proof of \cref{thm:PCsimulation}]
 We start with a layered $\Ac{}$-circuit $\cC$ of size $L$ and depth $D$ consisting of \emph{input gates} $x_i$ for $i=1,\dots, n$, \gnot gates, $\gor$-gates (of fan-in at most $L$); one of these gates is marked as \emph{output gate}. Each gate is on a unique level: input gates on level 0, and gates on level $k$ only receive inputs from level $k-1$; the output gate is on level $D$. We assume that the gates are numbered from $1$ to $L$ with gates $1$ to $n$ being the input gates (\ie $g_i = x_i$ for $i\in \interval{1}{n}$).

 We write $\ell=\log^{*}(L)+3$. We assume that $L \geq 3$. Then the following inequalities hold for all $k \geq 0$: 
   	\begin{align}
		\begin{aligned}\label{eq:PCassumption}
2L & \leq 2^{L}\\
2^{L+2}&\leq \tow(\ell+k)\\
			\tow(k - 1 + \ell )&\leq \tow(k+\ell) / 2-2^{L + 1}  \\
		\end{aligned} 
	\end{align}

\proofsubparagraph{Invariants of $(\Gamma, \delta_{\vec{a}})$:}
Starting with the circuit $\mathcal{C}$, we design a power circuit $(\GG,\delta)$ with designated nodes $V_{1}, \ldots, V_{n}$ such that for every gate $g$ on level $k$, there is some node $P_g$ such that for all $\vec{a} \in \{0,1\}^n$ the following inequalities hold:
	\begin{align}
		\begin{aligned}\label{eq:Pcomp}
		\tow(\ell-1) < \e(\Lambda_{P_g}) &\leq \tow(2k+1+\ell)-L,\\ 
			\e(\Lambda_{P_g}) & \leq  \tow(2k+\ell)-L &&\text{ if }\eval_{\vec{a}}(g) = 0,\\                                                                                                          
			\e(\Lambda_{P_g})   &\geq \tow(2k+\ell)  &&\text{ if } \eval_{\vec{a}}(g) = 1.                                                                                                      
		\end{aligned} 
	\end{align}
Be aware that $\epsilon$ here denotes the evaluation in $(\Gamma, \delta_{\vec{a}})$.

Note that, in particular, if $g$ is the output gate (which is on level $D$), then $\e(P_g)   \geq \tow(2D + 1 + \ell )$ if and only if $ \eval_{\vec{a}}(g)=1$.

\proofsubparagraph{The construction of $(\Gamma, \del)$:}	
\begin{itemize}
	\item We first create a reduced power circuit consisting of nodes $X_{0}, \ldots, X_{L}$ such that $\epsilon(\Lambda_{X_{i}}) = i$ (\ie  $\epsilon(X_{i}) = 2^i$) and the successor markings $\Lambda_{X_{i}}$ are compact. 
	
\item  We create nodes $T_{0}, \ldots, T_{2D+1+\ell}$  such that $\epsilon(T_{0}) = 1$ and $\delta(T_{i},T_{i-1}) = 1$ for $i \in \interval{1}{2D+1+\ell}$ and $\delta(T_{i},Q) = 0$ otherwise. Then $\epsilon(T_{i}) = \tow(i)$. If there exists a node $X_i$ such that  $\tow(j) = \epsilon(X_i)$, then we set $T_{j} = X_i$.
\item As an abbreviation, we will denote a node $T_{2k+\ell}$ by $R_{k}$ for $k \in \interval{1}{D}$. Note that then $\epsilon(\Lambda_{R_k}) = \tow(2k-1+\ell)$.   Moreover, we write $\top$ for $T_{\ell}$.
\item For every $k \in \interval{1}{D}$ we create a node $S_k$ with $\delta(S_k, R_{k-1}) = 1$ and $\delta(S_k,X_0)=-1$. So $\epsilon(\Lambda_{S_k}) = \tow(2k-2+\ell)-1$. Note that $\Lambda_{S_k}$ is a compact marking and $\epsilon(\Lambda_{S_k})> \tow(\ell-1)$ by the choice of $L$.

\item For every input gate $x_{i}$ we create a node $V_{i}$ with $\delta(V_{i},X_i)=\delta(V_{i} ,T_{\ell-1})=1$.
Notice that $2^L \leq 2^{\tow(\log^*L)} = \tow(\ell -2)$, so $X_i$ and $T_{\ell-1}$ are guaranteed to be distinct nodes and, thus, $\delta$ is well-defined.
 We write $P_{g_i}$ as an alias for $V_i$.
Note that by the definition of $\delta_{\vec{a}}$ in the theorem we have $\delta_{\vec{a}}(V_{i}, \top)=a_i$.  

		\item For every $\gor$-gate $g_{i}$ with incoming edges from gates $u_1, \ldots, u_h$ we create nodes $P_{g_i}, Q_{g_i} \in \GG$ and set $\del(P_{g_i},Q_{g_i}) =1$ and $ \del(Q_{g_i},P_{u_j}) =1$ for $1\leq j \leq h$. In addition we set $\delta(Q_{g_i},X_i)=\delta(P_{g_i},X_i)
=1$.		
		\item  For every $\gnot$-gate $g_i$ on level $k$ with incoming edge from  gate $u$, we create a node $P_{g_i} \in \GG $ and set $\del(P_{g_i},R_k) = \del(P_{g_i},S_k) = 1$, and  $\del(P_{g_i},P_u) = -1$.  In addition, we set $\delta(P_{g_i},X_i)
=1$.		
	\end{itemize}
Finally, we set $\compNodeA = T_{2D+ 1+\ell }$ and $\compNodeB = P_g$ where $g$ is the output gate. Thus, once we have proved \eqref{eq:Pcomp}, we know that   $\e(\compNodeA)   \leq \e(\compNodeB)$ if and only if $ \eval_{\vec{a}}(\cC) = 1$.

\begin{figure}
		 \begin{minipage}[t]{.45\linewidth}	
			\begin{center}
			\tikzstyle{pcnode} = [minimum size = 13pt,circle,draw ]
			\tikzstyle{mknode} = [minimum size = 13pt,rectangle,draw ]	\begin{tikzpicture}[outer sep = 0pt, inner sep = 0.7pt, node distance = 40]
					{\footnotesize
						\node[pcnode] (q)    {\footnotesize $Q_{g_i}$};
						\node[pcnode, above left of = q, xshift=-12, yshift=0] (1)   {\footnotesize $P_{u_1}$};
						\node[pcnode,  right of = 1, xshift=-12] (2)    {\footnotesize $P_{u_2}$};

						\node[ right of = 2,xshift=-12] (4)    {\footnotesize $\cdots$};
							\node[pcnode, right of = 4, xshift=-12] (5)    {\footnotesize $P_{u_h}$};
						\node[pcnode, below of = q,yshift=5] (p)   {\footnotesize $P_{g_i}$};
						
\node[pcnode,  above right of=5, yshift=-8] (x){$X_i$}	;					

						\draw[->] (q)edge node[below left,yshift=3,xshift=-2] {+} (1)
						(q) edge node[below left, xshift=10, yshift=3] {+} (2)

												(q) edge node[below left,yshift=3, xshift=12] {+} (5)
												
(q) edge[bend right]  node[below left,yshift=6, xshift=13] {+} (x)
			
(p) edge[bend right]  node[below left,yshift=2, xshift=14] {+} (x)
						(p) -- node[below left] {+} (q)
;

;					}
				\end{tikzpicture}
				\caption{Power circuit for $\gor$-gates.}\label{fig:orgate}
			\end{center}
			  \end{minipage}%
  \hfill%
  \begin{minipage}[t]{.45\linewidth}

				\begin{center}
				\tikzstyle{pcnode} = [minimum size = 13pt,circle,draw ]
				\tikzstyle{mknode} = [minimum size = 13pt,rectangle,draw ]
				\begin{tikzpicture}[outer sep = 0pt, inner sep = 0.7pt, node distance = 38]
					{\footnotesize
						\node[pcnode] (u)    {\footnotesize $P_u$};
						\node[pcnode, below of = u, node distance = 66] (p)   {\footnotesize $P_{g_i}$};
						\node[pcnode, above left of = p, node distance = 52] (t)   {\footnotesize $R_k$};	
						\node[pcnode, above right of = p] (s)    {\footnotesize $S_k$};
		
\node[pcnode,  above right of=s, yshift=23] (x){$X_i$}	;				
						\draw[->] (p) edge node[above left] {+} (s)
						(p) edge node[below left] {+} (t)
(p) edge node[left, xshift = -3] {$-$} (u)
	(p) edge[bend right=40,] node[left, xshift = 12] {$+$} (x)				;
						
					}
				\end{tikzpicture}
				\caption{Power circuit for $\gnot$-gates.}\label{fig:notgate}
			\end{center}
			 \end{minipage}%
		\end{figure}

\proofsubparagraph{Size and depth bounds:} 
Observe that for every gate $g_i$ of $\mathcal{C}$ we introduce at most two nodes $P_{g_i}$ and $Q_{g_i}$ plus the node $X_i$ in $\Gamma$. So, by adding the number of nodes $T_i$ and $S_i$ plus $X_0$, we obtain that $\abs{\Gamma}\leq 3(L+D)+\ell+3$.

	In the following we define the depth of a node as the length of a longest path starting from it (\ie the depth of $\Gamma$ is the maximal depth of its nodes).
	Each node $X_i$ for $i \in \interval{0}{L}$ has depth at most $\ell$ (see also \cref{ex:binarybasis}).  Each of the nodes $T_k$ has depth $k$, so here we obtain nodes of depth at most $2D+\ell +1$. 
	The node $S_k$ only points to the node $R_{k-1}$ and to $X_0$, so it also has depth $2k - 1 + \ell$ with $k \leq D$. 
	
	By induction we see that if $g$ is a gate on level $k$, then the depth of $P_g$ is at most $2k + \ell + 1$ Therefore, for each $\vec{a} \in \{0,1\}^n$ the depth of $(\Gamma, \delta_{\vec{a}})$ is exactly $2D+\ell +1$.

\proofsubparagraph{Complexity of the construction:} 

The whole construction can be done in \LOGSPACE: We can compute the level of each gate and the depth of the circuit in \LOGSPACE just by following any path from the gate to the input gates. Since the circuit is layered this always gives the same result. 

The actual construction of $(\Gamma, \delta)$ is straightforward: We start by creating only the nodes, without the edges. First create the nodes $X_i$ for $i \in \interval{0}{L}$. Then we add the nodes $T_0, \ldots, T_{2D+1+\ell}$, and $S_1, \ldots, S_{D}$. Here we need to be careful and identify each node $T_i$ with $X_{\tow(i-1)}$ if it exists. Since $\log^*$ can be computed in \LOGSPACE, both $\ell$ and this identification can be computed in \LOGSPACE.
Now it remains to create the nodes $P_g$ and (only for $\gor$-gates) $Q_g$ corresponding to the gates, which also clearly can be done in \LOGSPACE. For every node the outgoing edges can be determined in a straightforward way from the definition.

Let us briefly outline that the construction is also in \DLOGTIME-uniform \Ac0 given that each gate contains its level as part of the number. 
	In this case, the depth of the circuit can be computed as the maximum over all the levels. 
	Now, note that for a $\log n$-bit number $i$, the number $\log^*i$ can be computed in time $\Oh(\log n)$. Hence, $\ell$ can be computed in \DLOGTIME
and the whole construction is straightforward in \Ac0. For the identification of the nodes $T_i$ with $X_{\tow(i-1)}$ we use again that  $\log^*i$ can be computed in \DLOGTIME.  In particular, we can decide in \DLOGTIME whether $T_i = X_j$ for given $i$ and $j$. Therefore, even in  \DLOGTIME-uniform \Ac0 we can hard-wire this identification (notice that these nodes only depend on the size $L$ but not on the actual circuit).

\proofsubparagraph{Proof of \eqref{eq:Pcomp} (correctness):} Let $\vec{a}\in\{0,1\}^n$ be some input to $\mathcal{C}$. Let us show \cref{eq:Pcomp} for every power circuit $(\Gamma, \delta_{\vec{a}})$ by induction starting from the input gates (notice that \cref{eq:Pcomp} shows that $(\Gamma, \delta_{\vec{a}})$ is a power circuit, indeed). Let $g_i = x_i$ be an input gate and first assume that $a_i=0$. Then using \eqref{eq:PCassumption} we obtain
\begin{align*}
\epsilon(\Lambda_{V_{i}}) = \epsilon(T_{\ell-1})+\epsilon(X_{i}) = \tow(\ell-1)+2^{i} < \tow(\ell)-L 
\end{align*}
and $ \epsilon(\Lambda_{V_{i}}) > \tow(\ell-1)$. 	
Now we assume that $a_i = 1$. Then again by \eqref{eq:PCassumption} we have
\begin{align*}
 &\epsilon(\Lambda_{V_{i}}) =\epsilon(T_{\ell-1})+ \epsilon(T_{\ell})+\epsilon(X_{i}) = \tow(\ell-1)+\tow(\ell)+2^{i} >\tow(\ell)\qquad \text{ and }\qquad   \\&\epsilon(\Lambda_{V_{i}}) \leq 2\cdot \tow(\ell)+2^{L}\leq \tow(\ell+1)-L.
\end{align*}
This shows \cref{eq:Pcomp} for nodes $V_i = P_{g_i}$ corresponding to input gates.
		
	Now let $g_i$ be some $\gor$-gate on level $k \geq 1$ with inputs from gates $u_1, \ldots, u_h$. First assume that $\eval_{\vec{a}}(u_j)=0$ for all $j \in \interval{1}{h}$. By induction, $\epsilon(\Lambda_{P_{u_j}}) > \tow(\ell - 1)$ for all $j \in \interval{1}{h}$; hence, also $\epsilon(\Lambda_{g_i}) > \tow(\ell - 1)$. Moreover, by induction, we have $\epsilon(\Lambda_{P_{u_j}})\leq \tow(2k-2+\ell)-L $ for all $j \in \interval{1}{h}$ meaning that
\[\epsilon(P_{u_j})=2^{\epsilon(\Lambda_{P_{u_j}})} \leq 2^{\tow(2k-2+\ell)-L}  = \tow(2k-1+\ell)/2^{L} \leq \tow(2k-1+\ell)/(2L).\]
The last inequality is by \eqref{eq:PCassumption}. We know that $h \leq L$, so
\allowdisplaybreaks\begin{align*}\allowdisplaybreaks
\epsilon(\Lambda_{Q_{g_i}})&=\left(\sum_{j=1}^{h}\epsilon(P_{u_j})\right) + \epsilon(X_i)\\
	&\leq\left( \sum_{j=1}^{h} \frac{1}{2L}\cdot\tow(2k-1+\ell)\right)+2^{i}\\
	&\leq \frac{1}{2}\cdot\tow(2k-1+\ell)+2^{L}\\
	&\leq  \tow(2k-1+\ell)-L\tag{by \eqref{eq:PCassumption}}
\end{align*}
and
\begin{align*}\epsilon(\Lambda_{P_{g_i}})&=2^{\epsilon(\Lambda_{Q_{g_i}})} +\epsilon(X_i)\leq 2^{\tow(2k-1+\ell)-L}+2^{i}\\
	&=\tow(2k+\ell)/2^{L}+2^{i}\\ 
	&\leq\tow(2k+\ell)-L. \tag{by \eqref{eq:PCassumption}}  
\end{align*}
Now assume that $\eval_{\vec{a}}(u_j)=1$ for some $j \in \interval{1}{h}$. The same argument as above also shows the bound $\epsilon(\Lambda_{P_{g_i}}) \leq \tow(2k+1+\ell)-L$ in this case.
 Furthermore, by induction, we have $\epsilon(\Lambda_{P_{u_j}})\geq \tow(2k-2+\ell)$. Hence, 
\begin{align*}
\epsilon(\Lambda_{Q_{g_i}})&=\sum_{j=1}^{h}\epsilon(P_{u_j})+\epsilon(X_i)  > 2^{\tow(2k-2+\ell)} = \tow(2k-1+\ell) \qquad\text{ and}\\
\epsilon(\Lambda_{P_{g_i}})&=\epsilon(Q_{g_i})
+\epsilon(X_i)\geq 2^{\tow(2k-1+\ell)}=\tow(2k+\ell).
\end{align*}

\smallskip

Next, let $g_i$ be a $\gnot$-gate  on level $k \geq 1$ with an incoming edge from gate $u$. Assume that $ \eval_{\vec{a}}(u)=0$. Then, $\epsilon(\Lambda_{P_u})\leq \tow(2k-2+\ell)-L$. So $\epsilon(P_u)\leq \tow(2k-1+\ell)/2^{L}$ and
\begin{align*}
\epsilon(\Lambda_{P_{g_i}})&=\epsilon(R_k)+\epsilon(S_k)-\epsilon(P_u)+\epsilon(X_i)\\
	&\geq \tow(2k+\ell)+\tow(2k-1+\ell)/2-\tow(2k-1+\ell)/2^{L}\\
	&\geq\tow(2k+\ell).
\end{align*}
In addition, we have
\begin{align*}
\epsilon(\Lambda_{P_{g_i}})&=\epsilon(R_k)+\epsilon(S_k)-\epsilon(P_u)+\epsilon(X_i)\\
	&\leq \epsilon(R_k)+\epsilon(S_k)+\epsilon(X_i)\\
	&\leq\tow(2k+\ell)+\tow(2k-1+\ell)/2+2^{L}\\
&\leq \tow(2k+1+\ell)-L. \tag{by \eqref{eq:PCassumption}}
\end{align*}
Now, assume that $ \eval_{\vec{a}}(u)=1$. Then by induction $\epsilon(\Lambda_{P_u})\geq \tow(2k-2+\ell)$. Hence,
\begin{align*}
\epsilon(\Lambda_{P_{g_i}})&=\epsilon(R_k)+\epsilon(S_k)-\epsilon(P_u)+\epsilon(X_i)\\
	&\leq \tow(2k+\ell)+\tow(2k-1+\ell)/2-\tow(2k-1+\ell)+2^{L}\\
	&\leq \tow(2k+\ell)-\frac{1}{2}\cdot \tow(2k-1+\ell) +2^{L}\\ 
	&\leq\tow(2k+\ell)-L. \tag{by \eqref{eq:PCassumption}}
\end{align*}
Finally, we know that $\epsilon(\Lambda_{P_u})\leq \tow(2k-1+\ell)-L$. So, 	
\begin{align*}
\epsilon(\Lambda_{P_{g_i}})&=\epsilon(R_k)+\epsilon(S_k)-\epsilon(P_u)+\epsilon(X_i)\\
	&\geq \tow(2k+\ell)+\tow(2k-1+\ell)/2-\tow(2k+\ell)/2^{L} \\
	&\geq \frac{2^{L}-1}{2^{L}}\tow(2k+\ell) > \tow(\ell-1).
\end{align*}

\proofsubparagraph{Uniqueness of evaluations.} It remains to show that no two nodes have the same evaluation and that each successor marking is compact.
Let $P,Q \in \Gamma$. Now we have to show that $\epsilon(P)\neq \epsilon(Q)$ if $P\neq Q$. First, we set $\mathcal{X} = \set{X_i, T_{j}}{ i \in \interval{0}{L}, j \in \interval{0}{\ell}}$. By construction, $\epsilon(P)\neq \epsilon(Q)$ if $P\neq Q$ is clear for $P,Q \in \mathcal{X}$ (independently of $\vec{a}\in\{0,1\}^n$). 
 We further know that $\epsilon(P) > \tow(\ell)$ for all nodes $P \in \Gamma \setminus \mathcal{X}$ and $\epsilon(P) \leq \tow(\ell)$ for $P \in  \mathcal{X}$; so in particular, $\epsilon(P) \neq \epsilon(Q)$ for all nodes $P \in \Gamma \setminus \mathcal{X}$ and $Q \in \mathcal{X}$.
 
  For  $P \in \Gamma\setminus \mathcal{X}$ we can conclude that  $\tow(\ell)$ divides $ \epsilon(P)$ and so $\epsilon(P) \equiv 0 \mod \tow(\ell)$ (note that $\epsilon(P)$ is a power of two). 
Now let $u_i$ be an arbitrary gate and $v_j$ be an arbitrary $\gor$-gate. Considering the successor markings of the nodes $P_{u_i}$ and $Q_{v_j}$ we obtain the following:  
	\begin{align}
		\begin{aligned}\label{eq:PcompModulo}
			\e(\Lambda_{P_{u_i}}) & \; \equiv &2^i &\mod \tow(\ell),\\                                                                                                          
			\e(\Lambda_{Q_{v_j}})   &\;\equiv &2^j &\mod \tow(\ell),\\
			\epsilon(\Lambda_{T_r}) &\;\equiv& 0 &\mod \tow(\ell)\qquad \text{ for } r \geq \ell + 1,\\
			\epsilon(\Lambda_{S_k})&\;\equiv  &-1 &\mod \tow(\ell).                                                                                                      
		\end{aligned} 
	\end{align}
By the choice of $\ell$ we know that $2^i \not \equiv \alpha \mod \tow(\ell)$ for $\alpha \in \oneset{-1,0} $ and $i \in \interval{1}{L}$, and $2^i \not \equiv 2^j \mod \tow(\ell)$ for $i,j \in \interval{1}{L}$ and $i\neq j$. Since all nodes of $\Gamma$ are of the above form  (remember that for an input gate $g_i$ we wrote $V_i=P_{g_i}$ and also $R_i$, $\top$, $\compNodeA$ and $\compNodeB$ were just aliases for other nodes), $\epsilon(P)\neq \epsilon(Q)$ whenever $P\neq Q$.
Note that these observations hold independently of  $\vec{a}\in\{0,1\}^n$.

The successor markings of the nodes $X_i$, $T_i$, $S_i$ are compact by construction.
By \eqref{eq:PcompModulo} we have $\abs{\epsilon(\Lambda_{P})-\epsilon(\Lambda_{Q})} \geq 2$ for all nodes $P, Q \in \Gamma \setminus (\mathcal{X} \cup \{S_1, \dots, S_{D}\})$ with $P \neq Q$. Moreover, $\abs{\epsilon(\Lambda_{P})-\epsilon(\Lambda_{X_i})} \geq 2$ for all nodes $P \in \Gamma \setminus \mathcal{X}$ and all $i\in \interval{1}{L} $. This shows that $\Lambda_{ P_{i}}$ and $\Lambda_{ Q_{i}}$ are compact for \gor-gates $g_i$. In order to see that the successor markings of nodes corresponding to \gnot-gates are compact, observe that also  $\abs{\epsilon(\Lambda_{R_k})-\epsilon(\Lambda_{S_k})} \geq 2$ and  $\abs{\epsilon(\Lambda_{P_{g_i}})-\epsilon(\Lambda_{S_k})} \geq 2$ for all $k \in \interval{1}{D}$ and $i \in \interval{1}{L}$. Finally, successor markings of nodes corresponding to input gates are compact because $\e(\Lambda_{X_i}) + 2 \leq L +2 < 2^{L} \leq \tow(\ell - 2) = \e(\Lambda_{T_{\ell-1}}) < \tau(\ell-1) -2 = \e(\Lambda_{T_{\ell}}) - 2$.
Thus, all successor markings are compact.
\end{proof}

\subsection{P-hardness of power circuit comparison}\label{sec:hardness}
Finally, let us apply \cref{thm:PCsimulation} in order to prove some \P-hardness results on comparison in power circuits. Here, we use \LOGSPACE-reductions.

\begin{corollary} \label{thm:compRedPC}
	The following problem is \P-complete:
	\problem{ A power circuit $(\Pi, \delta_{\Pi})$ and nodes $\compNodeA, \compNodeB \in \Pi$ such that for all $P \in \Pi$ the marking $\Lambda_{P}$ is compact and for all $P \neq Q$, $\epsilon(P) \neq \epsilon(Q)$.}{Is $\eps(\compNodeA) \leq \eps(\compNodeB)$?}
\end{corollary}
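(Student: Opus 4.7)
The plan is to establish both containment and hardness, with the hardness direction leveraging \cref{thm:PCsimulation} as the key tool.

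For membership in \P, the problem is a special case of the general comparison problem for markings on arbitrary power circuits. By \cref{cor:compareMarkings}, the comparison problem is in $\PTc{0}{D}$ parametrized by $\depth(\Pi)$. Since $\depth(\Pi) \leq \abs{\Pi}$ is polynomially bounded in the input size, the resulting circuit has polynomial size and polynomial depth, hence the problem is in \P. Equivalently, one may compose the polynomial-time power circuit reduction of \cref{thm:pcred} with the compact-marking comparison of \cref{lem:compareCompactMarkings} (with $L$ the marking putting $+1$ on $A$ and $-1$ on $B$, and $M=0$).

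For \P-hardness, the plan is to reduce from the standard \P-complete Circuit Value Problem (CVP) under \LOGSPACE reductions. Given a Boolean circuit $\cC$ together with an assignment $\vec{a} \in \{0,1\}^n$, we first normalize $\cC$ via the two preprocessing steps preceding \cref{thm:PCsimulation}: eliminate \gand-gates by de Morgan's laws, then layer the resulting $(\gor,\gnot)$-circuit. Both transformations can be carried out in \LOGSPACE (in fact in \DLOGTIME-uniform \Ac0) and only blow up the size polynomially. Applying \cref{thm:PCsimulation} to the normalized circuit yields, in \LOGSPACE, a power circuit $(\Gamma, \delta)$ together with distinguished vertices $V_1, \dots, V_n$, $\top$, $A$, $B$.

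The final step of the reduction is to ``hard-wire'' the assignment $\vec{a}$: we output the power circuit $(\Gamma, \delta_{\vec{a}})$ obtained from $(\Gamma,\delta)$ by setting $\delta_{\vec a}(V_i,\top)=a_i$ for each $i$, together with the pair of nodes $A, B$. By the conclusion of \cref{thm:PCsimulation}, this is indeed a power circuit in which every two distinct nodes have distinct evaluations and (as verified in the proof of that theorem) every successor marking is compact; hence it meets the structural hypotheses of the corollary. Moreover, $\epsilon(A) \leq \epsilon(B)$ if and only if $\eval_{\vec{a}}(\cC) = 1$. This exhibits a \LOGSPACE many-one reduction from CVP to the comparison problem of the corollary, completing the \P-hardness argument. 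The only potential obstacle, namely checking that the special shape constraints (injective evaluations, compact successor markings) are preserved by the hardwiring of $\vec{a}$, is already built into the statement of \cref{thm:PCsimulation} and so requires no extra work.
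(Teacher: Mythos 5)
Your proposal is correct and follows essentially the same approach as the paper: membership via the power circuit comparison machinery (\cref{cor:compareMarkings} / \cref{thm:pcred}), and \P-hardness via a \LOGSPACE reduction from the Circuit Value Problem through \cref{thm:PCsimulation}. The paper states this more tersely (citing \cite[Proposition 49]{MyasnikovUW12} as the primary source for membership and simply pointing to \cref{thm:PCsimulation} for hardness), whereas you spell out the CVP normalization steps and the hardwiring of $\vec{a}$. One small remark on the membership direction: when you argue that $\PTc{0}{D}$ with $D$ polynomially bounded yields a \P algorithm, you are implicitly using that the circuit family is \LOGSPACE-uniform (so that the polynomial-size, polynomial-depth circuit can be constructed and evaluated in polynomial time); the paper flags this explicitly. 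You also correctly noted that compactness of the successor markings in $(\Gamma,\delta_{\vec a})$ is not literally in the statement of \cref{thm:PCsimulation} but is established in its proof, which is the right level of care here.
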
\vspace{-1mm}
Note that a weaker form of this result already has been stated in the second author's dissertation \cite{Weiss15diss}.
\begin{proof}
Membership in \P is by \cite[Proposition 49]{MyasnikovUW12} (or \cref{cor:compareMarkings} together with the observation that the circuit family we construct is uniform). \P-hardness is by \cref{thm:PCsimulation} since the circuit value problem (with circuits of unrestricted depth) is \P-complete.
\end{proof}

Notice that the only feature the power circuit in \cref{thm:compRedPC} lacks for being reduced is the sorting of the nodes.
In particular, under the assumption $\NC \neq \P$, it is not possible to sort the nodes of a given power circuit in \NC. 
\begin{remark}
	\begin{enumerate}[(a)]
		\item It is an immediate consequence of \cref{thm:compRedPC} that the comparison problem of two markings in a power circuit is \P-complete. This is because for two nodes $\compNodeA$ and $\compNodeB$ in a power circuit $(\Pi, \delta_{\Pi})$ we have $\epsilon(\compNodeA)\leq \epsilon(\compNodeB)$ if and only if $\epsilon(\Lambda_\compNodeA) \leq \epsilon(\Lambda_\compNodeB)$. 
		\item If the input is given as in \cref{thm:compRedPC}, we can check in \Ac0 whether $\epsilon(\compNodeA) = \epsilon(\compNodeB)$
		because this is the case if and only if  $\Lambda_{\compNodeA}(P)=\Lambda_{\compNodeB}(P)$ for all $P \in \Gamma$ (see \cref{lem:compareM}). This can be viewed as a hint that also in an arbitrary power circuit testing for equality might be easier than comparing for less than.
	\end{enumerate}
\end{remark}

\begin{corollary} 
	The following problem is \P-complete:
	\problem{A dag $\Gamma$ such that each node has a successor marking.}{Is $\Gamma$ already a power circuit? }
\end{corollary}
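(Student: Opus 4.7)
The plan is to prove membership in \P{} by bottom-up evaluation and \P-hardness by a \LOGSPACE{} reduction from the power-circuit comparison problem of \cref{thm:compRedPC}.

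For membership in \P{}, I would process the nodes of $\Gamma$ in topological order. Inductively, once every proper descendant of a node $P$ has been verified to evaluate to a positive integer in $2^{\N}$, the sub-dag spanned by these descendants is a power circuit. Since $\epsilon(\Lambda_P)$ is automatically an integer (as a $\{\pm1,0\}$-linear combination of integers), the only question is whether $\epsilon(\Lambda_P)\geq 0$; if yes, then $\epsilon(P)=2^{\epsilon(\Lambda_P)}\in 2^{\N}$, otherwise $\Gamma$ is not a power circuit. This sign test can be performed by the comparison algorithm of \cref{cor:compareMarkings} (used without the depth parametrization, which is known to run in polynomial time; see \cite[Proposition 49]{MyasnikovUW12}). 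We accept iff every node passes the test.

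For \P-hardness, we reduce from the comparison problem of \cref{thm:compRedPC}. Given a power circuit $(\Pi,\delta_\Pi)$ together with nodes $A,B\in\Pi$, we build a dag $\Gamma$ as follows. Keep $(\Pi,\delta_\Pi)$ intact, and for every $Q\in\supp(\Lambda_A)$ create a clone $Q'$ with $\Lambda_{Q'}=\Lambda_Q$ (so $\epsilon(Q')=\epsilon(Q)$). Then add a single new sink-like node $P$ with $\delta(P,Q)=\Lambda_B(Q)$ for $Q\in\supp(\Lambda_B)$ and $\delta(P,Q')=-\Lambda_A(Q)$ for $Q\in\supp(\Lambda_A)$. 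Because the supports used by $P$ are disjoint, the new successor marking $\Lambda_P$ takes values in $\{\mOne,0,1\}$ and satisfies
\[
\epsilon(\Lambda_P)\;=\;\epsilon(\Lambda_B)-\epsilon(\Lambda_A),\qquad\epsilon(P)\;=\;2^{\epsilon(\Lambda_B)-\epsilon(\Lambda_A)}\;=\;\epsilon(B)/\epsilon(A).
\]
The original nodes of $\Pi$ still evaluate to powers of two (since $(\Pi,\delta_\Pi)$ was a power circuit), and so do the clones (they inherit their successor markings). Hence $\Gamma$ is a power circuit iff the only potentially non-integral node $P$ evaluates to a positive integer, which happens iff $\epsilon(\Lambda_B)\geq\epsilon(\Lambda_A)$, iff $\epsilon(A)\leq\epsilon(B)$.

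The reduction is trivially computable in \LOGSPACE{} (we copy the input, duplicate at most $\abs{\Pi}$ rows/columns for the clones, and add one more row/column for $P$). There is no real obstacle; the only conceptual point is the cloning trick, which is needed so that $\Lambda_B-\Lambda_A$ can be represented by a $\{\mOne,0,1\}$-valued marking without any entry of magnitude $2$ arising from overlapping supports of $\Lambda_A$ and $\Lambda_B$.
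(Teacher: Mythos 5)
Your proof is correct, but the hardness reduction takes a more roundabout route than the paper's. The paper constructs the dag by adding a single node $R$ with $\delta(R,A)=1$ and $\delta(R,B)=\mOne$ directly; since $A$ and $B$ are single nodes (not markings), there is no overlap issue at all, and one immediately has $\epsilon(\Lambda_R)=\epsilon(A)-\epsilon(B)$, so the dag is a power circuit iff $\epsilon(A)\geq\epsilon(B)$. You instead go one level deeper and compare $\epsilon(\Lambda_A)$ with $\epsilon(\Lambda_B)$, which means you must form $\Lambda_B-\Lambda_A$ as a $\{\mOne,0,1\}$-marking and therefore need the cloning trick to disjointify the supports. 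This works and the arithmetic you give is right (recall $\epsilon(\Lambda_P)=\log_2\epsilon(P)$, so comparing successor markings is equivalent to comparing node values), but it introduces an unnecessary extra layer; the same idea applied one step higher, pointing the new node directly at $A$ and $B$, removes the cloning entirely. On the other hand, your explicit bottom-up argument for membership in \P{} is a welcome addition: the paper only sketches the reduction and leaves the (easy) membership direction implicit.
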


\begin{proof}
	The comparison problem for power circuits (\cref{thm:compRedPC}) can be reduced to this problem in a straightforward way:
	As input we have a power circuit $(\Pi, \delta_{\Pi})$ with two nodes $\compNodeA,\compNodeB\in \Pi$. Then we construct a dag as follows: We take $(\Pi, \delta_{\Pi})$ and add a node $R$ with $\delta(R,\compNodeA) = 1$ and $\delta(R,\compNodeB) = \mOne$. Then $\epsilon(\Lambda_{R}) \geq 0$ if and only if $\epsilon(\compNodeA) \geq \epsilon(\compNodeB)$, and so the newly defined dag is a power circuit if and only if $\epsilon(\compNodeA) \geq \epsilon(\compNodeB)$. 
\end{proof}

\begin{corollary}\label{cor:membershipBS}
	The following problem is \P-complete:
	\problem{A \PC representation of $w \in \BG$.}{Is $w \in \BS12$?}
\end{corollary}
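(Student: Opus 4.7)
\subparagraph*{Proof plan.}
I will establish both directions separately. For membership in \P, the plan is to compute, in polynomial time, a Britton-reduced \PC representation of $w$ and then apply \hyperref[lem:Britton]{Britton's Lemma}. Starting from the given \PC representation, one repeatedly looks for an adjacent $b$-pair of the form $b(q,0)b^{\mOne}$ or $b^{\mOne}(0,k)b$ (with the required integrality of $q$ respectively $k$) and contracts it; at each step the conditions can be tested using \PC-arithmetic on markings in \P (\cref{lem:compareCompactMarkings}, \cref{thm:pcred}, and \cref{lem: OpOnFP}), and each contraction strictly shortens the $\beta$-sequence, so the process terminates after $\Oh(n)$ iterations. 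Alternatively, the polynomial-time algorithms of \cite{MyasnikovUW12,DiekertLU13ijac} may be invoked directly. Once a Britton-reduced form is at hand, $w \in \BS12$ if and only if this form contains no letter $b^{\pm 1}$, which is checkable in $\Ac{0}$.

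For \P-hardness, the plan is to reduce from the problem of comparing two markings in a power circuit, which is \P-complete by \cref{thm:compRedPC} together with Remark~(a) following it. So let a power circuit $(\Pi, \delta_\Pi)$ together with markings $L, M$ on $\Pi$ be given; the goal is to decide $\epsilon(L) \leq \epsilon(M)$. Using \cref{lem: OpOnMarkings}, I first construct in \LOGSPACE a (slight) extension $(\Pi', \delta_{\Pi'})$ of $\Pi$ carrying a marking $E$ with $\epsilon(E) = \epsilon(M) - \epsilon(L)$. I then attach to $\Pi'$ a single leaf node $U$ with empty successor marking, so that $\epsilon(U) = 1$ is odd. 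By \cref{def: pcrep}, the pair $(U, E)$ is a \PC representation of the dyadic rational $r = 2^{\epsilon(E)} \in \Z[1/2]$, and $r \in \Z$ if and only if $\epsilon(E) \geq 0$ if and only if $\epsilon(L) \leq \epsilon(M)$.

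The reduction is completed by the word $w = b\,(r,0)\,b^{\mOne} \in \BG$, which admits an obvious \PC representation over $\Pi'$ (with $\beta$-sequence $(b,b^{\mOne})$, middle element $(r, 0)$ encoded by $(U, E)$ and trivial $M$-marking, and trivial outer factors). Since the Britton rule $b(q,0)b^{\mOne} \to (0,q)$ is applicable only when $q \in \Z$, we have: if $r \in \Z$, then $w \eqBG (0, r) \in \BS12$; if $r \notin \Z$, then $w$ is already Britton-reduced and visibly contains both $b$ and $b^{\mOne}$, so \hyperref[lem:Britton]{Britton's Lemma} yields $w \notin \BS12$. Hence $w \in \BS12$ if and only if $\epsilon(L) \leq \epsilon(M)$, which gives the desired \LOGSPACE-reduction.

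There is no serious obstacle here, as the argument is a short application of the machinery developed earlier in the paper. The only conceptual content is the observation that membership in $\BS12$ is, via the identity $b(q,0)b^{\mOne} \in \BS12 \Leftrightarrow q \in \Z$, essentially a test for integrality of a dyadic rational; and integrality of a pure power $2^n$ tests precisely the sign of $n$, which in turn is exactly the \PC comparison problem.
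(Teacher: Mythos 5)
Your proposal is correct and takes essentially the same approach as the paper: membership in \P via the known polynomial-time Britton-reduction algorithms, and \P-hardness by reducing comparison to the question whether $b(2^e,0)b^{-1}$ lies in $\BS12$, which holds precisely when $e \geq 0$. The only (cosmetic) difference is that you spell out the preprocessing step of forming $E = M - L$ and of packaging $2^{\epsilon(E)}$ as a floating-point representation $(U,E)$, which the paper leaves implicit.
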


\begin{proof}
	By \cite{muw11bg,DiekertLU13ijac}, the problem is in \P since the algorithms for the word problem also work with power circuit representations as input and they can be used to decide membership in $\BS12$ (this is due to \hyperref[lem:Britton]{Britton's Lemma}). Thus, it remains to show the hardness part.
	
	We reduce from the comparison problem for power circuits (\cref{thm:compRedPC}). So let $(\Pi, \delta_{\Pi})$ be a power circuit and let $M$ be a marking on $\Pi$. Now we consider the word $w = b(2^{\epsilon(M)},0)b^{-1} \in \Delta^*$. Then $(\Pi, \delta_{\Pi})$ together with the marking $M$ is a power circuit representation of $w$. For $w$ to be in $\BS12$ we need the $b$'s to cancel. This happens if and only if $\epsilon(M)\geq 0$. So, $w \in \BS12$ if and only if  $\epsilon(M) \geq 0$. 
\end{proof}

As a last result, let us state a straightforward lower bound for the problem of checking two markings for equality. Note that here we even need an arbitrary power circuit as input and cannot restrict it as in \cref{thm:compRedPC}.

\begin{proposition}\label{prop:eqNLhard}
	The following problem is \NL-hard:
	\problem{Given a power circuit and markings $M,K$.}{Is $\eps(M) = \eps(K)$?}
\end{proposition}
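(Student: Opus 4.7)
The plan is to reduce from the \NL-complete problem UNREACH (given a DAG $G = (V,E)$ with source $s$ and target $t$, decide whether $t$ is \emph{not} reachable from $s$); its \NL-completeness follows from the Immerman--Szelepcsényi theorem $\NL = \mathsf{coNL}$. Reducing reachability in a general digraph to reachability in a DAG can be done in \LOGSPACE by the standard layering trick, so we may assume $G$ is a DAG.

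Given such an instance $(G,s,t)$, I would construct a power circuit with the following nodes: a leaf $\bot$, a node $P_v$ for every $v \in V$, and one auxiliary leaf $L_{u,v}$ for each edge $(u,v) \in E$ (all leaves evaluate to $1$). Define the successor markings by $\Lambda_{P_s}(\bot) = +1$ (zero elsewhere) and, for $v \neq s$, $\Lambda_{P_v}(P_u) = +1$ and $\Lambda_{P_v}(L_{u,v}) = -1$ for every edge $(u,v) \in E$ (zero elsewhere). Since $G$ is a DAG so is the underlying graph, and writing $d_v$ for the in-degree of $v$ we have $\eps(\Lambda_{P_v}) = \sum_{u \to v}\eps(P_u) - d_v = \sum_{u \to v}(\eps(P_u) - 1) \geq 0$, so every $\eps(P_v)$ is a positive power of two; hence this is indeed a power circuit.

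An induction on the topological order of $G$ then shows that $\eps(P_v) = 1$ if $v$ is unreachable from $s$ and $\eps(P_v) \geq 2$ otherwise. The base case $\eps(P_s) = 2$ is immediate. For $v \neq s$: if no predecessor is reachable then each term $\eps(P_u) - 1 = 0$, so $\eps(\Lambda_{P_v}) = 0$ and $\eps(P_v) = 1$; if some predecessor $u^{\ast}$ is reachable then $\eps(P_{u^{\ast}}) - 1 \geq 1$ and hence $\eps(\Lambda_{P_v}) \geq 1$, giving $\eps(P_v) \geq 2$. To conclude, let $M$ mark only $P_t$ with $+1$ and $K$ mark only $\bot$ with $+1$; then $\eps(M) = \eps(P_t)$ and $\eps(K) = 1$, so $\eps(M) = \eps(K)$ if and only if $t$ is not reachable from $s$. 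The whole construction is clearly \LOGSPACE-computable.

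There is no real obstacle in the argument; the only subtlety worth highlighting is that the equality $\eps(M) = \eps(K)$ naturally captures the \emph{absence} of a path rather than its existence, which is why the reduction is from UNREACH rather than from STCON and why the identity $\NL = \mathsf{coNL}$ is invoked to upgrade the conclusion from $\mathsf{coNL}$-hardness to \NL-hardness.
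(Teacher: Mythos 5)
Your proof is correct, and it reaches the same conclusion via a structurally different reduction than the paper's. The paper also reduces from reachability in DAGs and also (implicitly) invokes $\NL = \mathsf{coNL}$: it takes \emph{two} disjoint copies of $G$, labels all edges $+1$, adds to copy~1 an extra leaf $P$ with an edge $(t_1, P)$, and compares $\eps(s_1)$ with $\eps(s_2)$. There, the argument is that the small perturbation at $t_1$ strictly increases the evaluation of exactly those nodes in copy~1 that have a path to $t$ (monotonicity of $x \mapsto 2^x$ plus all-positive edge labels), so $\eps(s_1) = \eps(s_2)$ iff $s$ does not reach $t$. Your construction instead keeps a single copy of $G$, reverses the edges, and normalizes: the auxiliary leaves $L_{u,v}$ subtract off a baseline so that $\eps(P_v) = 1$ exactly when $v$ is unreachable from $s$ and $\eps(P_v) \geq 2$ otherwise, letting you compare $\eps(P_t)$ against the constant $1$ via $\bot$. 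Both constructions are \LOGSPACE-computable and of comparable size (about $2\abs{V}+1$ nodes for the paper vs.\ $\abs{V}+\abs{E}+1$ for yours). Your version has the minor conceptual advantage that every node carries an honest Boolean bit ($1$ vs.\ $\geq 2$) and that the comparison is against a fixed constant; the paper's version is slightly shorter to state and avoids the extra edge-leaves. Your remark about the layering trick and the explicit appeal to $\NL = \mathsf{coNL}$ are both correct and make explicit what the paper leaves implicit.
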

\begin{proof}
	Reduce the $s$-$t$-connectivity problem for dags to this problem. Given a dag $G = (V,E)$ and vertices $s,t \in V$ make two copies of this graph and add a label $+1$ to every edge. In one copy of $G$ add an additional node $P$ and an edge $(t,P)$. Let $s_1, s_2$ denote the two copies of $s$. Then $\eps(s_1) = \eps(s_2)$ if and only if there is no path from $s$ to $t$ in $G$.
\end{proof}

\section{Conclusion}\label{sec:conclusion}

We showed that the word problem of the Baumslag group can be solved in \Tc{2}. The proof relies on the fact that all power circuits used during the execution of
the algorithm have logarithmic depth. The comparison problem for such power circuits is in \Tc1, although for arbitrary power circuits it is \P-complete. We conclude with some open problems:
\begin{itemize}
	\item Is it possible to reduce the complexity of the word problem of the Baumslag group any further~-- \eg to find a \LOGSPACE algorithm? 
	\item 
Can we prove some non-trivial lower bounds (the word problem is \Nc1-hard as \BG contains a non-abelian free group \cite{Robinson93phd})?
	\item The problem of comparing two markings on a power circuit for equality is \NL-hard -- is it also \P-complete like comparison with less than?		
	\item Is the word problem of the Baumslag group with power circuit representations as input \P-complete? (By \cref{cor:membershipBS} this holds for the subgroup membership problem for \BS12 in \BG. Moreover, as a consequence of \cref{prop:eqNLhard}, this variant of the word problem is \NL-hard.)
	\item By \cref{cor:compRedPCAC} for every $k$ the comparison problem for power circuits of depth $\log^kn$ is in \Tc{k} and hard for \Ac{k} under \Ac0-Turing reductions. Thus, the question remains whether, indeed, this problem is complete for $\Tc{k}$ under $\Ac0$-Turing-reductions.
\end{itemize}

\end{document}